\documentclass[a4paper,onecolumn,11pt,unpublished,amsmath,amssymb]{quantumarticle}
\pdfoutput=1

\usepackage[utf8]{inputenc}
\usepackage[english]{babel}
\usepackage[T1]{fontenc}
\usepackage{microtype}
\usepackage{graphicx}
\usepackage{amsthm}
\usepackage{thmtools}   %
\usepackage{mathtools}
\usepackage{xspace}
\usepackage{array}
\usepackage{braket}
\usepackage{booktabs}
\usepackage{placeins}
\usepackage{enumitem}
\usepackage{bm}
\usepackage{hyperref}

\usepackage{algorithm}
\usepackage{algpseudocode}
\usepackage{quantikz}
\usepackage{subcaption}
\usepackage[bold,basic]{complexity}
\usepackage[numbers]{natbib}
\usetikzlibrary{calc}

\setlist{nosep}

\hypersetup{colorlinks=true,linkcolor=blue!50!black,citecolor=blue!50!black,urlcolor=blue!50!black}
\AtBeginDocument{%
}

\newcommand{\appref}[1]{Appendix~\ref{#1}}

\newcommand{\F}{\mathbb F}
\newcommand{\Z}{\mathbb Z}
\renewcommand{\C}{\mathbb C}
\newcommand{\rank}{\operatorname{rank}}

\renewcommand{\cc}{\operatorname{cc}}

\newcommand{\bx}{\bm{x}}
\newcommand{\by}{\bm{y}}
\newcommand{\bz}{\bm{z}}
\newcommand{\bu}{\bm{u}}
\newcommand{\bw}{\bm{w}}
\newcommand{\bs}{\bm{s}}

\newcommand{\blambda}{\bm{\lambda}}
\newcommand{\ba}{\bm{a}}
\newcommand{\bb}{\bm{b}}
\newcommand{\bA}{\bm{A}}
\newcommand{\bzero}{\bm{0}}
\newcommand{\bsigma}{\bm{\sigma}}
\newcommand{\balpha}{\bm{\alpha}}
\newcommand{\bbeta}{\bm{\beta}}
\newcommand{\bgamma}{\bm{\gamma}}

\theoremstyle{plain}   %
\newtheorem{theorem}{Theorem}
\newtheorem{lemma}{Lemma}
\newtheorem{corollary}{Corollary}
\newtheorem{proposition}{Proposition}
\newtheorem{definition}{Definition}

\theoremstyle{remark}
\newtheorem{remark}{Remark}

\theoremstyle{definition}
\newtheorem{example}{Example}

\newcommand{\tw}{\operatorname{tw}}
\newcommand{\rw}{\operatorname{rw}}
\newcommand{\lrw}{\operatorname{lrw}}

\newcommand{\Dcycl}{\mathbb D}
\newcommand{\mmomega}{\omega_{\mathrm{MM}}}
\newcolumntype{L}[1]{>{\raggedright\arraybackslash}p{#1}}
\newcommand{\SopCount}{\textbf{SopCount}\xspace}
\newcommand{\CZ}{\mathrm{CZ}}
\newcommand{\CCZ}{\mathrm{CCZ}}
\newcommand{\D}{\mathrm{D}}
\renewcommand\H{\mathrm{H}}
\newcommand\CX{\mathrm{CX}}
\newcommand{\T}{\mathrm{T}}
\newcommand{\Gate}{\mathcal G}

\newcommand\doublecheck{{\checkmark\kern-1.02em\checkmark}}

\newcommand{\leanroot}{https://github.com/gaperez64/dlx4sop/blob/main/docs/lean}
\newcommand{\leanbadge}{\textsf{\color{cyan}\footnotesize[Lean\,\doublecheck]}}
\newcommand{\lean}[2]{%
  \ifcsname leanline@#1:#2\endcsname
    \,\href{\leanroot/#1\#L\csname leanline@#1:#2\endcsname}{\leanbadge}%
  \else
    \,\textsf{\color{red}\footnotesize[Lean?]}%
    \GenericWarning{}{Lean anchor not found: #1:#2 (run scripts/lean_anchors.py)}%
  \fi
}
\IfFileExists{lean-anchors.tex}{
\expandafter\def\csname leanline@Formal/Core/Cost.lean:SopInstance.costFull\endcsname{108}
\expandafter\def\csname leanline@Formal/Core/Cost.lean:SopInstance.costMode\endcsname{114}
\expandafter\def\csname leanline@Formal/Core/Cost.lean:Subtree\endcsname{26}
\expandafter\def\csname leanline@Formal/Core/Cost.lean:WidthBounded\endcsname{86}
\expandafter\def\csname leanline@Formal/Core/Cost.lean:costFull_le\endcsname{188}
\expandafter\def\csname leanline@Formal/Core/Cost.lean:costFull_le_sharp\endcsname{136}
\expandafter\def\csname leanline@Formal/Core/Cost.lean:costFull_leaf\endcsname{122}
\expandafter\def\csname leanline@Formal/Core/Cost.lean:costFull_node\endcsname{124}
\expandafter\def\csname leanline@Formal/Core/Cost.lean:costMode_le\endcsname{203}
\expandafter\def\csname leanline@Formal/Core/Cost.lean:costMode_le'\endcsname{217}
\expandafter\def\csname leanline@Formal/Core/Cost.lean:costMode_le_sharp\endcsname{162}
\expandafter\def\csname leanline@Formal/Core/Cost.lean:costMode_leaf\endcsname{128}
\expandafter\def\csname leanline@Formal/Core/Cost.lean:costMode_node\endcsname{130}
\expandafter\def\csname leanline@Formal/Core/Cost.lean:leafCount\endcsname{34}
\expandafter\def\csname leanline@Formal/Core/Cost.lean:leafCount_eq\endcsname{54}
\expandafter\def\csname leanline@Formal/Core/Cost.lean:leafCount_le_card\endcsname{90}
\expandafter\def\csname leanline@Formal/Core/Cost.lean:leafCount_le_card_verts\endcsname{62}
\expandafter\def\csname leanline@Formal/Core/Cost.lean:leafCount_leaf\endcsname{43}
\expandafter\def\csname leanline@Formal/Core/Cost.lean:leafCount_node\endcsname{45}
\expandafter\def\csname leanline@Formal/Core/Cost.lean:nodeCount\endcsname{39}
\expandafter\def\csname leanline@Formal/Core/Cost.lean:nodeCount_leaf\endcsname{48}
\expandafter\def\csname leanline@Formal/Core/Cost.lean:nodeCount_node\endcsname{50}
\expandafter\def\csname leanline@Formal/Core/Cost.lean:statesFull\endcsname{78}
\expandafter\def\csname leanline@Formal/Core/Cost.lean:statesFull_le\endcsname{95}
\expandafter\def\csname leanline@Formal/Core/Cost.lean:statesSig\endcsname{82}
\expandafter\def\csname leanline@Formal/Core/Cost.lean:statesSig_le\endcsname{100}
\expandafter\def\csname leanline@Formal/Core/DP.lean:Dspec\endcsname{36}
\expandafter\def\csname leanline@Formal/Core/DP.lean:SopInstance.Dalg\endcsname{43}
\expandafter\def\csname leanline@Formal/Core/DP.lean:combineSig\endcsname{31}
\expandafter\def\csname leanline@Formal/Core/DP.lean:crossTerm\endcsname{27}
\expandafter\def\csname leanline@Formal/Core/DPCorrect.lean:Dalg_eq_Dspec\endcsname{440}
\expandafter\def\csname leanline@Formal/Core/DPCorrect.lean:Dalg_leaf_eq_Dspec\endcsname{50}
\expandafter\def\csname leanline@Formal/Core/DPCorrect.lean:Dalg_node_eq_Dspec\endcsname{385}
\expandafter\def\csname leanline@Formal/Core/DPCorrect.lean:Dspec_as_sum\endcsname{333}
\expandafter\def\csname leanline@Formal/Core/DPCorrect.lean:Dspec_node_as_pair\endcsname{359}
\expandafter\def\csname leanline@Formal/Core/DPCorrect.lean:add_supported_node\endcsname{92}
\expandafter\def\csname leanline@Formal/Core/DPCorrect.lean:crossEdge\endcsname{185}
\expandafter\def\csname leanline@Formal/Core/DPCorrect.lean:dartSum_eq_orderedCross\endcsname{217}
\expandafter\def\csname leanline@Formal/Core/DPCorrect.lean:dp_returns_counts\endcsname{449}
\expandafter\def\csname leanline@Formal/Core/DPCorrect.lean:edgeCross_eq_rawCross\endcsname{256}
\expandafter\def\csname leanline@Formal/Core/DPCorrect.lean:edgeProd_add\endcsname{188}
\expandafter\def\csname leanline@Formal/Core/DPCorrect.lean:edgeSum_eq_dartSum\endcsname{196}
\expandafter\def\csname leanline@Formal/Core/DPCorrect.lean:eta\endcsname{138}
\expandafter\def\csname leanline@Formal/Core/DPCorrect.lean:eta_add\endcsname{140}
\expandafter\def\csname leanline@Formal/Core/DPCorrect.lean:fib\endcsname{340}
\expandafter\def\csname leanline@Formal/Core/DPCorrect.lean:half_selCount\endcsname{177}
\expandafter\def\csname leanline@Formal/Core/DPCorrect.lean:mem_suppFin\endcsname{328}
\expandafter\def\csname leanline@Formal/Core/DPCorrect.lean:orderedCross_eq_rawCross\endcsname{241}
\expandafter\def\csname leanline@Formal/Core/DPCorrect.lean:phi_node_add\endcsname{271}
\expandafter\def\csname leanline@Formal/Core/DPCorrect.lean:restrict\endcsname{79}
\expandafter\def\csname leanline@Formal/Core/DPCorrect.lean:restrict_add_self\endcsname{119}
\expandafter\def\csname leanline@Formal/Core/DPCorrect.lean:restrict_apply_mem\endcsname{88}
\expandafter\def\csname leanline@Formal/Core/DPCorrect.lean:restrict_left_add\endcsname{100}
\expandafter\def\csname leanline@Formal/Core/DPCorrect.lean:restrict_right_add\endcsname{109}
\expandafter\def\csname leanline@Formal/Core/DPCorrect.lean:restrict_supported\endcsname{82}
\expandafter\def\csname leanline@Formal/Core/DPCorrect.lean:selParity\endcsname{163}
\expandafter\def\csname leanline@Formal/Core/DPCorrect.lean:selParity_add\endcsname{262}
\expandafter\def\csname leanline@Formal/Core/DPCorrect.lean:selParity_eq\endcsname{166}
\expandafter\def\csname leanline@Formal/Core/DPCorrect.lean:sig_node_add\endcsname{301}
\expandafter\def\csname leanline@Formal/Core/DPCorrect.lean:suppFin\endcsname{325}
\expandafter\def\csname leanline@Formal/Core/DPCorrect.lean:supported_leaf_iff\endcsname{24}
\expandafter\def\csname leanline@Formal/Core/DPCorrect.lean:zmod2_cases\endcsname{135}
\expandafter\def\csname leanline@Formal/Core/Fourier.lean:Aalg\endcsname{178}
\expandafter\def\csname leanline@Formal/Core/Fourier.lean:Aalg_root\endcsname{183}
\expandafter\def\csname leanline@Formal/Core/Fourier.lean:N_inversion\endcsname{146}
\expandafter\def\csname leanline@Formal/Core/Fourier.lean:Nhat\endcsname{73}
\expandafter\def\csname leanline@Formal/Core/Fourier.lean:Nhat_eq_sum_N\endcsname{76}
\expandafter\def\csname leanline@Formal/Core/Fourier.lean:Nhat_one\endcsname{103}
\expandafter\def\csname leanline@Formal/Core/Fourier.lean:chi\endcsname{31}
\expandafter\def\csname leanline@Formal/Core/Fourier.lean:chi_add\endcsname{59}
\expandafter\def\csname leanline@Formal/Core/Fourier.lean:chi_natCast\endcsname{66}
\expandafter\def\csname leanline@Formal/Core/Fourier.lean:chi_zero\endcsname{63}
\expandafter\def\csname leanline@Formal/Core/Fourier.lean:omega_pow_half\endcsname{47}
\expandafter\def\csname leanline@Formal/Core/Fourier.lean:omega_pow_mod\endcsname{41}
\expandafter\def\csname leanline@Formal/Core/Fourier.lean:omega_pow_r\endcsname{34}
\expandafter\def\csname leanline@Formal/Core/Fourier.lean:single_amplitude\endcsname{198}
\expandafter\def\csname leanline@Formal/Core/Fourier.lean:sum_chi\endcsname{111}
\expandafter\def\csname leanline@Formal/Core/Histogram.lean:Delta\endcsname{67}
\expandafter\def\csname leanline@Formal/Core/Histogram.lean:Mlin\endcsname{77}
\expandafter\def\csname leanline@Formal/Core/Histogram.lean:Mlin_eq_N_add\endcsname{113}
\expandafter\def\csname leanline@Formal/Core/Histogram.lean:N_recovered\endcsname{233}
\expandafter\def\csname leanline@Formal/Core/Histogram.lean:Nlin\endcsname{73}
\expandafter\def\csname leanline@Formal/Core/Histogram.lean:S_eq_sum_Delta\endcsname{139}
\expandafter\def\csname leanline@Formal/Core/Histogram.lean:card_pair\endcsname{97}
\expandafter\def\csname leanline@Formal/Core/Histogram.lean:coords_unique\endcsname{177}
\expandafter\def\csname leanline@Formal/Core/Histogram.lean:f_eq_flin_add\endcsname{80}
\expandafter\def\csname leanline@Formal/Core/Histogram.lean:f_eq_flin_or\endcsname{86}
\expandafter\def\csname leanline@Formal/Core/Histogram.lean:flin\endcsname{70}
\expandafter\def\csname leanline@Formal/Core/Histogram.lean:half\endcsname{39}
\expandafter\def\csname leanline@Formal/Core/Histogram.lean:half_add_half\endcsname{41}
\expandafter\def\csname leanline@Formal/Core/Histogram.lean:half_ne_zero\endcsname{53}
\expandafter\def\csname leanline@Formal/Core/Histogram.lean:neg_half\endcsname{49}
\expandafter\def\csname leanline@Formal/Core/LinearLayout.lean:LayoutWidthLe\endcsname{124}
\expandafter\def\csname leanline@Formal/Core/LinearLayout.lean:caterpillar\endcsname{34}
\expandafter\def\csname leanline@Formal/Core/LinearLayout.lean:caterpillar_WF\endcsname{68}
\expandafter\def\csname leanline@Formal/Core/LinearLayout.lean:caterpillar_append\endcsname{40}
\expandafter\def\csname leanline@Formal/Core/LinearLayout.lean:caterpillar_nil\endcsname{37}
\expandafter\def\csname leanline@Formal/Core/LinearLayout.lean:caterpillar_subtree\endcsname{97}
\expandafter\def\csname leanline@Formal/Core/LinearLayout.lean:caterpillar_verts\endcsname{45}
\expandafter\def\csname leanline@Formal/Core/LinearLayout.lean:costMode_layout_le\endcsname{212}
\expandafter\def\csname leanline@Formal/Core/LinearLayout.lean:costMode_layout_le_all_k\endcsname{179}
\expandafter\def\csname leanline@Formal/Core/LinearLayout.lean:cutRankOf_singleton_le\endcsname{137}
\expandafter\def\csname leanline@Formal/Core/LinearLayout.lean:layoutDecomp\endcsname{91}
\expandafter\def\csname leanline@Formal/Core/LinearLayout.lean:layoutWidthLe_take\endcsname{128}
\expandafter\def\csname leanline@Formal/Core/LinearLayout.lean:layout_widthBounded\endcsname{143}
\expandafter\def\csname leanline@Formal/Core/LinearLayout.lean:statesSig_leaf_le\endcsname{157}
\expandafter\def\csname leanline@Formal/Core/Signature.lean:adj\endcsname{32}
\expandafter\def\csname leanline@Formal/Core/Signature.lean:adj_symm\endcsname{34}
\expandafter\def\csname leanline@Formal/Core/Signature.lean:chi_well_defined\endcsname{124}
\expandafter\def\csname leanline@Formal/Core/Signature.lean:crossPar\endcsname{118}
\expandafter\def\csname leanline@Formal/Core/Signature.lean:f_eq_c_add_phi\endcsname{57}
\expandafter\def\csname leanline@Formal/Core/Signature.lean:phi\endcsname{53}
\expandafter\def\csname leanline@Formal/Core/Signature.lean:rawCross\endcsname{61}
\expandafter\def\csname leanline@Formal/Core/Signature.lean:rawCross_eq_sum_v\endcsname{71}
\expandafter\def\csname leanline@Formal/Core/Signature.lean:rawCross_eq_sum_w\endcsname{64}
\expandafter\def\csname leanline@Formal/Core/Signature.lean:rawCross_left\endcsname{79}
\expandafter\def\csname leanline@Formal/Core/Signature.lean:rawCross_right\endcsname{90}
\expandafter\def\csname leanline@Formal/Core/Signature.lean:rep\endcsname{109}
\expandafter\def\csname leanline@Formal/Core/Signature.lean:rep_spec\endcsname{112}
\expandafter\def\csname leanline@Formal/Core/Signature.lean:sig\endcsname{45}
\expandafter\def\csname leanline@Formal/Core/Signature.lean:sig_apply_not_mem\endcsname{48}
\expandafter\def\csname leanline@Formal/Core/Signature.lean:supported\endcsname{42}
\expandafter\def\csname leanline@Formal/Core/SopInstance.lean:N\endcsname{76}
\expandafter\def\csname leanline@Formal/Core/SopInstance.lean:S\endcsname{73}
\expandafter\def\csname leanline@Formal/Core/SopInstance.lean:SopInstance\endcsname{29}
\expandafter\def\csname leanline@Formal/Core/SopInstance.lean:edgeProd\endcsname{56}
\expandafter\def\csname leanline@Formal/Core/SopInstance.lean:f\endcsname{65}
\expandafter\def\csname leanline@Formal/Core/SopInstance.lean:half_mul_reduce\endcsname{83}
\expandafter\def\csname leanline@Formal/Core/SopInstance.lean:omega\endcsname{70}
\expandafter\def\csname leanline@Formal/Core/SopInstance.lean:selCount\endcsname{60}
\expandafter\def\csname leanline@Formal/Core/Width.lean:card_sig_image_le\endcsname{61}
\expandafter\def\csname leanline@Formal/Core/Width.lean:card_state_image_le\endcsname{80}
\expandafter\def\csname leanline@Formal/Core/Width.lean:cutMatrix\endcsname{28}
\expandafter\def\csname leanline@Formal/Core/Width.lean:cutRankOf\endcsname{32}
\expandafter\def\csname leanline@Formal/Core/Width.lean:recover\endcsname{39}
\expandafter\def\csname leanline@Formal/Core/Width.lean:sig_eq_recover_vecMul\endcsname{45}
\expandafter\def\csname leanline@Formal/Core/Width.lean:toRow\endcsname{35}
\expandafter\def\csname leanline@Formal/Foundations/CutRank.lean:card_image_vecMul\endcsname{26}
\expandafter\def\csname leanline@Formal/Foundations/CutRank.lean:card_image_vecMul_le\endcsname{50}
\expandafter\def\csname leanline@Formal/Foundations/RankDecomp.lean:RTree\endcsname{22}
\expandafter\def\csname leanline@Formal/Foundations/RankDecomp.lean:RankDecomp\endcsname{49}
\expandafter\def\csname leanline@Formal/Foundations/RankDecomp.lean:WF\endcsname{36}
\expandafter\def\csname leanline@Formal/Foundations/RankDecomp.lean:WF_leaf\endcsname{42}
\expandafter\def\csname leanline@Formal/Foundations/RankDecomp.lean:WF_node\endcsname{43}
\expandafter\def\csname leanline@Formal/Foundations/RankDecomp.lean:verts\endcsname{31}
\expandafter\def\csname leanline@Formal/Foundations/RankDecomp.lean:verts_leaf\endcsname{40}
\expandafter\def\csname leanline@Formal/Foundations/RankDecomp.lean:verts_node\endcsname{41}
\expandafter\def\csname leanline@Formal/Paper.lean:amplitude_by_rank_dp_normalized\endcsname{98}
\expandafter\def\csname leanline@Formal/Paper.lean:amplitude_eq_sop_normalized\endcsname{89}
\expandafter\def\csname leanline@Formal/Paper.lean:fourier_mode_spec\endcsname{41}
\expandafter\def\csname leanline@Formal/Paper.lean:histogram_one_mode_spec\endcsname{54}
\expandafter\def\csname leanline@Formal/Paper.lean:linear_fourier_mode_spec\endcsname{73}
\expandafter\def\csname leanline@Formal/Paper.lean:rank_dp_spec\endcsname{27}
\expandafter\def\csname leanline@Formal/Quantum/Capstone.lean:amplitude_by_rank_dp\endcsname{55}
\expandafter\def\csname leanline@Formal/Quantum/Capstone.lean:amplitude_eq_sop\endcsname{43}
\expandafter\def\csname leanline@Formal/Quantum/Capstone.lean:circuitInstance\endcsname{36}
\expandafter\def\csname leanline@Formal/Quantum/Circuit.lean:Circuit\endcsname{22}
\expandafter\def\csname leanline@Formal/Quantum/Circuit.lean:amplitude\endcsname{31}
\expandafter\def\csname leanline@Formal/Quantum/Circuit.lean:circuitMatrix\endcsname{26}
\expandafter\def\csname leanline@Formal/Quantum/Circuit.lean:hadamardCount\endcsname{35}
\expandafter\def\csname leanline@Formal/Quantum/Circuit.lean:normalization\endcsname{39}
\expandafter\def\csname leanline@Formal/Quantum/Coupling.lean:H_rec_ext_last\endcsname{352}
\expandafter\def\csname leanline@Formal/Quantum/Coupling.lean:H_rec_ext_lt\endcsname{338}
\expandafter\def\csname leanline@Formal/Quantum/Coupling.lean:Sym.DataBounded\endcsname{225}
\expandafter\def\csname leanline@Formal/Quantum/Coupling.lean:circuitMatrix_snoc\endcsname{29}
\expandafter\def\csname leanline@Formal/Quantum/Coupling.lean:compileInvariant_holds\endcsname{698}
\expandafter\def\csname leanline@Formal/Quantum/Coupling.lean:compile_invariant\endcsname{665}
\expandafter\def\csname leanline@Formal/Quantum/Coupling.lean:compile_mH_eq_hadamardCount\endcsname{44}
\expandafter\def\csname leanline@Formal/Quantum/Coupling.lean:compile_snoc\endcsname{35}
\expandafter\def\csname leanline@Formal/Quantum/Coupling.lean:dataBounded_compile\endcsname{320}
\expandafter\def\csname leanline@Formal/Quantum/Coupling.lean:dataBounded_init\endcsname{228}
\expandafter\def\csname leanline@Formal/Quantum/Coupling.lean:dataBounded_step\endcsname{231}
\expandafter\def\csname leanline@Formal/Quantum/Coupling.lean:lift28_val\endcsname{74}
\expandafter\def\csname leanline@Formal/Quantum/Coupling.lean:lift28_val4\endcsname{77}
\expandafter\def\csname leanline@Formal/Quantum/Coupling.lean:lift28_val4_diag\endcsname{85}
\expandafter\def\csname leanline@Formal/Quantum/Coupling.lean:lift28_val4_swap\endcsname{81}
\expandafter\def\csname leanline@Formal/Quantum/Coupling.lean:omega8_pow_8\endcsname{54}
\expandafter\def\csname leanline@Formal/Quantum/Coupling.lean:omega8_pow_mod\endcsname{61}
\expandafter\def\csname leanline@Formal/Quantum/Coupling.lean:omega8_pow_val_add\endcsname{65}
\expandafter\def\csname leanline@Formal/Quantum/Coupling.lean:pathSum_H_assemble\endcsname{511}
\expandafter\def\csname leanline@Formal/Quantum/Coupling.lean:pathSum_step_CZ\endcsname{133}
\expandafter\def\csname leanline@Formal/Quantum/Coupling.lean:pathSum_step_H\endcsname{621}
\expandafter\def\csname leanline@Formal/Quantum/Coupling.lean:pathSum_step_T\endcsname{101}
\expandafter\def\csname leanline@Formal/Quantum/Coupling.lean:phase_H_snoc_none\endcsname{367}
\expandafter\def\csname leanline@Formal/Quantum/Coupling.lean:phase_H_snoc_some\endcsname{431}
\expandafter\def\csname leanline@Formal/Quantum/Coupling.lean:sqrt2C_ne_zero\endcsname{69}
\expandafter\def\csname leanline@Formal/Quantum/Coupling.lean:sum_agree_off_reindex\endcsname{642}
\expandafter\def\csname leanline@Formal/Quantum/Coupling.lean:wireVal_none\endcsname{89}
\expandafter\def\csname leanline@Formal/Quantum/Coupling.lean:wireVal_some\endcsname{94}
\expandafter\def\csname leanline@Formal/Quantum/Gates.lean:Gate\endcsname{33}
\expandafter\def\csname leanline@Formal/Quantum/Gates.lean:bit\endcsname{39}
\expandafter\def\csname leanline@Formal/Quantum/Gates.lean:gateMatrix\endcsname{43}
\expandafter\def\csname leanline@Formal/Quantum/Gates.lean:omega8\endcsname{30}
\expandafter\def\csname leanline@Formal/Quantum/Pinning.lean:CurInj\endcsname{34}
\expandafter\def\csname leanline@Formal/Quantum/Pinning.lean:EparIrrefl\endcsname{41}
\expandafter\def\csname leanline@Formal/Quantum/Pinning.lean:EparSym\endcsname{38}
\expandafter\def\csname leanline@Formal/Quantum/Pinning.lean:PinningHolds\endcsname{90}
\expandafter\def\csname leanline@Formal/Quantum/Pinning.lean:constDelta\endcsname{56}
\expandafter\def\csname leanline@Formal/Quantum/Pinning.lean:freeSet\endcsname{48}
\expandafter\def\csname leanline@Formal/Quantum/Pinning.lean:liveSet\endcsname{44}
\expandafter\def\csname leanline@Formal/Quantum/Pinning.lean:pinVal\endcsname{52}
\expandafter\def\csname leanline@Formal/Quantum/Pinning.lean:pinnedInstance\endcsname{63}
\expandafter\def\csname leanline@Formal/Quantum/PinningLemmas.lean:curInj_compile\endcsname{158}
\expandafter\def\csname leanline@Formal/Quantum/PinningLemmas.lean:curInj_init\endcsname{27}
\expandafter\def\csname leanline@Formal/Quantum/PinningLemmas.lean:curInj_step\endcsname{35}
\expandafter\def\csname leanline@Formal/Quantum/PinningLemmas.lean:disjoint_free_live\endcsname{301}
\expandafter\def\csname leanline@Formal/Quantum/PinningLemmas.lean:eparIrrefl_compile\endcsname{164}
\expandafter\def\csname leanline@Formal/Quantum/PinningLemmas.lean:eparIrrefl_init\endcsname{33}
\expandafter\def\csname leanline@Formal/Quantum/PinningLemmas.lean:eparIrrefl_step\endcsname{111}
\expandafter\def\csname leanline@Formal/Quantum/PinningLemmas.lean:eparSym_compile\endcsname{161}
\expandafter\def\csname leanline@Formal/Quantum/PinningLemmas.lean:eparSym_init\endcsname{31}
\expandafter\def\csname leanline@Formal/Quantum/PinningLemmas.lean:eparSym_step\endcsname{73}
\expandafter\def\csname leanline@Formal/Quantum/PinningLemmas.lean:ext_fin\endcsname{204}
\expandafter\def\csname leanline@Formal/Quantum/PinningLemmas.lean:ext_mergeAssign\endcsname{192}
\expandafter\def\csname leanline@Formal/Quantum/PinningLemmas.lean:flipEdge_eparIrrefl\endcsname{102}
\expandafter\def\csname leanline@Formal/Quantum/PinningLemmas.lean:flipEdge_eparSym\endcsname{63}
\expandafter\def\csname leanline@Formal/Quantum/PinningLemmas.lean:free_union_live\endcsname{297}
\expandafter\def\csname leanline@Formal/Quantum/PinningLemmas.lean:indicator_iff\endcsname{251}
\expandafter\def\csname leanline@Formal/Quantum/PinningLemmas.lean:invariants_compile\endcsname{140}
\expandafter\def\csname leanline@Formal/Quantum/PinningLemmas.lean:lift28_mul_ite\endcsname{338}
\expandafter\def\csname leanline@Formal/Quantum/PinningLemmas.lean:liveSet_subset_range\endcsname{173}
\expandafter\def\csname leanline@Formal/Quantum/PinningLemmas.lean:mem_freeSet_iff\endcsname{200}
\expandafter\def\csname leanline@Formal/Quantum/PinningLemmas.lean:mem_liveSet_iff\endcsname{170}
\expandafter\def\csname leanline@Formal/Quantum/PinningLemmas.lean:mergeAssign\endcsname{188}
\expandafter\def\csname leanline@Formal/Quantum/PinningLemmas.lean:merge_free\endcsname{285}
\expandafter\def\csname leanline@Formal/Quantum/PinningLemmas.lean:merge_live\endcsname{290}
\expandafter\def\csname leanline@Formal/Quantum/PinningLemmas.lean:omega_pinned\endcsname{278}
\expandafter\def\csname leanline@Formal/Quantum/PinningLemmas.lean:pathSum_pinned\endcsname{525}
\expandafter\def\csname leanline@Formal/Quantum/PinningLemmas.lean:phase_merge\endcsname{509}
\expandafter\def\csname leanline@Formal/Quantum/PinningLemmas.lean:pinVal_spec\endcsname{178}
\expandafter\def\csname leanline@Formal/Quantum/PinningLemmas.lean:pin_bsum\endcsname{328}
\expandafter\def\csname leanline@Formal/Quantum/PinningLemmas.lean:pin_freefree\endcsname{344}
\expandafter\def\csname leanline@Formal/Quantum/PinningLemmas.lean:pin_mixed\endcsname{413}
\expandafter\def\csname leanline@Formal/Quantum/PinningLemmas.lean:pin_quad\endcsname{448}
\expandafter\def\csname leanline@Formal/Quantum/PinningLemmas.lean:pinned_f_eval\endcsname{471}
\expandafter\def\csname leanline@Formal/Quantum/PinningLemmas.lean:pinningHolds\endcsname{555}
\expandafter\def\csname leanline@Formal/Quantum/PinningLemmas.lean:sum_free_attach\endcsname{307}
\expandafter\def\csname leanline@Formal/Quantum/PinningLemmas.lean:sum_pinned_reindex\endcsname{210}
\expandafter\def\csname leanline@Formal/Quantum/PinningLemmas.lean:sum_union_double\endcsname{316}
\expandafter\def\csname leanline@Formal/Quantum/SymLemmas.lean:WFS\endcsname{180}
\expandafter\def\csname leanline@Formal/Quantum/SymLemmas.lean:addB_b\endcsname{56}
\expandafter\def\csname leanline@Formal/Quantum/SymLemmas.lean:addB_c\endcsname{54}
\expandafter\def\csname leanline@Formal/Quantum/SymLemmas.lean:addB_cur\endcsname{52}
\expandafter\def\csname leanline@Formal/Quantum/SymLemmas.lean:addB_epar\endcsname{59}
\expandafter\def\csname leanline@Formal/Quantum/SymLemmas.lean:addB_ext\endcsname{63}
\expandafter\def\csname leanline@Formal/Quantum/SymLemmas.lean:addB_mH\endcsname{61}
\expandafter\def\csname leanline@Formal/Quantum/SymLemmas.lean:addB_next\endcsname{50}
\expandafter\def\csname leanline@Formal/Quantum/SymLemmas.lean:addB_wireVal\endcsname{65}
\expandafter\def\csname leanline@Formal/Quantum/SymLemmas.lean:addC_b\endcsname{40}
\expandafter\def\csname leanline@Formal/Quantum/SymLemmas.lean:addC_c\endcsname{38}
\expandafter\def\csname leanline@Formal/Quantum/SymLemmas.lean:addC_cur\endcsname{36}
\expandafter\def\csname leanline@Formal/Quantum/SymLemmas.lean:addC_epar\endcsname{42}
\expandafter\def\csname leanline@Formal/Quantum/SymLemmas.lean:addC_ext\endcsname{46}
\expandafter\def\csname leanline@Formal/Quantum/SymLemmas.lean:addC_mH\endcsname{44}
\expandafter\def\csname leanline@Formal/Quantum/SymLemmas.lean:addC_next\endcsname{34}
\expandafter\def\csname leanline@Formal/Quantum/SymLemmas.lean:addC_wireVal\endcsname{48}
\expandafter\def\csname leanline@Formal/Quantum/SymLemmas.lean:flipEdge_b\endcsname{74}
\expandafter\def\csname leanline@Formal/Quantum/SymLemmas.lean:flipEdge_c\endcsname{72}
\expandafter\def\csname leanline@Formal/Quantum/SymLemmas.lean:flipEdge_cur\endcsname{70}
\expandafter\def\csname leanline@Formal/Quantum/SymLemmas.lean:flipEdge_epar\endcsname{76}
\expandafter\def\csname leanline@Formal/Quantum/SymLemmas.lean:flipEdge_ext\endcsname{82}
\expandafter\def\csname leanline@Formal/Quantum/SymLemmas.lean:flipEdge_mH\endcsname{80}
\expandafter\def\csname leanline@Formal/Quantum/SymLemmas.lean:flipEdge_next\endcsname{68}
\expandafter\def\csname leanline@Formal/Quantum/SymLemmas.lean:flipEdge_wireVal\endcsname{84}
\expandafter\def\csname leanline@Formal/Quantum/SymLemmas.lean:lift28_succ\endcsname{223}
\expandafter\def\csname leanline@Formal/Quantum/SymLemmas.lean:pathSum_init\endcsname{354}
\expandafter\def\csname leanline@Formal/Quantum/SymLemmas.lean:phase_addB\endcsname{231}
\expandafter\def\csname leanline@Formal/Quantum/SymLemmas.lean:phase_addC\endcsname{226}
\expandafter\def\csname leanline@Formal/Quantum/SymLemmas.lean:phase_flipEdge\endcsname{283}
\expandafter\def\csname leanline@Formal/Quantum/SymLemmas.lean:step_CZ_none_none\endcsname{135}
\expandafter\def\csname leanline@Formal/Quantum/SymLemmas.lean:step_CZ_none_some\endcsname{140}
\expandafter\def\csname leanline@Formal/Quantum/SymLemmas.lean:step_CZ_some_none\endcsname{145}
\expandafter\def\csname leanline@Formal/Quantum/SymLemmas.lean:step_CZ_some_some_eq\endcsname{150}
\expandafter\def\csname leanline@Formal/Quantum/SymLemmas.lean:step_CZ_some_some_ne\endcsname{155}
\expandafter\def\csname leanline@Formal/Quantum/SymLemmas.lean:step_H_ext_snoc_last\endcsname{341}
\expandafter\def\csname leanline@Formal/Quantum/SymLemmas.lean:step_H_ext_snoc_lt\endcsname{331}
\expandafter\def\csname leanline@Formal/Quantum/SymLemmas.lean:step_H_none\endcsname{160}
\expandafter\def\csname leanline@Formal/Quantum/SymLemmas.lean:step_H_some\endcsname{168}
\expandafter\def\csname leanline@Formal/Quantum/SymLemmas.lean:step_T_none\endcsname{127}
\expandafter\def\csname leanline@Formal/Quantum/SymLemmas.lean:step_T_some\endcsname{131}
\expandafter\def\csname leanline@Formal/Quantum/SymLemmas.lean:step_cur_CZ\endcsname{120}
\expandafter\def\csname leanline@Formal/Quantum/SymLemmas.lean:step_cur_H\endcsname{95}
\expandafter\def\csname leanline@Formal/Quantum/SymLemmas.lean:step_cur_T\endcsname{106}
\expandafter\def\csname leanline@Formal/Quantum/SymLemmas.lean:step_mH_CZ\endcsname{115}
\expandafter\def\csname leanline@Formal/Quantum/SymLemmas.lean:step_mH_H\endcsname{92}
\expandafter\def\csname leanline@Formal/Quantum/SymLemmas.lean:step_mH_T\endcsname{102}
\expandafter\def\csname leanline@Formal/Quantum/SymLemmas.lean:step_next_CZ\endcsname{110}
\expandafter\def\csname leanline@Formal/Quantum/SymLemmas.lean:step_next_H\endcsname{89}
\expandafter\def\csname leanline@Formal/Quantum/SymLemmas.lean:step_next_T\endcsname{98}
\expandafter\def\csname leanline@Formal/Quantum/SymLemmas.lean:sum_extend\endcsname{316}
\expandafter\def\csname leanline@Formal/Quantum/SymLemmas.lean:sum_sum_ite_pair\endcsname{266}
\expandafter\def\csname leanline@Formal/Quantum/SymLemmas.lean:sum_sum_ite_single\endcsname{250}
\expandafter\def\csname leanline@Formal/Quantum/SymLemmas.lean:wfs_compile\endcsname{218}
\expandafter\def\csname leanline@Formal/Quantum/SymLemmas.lean:wfs_foldl\endcsname{210}
\expandafter\def\csname leanline@Formal/Quantum/SymLemmas.lean:wfs_init\endcsname{182}
\expandafter\def\csname leanline@Formal/Quantum/SymLemmas.lean:wfs_step\endcsname{186}
\expandafter\def\csname leanline@Formal/Quantum/Symbolic.lean:CompileInvariant\endcsname{123}
\expandafter\def\csname leanline@Formal/Quantum/Symbolic.lean:Sym\endcsname{31}
\expandafter\def\csname leanline@Formal/Quantum/Symbolic.lean:addB\endcsname{59}
\expandafter\def\csname leanline@Formal/Quantum/Symbolic.lean:addC\endcsname{63}
\expandafter\def\csname leanline@Formal/Quantum/Symbolic.lean:compile\endcsname{93}
\expandafter\def\csname leanline@Formal/Quantum/Symbolic.lean:ext\endcsname{99}
\expandafter\def\csname leanline@Formal/Quantum/Symbolic.lean:flipEdge\endcsname{54}
\expandafter\def\csname leanline@Formal/Quantum/Symbolic.lean:init\endcsname{50}
\expandafter\def\csname leanline@Formal/Quantum/Symbolic.lean:lift28\endcsname{67}
\expandafter\def\csname leanline@Formal/Quantum/Symbolic.lean:pathSum\endcsname{115}
\expandafter\def\csname leanline@Formal/Quantum/Symbolic.lean:phase\endcsname{109}
\expandafter\def\csname leanline@Formal/Quantum/Symbolic.lean:step\endcsname{70}
\expandafter\def\csname leanline@Formal/Quantum/Symbolic.lean:wireVal\endcsname{103}
\expandafter\def\csname leanline@Formal/Stab/Affine.lean:affine\endcsname{85}
\expandafter\def\csname leanline@Formal/Stab/Affine.lean:affineIndicator\endcsname{104}
\expandafter\def\csname leanline@Formal/Stab/Affine.lean:affineSystem\endcsname{141}
\expandafter\def\csname leanline@Formal/Stab/Affine.lean:affine_mul_expand\endcsname{165}
\expandafter\def\csname leanline@Formal/Stab/Affine.lean:affine_restrict\endcsname{89}
\expandafter\def\csname leanline@Formal/Stab/Affine.lean:affine_single\endcsname{97}
\expandafter\def\csname leanline@Formal/Stab/Affine.lean:affine_succ\endcsname{263}
\expandafter\def\csname leanline@Formal/Stab/Affine.lean:cast_val_mul_lift4\endcsname{77}
\expandafter\def\csname leanline@Formal/Stab/Affine.lean:eval_affineIndicator\endcsname{127}
\expandafter\def\csname leanline@Formal/Stab/Affine.lean:isQPF_affineIndicator\endcsname{136}
\expandafter\def\csname leanline@Formal/Stab/Affine.lean:isQPF_affinePhase\endcsname{256}
\expandafter\def\csname leanline@Formal/Stab/Affine.lean:isQPF_affinePhase_aux\endcsname{223}
\expandafter\def\csname leanline@Formal/Stab/Affine.lean:isQPF_affineProdSign\endcsname{192}
\expandafter\def\csname leanline@Formal/Stab/Affine.lean:isQPF_affineSystem\endcsname{151}
\expandafter\def\csname leanline@Formal/Stab/Affine.lean:isQPF_coordPhase\endcsname{206}
\expandafter\def\csname leanline@Formal/Stab/Affine.lean:isQPF_onePlusIPow\endcsname{274}
\expandafter\def\csname leanline@Formal/Stab/Affine.lean:lift4_mul\endcsname{71}
\expandafter\def\csname leanline@Formal/Stab/Affine.lean:sat_affineIndicator_iff\endcsname{113}
\expandafter\def\csname leanline@Formal/Stab/Affine.lean:sgn_eq_ite\endcsname{64}
\expandafter\def\csname leanline@Formal/Stab/Affine.lean:sgn_one\endcsname{41}
\expandafter\def\csname leanline@Formal/Stab/Affine.lean:zeta4_one\endcsname{46}
\expandafter\def\csname leanline@Formal/Stab/Affine.lean:zeta4_three\endcsname{56}
\expandafter\def\csname leanline@Formal/Stab/Affine.lean:zeta4_two\endcsname{51}
\expandafter\def\csname leanline@Formal/Stab/Closure.lean:IsQPF.comp_elim_zero\endcsname{468}
\expandafter\def\csname leanline@Formal/Stab/Closure.lean:IsQPF.comp_inl\endcsname{462}
\expandafter\def\csname leanline@Formal/Stab/Closure.lean:IsQPF.pushforward\endcsname{523}
\expandafter\def\csname leanline@Formal/Stab/Closure.lean:IsQPF.sumLeft\endcsname{490}
\expandafter\def\csname leanline@Formal/Stab/Closure.lean:IsQPF.sumOne\endcsname{314}
\expandafter\def\csname leanline@Formal/Stab/Closure.lean:IsQPF.sumOver\endcsname{334}
\expandafter\def\csname leanline@Formal/Stab/Closure.lean:dropVar\endcsname{57}
\expandafter\def\csname leanline@Formal/Stab/Closure.lean:dropVar_M\endcsname{66}
\expandafter\def\csname leanline@Formal/Stab/Closure.lean:dropVar_d\endcsname{69}
\expandafter\def\csname leanline@Formal/Stab/Closure.lean:eq_pin_of_sat\endcsname{224}
\expandafter\def\csname leanline@Formal/Stab/Closure.lean:eval_padRight\endcsname{415}
\expandafter\def\csname leanline@Formal/Stab/Closure.lean:eval_sum_update_free\endcsname{172}
\expandafter\def\csname leanline@Formal/Stab/Closure.lean:eval_sum_update_pinned\endcsname{301}
\expandafter\def\csname leanline@Formal/Stab/Closure.lean:eval_update_pin\endcsname{286}
\expandafter\def\csname leanline@Formal/Stab/Closure.lean:eval_zeroLeft\endcsname{451}
\expandafter\def\csname leanline@Formal/Stab/Closure.lean:linVal_dropVar\endcsname{75}
\expandafter\def\csname leanline@Formal/Stab/Closure.lean:linVal_padRight\endcsname{396}
\expandafter\def\csname leanline@Formal/Stab/Closure.lean:linVal_update\endcsname{105}
\expandafter\def\csname leanline@Formal/Stab/Closure.lean:linVal_zeroLeft\endcsname{432}
\expandafter\def\csname leanline@Formal/Stab/Closure.lean:mulVec_update\endcsname{195}
\expandafter\def\csname leanline@Formal/Stab/Closure.lean:padRight\endcsname{384}
\expandafter\def\csname leanline@Formal/Stab/Closure.lean:pin\endcsname{210}
\expandafter\def\csname leanline@Formal/Stab/Closure.lean:pinCoef\endcsname{206}
\expandafter\def\csname leanline@Formal/Stab/Closure.lean:pin_eq\endcsname{213}
\expandafter\def\csname leanline@Formal/Stab/Closure.lean:quadCoef\endcsname{101}
\expandafter\def\csname leanline@Formal/Stab/Closure.lean:quadVal_dropVar\endcsname{84}
\expandafter\def\csname leanline@Formal/Stab/Closure.lean:quadVal_padRight\endcsname{400}
\expandafter\def\csname leanline@Formal/Stab/Closure.lean:quadVal_update\endcsname{113}
\expandafter\def\csname leanline@Formal/Stab/Closure.lean:quadVal_zeroLeft\endcsname{436}
\expandafter\def\csname leanline@Formal/Stab/Closure.lean:sat_dropVar_iff\endcsname{71}
\expandafter\def\csname leanline@Formal/Stab/Closure.lean:sat_padRight_iff\endcsname{404}
\expandafter\def\csname leanline@Formal/Stab/Closure.lean:sat_substVar_iff\endcsname{265}
\expandafter\def\csname leanline@Formal/Stab/Closure.lean:sat_update_of_col_zero\endcsname{154}
\expandafter\def\csname leanline@Formal/Stab/Closure.lean:sat_zeroLeft_iff\endcsname{440}
\expandafter\def\csname leanline@Formal/Stab/Closure.lean:substVar\endcsname{236}
\expandafter\def\csname leanline@Formal/Stab/Closure.lean:substVar_eps\endcsname{245}
\expandafter\def\csname leanline@Formal/Stab/Closure.lean:substVar_linVal\endcsname{248}
\expandafter\def\csname leanline@Formal/Stab/Closure.lean:substVar_quadVal\endcsname{251}
\expandafter\def\csname leanline@Formal/Stab/Closure.lean:substVar_rowVal\endcsname{254}
\expandafter\def\csname leanline@Formal/Stab/Closure.lean:sum_split_erase\endcsname{38}
\expandafter\def\csname leanline@Formal/Stab/Closure.lean:sum_sumElim_eq\endcsname{476}
\expandafter\def\csname leanline@Formal/Stab/Closure.lean:sum_zmod_two\endcsname{43}
\expandafter\def\csname leanline@Formal/Stab/Closure.lean:zeroLeft\endcsname{424}
\expandafter\def\csname leanline@Formal/Stab/Hybrid.lean:bestS\endcsname{172}
\expandafter\def\csname leanline@Formal/Stab/Hybrid.lean:bestS_leaf\endcsname{178}
\expandafter\def\csname leanline@Formal/Stab/Hybrid.lean:bestS_node\endcsname{181}
\expandafter\def\csname leanline@Formal/Stab/Hybrid.lean:costOpt\endcsname{143}
\expandafter\def\csname leanline@Formal/Stab/Hybrid.lean:costOpt_le\endcsname{157}
\expandafter\def\csname leanline@Formal/Stab/Hybrid.lean:costOpt_leaf\endcsname{148}
\expandafter\def\csname leanline@Formal/Stab/Hybrid.lean:costOpt_node\endcsname{151}
\expandafter\def\csname leanline@Formal/Stab/Hybrid.lean:costSwitch\endcsname{65}
\expandafter\def\csname leanline@Formal/Stab/Hybrid.lean:costSwitch_bestS\endcsname{189}
\expandafter\def\csname leanline@Formal/Stab/Hybrid.lean:costSwitch_clifford_le\endcsname{125}
\expandafter\def\csname leanline@Formal/Stab/Hybrid.lean:costSwitch_false_eq_costMode\endcsname{85}
\expandafter\def\csname leanline@Formal/Stab/Hybrid.lean:costSwitch_false_le\endcsname{96}
\expandafter\def\csname leanline@Formal/Stab/Hybrid.lean:costSwitch_leaf\endcsname{71}
\expandafter\def\csname leanline@Formal/Stab/Hybrid.lean:costSwitch_node\endcsname{75}
\expandafter\def\csname leanline@Formal/Stab/Hybrid.lean:cutRankOf_verts_root\endcsname{105}
\expandafter\def\csname leanline@Formal/Stab/Hybrid.lean:statesSig_root\endcsname{117}
\expandafter\def\csname leanline@Formal/Stab/Hybrid.lean:switchCharge\endcsname{59}
\expandafter\def\csname leanline@Formal/Stab/Join.lean:IsQPF.comp_inr\endcsname{40}
\expandafter\def\csname leanline@Formal/Stab/Join.lean:crossPar\endcsname{47}
\expandafter\def\csname leanline@Formal/Stab/Join.lean:crossQuad\endcsname{66}
\expandafter\def\csname leanline@Formal/Stab/Join.lean:crossQuad_apply\endcsname{72}
\expandafter\def\csname leanline@Formal/Stab/Join.lean:isQPF_crossSign\endcsname{91}
\expandafter\def\csname leanline@Formal/Stab/Join.lean:isQPF_join\endcsname{101}
\expandafter\def\csname leanline@Formal/Stab/Join.lean:isQPF_join_point\endcsname{125}
\expandafter\def\csname leanline@Formal/Stab/Join.lean:joinMap\endcsname{52}
\expandafter\def\csname leanline@Formal/Stab/Join.lean:joinMap_mulVec\endcsname{56}
\expandafter\def\csname leanline@Formal/Stab/Magic.lean:ACliff\endcsname{233}
\expandafter\def\csname leanline@Formal/Stab/Magic.lean:ACliff_of_dvd_b\endcsname{259}
\expandafter\def\csname leanline@Formal/Stab/Magic.lean:ACliff_of_dvd_mode\endcsname{264}
\expandafter\def\csname leanline@Formal/Stab/Magic.lean:ACliff_of_dvd_mul\endcsname{250}
\expandafter\def\csname leanline@Formal/Stab/Magic.lean:AMagic\endcsname{237}
\expandafter\def\csname leanline@Formal/Stab/Magic.lean:Nhat_even\endcsname{472}
\expandafter\def\csname leanline@Formal/Stab/Magic.lean:card_orderedSel_eq_selCount\endcsname{187}
\expandafter\def\csname leanline@Formal/Stab/Magic.lean:card_powerset_magicSet\endcsname{367}
\expandafter\def\csname leanline@Formal/Stab/Magic.lean:chi_half\endcsname{139}
\expandafter\def\csname leanline@Formal/Stab/Magic.lean:chi_mode_phi\endcsname{299}
\expandafter\def\csname leanline@Formal/Stab/Magic.lean:chi_sum\endcsname{288}
\expandafter\def\csname leanline@Formal/Stab/Magic.lean:decidableACliff\endcsname{240}
\expandafter\def\csname leanline@Formal/Stab/Magic.lean:decidableAMagic\endcsname{245}
\expandafter\def\csname leanline@Formal/Stab/Magic.lean:edgeProd_mk\endcsname{151}
\expandafter\def\csname leanline@Formal/Stab/Magic.lean:exists_zeta4_of_cliff\endcsname{271}
\expandafter\def\csname leanline@Formal/Stab/Magic.lean:isQPF_bitPhase\endcsname{97}
\expandafter\def\csname leanline@Formal/Stab/Magic.lean:isQPF_finsetProd\endcsname{83}
\expandafter\def\csname leanline@Formal/Stab/Magic.lean:isQPF_modeWeight_of_cliff\endcsname{326}
\expandafter\def\csname leanline@Formal/Stab/Magic.lean:isQPF_modeWeight_of_even_b\endcsname{341}
\expandafter\def\csname leanline@Formal/Stab/Magic.lean:isQPF_modeWeight_of_even_mode\endcsname{352}
\expandafter\def\csname leanline@Formal/Stab/Magic.lean:magicSet\endcsname{363}
\expandafter\def\csname leanline@Formal/Stab/Magic.lean:modeWeight_sum_qpf\endcsname{381}
\expandafter\def\csname leanline@Formal/Stab/Magic.lean:mul_f_of_even\endcsname{456}
\expandafter\def\csname leanline@Formal/Stab/Magic.lean:mul_half_eq_zero_of_even\endcsname{442}
\expandafter\def\csname leanline@Formal/Stab/Magic.lean:omega_pow_quarter\endcsname{125}
\expandafter\def\csname leanline@Formal/Stab/Magic.lean:one_bit_decomp\endcsname{66}
\expandafter\def\csname leanline@Formal/Stab/Magic.lean:selCount_cast\endcsname{221}
\expandafter\def\csname leanline@Formal/Stab/Magic.lean:triAdj\endcsname{157}
\expandafter\def\csname leanline@Formal/Stab/Magic.lean:triAdj_apply\endcsname{161}
\expandafter\def\csname leanline@Formal/Stab/Magic.lean:triAdj_mul_eq_ite\endcsname{167}
\expandafter\def\csname leanline@Formal/Stab/Magic.lean:zeta4_sum\endcsname{55}
\expandafter\def\csname leanline@Formal/Stab/Paper.lean:clifford_collapse_spec\endcsname{86}
\expandafter\def\csname leanline@Formal/Stab/Paper.lean:hybrid_never_loses_spec\endcsname{97}
\expandafter\def\csname leanline@Formal/Stab/Paper.lean:magic_decomp_spec\endcsname{73}
\expandafter\def\csname leanline@Formal/Stab/Paper.lean:qpf_closure_spec\endcsname{44}
\expandafter\def\csname leanline@Formal/Stab/Paper.lean:stab_join_spec\endcsname{58}
\expandafter\def\csname leanline@Formal/Stab/Paper.lean:switch_optimum_spec\endcsname{108}
\expandafter\def\csname leanline@Formal/Stab/Paper.lean:the\endcsname{57}
\expandafter\def\csname leanline@Formal/Stab/QPF.lean:I_pow_four\endcsname{55}
\expandafter\def\csname leanline@Formal/Stab/QPF.lean:I_pow_mod\endcsname{61}
\expandafter\def\csname leanline@Formal/Stab/QPF.lean:IsQPF\endcsname{341}
\expandafter\def\csname leanline@Formal/Stab/QPF.lean:IsQPF.congr\endcsname{346}
\expandafter\def\csname leanline@Formal/Stab/QPF.lean:IsQPF.mul\endcsname{369}
\expandafter\def\csname leanline@Formal/Stab/QPF.lean:IsQPF.reindex\endcsname{380}
\expandafter\def\csname leanline@Formal/Stab/QPF.lean:IsQPF.smul\endcsname{375}
\expandafter\def\csname leanline@Formal/Stab/QPF.lean:QPF\endcsname{124}
\expandafter\def\csname leanline@Formal/Stab/QPF.lean:Sat\endcsname{149}
\expandafter\def\csname leanline@Formal/Stab/QPF.lean:cast_val_lift4\endcsname{116}
\expandafter\def\csname leanline@Formal/Stab/QPF.lean:const\endcsname{189}
\expandafter\def\csname leanline@Formal/Stab/QPF.lean:eval\endcsname{153}
\expandafter\def\csname leanline@Formal/Stab/QPF.lean:eval_const\endcsname{191}
\expandafter\def\csname leanline@Formal/Stab/QPF.lean:eval_mul\endcsname{261}
\expandafter\def\csname leanline@Formal/Stab/QPF.lean:eval_of_not_sat\endcsname{161}
\expandafter\def\csname leanline@Formal/Stab/QPF.lean:eval_of_sat\endcsname{156}
\expandafter\def\csname leanline@Formal/Stab/QPF.lean:eval_phase\endcsname{182}
\expandafter\def\csname leanline@Formal/Stab/QPF.lean:eval_point\endcsname{209}
\expandafter\def\csname leanline@Formal/Stab/QPF.lean:eval_reindex\endcsname{326}
\expandafter\def\csname leanline@Formal/Stab/QPF.lean:eval_smul\endcsname{278}
\expandafter\def\csname leanline@Formal/Stab/QPF.lean:isQPF_const\endcsname{351}
\expandafter\def\csname leanline@Formal/Stab/QPF.lean:isQPF_one\endcsname{356}
\expandafter\def\csname leanline@Formal/Stab/QPF.lean:isQPF_phase\endcsname{358}
\expandafter\def\csname leanline@Formal/Stab/QPF.lean:isQPF_point\endcsname{363}
\expandafter\def\csname leanline@Formal/Stab/QPF.lean:isQPF_zero\endcsname{354}
\expandafter\def\csname leanline@Formal/Stab/QPF.lean:lift4\endcsname{100}
\expandafter\def\csname leanline@Formal/Stab/QPF.lean:lift4_add\endcsname{110}
\expandafter\def\csname leanline@Formal/Stab/QPF.lean:lift4_one\endcsname{104}
\expandafter\def\csname leanline@Formal/Stab/QPF.lean:lift4_zero\endcsname{102}
\expandafter\def\csname leanline@Formal/Stab/QPF.lean:linVal\endcsname{143}
\expandafter\def\csname leanline@Formal/Stab/QPF.lean:linVal_mul\endcsname{236}
\expandafter\def\csname leanline@Formal/Stab/QPF.lean:linVal_reindex\endcsname{300}
\expandafter\def\csname leanline@Formal/Stab/QPF.lean:mul\endcsname{219}
\expandafter\def\csname leanline@Formal/Stab/QPF.lean:mul_eps\endcsname{228}
\expandafter\def\csname leanline@Formal/Stab/QPF.lean:mul_lin\endcsname{231}
\expandafter\def\csname leanline@Formal/Stab/QPF.lean:mul_quad\endcsname{234}
\expandafter\def\csname leanline@Formal/Stab/QPF.lean:neg_one_pow_mod\endcsname{65}
\expandafter\def\csname leanline@Formal/Stab/QPF.lean:phase\endcsname{168}
\expandafter\def\csname leanline@Formal/Stab/QPF.lean:point\endcsname{196}
\expandafter\def\csname leanline@Formal/Stab/QPF.lean:quadVal\endcsname{146}
\expandafter\def\csname leanline@Formal/Stab/QPF.lean:quadVal_mul\endcsname{241}
\expandafter\def\csname leanline@Formal/Stab/QPF.lean:quadVal_reindex\endcsname{306}
\expandafter\def\csname leanline@Formal/Stab/QPF.lean:reindex\endcsname{290}
\expandafter\def\csname leanline@Formal/Stab/QPF.lean:sat_mul_iff\endcsname{249}
\expandafter\def\csname leanline@Formal/Stab/QPF.lean:sat_phase\endcsname{176}
\expandafter\def\csname leanline@Formal/Stab/QPF.lean:sat_point_iff\endcsname{204}
\expandafter\def\csname leanline@Formal/Stab/QPF.lean:sat_reindex_iff\endcsname{314}
\expandafter\def\csname leanline@Formal/Stab/QPF.lean:sgn\endcsname{53}
\expandafter\def\csname leanline@Formal/Stab/QPF.lean:sgn_add\endcsname{83}
\expandafter\def\csname leanline@Formal/Stab/QPF.lean:sgn_ne_zero\endcsname{87}
\expandafter\def\csname leanline@Formal/Stab/QPF.lean:sgn_zero\endcsname{80}
\expandafter\def\csname leanline@Formal/Stab/QPF.lean:smul\endcsname{276}
\expandafter\def\csname leanline@Formal/Stab/QPF.lean:zeta4\endcsname{50}
\expandafter\def\csname leanline@Formal/Stab/QPF.lean:zeta4_add\endcsname{73}
\expandafter\def\csname leanline@Formal/Stab/QPF.lean:zeta4_ne_zero\endcsname{77}
\expandafter\def\csname leanline@Formal/Stab/QPF.lean:zeta4_two_mul\endcsname{93}
\expandafter\def\csname leanline@Formal/Stab/QPF.lean:zeta4_zero\endcsname{70}
}{%
  \GenericWarning{}{lean-anchors.tex missing: run scripts/lean_anchors.py}}

\title{Quadratic Sums-of-Powers for Fixed-Parameter Tractable Quantum-Circuit Simulation}
\author[1]{Alexis de Colnet}\email{a.e.h.de.colnet@liacs.leidenuniv.nl}
\author[2]{Floris Geerts}\email{floris.geerts@uantwerpen.be}
\author[3]{Rihan Hai}\email{r.hai@tudelft.nl}
\author[1]{Alfons Laarman}\email{a.w.laarman@liacs.leidenuniv.nl}
\author[1]{Joon Hyung Lee}\email{j.h.lee@liacs.leidenuniv.nl}
\author[2]{Guillermo A.\ P\'erez}
\email{guillermo.perez@uantwerpen.be}
\affil[1]{Leiden University, Netherlands}
\affil[2]{University of Antwerp, Belgium}
\affil[3]{TU Delft, Netherlands}
\date{}

\begin{document}

\maketitle
\begin{abstract}
Strongly simulating a quantum circuit, that is, computing an output amplitude, can be done by summing the circuit's Feynman paths: a weighted count over assignments to Boolean path variables.
The circuit's gates induce correlations among these variables, forming a graph whose structure controls several exact simulation routes.
This \emph{sum-of-powers} (SOP) viewpoint underlies recent simulators built on binary decision diagrams and weighted model counting.

For a quadratic SOP with $n$ variables, even modulus $r$, and a rank-decomposition of its variable graph of width $k$, our dynamic program (DP) computes an amplitude using only $O(4^k\poly(n))$ arithmetic operations.
For Clifford$+\T$ circuits, the amplitude is given by an SOP with modulus $8$.

Rank-width never exceeds linear rank-width, which governs some decision-diagram approaches, and is at most one greater than the Markov--Shi contraction complexity of the circuit tensor network.
Moreover, there are non-Clifford families of bounded rank-width where both competing parameters diverge.
We also present a \emph{stabilizer-rank optimization}, exploiting that the DP tables are stabilizer-type Gauss sums.
Each subtree runs at the \emph{width} price of its cut-ranks or at a \emph{magic} price that discharges the non-Clifford phases below it.
The resulting \emph{best total cost} never exceeds $O(4^k\poly(n))$, yet is polynomial on mixed families where the pure rank-width and pure $\T$-count guarantees are both exponential.
Clifford amplitudes take polynomial time on any graph, the exact-amplitude consequence of Gottesman--Knill.

A prototype evaluation on standard circuit benchmarks finds treewidth bucket elimination the strongest baseline, with the new rank-width DP complementary: it wins on structured families where treewidth blows up.
\end{abstract}

\clearpage
\tableofcontents
\clearpage

\section{Introduction}
Classical strong simulation of a quantum circuit $C$ supports compilation, optimization, and noise studies, and helps identify classically intractable regimes.
Because a $q$-qubit state has $2^q$ amplitudes, useful simulators must exploit structure rather than store it naively.

Tensor-network contraction is one such route.
For the tensor-network graph $N_C$ and its line graph $L(N_C)$ (\autoref{sec:tensor-network}), the relevant parameter is the \emph{contraction complexity} $\cc(N_C)=\tw(L(N_C))$, and the Markov--Shi bound is single-exponential in $\cc(N_C)$ and linear in circuit size~\cite{MarkovShi08}.

We instead use a \emph{sum-of-powers} (SOP)~\cite{WangChengYuanJi25}: a Feynman-path sum~\cite{Dawson+2005,Amy19PathSum} rather than a sum-of-products.
This representation underlies recent simulators based on binary decision diagrams~\cite{WangChengYuanJi25,DBLP:journals/corr/abs-2510-06775} and model counting~\cite{QPWMC,MeiBonsangueLaarman24,DBLP:conf/fm/QuistMCL24,HuangEtAl26}.
Pinning input $\by$ and output $\bz$ gives
\[
\braket{\bz|C|\by}=R_C^{-1}\delta_C(\by,\bz)\sum_{\bx\in\{0,1\}^V}\omega_r^{f(\bx)},
\]
a normalized root-of-unity sum over unpinned Boolean path variables $V$, where $\omega_r=e^{2\pi i/r}$ for an even modulus $r$ and $R_C=(\sqrt2)^{m_\H}$, with $m_\H$ the number of Hadamard gates in $C$.
Here $\delta_C$ is zero when boundary values disagree on a Hadamard-free wire.
The exponent $f$ is quadratic, and its quadratic terms $x_ux_v$ define the edges of the \emph{SOP variable graph} $G_C=(V,E)$.
We write $n=|V|$ for the number of unpinned path variables, each a wire segment strictly between two Hadamards, so $n\le m_\H$ over $\{\H,\T,\CZ\}$.
The modulus is $r=8$ there (\autoref{sec:sop}).
Diagonal single-qubit gates add only unary phases.
Other fixed finite gate sets admit constant-size gadgets and a refined modulus (\autoref{thm:gadget-universality}), and expansion can raise rank-width, so it is measured after expansion.

Our dynamic program (DP) evaluates this SOP in time exponential only in the \emph{rank-width} $k$ of $G_C$ and polynomial in $n$, given a rank-decomposition.
Three graphs recur: the tensor-network graph $N_C$, its line graph $L(N_C)$, whose treewidth is the contraction complexity, and the SOP variable graph $G_C$ on which our DP runs.
Treewidth transfers from $N_C$ to $G_C$ through $L(N_C)$ (\autoref{lem:sop-minor-line}), and rank-width directly from $G_C$.

\paragraph{Contributions.}
\begin{enumerate}
    \item An explicit DP runs on the variable graph of an $n$-variable SOP: given a rank-decomposition of width $k$, it computes a single amplitude in time $O(4^k\poly(n))$ (\autoref{cor:single-amplitude}), and on a linear layout of width $k$ in time $O(2^k\poly(n))$ (\autoref{thm:linear-layout-fourier}).
This amounts to rank-width fixed-parameter strong simulation for every circuit over Hadamard and diagonal gates, and in particular for Clifford$+\T$.
    \label{itm:contribution-1}
    \item We prove that $G_C$ is a minor of the tensor line graph, hence $\tw(G_C)\le\cc(N_C)$ (\autoref{lem:sop-minor-line}).
Since rank-width never exceeds branch-width, which is at most treewidth plus one (\autoref{sec:preliminaries})~\cite{oum2008rank,Oum17Survey}, this gives $\rw(G_C)\le\cc(N_C)+1$.
    \item We construct circuit families whose rank-width stays bounded while both linear rank-width and contraction complexity grow without bound, separating our guarantee from the decision-diagram and tensor-network ones (\autoref{cor:separating-family}).
    \item The DP tables carry stabilizer structure, which a \emph{stabilizer-rank optimization} exploits to refine \autoref{itm:contribution-1}.
Each subtree runs at the \emph{width} price of its cut-ranks or, at most once per branch, at a \emph{magic} price that discharges the non-Clifford phases below it.
The cost never exceeds \autoref{itm:contribution-1}'s $O(4^k\poly(n))$, stays polynomial on mixed families where both pure guarantees are exponential, and recovers Gottesman--Knill stabilizer-decomposition simulation as a special case (\autoref{sec:stabjoin}).
    \item A prototype is evaluated on standard benchmarks against treewidth, statevector, decision-diagram, and weighted-model-counting simulation (\autoref{sec:experiments}).
\end{enumerate}

\paragraph{Parameters versus running times.}
The exponential bases in \autoref{tab:route-comparison} matter as much as the widths.
The three routes that run on $G_C$ are charged on the same $n$-variable instance: branching rank-width, linear layout, and bucket elimination cost $O(2^{2\rw(G_C)}\poly(n))$, $O(2^{\lrw(G_C)}\poly(n))$, and $O(2^{\tw(G_C)+1}\poly(n))$, respectively.
The branching route therefore overtakes the others only when $2\rw(G_C)$ is below their exponents.
The generic width inequalities do not ensure that it is, but \autoref{cor:separating-family} exhibits families where it is.
The base $4$ is a cost of the branching join, and two routes avoid paying the cost.
A linear layout has no join at all and runs in base $2$ (\autoref{thm:linear-layout-fourier}).
Alternatively the stabilizer-rank optimization of \autoref{sec:stabjoin} reparameterizes the cost, charging the magic of a subtree before continuing at its width price wherever that is cheaper.

\begin{table}[t]
\caption{Structural simulation routes compared to the SOP DPs proposed here.
All bounds compute one amplitude on the parameter object named in the row, and the rank-width bounds assume a supplied decomposition.
Every row is parameterized by structure alone, except the stabilizer decomposition, by magic alone, and the last row, which combines the two over one decomposition through one-way switches from the magic price to the width price.
Each polynomial factor is in the size of the object its row runs on: a quadratic SOP with $n$ variables for the four SOP routes (\autoref{sec:treewidth}, \autoref{cor:single-amplitude}, \autoref{thm:linear-layout-fourier}, \autoref{sec:stabjoin}), the tensor network $N_C$ for contraction, and the elementary gate count for stabilizer decomposition.}
\label{tab:route-comparison}
\centering \footnotesize
\renewcommand{\arraystretch}{1.12}
\begin{tabular}{@{}L{0.28\linewidth}L{0.25\linewidth}L{0.42\linewidth}@{}}
\toprule
\bf Route & \bf Parameter object & \bf Running-time bound \\
\midrule
Treewidth bucket elimination & tree decomposition of $G_C$ & $2^{\tw(G_C)+1}\poly(n)$ (\autoref{sec:treewidth}) \\
Tensor-network contraction & line graph $L(N_C)$ & $2^{O(\cc(N_C))}\poly(|N_C|)$~\cite{MarkovShi08} \\
Low-rank stabilizer decomposition & $\T$-count $\tau$ of the circuit & $\mathrm{sr}(\tau)\poly(\cdot)$ with $\mathrm{sr}(\tau)=O(2^{0.3963\tau})$ for Clifford$+\T$~\cite{BravyiGosset16,BravyiEtAl19,QassimPashayanGosset21} \\
\midrule
Branching SOP DP & rank-decomposition of $G_C$ & $4^k\poly(n)$ (\autoref{cor:single-amplitude}) \\
Linear-layout SOP DP & linear layout of $G_C$ & $2^k\poly(n)$ (\autoref{thm:linear-layout-fourier}) \\
Stabilizer-rank optimization & rank-decomposition of $G_C$ with magic counts $\tau_t$ & \emph{best total cost} (\autoref{sec:stabjoin}), never exceeding $4^k\poly(n)$ \\
\bottomrule
\end{tabular}
\end{table}

\paragraph{Other related work.}
Weighted model counting has been proposed for quantum computations including partition functions~\cite{QPWMC}.
Here it serves as one practical backend for the SOP on which our rank-width algorithm runs.

Up to normalization and conjugation, a quadratic SOP is a complex-weight degree-two sum, or a graph-state/product-state overlap.
Van den Nest et al. connected graph-state simulation with Schmidt-rank width (the underlying graph's rank-width) and optimal tree tensor networks~\cite{VandenNestMDB06,VandenNestDVB07}.
Hiraishi and Imai bounded decision diagrams for these overlaps by linear rank-width~\cite{HiraishiImai14}.
Later work gave an $O^*(4^{\rw(\widetilde G)})$ tree-tensor-network algorithm on a subdivided graph $\widetilde G$ and an $O^*(2^{\mmomega\operatorname{bw}(G)/2})$ branch-decomposition algorithm~\cite{HiraishiImaiIwataLin18}, where $O^*$ suppresses polynomial factors.
Our exact finite-modulus DP instead uses the unsubdivided $G_C$ and returns how many assignments give each value of $f$, not only the amplitude (\autoref{thm:dp-correct}).
We also prove $\tw(G_C)\le\cc(N_C)$ (\autoref{lem:sop-minor-line}).

Independent work also uses rank-width through ZX calculus and stabilizer decompositions.
Before any rewriting, a quadratic SOP mode whose cross-terms contribute only signs, a complex two-spin partition function, a graph-state/product-state overlap, and a closed graph-like ZX scalar all share one interaction graph (\autoref{prop:four-guises}), so the coarse rank-width dependence necessarily coincides.
Kuyanov and Kissinger accordingly contract graph-like ZX diagrams with the same base $4$ on branching and base $2$ on linear decompositions~\cite{KuyanovKissinger26}.
Their local exponent takes the cheapest of the three cut-rank pairs at each unrooted tree vertex, so on a fixed decomposition it can be finer than the rooted SOP join.
Feynman-path decision diagrams~\cite{WangChengYuanJi25,DBLP:journals/corr/abs-2510-06775} are controlled by the same cut-signature phenomenon as our linear SOP DP, which the branching DP escapes by replacing the global variable order with a rank-decomposition.
Codsi and Laakkonen obtain slice-wise polynomial algorithms parameterized by treewidth or rank-width of an associated tensor network and by the number of non-Clifford operations, introducing \emph{focused} rank-width to account for their placement~\cite{CodsiLaakkonen26}.
We instead run directly on the SOP variable graph $G_C$, and obtain a magic-focused-width guarantee on the shared sign-edge subclass.
Since $\T$ and other diagonal gates add only unary phases, their number enters our bounds polynomially through circuit size, not exponentially.
\appref{app:kk} works out all three comparisons.

\paragraph{Formalization in Lean.}
Statements carrying a \textsf{\color{cyan}[Lean\,\doublecheck]} badge are machine-checked in \href{https://lean-lang.org/}{Lean}~4 (v4.32.0, Mathlib \texttt{81a5d25}), under \texttt{docs/lean} in the paper's code and data artifact~\cite{Artifact26}, with axiom profile \texttt{[propext, Classical.choice, Quot.sound]} and no \texttt{sorry}.

\section{Preliminaries}

This section collects the ingredients the rest of the paper relies on: the graph width parameters that will measure simulation cost, the quantum circuits we simulate, and the quadratic sum-of-powers those circuits compile to, which is the object our algorithm evaluates.
Tensor networks, against which we compare in \autoref{sec:tensor-network}, are recalled there.

\subsection{Graph notation and parameters}\label{sec:preliminaries}

We first collect the graph notation and width parameters used throughout the paper.
All graphs are finite, undirected, and simple unless explicitly stated otherwise.
For a graph $G=(V,E)$ and sets $S,T\subseteq V$, $\bA_G[S,T]$ denotes the submatrix of the adjacency matrix $\bA_G$ of $G$ with rows indexed by $S$ and columns indexed by $T$.
A singleton $S=\{v\}$ is written $\bA_G[v,T]$, the row vector indicating the neighbors of $v$ inside $T$.
Whenever ranks of adjacency submatrices are used, the field is $\F_2$.
When the graph is clear from context we drop the subscript and write $\bA$.
For a positive integer $n$ we write $[n]=\{1,\ldots,n\}$.
Elements of every graph and tree, decomposition trees included, are called \emph{vertices}.

\paragraph{Treewidth.}

A tree decomposition of a graph $G=(V,E)$ is a pair $(T,\mathcal B)$ where $T$ is a tree with vertex set $V(T)$ and $\mathcal B=(B_t)_{t\in V(T)}$ is a family of bags $B_t\subseteq V$ such that:
\begin{enumerate}[leftmargin=2em]
    \item Every vertex appears in some bag: $V=\bigcup_{t\in V(T)}B_t$.
    \item Every edge is covered by some bag: for every $uv\in E$, there is a vertex $t$ with $u,v\in B_t$.
    \item Bags containing a fixed vertex form a connected subtree: for every $v\in V$, the set $\{t\in V(T)\mid v\in B_t\}$ is connected in $T$.
\end{enumerate}
The width of $(T,\mathcal B)$ is $\max_{t\in V(T)}|B_t|-1$.
The treewidth of $G$ is
\[
\tw(G)=\min_{(T,\mathcal B)}\left(\max_{t\in V(T)}|B_t|-1\right),
\]
where the min ranges over all tree decompositions of $G$.
We use two standard facts: treewidth is monotone under taking minors, and $\tw(K_t)=t-1$, where $K_t$ denotes the complete graph on $t$ vertices.
Here a \emph{minor} of $G$ is any graph arising from $G$ by vertex deletions, edge deletions, and edge contractions.

Computing treewidth exactly is \NP-hard in general~\cite{Arnborg87}: deciding whether $\tw(G)\le k$ is \NP-complete when $k$ is part of the input.
The parameterized picture is far better.
For every fixed integer $k$, Bodlaender's algorithm~\cite{Bodlaender96} decides whether $\tw(G)\le k$ and, if the answer is positive, constructs a tree decomposition of $G$ of width at most $k$ in linear time.
We only need the weaker consequence that this is polynomial time for fixed $k$.

\paragraph{Rank-width.}
Let $G=(V,E)$ be a graph.
For $X\subseteq V$, define the cut-rank of $X$ in $G$ by
\[
\rho_G(X)=\rank_{\F_2}\bA_G[X,V\setminus X].
\]
A rank-decomposition of $G$ is a pair $(T,\mu)$ where $T$ is a subcubic tree and $\mu$ is a bijection from the leaves of $T$ to $V$.
Subcubic means every vertex of $T$ has degree at most three.

Each edge $e$ of $T$ induces a bipartition, or cut, of $V$.
Indeed, deleting $e$ splits $T$ into two connected components.
The leaves in these two components correspond, through $\mu$, to two disjoint subsets $X_e$ and $\overline{X}_e$ of $V$.
Thus, $V = X_e \uplus \overline{X}_e$, with $\uplus$ the disjoint union.

The width of an edge $e$ in $T$ is then $\rho_G(X_e)$, the width of the decomposition is the maximum width of its edges, and finally, the rank-width of $G$ is
\[
\rw(G)=\min_{(T,\mu)}\max_{e\in E(T)}\rho_G(X_e),
\]
where $E(T)$ is the edge set of $T$.

Fixed-width rank-decompositions can be found, if they exist, in polynomial time for each fixed width~\cite{OumSeymour06,HlinenyOum08,Oum17Survey}.

Linear rank-width is the linear-layout analog of rank-width.
A linear layout of $G$ is an ordering $\pi=(v_1,\ldots,v_n)$ of $V$.
Its width is $ \max_{1\le i<n}\rho_G(\{v_1,\ldots,v_i\}). $
The linear rank-width of $G$ is
\[
\lrw(G)=\min_{\pi}\max_{1\le i<n}\rho_G(\{v_1,\ldots,v_i\}).
\]
Equivalently, it is the restriction of rank-width to caterpillar-shaped rank-decompositions, up to the standard identification of a caterpillar spine with a linear layout~\cite{Oum17Survey}.
That is, the decomposition tree $T$ in the rank-decomposition $(T,\mu)$ is restricted to be a caterpillar: a tree whose nonleaf vertices form a path.
Hence
\[
\rw(G)\le \lrw(G).
\]
The inequality can be strict on natural graph families.
By Theorem~2 of Adler and Kant\'e, the linear rank-width of every forest equals its pathwidth, the width of an optimal tree decomposition whose tree is a path.
Complete binary trees therefore have bounded rank-width but linear rank-width growing with their height~\cite{AdlerK15}.
We use this later to show our algorithm provably beats previous approaches on some circuit families.

Two further standard width parameters appear only in comparisons: \emph{branch-width} $\operatorname{bw}(G)$, the branch-decomposition analog of treewidth, for which we use only the inequalities $\rw(G)\le\operatorname{bw}(G)\le\tw(G)+1$~\cite{oum2008rank,Oum17Survey}, and the pathwidth just mentioned.

\subsection{Quantum circuits and gate sets}\label{sec:quantum-c}
We recall quantum circuits and fix the elementary gate set on which we phrase our results.
For a $q$-qubit system, the state space is $(\C^2)^{\otimes q} \cong \C^{2^q}$, with computational basis states $\ket{x} = \ket{x_1\cdots x_q}$, $x_i\in\{0,1\}$.
A general $q$-qubit state is therefore $\ket{\psi}=\sum_{x\in\{0,1\}^q} a_x \ket{x}$ with $a_x\in\C$, and for normalized states $\sum_{x\in\{0,1\}^q} |a_x|^2 = 1$.
A quantum gate is a unitary linear operator acting on one or more qubits.
When a gate acts on only a subset of the $q$ qubits, it is understood to act as the identity on the remaining ones.
A quantum circuit is a finite sequence of such gates acting on a fixed number of qubits.

\paragraph{Elementary gate set.}
The circuits we consider are built over the set of elementary gates $\{\H,\T,\CZ\}$, referring to Hadamard, $\T$ gate and controlled-$\text{Z}$ gate, respectively, and with standard matrices
\[
\H=\frac{1}{\sqrt2}\begin{bmatrix}1&1\\1&-1\end{bmatrix},
\qquad
\T=\begin{bmatrix}1&0\\0&e^{i\pi/4}\end{bmatrix},
\text{ and }
\CZ=\begin{bmatrix}
1&0&0&0\\
0&1&0&0\\
0&0&1&0\\
0&0&0&-1
\end{bmatrix}.
\]
The gate set $\{\H,\T,\CZ\}$ is universal for quantum computation.
This gate set compactly presents the standard Clifford$+\T$ gate set~\cite{NielsenChuang10}, whose Clifford part is $\{\H,\text{S},\CZ\}$ with $\text S$ the phase gate with diagonal elements $1$ and $i$.
The $\T$ gate renders one Clifford gate redundant, namely $\text S = \T^2$.

\paragraph{Circuits.}
We write a $q$-qubit circuit as
\[
C = U_m\cdots U_2U_1,
\]
where the $U_p$ are unitaries ordered as applied, $U_1$ first and $U_m$ last, and we refer to the index $p$ as the \emph{position} of $U_p$ in the circuit.
A circuit over a fixed finite gate set is built from \emph{elementary gates}.
The action of $C$ on an input state $\ket{\psi}$ is $C\ket{\psi} = U_m\cdots U_2U_1\ket{\psi}$, and the matrix of the whole circuit is obtained by multiplying the gate matrices in reverse textual order, since the rightmost gate acts first.

To specify which qubits a gate acts on, we use the standard subscript convention.
Here $\H_i$ and $\T_i$ denote a single-qubit Hadamard or $\T$ gate applied to qubit $i$ (and the identity on every other wire), and $\CZ_{i,j}$ denotes a $\CZ$ gate applied to qubits $i$ and $j$ (and the identity elsewhere).
Since $\CZ$ is symmetric in its two arguments, the order of $i$ and $j$ does not matter.
For a qubit $i$, the \emph{Hadamard depth} $h_i$ is the total number of $\H$ gates that act on $i$ in a given circuit $C$.
This count will index the path variables on wire $i$ in \autoref{sec:sop}.

\paragraph{Strong simulation.}
Throughout, classical \emph{strong simulation} of a circuit $C$ means computing the amplitude
\[
\braket{\bz|C|\by}\in\C,
\]
for given input and output computational basis states $\ket{\by}$ and $\ket{\bz}$ with $\by,\bz\in\{0,1\}^q$.
Equivalently, it is the entry of the unitary associated with $C$ in row $\bz$ and column $\by$.
This is the quantity whose squared modulus $|\braket{\bz|C|\by}|^2$ is the probability of observing outcome $\bz$ when $C$ is run on input $\by$ and measured in the computational basis.
Exact computation of such an amplitude, and hence of the associated fixed-output probability, is the strong-simulation task considered throughout.
We also call it \emph{exact amplitude simulation} when the distinction from sampling-based (weak) simulation matters.

\begin{figure}[!tbp]
\centering
\begin{quantikz}[row sep=0.45cm, column sep=0.35cm]
\lstick{$\ket{0}_1$} & \gate{\H} & \qw & \qw & \ctrl{1} & \qw & \qw & \gate{\H} & \qw & \qw & \rstick{$\bra{0}_1$} \\ \lstick{$\ket{0}_2$} & \qw & \gate{\H} & \qw & \control{} & \ctrl{1} & \gate{\T} & \qw & \gate{\H} & \qw & \rstick{$\bra{0}_2$} \\ \lstick{$\ket{0}_3$} & \qw & \qw & \gate{\H} & \qw & \control{} & \qw & \qw & \qw & \gate{\H} & \rstick{$\bra{0}_3$}
\end{quantikz}
\caption{Running example: a $3$-qubit circuit $C$ over $\{\H,\T,\CZ\}$ with the all-zeros input and output basis states pinned.}
\label{fig:ex-circuit}
\end{figure}

\begin{example}[Running example  circuit]\label{ex:circuit}
We illustrate our constructions using the $3$-qubit circuit $C$ shown in \autoref{fig:ex-circuit}: an initial layer of $\H$ on every qubit (positions $1$--$3$), then $\CZ_{1,2}$ and $\CZ_{2,3}$ (positions $4$, $5$), a single $\T$ on qubit $2$ (position $6$), and a final layer of $\H$ on every qubit (positions $7$--$9$).
Each qubit has a Hadamard depth of two.
Six $\H$ gates, one $\T$, and two $\CZ$'s give nine gates total.
Our running target is the amplitude $\braket{000|C|000}$.
\end{example}

\subsection{Quadratic sums-of-powers for circuits}\label{sec:sop}\label{subsec:SOP_graph}
In \autoref{sec:quantum-c}, we defined strong simulation as the task of evaluating the amplitude $\braket{\bz|C|\by}$.
We now recall how, for circuits over $\{\H,\T,\CZ\}$, this amplitude is a quadratic \emph{sum-of-powers} (SOP)~\cite{WangChengYuanJi25,DBLP:journals/corr/abs-2510-06775}: a scalar sum of roots of unity over Feynman paths~\cite{Dawson+2005,Amy19PathSum}.
The SOP and the SOP variable graph $G_C$ it induces are the central objects of the paper.
Deriving them is the only prerequisite for the algorithm of \autoref{sec:dp}, which follows immediately and needs nothing else about circuits: it evaluates the general quadratic SOP of \autoref{eq:sop} on any graph.
How this compares with the tensor-network route, and which further gate sets it covers, are taken up afterwards, in \autoref{sec:treewidth} and \autoref{sec:gadget-universality}.

Notation from now on: $\Z_d=\Z/d\Z$ denotes the integers modulo $d$, and $\omega_d=e^{2\pi i/d}$ for any positive integer $d$.
For a set $V$, $\{0,1\}^V$ denotes the set of assignments $V\to\{0,1\}$.

We repeatedly use the computational-basis resolution of the identity
\(
I=\sum_{\bx\in\{0,1\}^q}\ket{\bx}\bra{\bx}.
\) For a circuit $C=U_m\cdots U_1$, inserting this identity between consecutive gates gives
\[
\langle \bz|C|\by\rangle= \sum_{\by_1,\ldots,\by_{m-1}} \prod_{j=1}^m \langle \by_j|U_j|\by_{j-1}\rangle .
\]
Here $\by_0=\by$ and $\by_m=\bz$.

\paragraph{Local gate factors.}
For $p,p'\in\{0,1\}$, let the Kronecker delta $\delta_{p,p'}$ be $1$ if $p=p'$ and $0$ otherwise.
It enforces equality of wire values.
For the gate set $\{\H, \T, \CZ\}$, every nonzero local matrix entry is a power of $\omega_8 = e^{2\pi i / 8}$, up to the $\frac1{\sqrt{2}}$ normalization carried by each Hadamard.
For $p, p', p_1, p_2, p'_1, p'_2 \in \{0,1\}$,
\begin{equation}\label{eqn:deltas}
\begin{aligned}
  \braket{p | \H | p'} &{}= 2^{-1/2} (-1)^{pp'} = 2^{-1/2} \omega_8^{4pp'},\\
  \braket{p | \T | p'} &{}= \delta_{p,p'} \omega_8^{p}, \text{ and }\\
  \braket{p_1 p_2 | \CZ | p'_1 p'_2} &{}= \delta_{p_1, p'_1} \delta_{p_2, p'_2} \omega_8^{4 p_1 p_2}.
\end{aligned}
\end{equation}
Thus a Hadamard contributes a quadratic phase $4pq$ and a factor $\frac1{\sqrt{2}}$, a $\T$-gate a unary phase $p$, and a $\CZ$-gate a quadratic phase $4 p_1 p_2$.
All exponents are taken modulo $8$.

\paragraph{Path variables.}
The Kronecker deltas in the $\T$ and $\CZ$ entries (see \autoref{eqn:deltas}) force each summand to vanish unless the value on every affected wire is unchanged across the gate, so the incoming and outgoing indices across such a diagonal gate may be identified.
A Hadamard has no such delta, so the values immediately before and after it remain independent.
Equivalently, cut each qubit wire at every Hadamard: on wire $i$ this yields $h_i+1$ maximal segments, where $h_i$ is the number of Hadamards on wire $i$, carrying Boolean variables $x_{i,0},\ldots,x_{i,h_i}$.
We write $x_s$ for the variable of a segment $s$, so $x_{i,j}=x_s$ when $s$ is the $j$th segment of wire $i$, and a gate lies \emph{on} the segment of its wire that contains its position.
Writing $S$ for the set of all segment variables and reading off the contributions above,
\begin{itemize}
  \item a $\T$-gate on segment $s$ contributes $x_s$;
  \item a $\CZ$-gate touching segments $s, t$ contributes $4 x_s x_t$;
  \item a Hadamard from segment $s$ to its successor $s'$ contributes
$4 x_s x_{s'}$ and one factor $\frac1{\sqrt{2}}$.
\end{itemize}
Summing these local exponents gives a quadratic phase polynomial
\[
f(\bx) = \sum_{\T \text{ on } s} x_s + 4 \sum_{\H : s \to s'} x_s x_{s'} + 4 \sum_{\CZ \text{ on } s, t} x_s x_t \pmod{8},
\]
with all sums counted with multiplicity and all coefficients reduced modulo $8$.
This polynomial records only the exponent of $\omega_8$.
The normalization factors $\frac1{\sqrt{2}}$ contributed by Hadamard gates are collected separately, as will become clear shortly.

\paragraph{Pinning the boundary.}
The next two steps construct the SOP variable graph $G_C=(V,E)$ from the phase polynomial $f$: pinning the boundary segments leaves the unpinned variables as the vertex set $V$, and collecting the surviving quadratic terms gives the edge set $E$.
To compute $\braket{\bz | C | \by}$, let $s_{\mathrm{in}}(i)$ and $s_{\mathrm{out}}(i)$ be the first and last segments on wire $i$.
Pin $x_{s_{\mathrm{in}}(i)}=y_i$, and, when $h_i\ge1$, also pin the distinct last segment $x_{s_{\mathrm{out}}(i)}=z_i$.
If $h_i=0$, the sole segment is fixed by $y_i$ and the output constraint contributes a Kronecker delta.
Collect these constraints in
\[
\delta_C(\by,\bz)=\prod_{i:h_i=0}\delta_{y_i,z_i}.
\]
Thus $\delta_C$ is zero exactly for inconsistent pins on a Hadamard-free wire.
Let $V\subseteq S$ be the remaining segment variables and write $f|_{\by,\bz}$ for the substituted polynomial.

\paragraph{The pinned SOP.}
Collecting terms in $f|_{\by, \bz}$, the constant terms combine into a single phase $c \in \Z_8$, the surviving one-variable terms become coefficients $b_v \in \Z_8$ (the $\T$-gate phases together with any quadratic interaction having exactly one pinned endpoint).
We put an edge $\{u,v\}$ in $E$ exactly when, after substituting the pinned boundary variables, the monomial $x_ux_v$ has coefficient $4$ modulo $8$.
Parallel quadratic terms are added modulo $8$, so two identical terms cancel.
Writing $f_C := f|_{\by, \bz}$,
\begin{gather*}
  f_C(\bx)
  = c + \sum_{v \in V} b_v\, x_v + 4 \sum_{\{u,v\} \in E} x_u x_v
  \pmod{8},\\
  \braket{\bz | C | \by}
  = R_C^{-1}\delta_C(\by,\bz)
    \sum_{\bx \in \{0,1\}^V} \omega_8^{f_C(\bx)},
\end{gather*}
with $R_C = (\sqrt{2})^{m_\H}$ and $m_\H$ the number of Hadamards in $C$.
We call $G_C = (V, E)$ the \emph{SOP variable graph} of the pinned amplitude $(C, \by, \bz)$.\footnote{The pinning $(\by,\bz)$, the import of the gates, and any rewriting applied before it all determine $G_C$.
Every width below is therefore a width of the pinned SOP instance that a given import produces.} \hfill \lean{Formal/Paper.lean}{amplitude_eq_sop_normalized}

\paragraph{General quadratic SOP.}
We now state the SOP and its associated counting problem in their general form, decoupled from any particular circuit.
Let $G=(V,E)$ be a graph, $r$ an even positive integer, $c\in\Z_r$, $b_v\in\Z_r$ for every $v\in V$, and edge coefficient $\eta=r/2$.
Consider the unnormalized quadratic SOP
\begin{equation}\label{eq:sop}
\begin{aligned}
S(f)
={}&
\sum_{\bx\in\{0,1\}^V}\omega_r^{f(\bx)}, \\
f(\bx)={} &c+\sum_{v\in V}b_vx_v+\eta\sum_{\{u,v\}\in E}x_ux_v\pmod r,
\end{aligned}
\end{equation}
where $\omega_r=e^{2\pi i/r}$ is the principal primitive $r$th root of unity.
Throughout the paper the edge coefficient is fixed as $\eta=r/2$.
This is the value produced by the construction above for $\{\H,\T,\CZ\}$ (where $r=8$ and $\eta=4$), and it is what makes each cross-term contribute only a sign $(-1)^{x_ux_v}$.

\begin{definition}[Quadratic \SopCount]\label{def:sopcount}\lean{Formal/Core/SopInstance.lean}{N}
Given an even integer $r$, a graph $G=(V,E)$, and a polynomial $f$ as in \autoref{eq:sop}, the \SopCount problem asks for all counts
\[
N_j=\#\{\bx\in\{0,1\}^V\mid f(\bx)=j\pmod r\}, \qquad j\in\Z_r.
\]
We call the tuple $(N_0,\ldots,N_{r-1})$ the \emph{residue histogram} of $f$: the histogram, over the $r$ residue classes, of the values $f(\bx)$ as $\bx$ ranges over $\{0,1\}^V$.
\end{definition}

\noindent The unnormalized SOP value is recovered from it via
\begin{align}
S(f)=\sum_{j=0}^{r-1}N_j\omega_r^j. \label{eq:sopvalue}
\end{align}

For a circuit $C$ over $\{\H,\T,\CZ\}$ this is the instance with $r=8$ and $G=G_C$, the coefficients $b_v$ and constant $c$ being those identified above.
The physical amplitude is obtained afterward as
\[
\braket{\bz|C|\by}=R_C^{-1}\delta_C(\by,\bz)S(f_C),
\]
where $R_C=(\sqrt2)^{m_\H}$ is the product of the Hadamard normalizations.

\paragraph{Parameters.}
An instance is measured by its number of variables $n=|V|$ and its modulus $r$.
The circuit it came from, when there is one, has $q$ qubits and $m_\H$ Hadamards.
Complexities are stated on the instance rather than on the circuit, since circuit size is not fixed until an elementary gate set is.
A decomposition of $G$ is supplied with the instance, and $k$ denotes the width of that supplied decomposition.
The magic count $\tau$ is the number of magic vertices below a cut, the quantity the stabilizer-rank bound $\mathrm{sr}(\tau)$ of \autoref{sec:stabjoin} is charged on.

\begin{figure}[!tbp]
\centering
\begin{tikzpicture}[
  scale=0.8,
  var/.style={draw, circle, fill=green!18, minimum size=0.9cm, inner sep=0pt, font=\scriptsize}
]
\node[var] (v1) at (0, 0)    {$x_{1,1}$};
\node[var] (v2) at (2.4, 0)  {$x_{2,1}$};
\node[var] (v3) at (4.8, 0)  {$x_{3,1}$};
\draw (v1) -- node[above, font=\scriptsize] {$-1$} (v2);
\draw (v2) -- node[above, font=\scriptsize] {$-1$} (v3);
\node[font=\scriptsize] at (2.4, -0.85) {$b_{x_{2,1}}=1$};
\end{tikzpicture}
\caption{The SOP variable graph $G_C$ for the circuit $C$ of \autoref{ex:circuit} after pinning $\ket{000}\to\ket{000}$.
Every edge carries the constant cross-term sign $\omega_8^{4}=-1$, contributed as $(-1)^{x_ux_v}$.
Vertices carry unary phases $\omega_8^{b_v}$, the only nontrivial one being $b_{x_{2,1}}=1$ from the $\T$ gate.}
\label{fig:ex-gc}
\end{figure}

\begin{example}[Deriving $G_C$]\label{ex:gc}
Continuing \autoref{ex:circuit}, cutting each wire at its Hadamard gates yields one segment variable per wire segment.
Since each qubit has Hadamard depth $h_i=2$, every wire contributes three variables $x_{i,0},x_{i,1},x_{i,2}$, with $x_{i,0}$ and $x_{i,2}$ fixed by the input and output basis states.
Pinning $\ket{000}\to\ket{000}$ therefore leaves $V=\{x_{1,1},x_{2,1},x_{3,1}\}$ as the only free variables.
Quadratic cross-terms come from the two $\CZ$ gates: $\CZ_{1,2}$ acts on $x_{1,1}$ and $x_{2,1}$ (the current variables on the two wires at position $4$), and $\CZ_{2,3}$ acts on $x_{2,1}$ and $x_{3,1}$.
Each such edge carries the constant cross-term sign $\omega_8^{4}=-1$, contributed as $(-1)^{x_ux_v}$.
The $\T$ on qubit~$2$ at position~$6$ contributes the unary phase coefficient $b_{x_{2,1}}=1$, so the phase at $x_{2,1}=1$ is $e^{i\pi/4}=\omega_8$.
The resulting SOP variable graph $G_C$ is the path of~\autoref{fig:ex-gc}.
\end{example}

\section{A rank-width algorithm for quadratic sums-of-powers}\label{sec:dp}
This section gives the paper's main algorithm: an exact evaluation of a quadratic SOP that is fixed-parameter tractable in the \emph{rank-width} of its variable graph.
It is stated for the general instance of \autoref{eq:sop} and uses nothing about circuits beyond it.
For a circuit-derived instance, the graph is the $G_C$ of the previous section.
Since rank-width can be much smaller than treewidth or linear rank-width, this is what allows the SOP route to improve the corresponding structural guarantees for the circuit families we construct in \autoref{subsec:separation-linear-tensor}.

Everything rests on one count.
Two assignments to the variables below a cut behave the same way above it when every outside variable sees the same parity of selected neighbors, so a cut of rank $k$ admits at most $2^k$ interaction patterns.
A branching join has to pair the patterns of two children, which is where the base $4$ comes from, whereas a linear layout extends one side by a single variable, which is where the base $2$ comes from.

\begin{figure}[!tb]
\centering
\begin{tikzpicture}[
  scale=0.8,
  var/.style={draw, circle, fill=green!18, minimum size=0.9cm, inner sep=0pt, font=\scriptsize}
]
\node[var] (v1) at (0, 0)    {$x_{1,1}$};
\node[var] (v2) at (2.4, 0)  {$x_{2,1}$};
\node[var] (v3) at (4.8, 0)  {$x_{3,1}$};
\draw (v1) -- (v2);
\draw (v2) -- (v3);
\draw[dashed, thick] (3.6, -0.95) -- (3.6, 0.95);
\node[font=\scriptsize] at (1.2, 1.15) {$X$};
\node[font=\scriptsize] at (4.8, 1.15) {$\overline X$};
\node[font=\scriptsize, align=center] at (3.6, -1.45) {cut-rank $1$};
\end{tikzpicture}
\caption{A cut of the running graph of \autoref{fig:ex-gc}.
One edge crosses it, so the cut matrix $\bA[X,\overline X]$ has rank $1$ and the four assignments to $X$ fall into two classes, told apart by the parity $x_{2,1}$ that the outside vertex $x_{3,1}$ sees.
The dynamic program keeps one table entry per class rather than per assignment.}
\label{fig:cut-signature}
\end{figure}
\autoref{sec:sopbyrank} computes the full residue histogram of \autoref{def:sopcount} by processing a rooted rank-decomposition of $G_C$ from the leaves to the root over $\Z$: every table entry counts assignments, and $\omega_r$ enters nowhere until  %
the finished residue histogram is turned into $S(f)$ via \autoref{eq:sopvalue}.
\autoref{sec:fourier} applies a discrete Fourier transform to that table and runs the same recursion over $\Z[\omega_r]$ for $r$ Fourier modes.
This saves a factor $r$ in the count of ring operations, at the price of leaving the integers.
Writing $k$ for the width of the rank-decomposition (\autoref{tab:route-comparison}), the branching DP runs in base $4^k$ per Fourier mode, while the guarantee of a linear layout removes the joins and runs in base $2^k$ (\autoref{thm:linear-layout-fourier}).

Both rings are exact, and for fixed $r$ both are representable in $\poly(n)$ bits.
A residue table holds $r$ counts per 
row,
each at most $2^n$.
The transformed table contains,  per Fourier mode, one element of $\Z[\omega_r]$, which can be represented by $\varphi(r)$ integers of that size in the power basis $1,\omega_r,\ldots,\omega_r^{\varphi(r)-1}$, where $\varphi$ is Euler's totient.
Exactness therefore does not separate the two rings.
At $r=8$ the Fourier mode tables lie in the ring over which Clifford$+\T$ circuits are already synthesized exactly~\cite{GilesSelinger13}, and the dyadic-cyclotomic ring $\Z[\omega_r,1/\sqrt2]$ of \autoref{thm:gadget-universality} covers the final normalization as well.
The two rings differ in the cost of a single operation.
Combining two residue tables at a fixed pair of signatures is a cyclic convolution over $\Z_r$, which is $r^2$ integer multiplications.
One product in $\Z[\omega_r]$ is a polynomial product modulo the cyclotomic $\Phi_r$, which is $\varphi(r)^2$.
An operation count is therefore a time bound only up to a factor that depends on the ring, and the factor $r$ between \autoref{thm:dp-correct} and \autoref{thm:fourier-speedup} is a change of unit rather than a saving in bit operations.
\autoref{cor:histogram-one-mode} makes it a saving when $r$ is a power of two.
Large moduli are the common case in practice: most of the benchmarks in \autoref{sec:experiments} admit no exact finite-modulus import, and those carry moduli between $4.0\mathbin{\cdot}10^{10}$ and $9.8\mathbin{\cdot}10^{14}$ (\autoref{tab:experiment-coverage}).
Both rings stay large there, since $\varphi(r)$ remains within a factor of two of $r$ across all of them, so what makes the regime tractable is evaluating a single Fourier mode rather than all $r$ of them (\autoref{cor:single-amplitude}).

\subsection{Counting over the integers by rank-decomposition}
\label{sec:sopbyrank}
The algorithm processes a rank-decomposition of $G$ from the leaves upward, keeping at each vertex a compact table that records how the partial assignments below it interact with the rest of the graph.
Each cut of a width-$k$ decomposition has rank at most $k$, so by the count above each table holds at most $2^k$ patterns.

\paragraph{Setup.}
If $V=\varnothing$, the empty assignment is unique, so $N_c=1$, all other counts vanish, and $S(f)=\omega_r^c$.
We handle this case directly.
Hence assume $V\ne\varnothing$.
We work with a rooted rank-decomposition $(\mathcal{T},\mu)$ of $G$, where $\mathcal T$ is a tree of maximum degree $3$ whose leaves are in bijection with $V$ via $\mu$.
Its root $r_{\mathcal T}$ is the sole leaf when $|V|=1$ and otherwise an internal vertex, obtained if necessary by subdividing an edge of an unrooted decomposition.
Suppressing the degree-two internal vertices of that unrooted decomposition changes no cut and leaves every internal vertex of degree exactly three, so after rooting each internal vertex of $\mathcal T$ has exactly two children, as \autoref{alg:rw-dp} assumes.
For a vertex $u$ of $\mathcal T$, let $X_u\subseteq V$ be the set of vertices corresponding to leaves of the subtree rooted at $u$, and let $\overline X_u=V\setminus X_u$.
At the root, $X_{r_{\mathcal T}}=V$ and $\overline X_{r_{\mathcal T}}=\varnothing$.
At a leaf $u$ with $\mu(u)=v$, $X_u=\{v\}$.
At each vertex $u$, the dynamic-programming (DP) table records how assignments inside $X_u$ look to the outside.

Throughout this section we identify $\{0,1\}$ with $\F_2$, so an assignment $\bz\in\{0,1\}^{X_u}$ is at the same time a column vector over $\F_2$, with $\bz^\top$ its row transpose.
Parities are computed over $\F_2$ and residues over $\Z_r$.
Here $\bA$ denotes the adjacency matrix, over $\F_2$, of the graph $G=(V,E)$ of the SOP instance being evaluated, and $\bA[S,T]$ its submatrix with rows $S$ and columns $T$, as fixed in \autoref{sec:preliminaries}.
Thus the inputs of the algorithm determine $\bA$: its entries are the coefficients ($0$ or $1$, before scaling by $\eta$) of the quadratic terms $x_ux_v$ of $f$ in \autoref{eq:sop}.
For $\bz\in\{0,1\}^{X_u}$, define the boundary signature
\[
\sigma_u(\bz)=\bz^\top\bA[X_u,\overline X_u]\in\F_2^{\overline X_u}.
\]
This vector records, for each outside vertex, the parity of its selected neighbors inside $X_u$.
Define the internal residue
\[
\phi_u(\bz)= \sum_{v\in X_u} b_vz_v + \eta\sum_{\{v,v'\}\in E,\ v,v'\in X_u} z_vz_{v'} \pmod r.
\]
Recall that $\eta=r/2$ throughout, with $r=8$ for $\{\H,\T,\CZ\}$.
The constant $c$ is not included in $\phi_u$ and is added back when we deal with the root.

The table at vertex $u$ is stored sparsely, with keys $(\bsigma,s)$ for a boundary signature $\bsigma$ that equals $\sigma_u(\bz)$ for some $\bz\in\{0,1\}^{X_u}$, and $s\in\Z_r$.
\autoref{alg:rw-dp} builds $D_u$ bottom-up through the leaf initialization and join update given below.
The point of that construction, proved as \autoref{thm:dp-correct}, is that the resulting entries are the counts
\[
D_u[\bsigma,s] = \#\{\bz\in\{0,1\}^{X_u}\mid \sigma_u(\bz)=\bsigma,\ \phi_u(\bz)=s\}.
\]
Although $\sigma_u(\bz)$ is written as a vector in $\F_2^{\overline X_u}$, it always lies in the image of the linear map
\[
\{0,1\}^{X_u}\rightarrow \F_2^{\overline X_u}, \qquad \bz\mapsto \bz^\top\bA[X_u,\overline X_u],
\]
namely the row space of $\bA[X_u,\overline X_u]$ over $\F_2$.
For every nonroot vertex $u$, the edge from $u$ to its parent in the rooted rank-decomposition induces the cut $(X_u,\overline X_u)$.
Since the decomposition has width $k$, $ \rank_{\F_2}\bA[X_u,\overline X_u]\le k $ and therefore at most $2^k$ boundary signatures can occur at $u$.
At the root, $\overline X_u=\varnothing$, so there is only the empty signature.
Thus each table has at most $r2^k$ nonzero states.

\paragraph{Leaves.}
If $u$ is a leaf corresponding to vertex $v$, then $X_u=\{v\}$, and the assignments $z_v=0$ and $z_v=1$ contribute
\[
D_u[0,0]\mathrel{+}=1, \qquad D_u[\bA[v,V\setminus\{v\}],b_v]\mathrel{+}=1.
\]
The use of $\mathrel{+}=$ matters: the two states may coincide.

\paragraph{Joins.}
Let $t$ be an internal vertex with children $L$ and $R$.
Write $X_t=X_L\uplus X_R$ and $Y=\overline X_t=V\setminus X_t$, so that $\overline X_L=X_R\cup Y$ and $\overline X_R=X_L\cup Y$.
Take a left state $(\balpha,p)$ and a right state $(\bbeta,q)$, with $\balpha\in\F_2^{X_R\cup Y}$ and $\bbeta\in\F_2^{X_L\cup Y}$, and let $\ba\in\{0,1\}^{X_L}$ and $\bb\in\{0,1\}^{X_R}$ be assignments with these signatures.
The new quadratic terms are exactly the edges between $X_L$ and $X_R$.
Their parity is
\[
\chi(\balpha,\bbeta)=\ba^\top\bA[X_L,X_R]\bb\in\F_2.
\]

\begin{lemma}\label{lem:chi-well-defined}\lean{Formal/Core/Signature.lean}{chi_well_defined}
The value $\chi(\balpha,\bbeta)$ depends only on the signatures $\balpha$ and $\bbeta$, and not on the chosen assignments $\ba$ and $\bb$.
\end{lemma}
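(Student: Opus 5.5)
The plan is to read the cross parity off one of the two signatures by restricting coordinates. Since $X_R\subseteq\overline X_L$, the left signature $\balpha=\ba^\top\bA[X_L,X_R\cup Y]$ has a restriction to the coordinates in $X_R$, and that restriction is exactly $\ba^\top\bA[X_L,X_R]$. Write it $\balpha|_{X_R}$. Then
\[
\chi(\balpha,\bbeta)=\ba^\top\bA[X_L,X_R]\,\bb=\balpha|_{X_R}\cdot\bb ,
\]
so $\chi$ is determined by $\balpha$ together with the assignment $\bb$. Symmetrically, because $X_L\subseteq\overline X_R$ and $\bA$ is symmetric, $\bA[X_L,X_R]^\top=\bA[X_R,X_L]$. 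Hence $\chi=\bb^\top\bA[X_R,X_L]\,\ba=\bbeta|_{X_L}\cdot\ba$, which is determined by $\bbeta$ together with $\ba$.

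The key steps, in order, are as follows. First, fix two representatives $\ba,\ba'$ with $\sigma_L(\ba)=\sigma_L(\ba')=\balpha$ and two representatives $\bb,\bb'$ with $\sigma_R(\bb)=\sigma_R(\bb')=\bbeta$. Second, by the first identity, $\ba^\top\bA[X_L,X_R]\bb=\balpha|_{X_R}\cdot\bb=\ba'^\top\bA[X_L,X_R]\bb$, so $\ba$ may be swapped for $\ba'$. Third, by the second identity, $\ba'^\top\bA[X_L,X_R]\bb=\bbeta|_{X_L}\cdot\ba'=\ba'^\top\bA[X_L,X_R]\bb'$, so $\bb$ may be swapped for $\bb'$. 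Chaining the two equalities shows the value is the same for every choice of representatives. In particular, $\chi(\balpha,\bbeta)=\balpha|_{X_R}\cdot\bb$ for any $\bb$ of signature $\bbeta$. Equivalently, it is a bilinear expression evaluated through either side.

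This is essentially bookkeeping, so I expect no real obstacle. The one point needing care is the symmetry step: the paper defines signatures as row vectors $\bz^\top\bA[X_u,\overline X_u]$, so the transpose identity $\bA[X_L,X_R]^\top=\bA[X_R,X_L]$ is what lets the right child's signature control the same bilinear form. This holds because $G$ is simple and undirected, so $\bA$ is symmetric with zero diagonal, and $X_L$, $X_R$ are disjoint. Finally, the lemma is only claimed for signatures that actually occur, so both $\balpha$ and $\bbeta$ have at least one representative and the chain of equalities is never vacuous.
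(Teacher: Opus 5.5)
Your proof is correct and is essentially the paper's argument. The paper restricts equal signatures to $X_R$ to get $(\ba-\ba')^\top\bA[X_L,X_R]=\bzero$, whereas you read $\ba^\top\bA[X_L,X_R]$ directly off $\balpha|_{X_R}$; both use the symmetry of $\bA$ (the zero diagonal is not needed) for the right-hand swap.
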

\begin{proof}
Suppose $\ba'$ is another assignment in $\{0,1\}^{X_L}$ with the same left signature, $\sigma_L(\ba')=\sigma_L(\ba)=\balpha$.
Then $(\ba-\ba')^\top\bA[X_L,V\setminus X_L]=\bzero$ over $\F_2$, and since $X_R\subseteq V\setminus X_L$, restricting to $X_R$ gives $(\ba-\ba')^\top\bA[X_L,X_R]=\bzero$.
Multiplying by $\bb$ yields $\ba^\top\bA[X_L,X_R]\bb={\ba'}^\top\bA[X_L,X_R]\bb$, so changing the left representative leaves the cross term fixed.

The same argument applies on the right.
If $\bb'$ satisfies $\sigma_R(\bb')=\sigma_R(\bb)=\bbeta$, then $(\bb-\bb')^\top\bA[X_R,V\setminus X_R]=\bzero$, and since $X_L\subseteq V\setminus X_R$ and $\bA$ is symmetric, $\bA[X_L,X_R](\bb-\bb')=\bzero$.
Therefore $\ba^\top\bA[X_L,X_R]\bb=\ba^\top\bA[X_L,X_R]\bb'$, and $\chi(\balpha,\bbeta)$ is independent of all choices of representatives.
\end{proof}

The parent signature and residue are
\[
\bgamma=\balpha|_Y+\bbeta|_Y\in\F_2^Y, \qquad s=p+q+\eta\chi(\balpha,\bbeta)\pmod r.
\]
For the residue, the cross-sum $\eta \sum_{v,v'} z_v z_{v'}$ over $\{v,v'\} \in E$ with $v \in X_L$ and $v' \in X_R$ equals $\eta$ times the number of selected cut edges.
Because $\eta=r/2$, this value modulo $r$ depends only on the parity of that number, namely $\chi(\balpha,\bbeta)=\ba^\top\bA[X_L,X_R]\bb\bmod 2$, so the cross-sum reduces to $\eta\chi(\balpha,\bbeta)\pmod r$.
The join does not create an unbounded set of parent signatures.
If $\balpha$ and $\bbeta$ are the signatures of assignments $\ba\in\{0,1\}^{X_L}$ and $\bb\in\{0,1\}^{X_R}$, then
\[
\bgamma = \ba^\top\bA[X_L,Y]+\bb^\top\bA[X_R,Y] = (\ba,\bb)^\top\bA[X_t,Y] = \sigma_t(\ba,\bb).
\]
Thus, every parent signature produced by the join lies in the row space of $\bA[X_t,Y]$.
Since $Y=\overline X_t$, the cut $(X_t,Y)$ is again a cut induced by an edge of the rooted rank-decomposition unless $t$ is the root.
Therefore this row space has size at most $2^k$ at every nonroot internal vertex, and size $1$ at the root.

The join update is $D_t[\bgamma,s]\mathrel{+}=D_L[\balpha,p]D_R[\bbeta,q]$.

\paragraph{Representatives and precomputed maps.}
The update above is independent of representative assignments, but an implementation still needs representatives to compute the join tables.
For each occurring signature $\bsigma$ at a vertex $u$, keep one representative $\operatorname{rep}_u(\bsigma)\in\{0,1\}^{X_u}$ with $\sigma_u(\operatorname{rep}_u(\bsigma))=\bsigma$.
Leaves have the representatives $0$ and $1$ when the corresponding signatures occur.
At an internal vertex $t$ with children $L,R$, precompute for each occurring pair $(\balpha,\bbeta)$ the parent signature and the crossing parity,
\[
\Gamma_t(\balpha,\bbeta)=\balpha|_Y+\bbeta|_Y, \qquad \chi_t(\balpha,\bbeta) = \operatorname{rep}_L(\balpha)^\top \bA[X_L,X_R] \operatorname{rep}_R(\bbeta).
\]
When a parent signature is first produced, store the concatenation of the two child representatives as its representative.
By \autoref{lem:chi-well-defined}, all choices with the same signatures give the same crossing parity.

\begin{algorithm}[t]
\caption{Rank-decomposition DP for quadratic \SopCount. \lean{Formal/Core/DP.lean}{SopInstance.Dalg}}\label{alg:rw-dp}
\renewcommand{\algorithmicrequire}{\textbf{Input:}}
\renewcommand{\algorithmicensure}{\textbf{Output:}}
\begin{algorithmic}[1]
\Require Quadratic SOP polynomial $f(\bx)=c+\sum_v b_vx_v+\eta\sum_{\{u,v\}\in E}x_ux_v$ and a rooted rank-decomposition of $G=(V,E)$ of width $k$.
\Ensure Counts $N_j$ for $j\in\Z_r$.
\For{each leaf $u$ with vertex $v$}
    \State $D_u\gets$ empty sparse table keyed by occurring pairs $(\bsigma,s)$
    \State $D_u[0,0]\mathrel{+}=1$
    \State $D_u[\bA[v,V\setminus\{v\}],b_v\bmod r]\mathrel{+}=1$
\EndFor
\For{each internal vertex $t$ in post-order}
    \State Let $L,R$ be the children of $t$ and $Y\gets\overline X_t$.
    \State $D_t\gets$ empty sparse table keyed by occurring pairs $(\bsigma,s)$
    \ForAll{$(\balpha,p)$ with $D_L[\balpha,p]>0$}
        \ForAll{$(\bbeta,q)$ with $D_R[\bbeta,q]>0$}
            \State $\bgamma\gets\balpha|_Y+\bbeta|_Y$
            \State $\chi\gets\chi(\balpha,\bbeta)$
            \State $s\gets(p+q+\eta\chi)\bmod r$
            \State $D_t[\bgamma,s]\mathrel{+}=D_L[\balpha,p]\cdot D_R[\bbeta,q]$
        \EndFor
    \EndFor
\EndFor
\State \Return $\big(D_{r_{\mathcal T}}[\varnothing,j-c\bmod r]\big)_{j\in\Z_r}$
\end{algorithmic}
\end{algorithm}

\begin{theorem}[Correctness and cost]\label{thm:dp-correct}
The table computed by \autoref{alg:rw-dp} satisfies, for every vertex $u$, signature $\bsigma$, and residue $s$,
\[
D_u[\bsigma,s] = \#\{\bz\in\{0,1\}^{X_u}\mid \sigma_u(\bz)=\bsigma,\ \phi_u(\bz)=s\}.
\]

\vspace{-2em}
\hfill\lean{Formal/Core/DPCorrect.lean}{Dalg_eq_Dspec}
\vspace{1em}

Consequently, \autoref{alg:rw-dp} returns the counts $N_j$.
On a rank-decomposition of width $k$, each table has at most $r2^k$ occupied entries, and the total number of leaf initializations and joined child-state pairs is at most $n\bigl(2+(r2^k)^2\bigr)$.
All counts are therefore computed in time $O(r^24^k\poly(n))$, which for fixed $r$ is $2^{O(k)}\poly(n)$, and in storage $O(r2^k\poly(n))$.\lean{Formal/Paper.lean}{rank_dp_spec}
\end{theorem}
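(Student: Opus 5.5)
The plan is structural induction on the rooted tree $\mathcal T$ in post-order, proving the displayed identity for $D_u$ at every vertex. The base case is a leaf $u$ with $\mu(u)=v$. There $X_u=\{v\}$, and the two assignments $z_v=0,1$ have signatures $\bzero$ and $\bA[v,V\setminus\{v\}]$. Their internal residues are $0$ and $b_v$, because $\phi_u$ has no quadratic term inside a singleton. The two $\mathrel{+}=$ updates of \autoref{alg:rw-dp} therefore produce exactly the stated counts, including the case where the two keys coincide.

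For an internal vertex $t$ with children $L,R$, I will use the bijection $\{0,1\}^{X_t}\cong\{0,1\}^{X_L}\times\{0,1\}^{X_R}$, $\bz=(\ba,\bb)$. First I establish two identities:
\[
\sigma_t(\ba,\bb)=\sigma_L(\ba)|_Y+\sigma_R(\bb)|_Y,\qquad
\phi_t(\ba,\bb)=\phi_L(\ba)+\phi_R(\bb)+\eta\,\ba^\top\bA[X_L,X_R]\bb \pmod r .
\]
The first is the block computation already displayed before the algorithm. The second splits the edges inside $X_t$ into those inside $X_L$, those inside $X_R$, and those crossing. The crossing sum equals $\eta$ times an integer, and since $\eta=r/2$ it reduces modulo $r$ to $\eta$ times its parity. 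By \autoref{lem:chi-well-defined} that parity equals $\chi(\sigma_L(\ba),\sigma_R(\bb))$.

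Next I partition the pairs $(\ba,\bb)$ by their child states $(\balpha,p)=(\sigma_L(\ba),\phi_L(\ba))$ and $(\bbeta,q)=(\sigma_R(\bb),\phi_R(\bb))$. On each block, the parent key $(\bgamma,s)$ is determined by the identities above and is exactly the key the algorithm writes to. The block has size $D_L[\balpha,p]\,D_R[\bbeta,q]$ by the induction hypothesis. Summing over all blocks mapping to a given $(\bgamma,s)$ gives the claimed count for $D_t[\bgamma,s]$. Pairs with a zero entry are skipped by the algorithm but also contribute nothing to the sum, so skipping them is harmless.

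At the root, $\overline X=\varnothing$ and $\phi_{r_{\mathcal T}}(\bx)=f(\bx)-c$. Hence $D_{r_{\mathcal T}}[\varnothing,j-c]=N_j$.

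For the cost, the signatures occurring at a nonroot $u$ lie in the row space of $\bA[X_u,\overline X_u]$. That row space has dimension at most $k$ by the width assumption, so there are at most $2^k$ signatures. With $r$ residues, each table has at most $r2^k$ entries. There are $n$ leaves, each doing $2$ initializations. There are $n-1$ internal vertices, each joining at most $(r2^k)^2$ state pairs. This gives the bound $n(2+(r2^k)^2)$.

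Each join step costs $\poly(n)$ plus one multiplication of integers at most $2^n$. The restriction and addition of signatures take $O(n)$ time. The value of $\chi$ comes from stored representatives in $O(n^2)$ time, or is precomputed per pair. Altogether this gives time $O(r^24^k\poly(n))$. Storage is one table per level of the recursion, with $r2^k$ entries of $\poly(n)$ bits each plus representatives, for $O(r2^k\poly(n))$.

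The main obstacle is the join bookkeeping. One must check that the set of pairs counted under one child-state pair is really a product set, and that the map to parent keys is well defined on it. This rests on \autoref{lem:chi-well-defined} together with the $\eta=r/2$ reduction. It also requires care with collisions, so that the accumulated $\mathrel{+}=$ sums equal the fibre sizes; I would state the induction hypothesis as an equality of functions over all keys to make this clean.
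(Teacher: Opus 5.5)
Your proposal is correct and follows the paper's proof essentially step for step: induction over the rooted decomposition, the leaf case as the two assignments, the product decomposition $(\ba,\bb)$ at a join with the signature identity and the residue identity (via $\eta=r/2$ and \autoref{lem:chi-well-defined}), the block count $D_L[\balpha,p]\,D_R[\bbeta,q]$, the root readout through $f=c+\phi_{r_{\mathcal T}}$, and the same count of at most $2^k$ signatures and $(r2^k)^2$ state pairs per join. The only difference is in the storage accounting, which is cosmetic: you count the live tables of a depth-first evaluation, whereas the paper keeps all linearly many tables, and both give $O(r2^k\poly(n))$.
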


\begin{proof}
The proof is by induction over the rooted decomposition tree.
The leaf case is exactly the two assignments $z_v=0$ and $z_v=1$.

For the join step, every assignment to $X_t$ decomposes uniquely as $(\ba,\bb)$ with $\ba\in\{0,1\}^{X_L}$ and $\bb\in\{0,1\}^{X_R}$.
The boundary signature satisfies
\[
\sigma_t(\ba,\bb) =(\ba,\bb)^\top\bA[X_t,Y] =\ba^\top\bA[X_L,Y]+\bb^\top\bA[X_R,Y] =\balpha|_Y+\bbeta|_Y.
\]
For the residue, the only terms not already counted inside the two children are edges crossing from $X_L$ to $X_R$, and the integer count of such selected crossing edges satisfies $m\bmod 2 = \ba^\top\bA[X_L,X_R]\bb$.
Since $\eta=r/2$, we have $\eta m\equiv \eta(m\bmod 2)\pmod r$ for every integer $m$, so
\[
\phi_t(\ba,\bb)=\phi_L(\ba)+\phi_R(\bb)+\eta\chi(\balpha,\bbeta)\pmod r.
\]
By \autoref{lem:chi-well-defined}, this depends only on the two child states.
The child variable sets are disjoint, so the number of pairs of assignments giving the two child states is $D_L[\balpha,p]D_R[\bbeta,q]$.
The join rule adds exactly this quantity to the correct parent state.
Summing over child-state pairs proves the invariant.

At the root $X_{r_{\mathcal T}}=V$ and $\overline X_{r_{\mathcal T}}=\varnothing$.
Also $f(\bx)=c+\phi_{r_{\mathcal T}}(\bx)\pmod r$.
Hence $f(\bx)=j$ iff $\phi_{r_{\mathcal T}}(\bx)=j-c$, and the returned vector is $(N_j)_{j\in\Z_r}$.

For the cost, at most $2^k$ boundary signatures occur at every vertex, as established above, so a sparsely stored table has at most $r2^k$ nonzero states.
A naive join scans all pairs of child states, at most $r^24^k$ of them.
Each pair performs one parent update, and restricting signatures, adding them, and reading the precomputed cross term $\chi$ cost only polynomial time in $n$ under explicit bit-vector representations and representatives.
The decomposition tree has linearly many vertices, and keeping all its tables and representatives gives the stated storage.
\end{proof}

If child tables are discarded after use, a depth-first evaluation uses $O(r2^k\operatorname{depth}(\mathcal T))$ working table entries, apart from polynomial-size bit-vector data.

\subsection{A faster implementation in the arithmetic model}
\label{sec:fourier}

The residue table of the previous subsection carries a factor $r$ in the join.
Working with the discrete Fourier transform of the residue counts on $\Z_r$ removes it from the operation count, replacing the $r$ integer counts held at each boundary signature by one element of $\Z[\omega_r]$ and bringing the full histogram to $O(r4^k\poly(n))$ ring operations plus one exact $r$-point inverse transform.
Since the two dynamic programs count operations in different rings, that factor is counted in the arithmetic model.
It is unambiguous in the case simulation needs: one amplitude requires only the single Fourier mode $a=1$, and that is the primitive below.
The full residue histogram $N_j$ of \autoref{def:sopcount} is a strictly stronger algebraic object: it is the complete distribution of phase residues, and that distribution determines every Fourier character, namely every scaled sum $S(af)=\sum_{j\in\Z_r}N_j\omega_r^{aj}$.
It is also the one computed entirely over $\Z$.
When $r$ is a power of two, \autoref{cor:histogram-one-mode} recovers it from the single mode $a=1$, so the stronger object costs no more than one amplitude.

\paragraph{Residue counts and their Fourier transform.}
The vector of residue counts $N=(N_0,N_1,\ldots,N_{r-1})$ is a function on the cyclic group $\Z_r$.
Using the characters $\chi_a(j)=\omega_r^{aj}$ for $a,j\in\Z_r$, define the discrete Fourier transform
\begin{equation}\label{eq:residue-dft}
\widehat N(a)=\sum_{j=0}^{r-1}N_j\chi_a(j)
=\sum_{j=0}^{r-1}N_j\omega_r^{aj}.
\end{equation}
This is the usual finite Fourier transform, which, together with its inverse, can be computed by the fast Fourier transform~\cite{CooleyTukey65} (FFT).
The group is $\Z_r$, but the same principle is useful later: instead of storing coefficients indexed by residues, we store evaluations against $\chi_a$.

\autoref{eq:residue-dft} is also exactly an unnormalized SOP evaluation.
Grouping the assignments $\bx$ by their residue $f(\bx)\pmod r$ gives
    \[
\widehat N(a) =\sum_{j=0}^{r-1}\#\{\bx\in\{0,1\}^V\mid f(\bx)=j\pmod r\}\cdot\omega_r^{aj} =\sum_{\bx\in\{0,1\}^V}\omega_r^{a f(\bx)},
    \]
since every assignment $\bx$ contributes to exactly one residue class.
In particular, the unnormalized SOP value satisfies $\widehat N(1)=\sum_{\bx}\omega_r^{f(\bx)}=S(f)$, so SOP evaluation gives one Fourier coefficient and evaluating the scaled polynomials $af$ for all $a\in\Z_r$ gives all of them.
By orthogonality of characters of $\Z_r$, the transform inverts as
\begin{equation}\label{eq:inverse-residue-dft}
N_j=\frac1r\sum_{a=0}^{r-1}\omega_r^{-aj}\widehat N(a),
\qquad j\in\Z_r.
\end{equation}
The division by $r$ is not an operation of $\Z[\omega_r]$.
Each $N_j$ is an integer, so the numerator is formed in $\Z[\omega_r]$ and then divided exactly, coefficient by coefficient in the power basis.
Computing all residue counts and computing all scaled amplitude evaluations are therefore equivalent up to an $r$-point Fourier transform.
Exact inversion costs $O(r^2)$ ring operations when done directly, or $O(r\log r)$ when an exact FFT is available for the chosen representation, so $\poly(n)$ time for fixed $r$.
For a fixed finite gate set, $r$ is a constant and this cost is immaterial.

\paragraph{One mode, three guises.}
A single odd mode is already the object the closest related methods manipulate.
Writing $\lambda_v=\omega_r^{ab_v}$, the next proposition fixes the direct translations.

\begin{proposition}[Direct representation correspondence]\label{prop:four-guises}
For every quadratic SOP in the sign-edge form of \autoref{eq:sop} and every odd $a\in\Z_r$, the mode sum
\[
\widehat N(a) \;=\;\omega_r^{ac}\sum_{\bx\in\{0,1\}^{V}}(-1)^{\sum_{\{u,v\}\in E}x_ux_v}\prod_{v\in V}\lambda_v^{x_v}
\]
is a complex two-spin partition function on $G$ with edge coupling $-1$ and vertex fields $\lambda_v$.
Equivalently, it is
\begin{enumerate}[label=(\roman*),ref=(\roman*),leftmargin=2.4em,itemsep=1pt,topsep=2pt]
  \item\label{guise-graphstate} the graph-state overlap $\omega_r^{ac}\,2^{|V|/2}\bigl(\bigotimes_{v\in V}(\bra0+\lambda_v\bra1)\bigr)\ket{G}$, and
  \item\label{guise-zx} a closed graph-like ZX scalar whose Hadamard edges are those of $G$ and whose spider phases are $\arg\lambda_v$.
\end{enumerate}
\end{proposition}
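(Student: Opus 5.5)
The plan is a direct, term-by-term unfolding of definitions, in three steps.

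\emph{Step 1: the displayed formula.} Start from $\widehat N(a)=\sum_{\bx}\omega_r^{af(\bx)}$, established just before the proposition. Expand $af(\bx)=ac+\sum_v ab_vx_v+a\eta\sum_{\{u,v\}\in E}x_ux_v$. Since $\eta=r/2$ and $a$ is odd, $\omega_r^{a\eta}=(-1)^a=-1$. Hence each edge contributes $(-1)^{x_ux_v}$, and each vertex contributes $\omega_r^{ab_vx_v}=\lambda_v^{x_v}$ because $x_v\in\{0,1\}$. Factoring out $\omega_r^{ac}$ gives the displayed sum. Reading it as a weighted sum over spin configurations, with edge weight $w(1,1)=-1$ and $w=1$ otherwise, and vertex weight $\lambda_v^{x_v}$, is exactly the complex two-spin partition function. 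Oddness of $a$ is used only here, and I will remark that for even $a$ the edge sign becomes trivial.

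\emph{Step 2: guise \ref{guise-graphstate}.} Use the standard graph-state expansion
\[
\ket G=2^{-|V|/2}\sum_{\bx\in\{0,1\}^V}(-1)^{\sum_{\{u,v\}\in E}x_ux_v}\ket{\bx},
\]
which follows from $\ket G=\prod_{\{u,v\}\in E}\CZ_{u,v}\,\H^{\otimes V}\ket{0\cdots0}$ and the $\CZ$ entries in \autoref{eqn:deltas}. The product bra $\bigotimes_v(\bra0+\lambda_v\bra1)$ applied to $\ket{\bx}$ gives $\prod_v\lambda_v^{x_v}$. The factor $2^{|V|/2}$ cancels the normalization, and linearity then recovers Step 1's sum. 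The convention is unconjugated bra coefficients, which the statement's notation already fixes, so no conjugation issue arises.

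\emph{Step 3: guise \ref{guise-zx}.} Use the standard ZX semantics. Take a closed diagram with one phase-$\theta_v$ Z-spider per vertex $v$ (arity equal to its degree) and one Hadamard box per edge of $G$. Each Z-spider is $\sum_{x_v}e^{i\theta_vx_v}\ket{x_v\cdots x_v}\bra{\cdot}$, and each Hadamard box is $2^{-1/2}(-1)^{x_ux_v}$. Contracting gives $2^{-|E|/2}\sum_{\bx}e^{i\sum_v\theta_vx_v}(-1)^{\sum_E x_ux_v}$. With $\theta_v=\arg\lambda_v$, which is legitimate because $|\lambda_v|=1$, this matches the sum up to the global scalar $\omega_r^{ac}2^{|E|/2}$. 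ZX scalars are conventionally taken up to such a global factor, or that factor can be carried as an explicit scalar spider pair.

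\emph{Main obstacle.} The only delicate point is normalization bookkeeping in the ZX guise, namely the Hadamard-box convention and the global scalar. I will state the convention explicitly and record the exact constant, so that ``equivalently'' is read as equality up to that stated scalar.
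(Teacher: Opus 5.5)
Your proposal is correct and follows the paper's proof essentially step for step. The three steps match: the odd-mode reduction $\omega_r^{a\eta x_ux_v}=(-1)^{x_ux_v}$, the computational-basis expansion of $\ket{G}$ paired with the unconjugated covector, and reading variables as $Z$-spiders and edges as Hadamard edges. Your explicit tracking of the $2^{-|E|/2}$ factor from the Hadamard boxes in guise (ii) is a little more careful than the paper's one-line reading, which leaves that scalar to its normalization conventions.
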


\begin{proof}
Since $\eta=r/2$ and $a$ is odd, each quadratic term of $af$ contributes $(-1)^{ax_ux_v}=(-1)^{x_ux_v}$, which gives the display.
For \ref{guise-graphstate}, expand the graph state in the computational basis, $\ket G=2^{-|V|/2}\sum_{\bx}(-1)^{\sum_{\{u,v\}\in E}x_ux_v}\ket{\bx}$, and pair it with the covector whose factors have coefficients $1,\lambda_v$, taken without complex conjugation.
For \ref{guise-zx}, read each variable as a $Z$-spider with phase $\arg\lambda_v$ and each edge as a Hadamard edge.
\end{proof}

These translations preserve the interaction graph, and hence its rank-width, exactly.
Subsequent circuit or ZX rewrites may change it.
\appref{app:kk} matches the normalization conventions.

\paragraph{Per-mode dynamic program.}
This suggests running the rank-decomposition DP \emph{one Fourier mode at a time}: for each $a\in\Z_r$, track a single element of $\Z[\omega_r]$ per boundary signature, computing $\widehat N(a)$ in a single pass, and recover all $N_j$ at the root via~\autoref{eq:inverse-residue-dft}.
Concretely, for each $a\in\Z_r$ define
\[
A_u^{(a)}[\bsigma] = \sum_{\substack{\bz\in\{0,1\}^{X_u}\\ \sigma_u(\bz)=\bsigma}} \omega_r^{a\phi_u(\bz)}.
\]

\begin{example}[The running cut table]\label{ex:cut-table}
For the path in \autoref{ex:gc}, take $X=\{x_{1,1}\}$ and order the outside coordinates as $(x_{2,1},x_{3,1})$.
The cut matrix is $(1\ 0)$ and has rank $1$, so the two assignments to $x_{1,1}$ have signatures $(0,0)$ and $(1,0)$.
Neither assignment accumulates an internal or unary phase, hence
\[
A_X^{(a)}[(0,0)]=1,
\qquad
A_X^{(a)}[(1,0)]=1,
\]
and every other entry is zero.
\end{example}

\begin{lemma}[Mode tables are character images]\label{lem:mode-from-counts}
\lean{Formal/Core/Fourier.lean}{Aalg_root}
For every vertex $u$, signature $\bsigma$, and $a\in\Z_r$,
\[
A_u^{(a)}[\bsigma]=\sum_{s\in\Z_r}D_u[\bsigma,s]\,\omega_r^{as}.
\]
The mode-$a$ leaf initialization and join update are the images of those of \autoref{alg:rw-dp} under the ring homomorphism $\Z[\Z_r]\to\Z[\omega_r]$ carrying a residue $s$ to $\omega_r^{as}$, and at the root $\widehat N(a)=\omega_r^{ac}A_{r_{\mathcal T}}^{(a)}[\varnothing]$.
\end{lemma}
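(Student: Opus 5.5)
The plan is to derive everything from \autoref{thm:dp-correct}, which already identifies $D_u[\bsigma,s]$ with the number of assignments $\bz\in\{0,1\}^{X_u}$ having $\sigma_u(\bz)=\bsigma$ and $\phi_u(\bz)=s$. The first identity then follows by regrouping. In the defining sum for $A_u^{(a)}[\bsigma]$, partition the assignments with signature $\bsigma$ according to the residue $\phi_u(\bz)=s\in\Z_r$. Each class contributes $D_u[\bsigma,s]$ copies of $\omega_r^{as}$. Note that $\omega_r^{a\phi_u(\bz)}$ is well defined because $\phi_u$ is taken modulo $r$ and $\omega_r^r=1$. This gives $A_u^{(a)}[\bsigma]=\sum_s D_u[\bsigma,s]\omega_r^{as}$.

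For the homomorphism statement, view the row $s\mapsto D_u[\bsigma,s]$ as the group-ring element $\sum_s D_u[\bsigma,s]\,[s]\in\Z[\Z_r]$. The map $\theta_a:[s]\mapsto\omega_r^{as}$ is a ring homomorphism, because $s\mapsto\omega_r^{as}$ is a character of $\Z_r$. We then check the two operations of \autoref{alg:rw-dp} in turn.

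The leaf rule adds $[0]$ to the entry at signature $0$ and $[b_v]$ to the entry at $\bA[v,V\setminus\{v\}]$. Its image adds $1$ and $\omega_r^{ab_v}$ at the same signatures, and these still accumulate when the two signatures coincide.

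For the join, fix a parent signature $\bgamma$. The parent row is
\[
\sum_{(\balpha,\bbeta):\,\Gamma_t(\balpha,\bbeta)=\bgamma}[\eta\chi(\balpha,\bbeta)]\cdot D_L[\balpha,\cdot]\,D_R[\bbeta,\cdot],
\]
where the products are taken in $\Z[\Z_r]$. This is exactly the update $s=p+q+\eta\chi$, since convolution in the group ring adds residues. Applying $\theta_a$ to this expression gives
\[
A_t^{(a)}[\bgamma]=\sum_{\Gamma_t(\balpha,\bbeta)=\bgamma}\omega_r^{a\eta\chi(\balpha,\bbeta)}A_L^{(a)}[\balpha]A_R^{(a)}[\bbeta],
\]
which is the mode-$a$ join. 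By \autoref{lem:chi-well-defined}, $\chi$ depends only on the signatures, so the factor $\omega_r^{a\eta\chi}=(-1)^{a\chi}$ is well defined.

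The root identity uses that $\overline X_{r_{\mathcal T}}=\varnothing$, so the only signature is the empty one. We have $D_{r_{\mathcal T}}[\varnothing,s]=N_{s+c}$ by \autoref{thm:dp-correct}. Applying the first identity and substituting into \autoref{eq:residue-dft} gives
\[
\widehat N(a)=\sum_j N_j\omega_r^{aj}=\sum_s N_{s+c}\,\omega_r^{a(s+c)}=\omega_r^{ac}A_{r_{\mathcal T}}^{(a)}[\varnothing].
\]
The empty case $V=\varnothing$ is handled separately by direct computation.

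The main point needing care is the join. One must phrase the residue update as multiplication in $\Z[\Z_r]$, including the shift by $\eta\chi$, so that the homomorphism argument applies to the update rule itself and not only to its result. The rest is bookkeeping.
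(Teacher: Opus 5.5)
Your proposal is correct and follows essentially the same route as the paper. Like the paper, you regroup assignments by residue using \autoref{thm:dp-correct}, then push the group-ring form of the leaf and join rules (the join as a product shifted by $[\eta\chi]$) through the ring homomorphism $[s]\mapsto\omega_r^{as}$. The only cosmetic difference is at the root: you pass through $D_{r_{\mathcal T}}[\varnothing,s]=N_{s+c}$ and the definition of $\widehat N(a)$, whereas the paper reads $\widehat N(a)=\omega_r^{ac}A^{(a)}_{r_{\mathcal T}}[\varnothing]$ directly off $f=c+\phi_{r_{\mathcal T}}$.
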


\begin{proof}
Grouping the assignments $\bz\in\{0,1\}^{X_u}$ with $\sigma_u(\bz)=\bsigma$ by their residue $\phi_u(\bz)$ and applying \autoref{thm:dp-correct} gives
\[
\sum_{s\in\Z_r}D_u[\bsigma,s]\,\omega_r^{as} =\sum_{\substack{\bz\in\{0,1\}^{X_u}\\ \sigma_u(\bz)=\bsigma}}\omega_r^{a\phi_u(\bz)} =A_u^{(a)}[\bsigma].
\]
Read the counts at a fixed signature as the element $\sum_sD_u[\bsigma,s]\,\delta_s$ of the group ring $\Z[\Z_r]$.
The join of \autoref{alg:rw-dp} multiplies two such elements and shifts the product by $\eta\chi$, and $\delta_s\mapsto\omega_r^{as}$ is a ring homomorphism $\Z[\Z_r]\to\Z[\omega_r]$, so it carries the leaf and join rules to their mode-$a$ counterparts.
At the root $X_{r_{\mathcal T}}=V$ and $f=c+\phi_{r_{\mathcal T}}$, so $\widehat N(a)=\sum_{\bx}\omega_r^{af(\bx)}=\omega_r^{ac}A_{r_{\mathcal T}}^{(a)}[\varnothing]$.
\end{proof}

The counts are recovered at the root by
\begin{equation}\label{eq:inverse-dft}
N_j=\frac1r\sum_{a=0}^{r-1}\omega_r^{-aj}\widehat N(a)
=\frac1r\sum_{a=0}^{r-1}\omega_r^{a(c-j)}A_{r_{\mathcal T}}^{(a)}[\varnothing].
\end{equation}

For a fixed mode $a$, the leaf initialization and the join become
\[
A_u^{(a)}[0]\mathrel{+}=1, \qquad A_u^{(a)}[\bA[v,V\setminus\{v\}]]\mathrel{+}=\omega_r^{ab_v}, \qquad A_t^{(a)}[\bgamma] \mathrel{+}= A_L^{(a)}[\balpha] A_R^{(a)}[\bbeta]\, \omega_r^{a\eta\chi(\balpha,\bbeta)}.
\]
Since $\eta=r/2$, we have $\omega_r^{a\eta\chi}=\omega_r^{a(r/2)\chi}=(-1)^{a\chi}$.
Thus the interaction kernel depends only on the parity of $a$: even modes have no signed cross term, and odd modes have the same $(-1)^{\chi}$ kernel.
For even modes the absence of the cross term collapses the whole computation:

\begin{proposition}[Even modes factorize]\label{prop:even-modes}\lean{Formal/Stab/Magic.lean}{Nhat_even}
For every even $a$ the mode sum factorizes over the vertices:
\[
\widehat N(a)=\omega_r^{ac}\prod_{v\in V}\bigl(1+\omega_r^{ab_v}\bigr).
\]
\end{proposition}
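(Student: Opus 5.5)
The plan is to compute directly from the identity $\widehat N(a)=\sum_{\bx\in\{0,1\}^V}\omega_r^{af(\bx)}$ established just after \autoref{eq:residue-dft}. Expanding $f$ via \autoref{eq:sop} splits the phase into three pieces: a constant $\omega_r^{ac}$, a product of unary factors $\prod_v\omega_r^{ab_vx_v}$, and the edge factor $\omega_r^{a\eta\sum_{\{u,v\}\in E}x_ux_v}$.

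The key step is to show that the edge factor is identically $1$ when $a$ is even. Since $\eta=r/2$, each edge contributes $\omega_r^{a(r/2)x_ux_v}=(-1)^{ax_ux_v}$, and this equals $1$ for even $a$. This is the same observation made after \autoref{lem:mode-from-counts}, namely that the kernel $(-1)^{a\chi}$ is trivial for even modes. One can also see it from \autoref{lem:mode-from-counts} itself: the DP join factor $\omega_r^{a\eta\chi}$ becomes $1$, so no cross term survives.

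With the edge factor gone, the summand is $\omega_r^{ac}\prod_{v\in V}\omega_r^{ab_vx_v}$, which is a product of functions of the individual coordinates. The sum over $\{0,1\}^V=\prod_v\{0,1\}$ then distributes into a product of one-variable sums. Each factor is $\sum_{x_v\in\{0,1\}}\omega_r^{ab_vx_v}=1+\omega_r^{ab_v}$, which gives the stated formula.

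The empty case $V=\varnothing$ is consistent with this, since the empty product gives $\omega_r^{ac}$. There is no real obstacle here. The only point needing care is that the exponent is taken modulo $r$, so one must check that $a\eta x_ux_v$ is a multiple of $r$ when $a$ is even. This holds because $a\eta=(a/2)r$.
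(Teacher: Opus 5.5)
Your proposal is correct and follows the paper's own argument: since $a\eta=(a/2)r\equiv0\pmod r$ for even $a$, every quadratic term drops out of $af$, and the remaining product of unary factors $\omega_r^{ac}\prod_{v}\omega_r^{ab_vx_v}$ distributes over $\{0,1\}^V$ to give $\omega_r^{ac}\prod_v(1+\omega_r^{ab_v})$. The side remark about the DP join factor and the check of the empty case are harmless but not needed.
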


\begin{proof}
Write $a=2a'$.
Then $a\eta=a'r\equiv0\pmod r$, so every quadratic term vanishes from $af$, leaving $\omega_r^{af(\bx)}=\omega_r^{ac}\prod_v\omega_r^{ab_vx_v}$, over which the sum distributes.
\end{proof}

Every even mode is therefore computable in $O(n)$ ring operations, hence in $\poly(n)$ time, on every graph and at every even modulus, regardless of width.
Only the odd modes need the dynamic program.
Evaluating the even half by the product improves constants, not the coarse bound of \autoref{thm:fourier-speedup}.
\autoref{cor:histogram-one-mode} uses the same vanishing at modulus $r/2$ to replace the whole even half by a single convolution of $O(nr)$ integer additions.

\begin{theorem}[Fourier speedup]\label{thm:fourier-speedup}
\lean{Formal/Paper.lean}{fourier_mode_spec}
Given a rank-decomposition of an $n$-vertex SOP graph $G$ of width $k$, the mode-$a$ DP returns $\widehat N(a)$ and performs at most $n(2+4^k)$ leaf initializations and joined child-signature pairs.
Consequently all counts $N_j$ can be computed in time $O\bigl((r4^k+r^2)\poly(n)\bigr)$.
\end{theorem}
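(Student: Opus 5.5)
The proof has three parts: correctness of a single mode, a count of its operations, and the assembly of all $N_j$ at the root. \textbf{Correctness} follows from \autoref{lem:mode-from-counts}. By \autoref{thm:dp-correct}, the tables $D_u$ computed by \autoref{alg:rw-dp} are the stated counts. The mode-$a$ leaf and join rules are the images of the integer rules under the ring homomorphism $\Z[\Z_r]\to\Z[\omega_r]$, $\delta_s\mapsto\omega_r^{as}$. Hence, by induction over the rooted decomposition, the table produced by the mode-$a$ recursion equals $A_u^{(a)}[\bsigma]=\sum_s D_u[\bsigma,s]\omega_r^{as}$ at every vertex. At the root this gives $\widehat N(a)=\omega_r^{ac}A^{(a)}_{r_{\mathcal T}}[\varnothing]$. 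One point needs care: the mode-$a$ table is indexed by signatures alone, so the join must sum over all pairs of occurring child signatures. This is exactly the image of summing over all pairs of child states, because the homomorphism is additive.

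\textbf{Counting.} As in the cost part of \autoref{thm:dp-correct}, every nonroot vertex $u$ has at most $2^k$ occurring signatures, since they lie in the row space of $\bA[X_u,\overline X_u]$, whose rank is at most $k$. Each leaf performs two initializations, so the leaves contribute $2n$. The rooted binary tree has $n$ leaves and hence $n-1$ internal vertices. Each join scans at most $2^k\cdot 2^k=4^k$ pairs of child signatures, because both children are nonroot vertices. The total is therefore at most $2n+(n-1)4^k\le n(2+4^k)$. Each pair costs one multiplication in $\Z[\omega_r]$ by the sign $(-1)^{a\chi}$, a lookup of the precomputed $\Gamma_t,\chi_t$, and bit-vector work polynomial in $n$. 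So one mode takes $O(4^k\poly(n))$ ring operations.

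\textbf{All counts.} Run the per-mode DP for each of the $r$ values of $a\in\Z_r$; this costs $O(r4^k\poly(n))$. Alternatively, handle the even modes by \autoref{prop:even-modes}, which does not change the bound. Then invert via \autoref{eq:inverse-dft}. Done directly, the inversion uses $O(r^2)$ ring operations, which gives the $r^2$ term. The final division by $r$ is exact coefficientwise, since each $N_j$ is an integer. Precomputing the representatives and the maps $\Gamma_t,\chi_t$ is shared across modes and costs $O(4^k\poly(n))$. The sum of these costs is $O((r4^k+r^2)\poly(n))$ in the arithmetic model.

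The main obstacle is the bookkeeping that justifies summing over signature pairs rather than state pairs: one must confirm that the parent signature and $\chi$ depend only on the pair of signatures, which is \autoref{lem:chi-well-defined}. Everything else is a direct transfer of \autoref{thm:dp-correct}.
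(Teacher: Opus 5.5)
Your proposal is correct and follows essentially the paper's proof. It obtains correctness from \autoref{lem:mode-from-counts}, bounds each table by $2^k$ signatures so that each join scans at most $4^k$ pairs, and runs all $r$ modes followed by an $O(r^2)$ inverse transform via \autoref{eq:inverse-dft}. Your explicit tally $2n+(n-1)4^k\le n(2+4^k)$ and your remark on summing over signature pairs rather than state pairs only spell out details the paper leaves implicit.
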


\begin{proof}
Correctness is \autoref{lem:mode-from-counts}.
For each fixed $a\in\Z_r$, the table holds one element of $\Z[\omega_r]$ per boundary signature.
As above, the signatures at a nonroot vertex $u$ form the row space of $\bA[X_u,\overline X_u]$, whose dimension is at most $k$.
At the root there is only the empty signature.
Hence each table has at most $2^k$ states per vertex.
The naive join scans at most $4^k$ pairs of child signatures, so running the DP for all $r$ modes costs $O(r4^k\poly(n))$.
By \autoref{prop:even-modes} the even half needs only $O(rn)$ ring operations of that.
The inverse transform in \autoref{eq:inverse-dft} costs $O(r^2)$ ring operations, which is immaterial when $r$ is fixed.
\end{proof}

\begin{corollary}[Single-amplitude mode]\label{cor:single-amplitude}\lean{Formal/Core/Fourier.lean}{single_amplitude}
Given a rank-decomposition of $G$ of width $k$, the unnormalized SOP value $S(f)=\sum_{\bx}\omega_r^{f(\bx)}$ can be computed in time $O(4^k\poly(n))$ by running only the Fourier-mode DP for $a=1$.
\end{corollary}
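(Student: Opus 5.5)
The plan is to specialize the per-mode dynamic program behind \autoref{thm:fourier-speedup} to the single mode $a=1$, with no inverse transform at the end. Correctness then comes directly from \autoref{lem:mode-from-counts}. That lemma gives $\widehat N(1)=\omega_r^{c}A^{(1)}_{r_{\mathcal T}}[\varnothing]$. The discussion after \autoref{eq:residue-dft} identifies $\widehat N(1)=\sum_{\bx}\omega_r^{f(\bx)}=S(f)$. So the output of the mode-$1$ pass, multiplied by the constant $\omega_r^c$, is exactly the unnormalized SOP value. The degenerate case $V=\varnothing$ is handled as in the setup of \autoref{sec:sopbyrank}, where $S(f)=\omega_r^c$.

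For the cost, I would reuse the counting from the proof of \autoref{thm:fourier-speedup} with $r$ replaced by a single mode.
\begin{itemize}
\item Every nonroot table is indexed by the row space of $\bA[X_u,\overline X_u]$, so it has at most $2^k$ entries; the root table has one entry.
\item Each join scans at most $4^k$ pairs of child signatures.
\item Leaves contribute two updates each.
\end{itemize}
This gives at most $n(2+4^k)$ elementary updates. Each update consists of a signature restriction and addition $\balpha|_Y+\bbeta|_Y$, a lookup of the precomputed parity $\chi$ (computed from stored representatives in $\poly(n)$ time), and one multiplication in $\Z[\omega_r]$ by $(-1)^{\chi}$. The kernel is only a sign, since $\eta=r/2$ and $a=1$ is odd. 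The leaf weights are powers $\omega_r^{b_v}$, which amount to a monomial shift in the power basis.

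The main obstacle is making ``$O(4^k\poly(n))$ time'' hold as a bit-complexity statement and not merely as a count of ring operations, especially because the paper stresses that $r$ may be huge (on the order of $10^{14}$). For fixed $r$ this is immediate: elements of $\Z[\omega_r]$ are $\varphi(r)$-tuples of integers of absolute value at most $2^n$, so each operation costs $\poly(n)$. For general $r$, I would first try to argue that the exponents actually occurring stay confined. Each table entry is a sum of at most $2^n$ terms $\pm\omega_r^{e}$. The product step multiplies two such sums, which forces polynomial multiplication modulo $\Phi_r$, and that is costly when $\varphi(r)$ is large.

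The fallback, if the intended reading is the arithmetic model, is to state the bound as $O(4^k\poly(n))$ ring operations, with $r$ absorbed into the unit cost as the paper's discussion of rings already anticipates. Alternatively, for fixed finite gate sets $r$ is a constant, and I would conclude with that, noting that for $\{\H,\T,\CZ\}$ one has $r=8$.
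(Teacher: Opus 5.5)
Your proposal is correct and follows the paper's proof. The paper likewise identifies $\widehat N(1)=S(f)$ through \autoref{eq:residue-dft} and runs only the mode-$1$ pass of the per-mode DP of \autoref{thm:fourier-speedup}, skipping the other $r-1$ modes and the inverse transform. Your caveat is also the paper's reading: the bound counts operations in $\Z[\omega_r]$ and becomes a true time bound when $r$ is fixed. One small slip: $\omega_r^{b_v}$ is a signed monomial in the power basis only when $r$ is a power of two, though this costs nothing for fixed $r$.
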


\begin{proof}
By \autoref{eq:residue-dft}, the mode $a=1$ is $ \widehat N(1)=\sum_{\bx}\omega_r^{f(\bx)}=S(f). $
The per-mode DP computes it without the other $r-1$ modes or the inverse transform.
\end{proof}

\noindent
For a circuit-derived instance, multiply by $\delta_C(\by,\bz)$ and divide by $R_C$ to obtain the amplitude.

\begin{corollary}[Full histogram at a fixed power-of-two modulus]\label{cor:histogram-one-mode}\lean{Formal/Paper.lean}{histogram_one_mode_spec}
Fix $r=2^m$ with $m\ge1$.
Given a rank-decomposition of $G$ of width $k$, all residue counts $N_j$ are computable in time $O(4^k\poly(n))$, by the mode-$1$ DP of \autoref{cor:single-amplitude} together with $O(nr)$ integer additions.
\end{corollary}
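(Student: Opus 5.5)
The plan is to read off, from the single value $S(f)=\widehat N(1)$, the differences of residue counts that lie half a period apart. The matching sums come from the even half of the problem, which \autoref{prop:even-modes} already shows to be free of cross terms, and the two halves are then combined.

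\emph{Step 1: differences from the mode-$1$ value.} For $r=2^m$ the cyclotomic polynomial is $\Phi_r(x)=x^{r/2}+1$. It is irreducible of degree $\varphi(r)=r/2$, so $1,\omega_r,\ldots,\omega_r^{r/2-1}$ is a $\Z$-basis of $\Z[\omega_r]$ and coordinates in it are unique. Using $\omega_r^{j+r/2}=-\omega_r^j$ in \autoref{eq:sopvalue} gives
\[
S(f)=\sum_{j=0}^{r/2-1}\bigl(N_j-N_{j+r/2}\bigr)\,\omega_r^{j}.
\]
So the power-basis coordinates $\Delta_j$ of the value returned by \autoref{cor:single-amplitude} are exactly $\Delta_j=N_j-N_{j+r/2}$, for $0\le j<r/2$. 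This requires the DP to keep its ring elements reduced modulo $x^{r/2}+1$, which is how elements of $\Z[\omega_r]$ are represented anyway (\autoref{sec:dp}). Reading off the $\Delta_j$ therefore costs nothing beyond the mode-$1$ run.

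\emph{Step 2: sums from the half-modulus histogram.} Set $M_j=N_j+N_{j+r/2}$. This is the number of $\bx$ with $f(\bx)\equiv j\pmod{r/2}$. Since $\eta=r/2\equiv 0\pmod{r/2}$, every quadratic term vanishes modulo $r/2$, as in the proof of \autoref{prop:even-modes}. Hence
\[
f(\bx)\equiv c+\sum_v b_vx_v\pmod{r/2}.
\]
The histogram of this linear form is the cyclic convolution over $\Z_{r/2}$ of $\delta_c$ with the factors $\delta_0+\delta_{b_v \bmod r/2}$, one factor per $v\in V$. I would compute it iteratively, shifting and adding one array of length $r/2$ per vertex. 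This uses $O(r)$ integer additions per vertex and $O(nr)$ in total.

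\emph{Step 3: combination and the edge case.} Set $N_j=(M_j+\Delta_j)/2$ and $N_{j+r/2}=(M_j-\Delta_j)/2$. These are exact integer halvings, costing $O(r)$ further operations. The case $m=1$ fits the same scheme: there $\Z[\omega_2]=\Z$, the basis is $\{1\}$, $\Delta_0=N_0-N_1$, and $M_0=2^n$.

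The total cost is the $O(4^k\poly(n))$ of \autoref{cor:single-amplitude} plus $O(nr)$ additions, which is $O(4^k\poly(n))$ for fixed $r$. The only delicate point, and the one I expect to be the main obstacle, is Step 1. It rests on the uniqueness of coordinates, that is, on irreducibility of $x^{r/2}+1$ over $\mathbb{Q}$ for $r$ a power of two. It also needs care that the DP's output is in that normal form rather than in some redundant representation in $\omega_r^0,\ldots,\omega_r^{r-1}$. If the representation were redundant, one reduction pass using $\omega_r^{r/2}=-1$ restores the normal form in $O(r)$ additions.
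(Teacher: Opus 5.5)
Your proof is correct and follows the paper's argument essentially step for step. The power-basis coordinates of $S(f)$ under $\Phi_r(x)=x^{r/2}+1$ give the differences $\Delta_j=N_j-N_{j+r/2}$. The cross-term-free reduction modulo $r/2$ gives the sums $M_j$ by an $O(nr)$-addition convolution in $\Z[\Z_{r/2}]$, and exact halving recovers each $N_j$. Your remarks on the case $m=1$ and on normalizing a possibly redundant representation are harmless details that the paper leaves implicit.
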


\begin{proof}
Since $r$ is a power of two, $\Phi_r(x)=x^{r/2}+1$.
Hence $1,\omega_r,\ldots,\omega_r^{r/2-1}$ is a $\Z$-basis of $\Z[\omega_r]$ and $\omega_r^{j+r/2}=-\omega_r^{j}$, so splitting \autoref{eq:residue-dft} at $a=1$ into its two halves gives
\[
S(f)=\widehat N(1)=\sum_{j=0}^{r/2-1}\bigl(N_j-N_{j+r/2}\bigr)\,\omega_r^{j}.
\]
The coordinates of $S(f)$ in that basis are therefore the differences $\Delta_j=N_j-N_{j+r/2}$, which \autoref{cor:single-amplitude} produces in time $O(4^k\poly(n))$.
They are read off the representation rather than computed from it.

For the sums, reduce the modulus to $r/2$.
The coefficient $\eta=r/2$ vanishes there, so every quadratic term of \autoref{eq:sop} drops out and $f\equiv c+\sum_{v\in V}b_vx_v\pmod{r/2}$ is linear.
The two residues modulo $r$ that reduce to $j$ modulo $r/2$ are $j$ and $j+r/2$, so the residue histogram of the reduced polynomial is
\[
M_j=\#\{\bx\in\{0,1\}^V\mid f(\bx)\equiv j\!\!\pmod{r/2}\}=N_j+N_{j+r/2}.
\]
Because that polynomial is linear, $M$ is the coefficient vector of $\delta_c\prod_{v\in V}(\delta_0+\delta_{b_v})$ in the group ring $\Z[\Z_{r/2}]$, with all indices taken modulo $r/2$.
Multiplying the $n$ factors in turn costs $O(r)$ integer additions per factor and uses neither the graph nor the decomposition.

Finally $N_j=(M_j+\Delta_j)/2$ and $N_{j+r/2}=(M_j-\Delta_j)/2$ for $0\le j<r/2$, which is exact over $\Z$ because $M_j+\Delta_j=2N_j$.
\end{proof}

The corollary fixes $r$, which is what puts the $O(nr)$ additions and the $r$ outputs inside $O(4^k\poly(n))$.
When $r$ is part of the input instead, the computation uses $O(4^k\poly(n))$ operations in $\Z[\omega_r]$, then $O(nr)$ integer additions, and writing the $r$ counts down takes $\Omega(r)$ time on its own.
Its bit complexity depends further on the representation cost of an element of $\Z[\omega_r]$, which is $\varphi(r)$ integers in the power basis.

The hypothesis on $r$ is forced by a count.
Mode $1$ carries $\varphi(r)$ integers and the histogram modulo $r/2$ carries $r/2$, so together they can determine all $r$ counts only when $\varphi(r)+r/2\ge r$, that is only when $r$ is a power of two.
At $r=12$ they carry $4+6$ numbers where $12$ are needed, and \autoref{thm:fourier-speedup} applies instead.
Only the first step uses the hypothesis, since the quadratic terms drop out modulo $r/2$ for every even $r$.
The power-of-two hypothesis is useful because it separates the two settings the experiments meet: every exact import in \autoref{sec:experiments} has a power-of-two modulus, whereas rounding phases to a common modulus produces no power of two at all.

Exact arithmetic is what makes the first step possible.
A floating-point evaluation returns an approximation of $S(f)$, from which the integers $\Delta_j$ can be recovered only by rounding, and only at a precision above $n$ bits.
Over $\Z[\omega_r]$ they are the representation itself.

\paragraph{Linear layouts have no join.}\label{sec:fast-join}
At a branching join two child tables must be combined, and the direct implementation scans up to $4^k$ pairs of child signatures.
Linear layouts extend one prefix by one vertex at a time, so there is no join.
This removes one~$2^k$ factor.

\begin{restatable}[Linear-layout Fourier DP]{theorem}{thmlinearlayoutfourier}\label{thm:linear-layout-fourier}
\lean{Formal/Paper.lean}{linear_fourier_mode_spec}
Let $f$ be a quadratic SOP instance of the form in \autoref{eq:sop}, with $n$-vertex SOP variable graph $G=(V,E)$, and let $\pi=(v_1,\ldots,v_n)$ be a linear layout of cut-rank at most $k$.
For each $a\in\Z_r$, the left-to-right DP computes $\widehat N(a)=\sum_{\bx}\omega_r^{af(\bx)}$.
Its operation count is at most $n(2+2\cdot2^k)$.
Consequently a single amplitude is computable in time $O(2^k\poly(n))$, and all residue counts in time $O\bigl((r2^k+r^2)\poly(n)\bigr)$.
The same bounds hold whenever $\lrw(G)\le k$ and a witnessing layout is supplied.
\end{restatable}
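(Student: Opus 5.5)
The plan is to reduce the linear layout to a caterpillar rank-decomposition and then replace each caterpillar join by a cheap one-vertex extension. For the prefix $P_i=\{v_1,\ldots,v_i\}$ define the prefix mode table
\[
A_i^{(a)}[\bsigma]=\sum_{\bz\in\{0,1\}^{P_i},\ \sigma_{P_i}(\bz)=\bsigma}\omega_r^{a\phi_{P_i}(\bz)},
\]
where $\sigma_{P_i}$ and $\phi_{P_i}$ are the boundary signature and internal residue of \autoref{sec:sopbyrank} with $X_u$ replaced by $P_i$. Each signature lies in the row space of $\bA[P_i,V\setminus P_i]$. This space has dimension at most $k$ for $i<n$ and is trivial for $i=n$, so each table has at most $2^k$ entries. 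The caterpillar whose spine vertices carry the sets $P_i$ is a rooted rank-decomposition whose cuts are exactly the prefix cuts together with the singleton cuts $\{v_{i+1}\}$. The correctness argument of \autoref{thm:dp-correct} and \autoref{lem:mode-from-counts} therefore applies verbatim, and we only need to bound the cost of each join more carefully.

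The key step is the extension from $P_i$ to $P_{i+1}=P_i\cup\{v\}$ with $v=v_{i+1}$. The right child is the leaf $v$, whose table has only the two signatures $0$ and $\bA[v,V\setminus\{v\}]$, with weights $1$ and $\omega_r^{ab_v}$. Hence each left state $\balpha$ is paired with at most two right states, and the join scans at most $2\cdot 2^k$ pairs instead of $4^k$. The pairing rules are as follows.
\begin{itemize}
\item With $z_v=0$, the state $\balpha$ contributes $A_i^{(a)}[\balpha]$ to $\balpha|_{V\setminus P_{i+1}}$.
\item With $z_v=1$, it contributes $A_i^{(a)}[\balpha]\,\omega_r^{ab_v}(-1)^{a\alpha_v}$ to $\balpha|_{V\setminus P_{i+1}}+\bA[v,V\setminus P_{i+1}]$.
\end{itemize}
The sign in the second rule comes from the cross parity $\chi(\balpha,\cdot)=\ba^\top\bA[P_i,\{v\}]=\alpha_v$. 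This parity is read off the left signature itself, so no representatives are needed; it is the concrete form of \autoref{lem:chi-well-defined}.

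The count then follows by summing over the layout. The first vertex gives a leaf initialization. Each extension costs at most $2+2\cdot2^k$ operations, counting the two leaf entries and the at most $2\cdot 2^k$ updates. Over $n$ steps this gives at most $n(2+2\cdot 2^k)$ operations. Every signature manipulation is a restriction or an addition of bit vectors, which costs $\poly(n)$.
\begin{itemize}
\item For a single amplitude, run mode $a=1$ and multiply by $\omega_r^{c}$, which gives $O(2^k\poly(n))$.
\item For all residue counts, run all $r$ modes and apply the inverse transform \autoref{eq:inverse-dft}, which costs $O(r^2)$ ring operations. The total is $O((r2^k+r^2)\poly(n))$.
\end{itemize}
If $\lrw(G)\le k$ and a witnessing layout is supplied, the same bounds hold directly.

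The main obstacle is bookkeeping rather than mathematics. The counts must match the stated constant $n(2+2\cdot2^k)$, including the first and last vertices, where the root signature space is trivial. We must also check that the singleton cuts of the caterpillar, which have rank at most $1$, never enlarge the leaf tables beyond two entries. Both points are immediate from the leaf rule.
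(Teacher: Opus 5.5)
Your proposal is correct and is essentially the paper's argument. The paper writes the same prefix tables and the same one-vertex update $A_{i+1}^{(a)}[\sigma|_{Y_{i+1}}+\delta\bA[v,Y_{i+1}]]\mathrel{+}=A_i^{(a)}[\sigma]\,\omega_r^{a(b_v\delta+\eta\delta\sigma(v))}$, with the cross sign read off the coordinate $\sigma(v)$ exactly as you do. The only difference is that the paper argues correctness directly from the prefix invariant, whereas you inherit it by viewing the layout as a caterpillar specialization of \autoref{thm:dp-correct}; the paper's "initializing a leaf" accounting and its Lean development follow that same caterpillar view.
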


\begin{proof}[Proof sketch]
At each prefix, store one coefficient for each signature toward the unprocessed vertices, of which there are at most $2^k$.
Adding one vertex updates each signature for its two Boolean values and contributes its unary phase and the signs from earlier neighbors.
No join occurs, so $n$ such scans give the stated count, and the empty root signature yields the mode sum.
\appref{app:linear-layout-proof} gives the tables and the update rule in full.
\end{proof}

\subsection{Stabilizer-rank optimization}\label{sec:stabjoin}

So far a cut costs what its rank says, whatever sits below it.
The tables the dynamic program produces are not arbitrary, though: each entry is the sum of a quadratic root-of-unity weight over an affine $\F_2$-subspace, the form stabilizer-decomposition simulators manipulate~\cite{BravyiSmithSmolin16,BravyiGosset16,BravyiEtAl19}.
All the Clifford phases below a cut collapse into one such Gauss sum, and only the non-Clifford unary phases have to be decomposed.
A subtree can therefore be charged for something other than its width before its table enters the ordinary dynamic program.
Two quantities govern it per cut instead of one: the cut-rank and the number of \emph{magic} vertices below the cut.
It never does worse than the $O(4^k\poly(n))$ of \autoref{thm:fourier-speedup}, collapses to polynomial time on Clifford instances, and beats both pure guarantees on mixed instances.
(Even Fourier modes are polynomial for a more elementary reason that needs no stabilizer structure: \autoref{prop:even-modes}.)

We fix a Fourier mode $a\in\Z_r$ and keep the notation of the join paragraph of \autoref{sec:sopbyrank}: an internal vertex $t$ has children $L_t$ and $R_t$, and the outside of its cut is $Y=\overline X_t$.
A left signature $\balpha$ and a right signature $\bbeta$ combine into $\bgamma=\balpha|_Y+\bbeta|_Y$ with crossing parity $\chi(\balpha,\bbeta)$.
The signatures at a vertex form an $\F_2$-space, the row space of its cut matrix, of dimension the cut-rank.
A table is a function on that space, and we may ask what structure it has.

\paragraph{Quadratic phase functions.}
Identify $x\in\F_2^w$ with its $0/1$ integer coordinate vector, where $w$ is the dimension of whatever $\F_2$-space the function lives on.
Three such spaces arise below, and $w$ always names the dimension of whichever one is in play.
A stored table sits on a cut's signature space, so there $w$ is that cut's rank, at most $k$.
A join works on the direct sum of the two child signature spaces, so there $w\le2k$.
A switch works on the assignment space of the variables below a vertex, so there $w\le n$.

\begin{definition}[Quadratic phase function]\label{def:qpf}\lean{Formal/Stab/QPF.lean}{QPF}
A \emph{quadratic phase function} (QPF) on $\F_2^w$ is a function
\[
g(x)=\theta\;i^{\ell(x)}\,(-1)^{q(x)}\ \text{for }x\in K, \qquad g(x)=0\ \text{otherwise},
\]
given by a scalar $\theta\in\C$, a degree-one $\Z_4$-valued polynomial $\ell(x)=\sum_j c_jx_j\bmod 4$ in the Boolean coordinates, an $\F_2$-quadratic form $q(x)=\sum_{i<j}q_{ij}x_ix_j\bmod 2$, and an affine subspace $K=\{x\in\F_2^w:\bm Mx=\bm d\}$ cut out by a matrix $\bm M$ over $\F_2$ with $w$ columns, which row reduction lets us take to have at most $w$ rows, together with a right-hand side $\bm d$ of matching length.
\end{definition}

\noindent
A QPF is thus given by a pair: the phase $\theta\,i^{\ell(x)}(-1)^{q(x)}$, which is defined on all of $\F_2^w$, and the affine support $K$ outside which the function is zero.
Everything but $\theta$ is a bounded amount of data: $w$ coefficients $c_j\in\Z_4$, the $\binom{w}{2}$ bits $q_{ij}$, and the at most $w(w+1)$ bits of $(\bm M,\bm d)$, so $O(w^2)$ bits in all.
The scalar is counted separately, since an arbitrary element of $\C$ carries no bit bound.
In our use it is one element of a cyclotomic coefficient ring, hence $O(\varphi(r))$ integers: the mode-$a$ tables of \autoref{sec:fourier} lie in $\Z[\omega_r]$, and the two-term split of \autoref{lem:magic-decomp} inverts $\left(\begin{smallmatrix}1&1\\1&i\end{smallmatrix}\right)$, which adjoins $1/2$.

\begin{example}[A point table as a QPF]\label{ex:point-qpf}
Fix a signature $\bsigma\in\F_2^w$ and take $\theta=1$, $\ell=q=0$, $\bm M=I_w$, and $\bm d=\bsigma$.
Then $K=\{\bsigma\}$ is a single point, so $g(\bsigma)=1$ and $g$ vanishes elsewhere.
Thus the point table stored by the ordinary dynamic program is a QPF.
\end{example}

Up to normalization these are the amplitude functions of stabilizer states supported on affine subspaces, in the phase-sensitive form of CH-form simulators~\cite{BravyiEtAl19}.
Two elementary identities drive everything below.
For bits $x,x'\in\{0,1\}$ we have $x\oplus x'=x+x'-2xx'$, so for the degree-one part
\begin{equation}\label{eq:xor-lift}
i^{\ell(x\oplus x')}=i^{\ell(x)}\,i^{\ell(x')}\,(-1)^{\sum_j (c_j\bmod 2)\,x_jx'_j};
\end{equation}
passing to $\oplus$-coordinates creates only quadratic sign corrections.
Likewise a parity $\lambda(x)=x_{j_1}\oplus\cdots\oplus x_{j_s}$ satisfies $\lambda\equiv\sum_t x_{j_t}-2\sigma_2\pmod 4$, with $\sigma_2$ the elementary symmetric polynomial of degree two in those $s$ bits, hence
\begin{equation}\label{eq:parity-lift}
i^{\mu\lambda(x)}
=\Bigl(i^{\sum_t x_{j_t}}\Bigr)^{\mu}\,(-1)^{\mu\,\sigma_2(x_{j_1},\ldots,x_{j_s})},
\end{equation}
again inside the class.
\hfill\lean{Formal/Stab/Affine.lean}{isQPF_affinePhase}

\begin{restatable}[Closure]{lemma}{lemqpfclosure}\label{lem:qpf-closure}\lean{Formal/Stab/Paper.lean}{qpf_closure_spec}
QPFs on $\F_2^w$ are closed under
\begin{enumerate}[label=(\roman*),ref=(\roman*),leftmargin=2.4em,itemsep=1pt,topsep=2pt]
  \item\label{qpf-prod} pointwise products,
  \item\label{qpf-affine} affine substitutions $x\mapsto Sx'\oplus x_0$, and
  \item\label{qpf-sum1} summation over a single variable,
\end{enumerate}
and the total sum $\sum_xg(x)$ is computable from the data above.
A product stays on $\F_2^w$, an affine substitution moves to the domain of $S$, and summing out one variable returns a QPF on the remaining $\F_2^{w-1}$, equivalently a QPF on $\F_2^w$ that is constant in the summed variable.
A join needs one more operation, which iterates \ref{qpf-sum1}: it sums $g$ over each fiber of a linear map, moving it to the target of that map.
\begin{enumerate}[label=(\roman*),ref=(\roman*),leftmargin=2.4em,start=4,topsep=2pt]
  \item\label{qpf-push} for every linear map $\Pi:\F_2^w\to\F_2^p$, the function
$y\mapsto\sum_{x:\,\Pi x=y}g(x)$ on $\F_2^p$ is a QPF (the pushforward of $g$ along $\Pi$).
\end{enumerate}
Operations \ref{qpf-prod}--\ref{qpf-sum1} and the total sum cost $O(w^3)$ time, and \ref{qpf-push} costs $O((w+p)^3)$, the extra term being the cost of writing the result down on $\F_2^p$.
\end{restatable}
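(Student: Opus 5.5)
The plan is to treat a QPF as the data $(\theta,\ell,q,\bm M,\bm d)$ and check each operation directly on that data, keeping the support in reduced row-echelon form throughout. The two lifting identities \autoref{eq:xor-lift} and \autoref{eq:parity-lift} will be the only tools needed to absorb $\oplus$-combinations back into the $i^{\ell}(-1)^{q}$ form.

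\textbf{Products.} For two QPFs $g,g'$ on $\F_2^w$, the product has scalar $\theta\theta'$, linear part $\ell+\ell'\bmod 4$, quadratic form $q+q'\bmod 2$, and support $K\cap K'$. The support is obtained by stacking $(\bm M,\bm d)$ over $(\bm M',\bm d')$ and row reducing in $O(w^3)$. If the stacked system is inconsistent, the product is the zero QPF, which we take to be $\theta=0$.

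\textbf{Affine substitutions.} For $x=Sx'\oplus x_0$, each coordinate $x_j$ becomes a parity of entries of $x'$, possibly complemented by a constant bit. I would expand $i^{c_jx_j}$ by applying \autoref{eq:parity-lift} to that parity, and absorb the complement through $i^{c_j(1-y)}=i^{c_j}\,i^{-c_jy}$. The quadratic form $q$ becomes a product of two parities, which expands into a quadratic form in $x'$ plus linear $\F_2$ terms. A linear term $x'_i\bmod 2$ is moved into $\ell$ using $(-1)^{x'_i}=i^{2x'_i}$. The support becomes $\{x':\bm MSx'=\bm d\oplus\bm Mx_0\}$. All steps are matrix products, so the cost is $O(w^3)$.

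\textbf{Summing out one variable $x_w$.} This is where the main obstacle lies. First, if $x_w$ is a pivot of the echelon system, then $x_w$ is determined as an affine function of the remaining coordinates. Substituting that affine map (case (ii)) and deleting the pivot row completes the sum, since each fiber contains exactly one term.

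Otherwise $x_w$ is free in $K$. Write $g=h(x')\,i^{c_wx_w}(-1)^{x_w\lambda(x')}$, where $\lambda(x')=\sum_i q_{iw}x_i$ is a parity of the other coordinates. The sum over $x_w$ is then $h(x')\bigl(1+i^{c_w}(-1)^{\lambda(x')}\bigr)$, and we split on $c_w\bmod 2$.
\begin{enumerate}[label=(\alph*),leftmargin=2.4em]
\item If $c_w$ is even, the factor is $1\pm(-1)^{\lambda}$. It equals $2[\lambda=0]$ when $c_w\equiv0$ and $2[\lambda=1]$ when $c_w\equiv2$. This adds one affine constraint to the support and doubles $\theta$.
\item If $c_w$ is odd, the factor $1+i^{c_w}(-1)^{\lambda}$ takes the two values $1\pm i^{c_w}$. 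Both have modulus $\sqrt2$, and their ratio is a power of $i$. So the factor equals $(1+i^{c_w})\,i^{\mu\lambda}$ for a suitable $\mu\in\Z_4$. Applying \autoref{eq:parity-lift} puts $i^{\mu\lambda}$ into QPF form.
\end{enumerate}
This is the usual Gauss-sum step, and each branch costs $O(w^2)$.

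\textbf{Total sum.} The total sum is obtained by iterating single-variable summation $w$ times. A direct accounting gives $O(w^4)$. To reach the stated $O(w^3)$, I would amortize by eliminating variables in the echelon order, so that row reduction is not redone after each step.

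\textbf{Pushforward.} I would realize the pushforward by a change of coordinates. Choose an invertible $P$ on $\F_2^w$ such that $\Pi P$ equals $[\Pi_0\mid 0]$ with $\Pi_0$ injective, using column reduction of $\Pi$. Substitute $x=Px''$ by (ii). Then sum out the coordinates spanning $\ker\Pi$ by iterating (iii). Finally, transport the result from $\operatorname{im}\Pi_0$ to $\F_2^p$. This last step is again an affine substitution, which extends the remaining coordinates to a basis of $\F_2^p$, together with support constraints cutting out the image. Writing these constraints down on $\F_2^p$ accounts for the $O((w+p)^3)$ term.
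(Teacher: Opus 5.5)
Your proposal follows the paper's proof essentially step for step. Products add the data and intersect the supports. Affine substitution goes through \autoref{eq:xor-lift} and \autoref{eq:parity-lift}. The single-variable sum uses a determined/unconstrained split and the four-case evaluation of $1+i^{c_w}(-1)^{\lambda(x')}$. The pushforward uses a change of coordinates adapted to $\ker\Pi$, iterated single-variable sums, and linear constraints cutting out $\operatorname{im}\Pi$ on $\F_2^p$.

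\textbf{The case split in (iii) needs repair.} Suppose the support is kept in reduced row-echelon form for a fixed column order. Then $x_w$ can be a non-pivot column and still occur in pivot rows, so ``not a pivot'' does not imply ``unconstrained.''

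For instance, take $K=\{x\in\F_2^2: x_1+x_2=0\}$ with the natural order. Then $x_2$ is not a pivot, yet for each fixed $x_1$ only $x_2=x_1$ contributes. The factorization $g=h(x')\,i^{c_wx_w}(-1)^{x_w\lambda(x')}$ does not exist there, because the support indicator depends on $x_w$. Your formula $h(x')\bigl(1+i^{c_w}(-1)^{\lambda(x')}\bigr)$ would sum both values of $x_w$ and include a point outside $K$.

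The fix is to split on whether the column of $x_w$ in $\bm M$ is nonzero. When it is, pivot on that column before substituting: add one row containing $x_w$ to every other row containing it. This is what the paper means by row-reducing on the column of $x_w$.

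This repair also simplifies your cost accounting. After one initial $O(w^3)$ row reduction, each single-variable sum replaces only one coordinate by a parity, or adds one row, or multiplies in one parity-lifted factor. Each of these costs $O(w^2)$. Iterating $w$ times therefore already gives the $O(w^3)$ total sum, with no further amortization needed.
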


\noindent
Every use we make of \ref{qpf-push} has $p\le w$, so it also costs $O(w^3)$.
Closure of quadratic forms over $\F_2$ on an affine support under products, affine substitution and marginalization is the standard stabilizer formalism~\cite{DehaeneDeMoor03,VandenNest10}.
The $\Z_4$-valued degree-one part retained here records the phase explicitly alongside the $\F_2$ quadratic form, and \appref{app:qpf-closure} proves the lemma in that phase-sensitive form.

\paragraph{The join preserves stabilizer structure.}
The mode-$a$ join of \autoref{sec:fourier} sends two child tables $F$ and $G$ to
\[
H(\bgamma)=\sum_{\balpha|_Y+\bbeta|_Y=\bgamma}F(\balpha)\,G(\bbeta)\, (-1)^{a\,\chi(\balpha,\bbeta)} .
\]
Written abstractly, the two restrictions $\balpha\mapsto\balpha|_Y$ and $\bbeta\mapsto\bbeta|_Y$ are linear maps into the parent signature space, and the crossing parity is a bilinear form on the two child spaces.
It is bilinear because \autoref{lem:chi-well-defined} lets us pick representatives linearly: fix a basis of each child's signature space, choose one representative assignment per basis vector, and extend by linearity.
For arbitrary $F$ and $G$ this join is a scan over all $4^k$ pairs of child signatures, and \appref{app:join-barrier} places a barrier in front of the black-box route to a better base.
Driving that route to $O(2^k\poly(k))$ would imply $\mmomega=2$, and an $O(2^{\alpha k}\poly(k))$ join would imply $\mmomega\le\max\{2,2\alpha\}$.
The tables our dynamic program produces are not arbitrary, and on them the same join is cubic.

\begin{lemma}[Stabilizer-rank join]\label{lem:stab-join}\lean{Formal/Stab/Join.lean}{isQPF_join}
Let both child cut-ranks be at most $k$.
If the two child tables are QPFs, the joined table $H$ is a single QPF, computable in $O(k^3)$ ring operations.
\end{lemma}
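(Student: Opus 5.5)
We will realize the join as a composition of the closure operations of \autoref{lem:qpf-closure} on the direct sum $W=\F_2^{w_L}\oplus\F_2^{w_R}$ of the two child signature spaces, where $w_L,w_R\le k$ are the child cut-ranks. First fix bases of the two child signature spaces (the row spaces of $\bA[X_L,\overline X_L]$ and $\bA[X_R,\overline X_R]$). Write a signature as $\balpha=\bm P_L\bu$ and $\bbeta=\bm P_R\bv$ with coordinates $\bu\in\F_2^{w_L}$, $\bv\in\F_2^{w_R}$, so that $F$ and $G$ are QPFs in $\bu$ and $\bv$. On $W$ the function $(\bu,\bv)\mapsto F(\bu)$ is again a QPF, since it is an affine substitution (a coordinate projection, \ref{qpf-affine}); likewise for $G$. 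By \ref{qpf-prod}, the product $(\bu,\bv)\mapsto F(\bu)G(\bv)$ is then a QPF on $W$.

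Next we treat the kernel $(-1)^{a\chi}$. For odd $a$ it equals $(-1)^{\chi}$, and for even $a$ it is $1$. As observed before the lemma, \autoref{lem:chi-well-defined} lets us choose representatives linearly. Extending representatives of the basis signatures by linearity gives $\chi(\bm P_L\bu,\bm P_R\bv)=\bu^\top\bm B\bv$ for the $w_L\times w_R$ matrix $\bm B=\bm R_L^\top\bA[X_L,X_R]\bm R_R$, where $\bm R_L$ and $\bm R_R$ collect the basis representatives. The key check is that this formula is consistent with the value of $\chi$ at arbitrary signatures: $\chi$ is independent of representatives, and the sum of representatives is a representative of the sum of signatures. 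Hence $(-1)^{a\bu^\top\bm B\bv}$ is a QPF with $\theta=1$, $\ell=0$, $K=W$ and off-diagonal quadratic form $q(\bu,\bv)=a\,\bu^\top\bm B\bv$, which is a sum of terms $u_iv_j$ and so fits the form $\sum_{i<j}q_{ij}x_ix_j$. Multiplying it in by \ref{qpf-prod} yields a QPF $\Psi$ on $W$.

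Finally we realize the sum over fibers by a pushforward. The map $\Pi(\bu,\bv)=(\bm P_L\bu)|_Y+(\bm P_R\bv)|_Y$ is linear from $W$ to $\F_2^Y$, and its image lies in the row space of $\bA[X_t,Y]$, as shown in the join paragraph. We compose $\Pi$ with a coordinatization of that image, of dimension $p\le k$ (or $p=0$ at the root), and obtain $H=\Pi_*\Psi$. Because the representation of signatures in coordinates is a bijection, the fibers of $\Pi$ correspond exactly to the pairs $(\balpha,\bbeta)$ with $\balpha|_Y+\bbeta|_Y=\bgamma$, so $H$ is the joined table. By \ref{qpf-push}, $H$ is a single QPF.

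For the cost, each operation runs on dimension $w_L+w_R\le 2k$ with $p\le k$, so each costs $O(k^3)$ by \autoref{lem:qpf-closure}. Only the scalars $\theta$ are multiplied as ring elements, and the remaining work is $\F_2$/$\Z_4$ bookkeeping. Forming $\bm B$ and the matrices of $\Pi$ is preprocessing that is polynomial in $n$ and independent of the tables. We expect the main obstacle to be the bilinearity step: making rigorous that linearly chosen representatives give $\chi$ as the bilinear form $\bu^\top\bm B\bv$ on the coordinate spaces. This reduces to \autoref{lem:chi-well-defined} together with linearity of $\sigma_L$ and $\sigma_R$.
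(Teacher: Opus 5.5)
Your proposal is correct and follows the paper's proof. The paper likewise writes the summand on the direct sum of the two child signature spaces as a product of QPFs with the sign of the bilinear crossing form, using closure~(i). It then pushes this forward along $(\balpha,\bbeta)\mapsto\balpha|_Y+\bbeta|_Y$ by closure~(iv), with $w\le 2k$ and $p\le k$ giving the cost; your explicit lifting of $F$ and $G$ by affine substitution and your derivation of $\bm B$ only spell out details the paper leaves implicit or places in the paragraph before the lemma.
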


\begin{proof}
On the direct sum of the two child signature spaces the summand is a product of two QPFs with the sign of a bilinear form, hence a QPF by \autoref{lem:qpf-closure}\ref{qpf-prod}.
The join sums it along $(\balpha,\bbeta)\mapsto\balpha|_Y+\bbeta|_Y$, hence a QPF by \autoref{lem:qpf-closure}\ref{qpf-push}.
For the cost, that direct sum has dimension at most $2k$ and the parent signature space at most $k$, so \autoref{lem:qpf-closure} applies with $w\le2k$ and $p\le k$.
Since $k\le n$, this is a $\poly(n)$ operation.
\end{proof}

A point table is a QPF (\autoref{ex:point-qpf}), and an arbitrary table on a rank-$k$ cut is a weighted sum of up to $2^k$ of them.
Applying \autoref{lem:stab-join} to every pair of point terms therefore reproduces the naive join of \autoref{thm:fourier-speedup} at its $4^k$ cost.
The saving comes when a whole table is a \emph{short} sum of QPFs, which is what the rest of this section supplies.

\paragraph{Magic vertices.}
We are back on the SOP variable graph of \autoref{sec:sop}, where each vertex $v$ carries the unary phase $\omega_r^{b_v}$ and each edge the cross-term coefficient $\eta=r/2$.
In mode $a$ the vertex phase is $\omega_r^{ab_v}$, and a QPF can absorb exactly the powers of $i$.
Call a vertex $v$ \emph{$a$-magic} when $\omega_r^{ab_v}$ is not a power of $i$, which presupposes $4\mid r$.
For $r=8$ the condition says that $ab_v$ is odd.
The name follows the magic states of stabilizer-rank simulation: at $r=8$ an $a$-magic vertex carries exactly a $\T$ phase, the resource those decompositions are charged for~\cite{BravyiSmithSmolin16,BravyiGosset16}.
Let $\tau_u$ be the number of $a$-magic vertices below vertex $u$ of the decomposition.

Fix such a $u$ and an assignment $\bz$ to the variables below it, and read off the mode-$a$ weight term by term.
Because $\eta=r/2$, each edge contributes $\omega_r^{a\eta z_vz_{v'}}=(-1)^{az_vz_{v'}}$, so the edges below $u$ collect into the $\F_2$-quadratic form $q(\bz)=a\sum_{\{v,v'\}}z_vz_{v'}\bmod 2$.
A vertex that is not $a$-magic contributes a power of $i$, say $\omega_r^{ab_v}=i^{c_v}$ with $c_v\in\Z_4$, and these collect into the degree-one part $\ell(\bz)=\sum_{v\ \text{not }a\text{-magic}}c_vz_v$.
What is left is one factor per magic vertex, so the weight factors as
\[
\omega_r^{a\bigl(\sum_v b_vz_v+\eta\sum_{\{v,v'\}}z_vz_{v'}\bigr)} =\underbrace{\;i^{\ell(\bz)}(-1)^{q(\bz)}\;}_{\text{one QPF}} \;\prod_{v\ a\text{-magic}}\omega_r^{ab_vz_v},
\]
both sums ranging over the variables and edges below $u$.
The product over the magic vertices is the one factor that is not already a QPF, and it is what a stabilizer-rank decomposition attacks: such a decomposition rewrites that product as a sum of QPFs, and the number of terms is the price.
Write $\mathrm{sr}(\tau)$ for a bound on that number of terms for $\tau$ magic factors.
Our algorithm consumes the terms themselves, so $\mathrm{sr}$ has to be \emph{constructive}: a decomposition of length at most $\mathrm{sr}(\tau)$ must be \emph{generated} in $\mathrm{sr}(\tau)\poly(n)$ time.
The two-term split in the proof below is constructive and gives $\mathrm{sr}(\tau)\le 2^{\tau}$ whenever $4\mid r$, and for $r=8$ the explicit decompositions of~\cite{BravyiSmithSmolin16,QassimPashayanGosset21} give $\mathrm{sr}(\tau)=O(2^{0.3963\,\tau})$.
Any other explicitly generable decomposition substitutes for $\mathrm{sr}$ verbatim.
The minimum stabilizer rank is at most every such $\mathrm{sr}$, but knowing that a short decomposition exists does not produce one, so every cost below is charged on the constructive bound.

\begin{lemma}[Magic decomposition]\label{lem:magic-decomp}\lean{Formal/Stab/Magic.lean}{modeWeight_sum_qpf}
The mode-$a$ table at a vertex $u$ is a sum of at most $\mathrm{sr}(\tau_u)$ QPFs, computable in $\mathrm{sr}(\tau_u)\poly(n)$ time from the instance alone.
\end{lemma}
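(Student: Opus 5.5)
The plan is to write the mode-$a$ table at $u$ as a pushforward of the weight factorization displayed just before the lemma, and then to expand the magic product. By definition,
\[
A_u^{(a)}[\bsigma]=\sum_{\bz\in\{0,1\}^{X_u}:\ \sigma_u(\bz)=\bsigma}\omega_r^{a\phi_u(\bz)}.
\]
This is the pushforward, along the linear signature map $\bz\mapsto\bz^\top\bA[X_u,\overline X_u]$, of the weight function $W(\bz)=\omega_r^{a\phi_u(\bz)}$ on $\F_2^{X_u}$. The target of that map is the row space, of dimension at most $k$.

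The display before the lemma writes $W=g_0\cdot\prod_{v\ a\text{-magic}}\omega_r^{ab_vz_v}$. Here $g_0(\bz)=i^{\ell(\bz)}(-1)^{q(\bz)}$ is a QPF on $\F_2^{X_u}$ with $\theta=1$ and $K$ the whole space. The remaining product depends only on the $\tau_u$ magic coordinates. I would first invoke a constructive decomposition, writing that product as $\sum_{j=1}^{m}h_j$ with $m\le\mathrm{sr}(\tau_u)$ QPFs on $\F_2^{\tau_u}$. Each $h_j$ is then lifted to $\F_2^{X_u}$ by the coordinate projection, which is an affine substitution and so stays a QPF by \autoref{lem:qpf-closure}\ref{qpf-affine}.

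For the explicit baseline $\mathrm{sr}(\tau)\le2^\tau$ when $4\mid r$, I would split each magic factor in two. The function $x\mapsto\lambda^x$ on $\{0,1\}$, with $\lambda=\omega_r^{ab_v}$, has values $(1,\lambda)$. This vector is a combination $\alpha(1,1)+\beta(1,i)$, obtained by inverting $\left(\begin{smallmatrix}1&1\\1&i\end{smallmatrix}\right)$: explicitly $\beta=(\lambda-1)/(i-1)$ and $\alpha=1-\beta$. Both $(1,1)$ and $(1,i)=i^{x}$ are QPFs. Taking the product over the $\tau_u$ magic vertices and expanding gives $2^{\tau_u}$ terms. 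Each term is a scalar times $i^{\sum_{v\in T}x_v}$ for a subset $T$, so each is a QPF whose scalar lies in $\Z[\omega_r,1/2]$. For $r=8$ I would instead cite the explicit decompositions of~\cite{BravyiSmithSmolin16,QassimPashayanGosset21}, which are generated term by term.

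Next, by \autoref{lem:qpf-closure}\ref{qpf-prod} each $g_0h_j$ is a QPF on $\F_2^{X_u}$. Linearity of the pushforward then gives $A_u^{(a)}=\sum_j\Pi_*(g_0h_j)$, and each summand is a QPF by \autoref{lem:qpf-closure}\ref{qpf-push}.

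The main obstacle is the cost. Applied with $w=|X_u|\le n$ and $p\le k$, \ref{qpf-push} costs $O(n^3)$, so each term costs $\poly(n)$ and the total is $\mathrm{sr}(\tau_u)\poly(n)$. This also requires that the decomposition terms are generated in $\poly(n)$ time each, which is exactly the constructiveness hypothesis placed on $\mathrm{sr}$. Two further points need care. First, the pushforward should be expressed in a chosen basis of the row space, so that the result is a QPF on $\F_2^{\rho(X_u)}$ rather than on $\F_2^{\overline X_u}$. The map $\bz\mapsto$ coordinates of $\sigma_u(\bz)$ is still linear, so this causes no problem. Second, the building of $g_0$ needs $c_v\in\Z_4$ with $\omega_r^{ab_v}=i^{c_v}$ for each non-magic vertex, which is a direct read-off from $ab_v\bmod r$. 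All of this uses only the instance, as the lemma claims.
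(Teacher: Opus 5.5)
Your proposal is correct and follows the paper's proof. You factor the mode-$a$ weight into one Clifford QPF times the magic product, expand that product in the $(1,1),(1,i)$ basis (or via the cited $r=8$ decompositions), multiply each term into the Clifford QPF, and push forward along the signature map using \autoref{lem:qpf-closure}. Your explicit coefficients $\alpha,\beta$, the lifting by coordinate projection, and the $O(n^3)$-per-term cost accounting fill in details the paper leaves implicit.
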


\begin{proof}
A magic factor $\omega_r^{ab_vz_v}$ depends on the single bit $z_v$, so it is determined by the pair of values it takes at $z_v=0$ and $z_v=1$.
The two QPFs $z\mapsto1$ and $z\mapsto i^{z}$ take the value pairs $(1,1)$ and $(1,i)$, and $\left(\begin{smallmatrix}1&1\\1&i\end{smallmatrix}\right)$ is invertible, so they are a basis for the functions of one bit.
Each magic factor is therefore a combination of those two QPFs, and expanding the $\tau_u$ factors independently gives $2^{\tau_u}$ products, each of them a QPF.
For $r=8$ the product of $\tau$ magic factors is, up to a Clifford factor, the amplitude function of $\ket{T}^{\otimes\tau}$, of stabilizer rank $O(2^{0.3963\tau})$~\cite{QassimPashayanGosset21}.
Multiplying each term into the Clifford QPF and summing along the assignment-to-signature map (\autoref{lem:qpf-closure}\ref{qpf-push}) decomposes the table term by term.
\end{proof}

We can now rederive the polynomial-time exact-amplitude consequence of the Gottesman--Knill theorem, and identify the tables of Clifford instances as stabilizer-state amplitude vectors.
It costs nothing beyond the two lemmas above.
\begin{corollary}[Gottesman--Knill through the SOP lens]\label{cor:gottesman-knill}\lean{Formal/Stab/Paper.lean}{clifford_collapse_spec}
Let the instance be Clifford in the native sign-edge form of \autoref{eq:sop}.
Then for every mode $a$ the mode-$a$ weight on the assignment space is a single QPF, and its total sum evaluates $\widehat N(a)$ in $O(n^3)$ time, on any graph and at any rank-width.
\end{corollary}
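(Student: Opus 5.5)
The plan is to specialize the factorization displayed just before \autoref{lem:magic-decomp} to the case with no magic vertices, and then apply the total-sum part of \autoref{lem:qpf-closure}. First I need to pin down what ``Clifford'' means for the native sign-edge form of \autoref{eq:sop}. I take it to mean that every unary phase $\omega_r^{b_v}$, and the constant $\omega_r^{c}$, is a power of $i$, so that $4\mid r$ and $(r/4)\mid b_v$ for every $v$. Then for every mode $a$ the phase $\omega_r^{ab_v}=(\omega_r^{b_v})^a$ is again a power of $i$, and no vertex is $a$-magic. For even $a$ one could instead invoke \autoref{prop:even-modes} directly. I would still treat all modes uniformly, since the QPF argument below does not care about the parity of $a$.

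Next, fix $a$ and take $u$ to be the root, so that the variable set is all of $V$ and $\tau_{r_{\mathcal T}}=0$. The displayed factorization then writes the mode-$a$ weight $\omega_r^{af(\bx)}$ as $\theta\,i^{\ell(\bx)}(-1)^{q(\bx)}$. Here $\theta=\omega_r^{ac}$, $\ell(\bx)=\sum_v c_vx_v \bmod 4$ with $i^{c_v}=\omega_r^{ab_v}$, and $q(\bx)=a\sum_{\{u,v\}\in E}x_ux_v\bmod 2$. The support is $K=\F_2^V$, given by the empty system $\bm M$. This is exactly a QPF on $\F_2^n$ in the sense of \autoref{def:qpf}, and its data is read off the instance in $O(n^2)$ time: the coefficients $c_v\in\Z_4$ come from $ab_v/(r/4)\bmod 4$, and the bits $q_{uv}$ are the adjacency entries times $a\bmod 2$. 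Because the support is everything, the sum over this QPF is exactly $\sum_{\bx}\omega_r^{af(\bx)}=\widehat N(a)$.

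The last step applies the total-sum clause of \autoref{lem:qpf-closure} with $w=n$, which costs $O(n^3)$. Concretely, this means summing out the $n$ variables one at a time via \ref{qpf-sum1} and updating the data at each stage. No decomposition of $G$ is used at any point, so the bound holds on any graph and at any rank-width.

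I expect the main obstacle to be definitional rather than technical: making ``Clifford'' precise so that every mode $a$ is free of magic vertices, including the constant, and including the case $4\nmid r$. When $4\nmid r$, Clifford should force $b_v\in\{0,r/2\}$, so all phases are $\pm1$ and the same argument applies. A secondary point is the $O(n^3)$ claim. It hides in \autoref{lem:qpf-closure} the Gauss-sum evaluation of $\sum_x i^{\ell(x)}(-1)^{q(x)}$. If that lemma's total sum is counted per variable as $O(w^2)$ work over $w$ eliminations, the cubic bound follows; I would simply cite it as stated.
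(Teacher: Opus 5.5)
Your proposal is correct and follows the paper's proof: since every unary phase is a power of $i$, no vertex is $a$-magic, so the factorization behind \autoref{lem:magic-decomp} taken at the root presents the mode-$a$ weight as a single QPF on the assignment space, whose total sum is $\widehat N(a)$ at cost $O(n^3)$ by \autoref{lem:qpf-closure}. One small simplification: you do not need $\omega_r^{c}$ to be a power of $i$, because the QPF scalar $\theta$ may be any complex number.
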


\begin{proof}
Every unary phase $\omega_r^{b_v}$ of such an instance is a power of $i$, so no vertex is $a$-magic for any mode $a$ and $\tau=0$ at every vertex.
\autoref{lem:magic-decomp} at the root therefore presents the weight as $\mathrm{sr}(0)=1$ QPF on the assignment space, where $w\le n$.
Its total sum is $\widehat N(a)$ and costs $O(n^3)$ by \autoref{lem:qpf-closure}.
\end{proof}

Gadgetized representations of the same gates need not be in native form.
Sums of this shape are the tractable quadratic exponential sums of~\cite{CaiCLL10}, though the cost above needs nothing beyond \autoref{lem:qpf-closure}.
\paragraph{Where to switch.}
The two prices are combined by an antichain $B$ of decomposition vertices, with no member a strict ancestor of another.
Call a vertex \emph{active} if it does not lie strictly below a member of $B$.
At each $b\in B$, \autoref{lem:magic-decomp} constructs the mode-$a$ table from the instance as a sum of at most $\mathrm{sr}(\tau_b)$ QPFs, which are evaluated on the $2^{\rho_b}$ signatures of the cut.
Nothing below $b$ is computed.
At every other active internal vertex, the ordinary per-mode DP of \autoref{thm:fourier-speedup} joins the two child tables.
The switch from the magic price to the width price therefore occurs at most once on each rootward branch, as in \autoref{fig:switch-antichain}.

The \emph{total cost} of $B$ is
\begin{equation}\label{eq:total-cost}
T(B)=\poly(n)\left(
\sum_{b\in B}\mathrm{sr}(\tau_b)\bigl(1+2^{\rho_b}\bigr)
+\!\!\!\sum_{\substack{t\ \mathrm{active\ internal}\\t\notin B}}
2^{\rho_{L_t}+\rho_{R_t}}
\right),
\end{equation}
and the \emph{best total cost} of the decomposition is $\min_B T(B)$ over all such antichains.

\begin{figure}[!tbp]
\centering
\begin{tikzpicture}[
  x=1.35cm,
  y=0.88cm,
  activejoin/.style={draw, circle, fill=green!18, minimum size=5.5mm, inner sep=0pt, font=\scriptsize},
  switch/.style={draw, circle, fill=cyan!22, minimum size=6mm, inner sep=0pt, font=\scriptsize},
  pruned/.style={draw=gray!65, circle, fill=gray!18, minimum size=4.5mm, inner sep=0pt},
  charge/.style={font=\scriptsize, align=center}
]
\draw[dashed, very thick, cyan!60!black] (-3.45,0.9) -- (-0.15,0.9) -- (1.35,2.1) -- (2.75,2.1);
\node[activejoin] (t) at (0,3.2) {$t$};
\node[activejoin] (u) at (-1.8,2.1) {$u$};
\node[switch] (b3) at (1.8,2.1) {$b_3$};
\node[switch] (b1) at (-2.7,0.9) {$b_1$};
\node[switch] (b2) at (-0.9,0.9) {$b_2$};
\node[pruned] (l11) at (-3.15,-0.25) {};
\node[pruned] (l12) at (-2.25,-0.25) {};
\node[pruned] (l21) at (-1.35,-0.25) {};
\node[pruned] (l22) at (-0.45,-0.25) {};
\node[pruned] (l31) at (1.35,0.9) {};
\node[pruned] (l32) at (2.25,0.9) {};
\draw (t) -- (u);
\draw (t) -- (b3);
\draw (u) -- (b1);
\draw (u) -- (b2);
\draw[gray!65] (b1) -- (l11);
\draw[gray!65] (b1) -- (l12);
\draw[gray!65] (b2) -- (l21);
\draw[gray!65] (b2) -- (l22);
\draw[gray!65] (b3) -- (l31);
\draw[gray!65] (b3) -- (l32);
\node[charge, right=4mm of t] {ordinary join\\$2^{\rho_{L_t}+\rho_{R_t}}$};
\node[charge] at (0,-0.8) {each $b\in B$: $\mathrm{sr}(\tau_b)\bigl(1+2^{\rho_b}\bigr)$};
\end{tikzpicture}
\caption{The antichain $B=\{b_1,b_2,b_3\}$ is a one-way switching frontier.
The gray subtrees below them are pruned.
Above the frontier, every ordinary join pays the width charge, while each switch pays the stabilizer-rank charge.
These are the two sums in \autoref{eq:total-cost}.}
\label{fig:switch-antichain}
\end{figure}

It is found by one postorder sweep: with $C(u)$ denoting the minimum below $u$ after removing the common polynomial factor,
\begin{equation}\label{eq:switch-recurrence}
C(u)=\min\left(
\mathrm{sr}(\tau_u)\bigl(1+2^{\rho_u}\bigr),
C(L_u)+C(R_u)+2^{\rho_{L_u}+\rho_{R_u}}
\right),
\qquad C(\text{leaf})=O(1).
\end{equation}
One number and one choice bit per vertex suffice, so an optimal antichain is computable in $O(n)$ arithmetic operations for any constructive price $\mathrm{sr}$ (\appref{app:stabjoin-proofs}).

Choosing $B=\varnothing$ recovers \autoref{thm:fourier-speedup} exactly, so the best total cost never exceeds $O(4^k\poly(n))$.\lean{Formal/Stab/Paper.lean}{hybrid_never_loses_spec}
Choosing only the root recovers global stabilizer-decomposition simulation in $\mathrm{sr}(\tau)\poly(n)$~\cite{BravyiGosset16,BravyiEtAl19}.
Thus the optimization refines the global $\T$-count parameterization of those simulators to the finite-modulus SOP setting and complements rank-width-parameterized ZX contraction, whose parameter is the width alone~\cite{KuyanovKissinger26,CodsiLaakkonen26}.

\paragraph{Mixed instances.}
Cut-rank and magic count are incomparable, so the two prices separate on instances that mix them: a magic-dense part of low cut-rank, attached by a single edge to a part of high rank-width with no magic vertices at all.

\begin{restatable}[Mixed instances]{proposition}{propmixedseparation}\label{prop:mixed-separation}
For every $s\ge2$ there is a circuit $D_s$ over $\{\H,\CZ,\T\}$ whose pinned SOP has $n_s=\Theta(s^2)$ variables and $\T$-count $\tau_s=s$, and a rank-decomposition of $G_{D_s}$ on which the stabilizer-rank optimization evaluates the amplitude in $\poly(n_s)$ operations.
On that family the two pure routes are bounded only by $4^{\rw(G_{D_s})}\poly(n_s)$, where $\rw(G_{D_s})\ge s-1$, and by $\mathrm{sr}(s)\poly(n_s)$.
\end{restatable}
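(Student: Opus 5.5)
The plan is to build $D_s$ as two parts joined by one edge. The first part is a Clifford piece $P_s$ whose SOP variable graph has rank-width at least $s-1$ and no magic vertices. The second part is a magic piece $Q_s$ carrying all $s$ $\T$ gates, whose graph is a path of low cut-rank. For $P_s$ I would take an $s\times s$ grid-like variable graph, since the $s\times s$ grid has rank-width $s-1$ (a standard fact I would cite).

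\textbf{Realizing $P_s$.} I would realize it with one wire per grid row. Hadamards subdivide each wire into segments, giving the row paths of the grid. $\CZ$ gates between adjacent wires at matching segments supply the vertical edges. This uses $\Theta(s^2)$ unpinned variables and no $\T$ gates. Pinning all inputs and outputs to $0$ keeps $G_C$ exactly the grid, as in \autoref{ex:gc}: boundary-touching terms only shift the $b_v$ by multiples of $4$.

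\textbf{Realizing $Q_s$.} This is one extra wire with $s$ interior segments separated by Hadamards, and one $\T$ on each segment. That gives a path of $s$ magic vertices. A single $\CZ$ attaches one path vertex to one grid vertex.

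\textbf{Lower bounds for the pure routes.} Rank-width is monotone under induced subgraphs, and the grid is one of these, so $\rw(G_{D_s})\ge s-1$. The $\T$-count is $s$. Consequently the pure branching bound is $4^{\rw}\poly(n_s)\ge 4^{s-1}$, and the pure stabilizer bound is $\mathrm{sr}(s)$. These are just the two quoted guarantees, so nothing more is needed to state them.

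\textbf{The decomposition.} I would build a rank-decomposition with two parts under the root. One side is a caterpillar over the path $Q_s$; every cut of a path has cut-rank at most $2$, and the extra cross edge adds at most $1$. The other side is an arbitrary rank-decomposition over the grid part, call its root $g$. The cut at $g$ separates the grid from the path by a single edge, so $\rho_g\le1$.

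\textbf{Choosing the antichain.} Take $B=\{g\}$. There $\tau_g=0$, so \autoref{lem:magic-decomp} with $\mathrm{sr}(0)=1$ gives the charge $1\cdot(1+2^{\rho_g})=O(1)$ times $\poly(n_s)$. This is exactly \autoref{cor:gottesman-knill} applied below $g$, with the QPF pushed to the signature space of that cut.

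The remaining active internal vertices are the caterpillar joins over the path and the root join. Each has $\rho_{L_t}+\rho_{R_t}\le 6$, so each costs $O(1)$, and there are $O(s)$ of them. By \eqref{eq:total-cost}, $T(B)=\poly(n_s)$.

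\textbf{Main obstacle.} I expect the hard step to be making the circuit's pinned SOP graph exactly the intended graph. Repeated $\CZ$s cancel modulo $8$, segment alignment matters, and pinned endpoint segments must not be the ones carrying edges. I would first place every interaction strictly between Hadamards on each wire so that all endpoints are unpinned.

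If the grid embedding proves awkward, the fallback is any Clifford family with rank-width $\ge s-1$, for example a graph on $\Theta(s^2)$ vertices containing a grid as an induced subgraph. Only the $\ge s-1$ rank-width claim needs an external citation.
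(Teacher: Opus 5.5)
Your proof is correct and follows the same plan as the paper. A magic-free $s\times s$ grid (rank-width $s-1$, inherited by $G_{D_s}$ as an induced subgraph) is joined by a single edge to a low-cut-rank part carrying all $s$ $\T$ gates. $B$ is the root of the grid subtree, so the switch costs $\mathrm{sr}(0)(1+2^1)$ and every remaining active join is $O(1)$.

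The differences are in the instantiation:
\begin{itemize}
\item \textbf{Magic part and realization.} The paper uses a clique $K_s$ as the magic part and realizes the whole graph with \autoref{lem:realize-any-graph}: one qubit per vertex, with a $\T$ on each clique wire. Exactness of $G_{D_s}$ is then already proved, and the step you flag as the main obstacle never arises. You could likewise apply that lemma to your grid-plus-path graph.
\item \textbf{Choice of decomposition.} In the paper's graph, every proper nonempty subset of the clique has at most two distinct cut-matrix rows, hence cut-rank at most $2$. So \emph{any} rank-decomposition of the clique side works.
\item \textbf{What your circuit buys.} Your layered realization needs only $s+1$ qubits, and it is valid. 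Aligned Hadamard layers produce the row paths, one $\CZ$ per window produces each column edge, no pair is repeated, and zero pins remove the boundary terms.
\item \textbf{What it costs.} The path side needs its caterpillar to follow the path order. Your phrase ``every cut of a path has cut-rank at most $2$'' holds only for interval cuts. For example, the alternate vertices of a path have cut-rank $\lfloor s/2\rfloor$, so an arbitrary decomposition of the path side would not give constant-cost joins.
\end{itemize}

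With the path order made explicit, each prefix has cut-rank at most $2$ even counting the cross edge. Your bound $T(\{g\})=\poly(n_s)$ then holds, and so does $\min_B T(B)=\poly(n_s)$.
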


Both are exponential in $s=\Theta(\sqrt{n_s})$, the second at the constructive stabilizer ranks of \autoref{tab:route-comparison}.
Only the decomposition is given.
The postorder recurrence of \autoref{eq:switch-recurrence} chooses the switching antichain, and \appref{app:stabjoin-proofs} exhibits the family and its antichain.

\section{Comparison with other parameters and methods}\label{subsec:separation-linear-tensor}

Having given the algorithm, we now compare the parameter it runs on with those governing the competing routes.
We first recall tensor networks and their contraction complexity, and show that the SOP variable graph is never harder than the circuit's tensor network, so the SOP route is competitive with tensor contraction before rank-width is used at all.
We then place rank-width, linear rank-width, and contraction complexity side by side (\autoref{fig:rw-lrw-tw-separation}), show that arbitrary simple graphs can be realized as SOP variable graphs of small circuits, and finally instantiate this with a graph family that has bounded rank-width but growing competing parameters.
The comparison is made at two levels: structural graph parameters, and, for the closest algorithms, decomposition-sensitive costs on a common representation.

\subsection{Tensor networks}\label{sec:tensor-network}

We recall tensors, tensor networks, and Markov and Shi's contraction complexity \cite{MarkovShi08}.

A \emph{tensor} of \emph{order} $\alpha$ is an $\alpha$-dimensional array of complex numbers, that is, a map $A\colon [d_1]\times\cdots\times[d_\alpha]\to\C$, where $[d]=\{0,\ldots,d-1\}$.
Each of the $\alpha$ argument positions is an \emph{index}, and we write $A[i_1,\ldots,i_\alpha]$ for the entry of $A$ at $(i_1,\ldots,i_\alpha)$.
A \emph{tensor network} is a finite collection of tensors $\mathcal T=\{A_1,\ldots,A_\mu\}$ with some of their index positions paired across tensors.
Each matched pair (whose two positions must share a dimension) is a single \emph{bond} (or \emph{contracted}) \emph{index}, and the unmatched positions are the \emph{open} (or \emph{free}) \emph{indices}.
We write $e_1,\ldots,e_p$ for the bond indices and $o_1,\ldots,o_\ell$ for the open ones.

A \emph{bond configuration} $\bm\beta=(\beta_1,\ldots,\beta_p)$ assigns a value $\beta_i\in[d_{e_i}]$ to every bond index $e_i$.
Together with a choice $\bm j=(j_1,\ldots,j_\ell)$ of values for the open indices, it fixes a value for \emph{every} index position of \emph{every} tensor, so each $A_t$ has a well-defined entry.
The \emph{value} of the network is the order-$\ell$ tensor obtained by summing, over all bond configurations, the product of these entries:
\[
V[j_1,\ldots,j_\ell] =\sum_{\beta_1,\ldots,\beta_p} \prod_{t=1}^\mu A_t[\bm\beta,\bm j],
\]
where $A_t[\bm\beta,\bm j]$ denotes the entry of $A_t$ read at the values that $(\bm\beta,\bm j)$ assigns to the indices of $A_t$.
For circuits every index is a qubit wire, so each dimension is $d=2$.

\paragraph{Tensor-network graph.}
Given a tensor network $\mathcal T$, its \emph{tensor-network graph} (a multigraph) $N_{\mathcal T}$ has one vertex per tensor and one edge $\{t,t'\}$ for each bond index shared between $A_t$ and $A_{t'}$.
Open indices are not represented by edges.

It is sometimes convenient to represent a quantum circuit $C=U_m\cdots U_1$ over $\{\H,\T,\CZ\}$ as a tensor network.
In this context, each gate is a tensor: $\H$ and $\T$ are order-$2$ tensors (one input, one output qubit index) and $\CZ$ is order-$4$ (two input, two output qubit indices).
Adjacent gates sharing a wire are connected by a bond index.
The circuit's input and output wires are the open indices.
For a circuit $C$, we write $N_C$ for this tensor-network graph.
Thus $N_C$ has one vertex per gate and one edge per wire segment between consecutive gates.

\begin{example}[Deriving $N_C$]\label{ex:nc}
The circuit of \autoref{ex:circuit} has nine gate tensors $g_1,\ldots,g_9$, one for each gate in left-to-right order.
Each $\H$ and $\T$ gate contributes an order-$2$ tensor, with one input index and one output index, while each $\CZ$ gate contributes an order-$4$ tensor, with two input indices and two output indices.
We define $N_C=(V,E)$ by taking $V$ to be the set of gate tensors and $E$ to be the set of internal bonds, that is, the wire segments connecting consecutive gates on the same qubit wire.
In \autoref{fig:ex-nc}, the eight such internal bonds are drawn as the eight edges of $N_C$.

The input and output indices are not edges of $N_C$.
They are boundary indices, handled separately by the boundary assignment $(\by,\bz)$ when considering the pinned amplitude $\braket{\bz|C|\by}$.
Equivalently, one may obtain a fully contracted network for this amplitude by adjoining rank-$1$ boundary tensors, which fix the input values $\by$ and project onto the output values $\bz$.
\autoref{fig:ex-nc} shows them as dotted open legs.
\end{example}

\begin{figure}[ht]

\centering

    \begin{subfigure}[b]{0.45\textwidth}

\centering
\begin{tikzpicture}[
  scale=0.83,
  gate/.style={draw, circle, fill=blue!12, minimum size=0.65cm, inner sep=0pt, font=\scriptsize},
  bdy/.style={font=\scriptsize, gray}
]
\node[gate] (g1) at (0, 4)     {$g_1$};
\node[gate] (g4) at (1, 3) {$g_4$};
\node[gate] (g7) at (4.5, 4)   {$g_7$};
\node[gate] (g2) at (0, 2)     {$g_2$};
\node[gate] (g5) at (2, 1) {$g_5$};
\node[gate] (g6) at (3, 2)   {$g_6$};
\node[gate] (g8) at (4.5, 2)   {$g_8$};
\node[gate] (g3) at (0, 0)     {$g_3$};
\node[gate] (g9) at (4.5, 0)   {$g_9$};

\node[bdy] (i1) at (-0.9, 4) {$\iota_1$};
\node[bdy] (i2) at (-0.9, 2) {$\iota_2$};
\node[bdy] (i3) at (-0.9, 0) {$\iota_3$};
\node[bdy] (o1) at (5.4, 4) {$o_1$};
\node[bdy] (o2) at (5.4, 2) {$o_2$};
\node[bdy] (o3) at (5.4, 0) {$o_3$};

\draw[gray, dotted, thick] (i1) -- (g1);
\draw[gray, dotted] (i2) -- (g2);
\draw[gray, dotted] (i3) -- (g3);
\draw[gray, dotted] (g7) -- (o1);
\draw[gray, dotted] (g8) -- (o2);
\draw[gray, dotted] (g9) -- (o3);

\draw (g1) -- (g4);
\draw (g4) -- (g7);
\draw (g2) -- (g4);
\draw (g4) -- (g5);
\draw (g5) -- (g6);
\draw (g6) -- (g8);
\draw (g3) -- (g5);
\draw (g5) -- (g9);
\end{tikzpicture}
\caption{The tensor-network graph $N_C$ for the circuit of \autoref{ex:circuit}.
Solid edges are bond indices (wire segments between consecutive gates).
Dotted edges are boundary indices.}
\label{fig:ex-nc}
    \end{subfigure}
\hfill
    \begin{subfigure}[b]{0.45\textwidth}
\centering
\begin{tikzpicture}[
  scale=0.83,
  idx/.style={draw, circle, fill=red!12, minimum size=0.65cm, inner sep=0pt, font=\tiny}
]
\node[idx] (e14) at (-1.4,  2)    {$e_{1,4}$};
\node[idx] (e47) at ( 1.4,  2)    {$e_{4,7}$};
\node[idx] (e24) at (-1.4,  0.4)  {$e_{2,4}$};
\node[idx] (e45) at ( 0,   -0.2)  {$e_{4,5}$};
\node[idx] (e56) at ( 1.6, -1.6)  {$e_{5,6}$};
\node[idx] (e68) at ( 3.6, -1.6)  {$e_{6,8}$};
\node[idx] (e35) at (-1.4, -1.6)  {$e_{3,5}$};
\node[idx] (e59) at ( 0,   -3.2)  {$e_{5,9}$};

\draw (e14) -- (e47);
\draw (e14) -- (e24);
\draw (e14) -- (e45);
\draw (e47) -- (e24);
\draw (e47) -- (e45);
\draw (e24) -- (e45);

\draw (e45) -- (e56);
\draw (e45) -- (e35);
\draw (e45) -- (e59);
\draw (e56) -- (e35);
\draw (e56) -- (e59);
\draw (e35) -- (e59);

\draw (e56) -- (e68);
\end{tikzpicture}
\caption{The line graph $L(N_C)$ for the circuit of \autoref{ex:circuit}.
Note how bond indices become exactly the vertex set.
In particular, there are no (open) boundary vertices.}
\label{fig:ex-lnc}
\end{subfigure}
\caption{A tensor-network graph and its corresponding line graph.}
\end{figure}

\paragraph{Contraction complexity.}
A \emph{contraction order} is a sequence of pairwise tensor contractions that reduces the network to a single tensor.
A contraction step merges two tensors and sums out their shared indices, producing a new tensor whose bond indices with the rest of the network are the union of those of the two.
The cost of the step is exponential in the number of indices of this intermediate tensor.
This process is captured by the \emph{line graph} $L(N_C)$, whose vertices are the bond indices of $N_C$ (its edges) and whose edges record co-occurrence in a common tensor.
In this graph, a contraction step corresponds exactly to a sequence of vertex-elimination steps, in which the shared indices are removed and the remaining indices on the merged tensor form a clique.
A full contraction order is therefore an elimination order on $L(N_C)$, and Markov and Shi~\cite{MarkovShi08} showed that the minimum-cost order is governed by $\tw(L(N_C))$.
This motivates the definition
\[
\cc(N_C)\coloneqq\tw(L(N_C))
\]
of the contraction complexity of a circuit.

\begin{example}[Deriving $L(N_C)$]\label{ex:lnc}
Continuing \autoref{ex:nc}, each tensor of $N_C$ contributes a clique on its incident bond indices in $L(N_C)$.
The two $\CZ$'s have all four of their indices internal, so each gives a $K_4$: $\{e_{1,4},e_{4,7},e_{2,4},e_{4,5}\}$ from $g_4=\CZ_{1,2}$ and $\{e_{4,5},e_{5,6},e_{3,5},e_{5,9}\}$ from $g_5=\CZ_{2,3}$, sharing the bond index $e_{4,5}$ on the wire of qubit $2$ between them.
The $\T$ at position $6$ has two bond indices and so contributes the single edge $e_{5,6}\!-\!e_{6,8}$.
Each of the six boundary Hadamards has only one bond index.
Its other index is an open input/output index, so it contributes no edge.
These open indices are the leaves omitted from the bond-restricted $L(N_C)$ of \autoref{fig:ex-lnc}.
The result is two $K_4$'s sharing the single vertex $e_{4,5}$, with $e_{6,8}$ attached to $e_{5,6}$ as a pendant, so $\tw(L(N_C))=\cc(N_C)=3$.
\end{example}

\subsection{SOP evaluation matches contraction complexity}\label{sec:treewidth}

We now relate the SOP variable graph $G_C$ of \autoref{sec:sop} to the contraction complexity $\cc(N_C)=\tw(L(N_C))$ just defined, which governs the Markov--Shi tensor-contraction bound~\cite{MarkovShi08}.
The resulting bound $\tw(G_C)\le\cc(N_C)$ is what lets the rank-width dynamic program of \autoref{sec:dp} be compared with tensor contraction, in the separations that follow, and it yields a treewidth FPT dynamic program for SOP evaluation.

\begin{restatable}[SOP graph is a minor of the tensor line graph]{lemma}{lemsopminorline}\label{lem:sop-minor-line}
For circuits $C$ over $\{\H,\T,\CZ\}$, the SOP variable graph $G_C$ is a minor of $L(N_C)$.
Hence $ \tw(G_C)\le \tw(L(N_C))=\cc(N_C). $
\end{restatable}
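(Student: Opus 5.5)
We exhibit $G_C$ as the result of deleting and contracting in $L(N_C)$, through an explicit map from segment variables to connected vertex sets (branch sets) of $L(N_C)$. A wire segment $s$ between two Hadamards (or between a boundary and a Hadamard) is crossed by a sequence of bond indices of $N_C$. These are the wire pieces between consecutive gates lying on $s$, namely the $\T$ and $\CZ$ gates on $s$ together with the bounding Hadamards. Let $\beta(s)\subseteq V(L(N_C))$ be this set of bonds. Consecutive bonds on $s$ share the diagonal gate between them, and that gate gives an edge in $L(N_C)$. So $\beta(s)$ induces a path, hence is connected. Distinct segments use disjoint bond sets. An unpinned segment $s\in V$ has both ends at Hadamards, so $\beta(s)$ is nonempty: at least the bond leaving its left Hadamard.

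The branch-set map is $x_s\mapsto\beta(s)$ for $s\in V$. We delete every bond not in any $\beta(s)$ with $s\in V$; these are the bonds on pinned or Hadamard-free segments. Then we contract each $\beta(s)$ to a single vertex and delete unwanted edges. It remains to check that every edge $\{u,v\}\in E$ of $G_C$ is realized by some edge of $L(N_C)$ between $\beta(u)$ and $\beta(v)$. After pinning, each edge of $G_C$ arises from an odd number of surviving quadratic terms $4x_ux_v$, so at least one gate produces it. There are two cases.
\begin{itemize}
\item A $\CZ$ on segments $u,v$: its four bond indices form a clique in $L(N_C)$, and it has an index in $\beta(u)$ and one in $\beta(v)$, so an edge joins them.
\item A Hadamard from $u$ to its successor $v$: the Hadamard tensor has its incoming bond in $\beta(u)$ and its outgoing bond in $\beta(v)$, and these two are adjacent in $L(N_C)$.
\end{itemize}
Edges that cancel modulo $8$ correspond to surplus adjacencies, which we simply delete. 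This shows $G_C$ is a minor of $L(N_C)$. Minor-monotonicity of treewidth, recalled in \autoref{sec:preliminaries}, then gives $\tw(G_C)\le\tw(L(N_C))=\cc(N_C)$.

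The main obstacle is bookkeeping at the boundaries. We must make sure that open input/output indices are not vertices of $L(N_C)$, so segments touching them must be excluded, and they are, since they are pinned. We must also check that an unpinned segment with no intermediate gates still has a nonempty bond set. This holds because its single bond joins its two Hadamards. We must also handle multiple $\CZ$ gates on the same pair, where the multigraph $N_C$ produces parallel bonds. This is harmless, since $L(N_C)$ is taken simple and minors tolerate extra edges. Finally, a $\CZ$ acting twice on the same segment pair might cancel, which is again just an edge deletion.
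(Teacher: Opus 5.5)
Your proof is correct and is essentially the paper's argument. The paper also contracts each unpinned segment's connected bond path, deletes the pinned outer segments, and deletes the sign edges that cancel modulo $8$, after checking that every surviving adjacency comes from a $\CZ$ tensor or an interior Hadamard tensor; your branch-set formulation (deleting before contracting and verifying that each edge of $G_C$ is realized) is just the standard equivalent packaging of those three minor operations.
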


\begin{proof}[Proof sketch]
Collapse the connected bond path of every Hadamard-free wire segment, delete the outer pinned segments, and delete the edges whose sign contributions cancel modulo $8$.
These minor operations leave exactly $G_C$, and minor-monotonicity of treewidth gives the inequality.
The three steps are carried out in \appref{app:parameter-proofs}.
\end{proof}

The contraction complexity is defined through the line graph $L(N_C)$.
Since our tensor networks have fixed arity, one may also relate it to the treewidth of $N_C$ itself~\cite{MarkovShi08}.
If every tensor has arity at most $\Delta$, then a tree decomposition of $N_C$ of width $t$ can be converted into one of $L(N_C)$ of width at most $\Delta(t+1)-1$ by replacing each bag with all incident tensor indices.
Without bounded arity, this implication is false: a star with one high-arity tensor has treewidth one as a tensor graph, but a line graph that is a clique.

By \autoref{lem:sop-minor-line}, therefore, $\tw(G_C)\le\cc(N_C)$: the SOP variable graph is never harder to contract than the tensor network.
Quadratic SOP evaluation is moreover a special case of the standard sum--product problem on a graphical model: the variables are Boolean and the unary and binary factors are exactly the coefficients of \autoref{eq:sop}.
The SOP variable graph $G_C$ is therefore the \emph{primal graph} of the instance.
By bucket elimination (equivalently, tree-decomposition dynamic programming) such instances can be evaluated in time $2^{O(\tw(G_C))}\poly(n)$~\cite{Dechter99,DBLP:journals/jair/Ganian0SS22}.
The link between path-sum formulas and graphical-model primal graphs is also noted in~\cite{DBLP:journals/corr/abs-2510-06775}.
Here \autoref{lem:sop-minor-line}, together with the observation that $G_C$ is the primal graph, already yields a treewidth fixed-parameter algorithm for SOP evaluation \emph{for free}, by off-the-shelf bucket elimination.
Its parameter dependence matches the Markov--Shi contraction-complexity guarantee, though the algorithms differ: bucket elimination runs on the primal graph $G_C$, whereas Markov--Shi contracts the tensor network.
This treewidth algorithm is not our contribution.
It is the baseline that \autoref{sec:dp} improves on, by refining the parameter from treewidth to the rank-width of $G_C$.

\subsection{Separations with linear rank-width and contraction complexity}

The dynamic program of \autoref{sec:dp} is parameterized by the rank-width of the SOP variable graph.
This is a branching parameter: the decomposition tree may split the variables recursively.
Linear decision-diagram approaches \cite{DBLP:journals/corr/abs-2510-06775} use a variable ordering and are therefore governed by a linear layout parameter, such as linear rank-width.
Tensor-network contraction is governed by the contraction complexity $\cc(N_C)=\tw(L(N_C))$.
The construction below separates these parameters at the level of the SOP variable graph, giving explicit circuit families where the rank-width guarantee stays bounded while the linear-layout and tensor-contraction guarantees grow.
The separation is built in two steps.
We first realize any simple graph as the SOP variable graph of a small circuit, then instantiate a blow-up family on which rank-width stays bounded while both competing parameters diverge.

\begin{figure}[h]
\centering
\begin{tikzpicture}[
  q/.style={font=\normalsize},
  op/.style={font=\large},
  tag/.style={align=center, font=\footnotesize}
]
\node[q]  (lrw) at (0,   0) {$\lrw(G_C)$};
\node[op] (oa)  at (2.7, 0) {$\ge$};
\node[q]  (rw)  at (5.0, 0) {$\rw(G_C)$};
\node[op] (ob)  at (7.3, 0) {$\le$};
\node[q]  (tw)  at (9.6, 0) {$\tw(G_C)+1$};

\node[tag, above=4mm] at (oa) {\emph{general}};
\node[tag, above=4mm] at (ob) {\emph{family} $C_{h,t}$\\ $O(1)$ vs ${\ge}\,2t{-}1$};
\node[tag, below=4mm] at (ob) {$\Rightarrow \cc(N_C)=\Omega(t)$ by \autoref{lem:sop-minor-line}};
\node[tag] at (4.8, -1.55) {and in general $\rw(G_C)\le\tw(G_C)+1\le\cc(N_C)+1$};
\end{tikzpicture}
\caption{The structural parameters governing the competing routes: rank-width ($\rw$, this work), linear rank-width ($\lrw$, linear decision diagrams), and treewidth ($\tw$, tensor contraction, via $\rw(G)\le\operatorname{bw}(G)\le\tw(G)+1$).
The last chain runs through \autoref{lem:sop-minor-line}, which exhibits $G_C$ as a minor of the tensor line graph $L(N_C)$, whose treewidth is the contraction complexity $\cc(N_C)$.
The family $C_{h,t}$ of \autoref{cor:separating-family} keeps $\rw=O(1)$ while $\lrw=\Omega(h)$ and $\tw\ge 2t-1$.}
\label{fig:rw-lrw-tw-separation}
\end{figure}

\paragraph{Realizing arbitrary SOP variable graphs by circuits.}

We first establish that for an arbitrary simple graph we can construct a quantum circuit whose SOP variable graph is exactly the input graph.
This avoids relying on a special feature of a particular benchmark family (cf. the separation from \cite{DBLP:journals/corr/abs-2510-06775}).

\begin{restatable}[Realizing a graph as an SOP variable graph]{lemma}{lemrealizeanygraph}\label{lem:realize-any-graph}
Let $\Gamma=(W,F)$ be a simple graph.
There is a circuit $C_\Gamma$ over $\{\H,\CZ\}$ with $|W|$ qubits and $2|W|+|F|$ gates whose SOP variable graph, after fixing the input and output basis states and substituting the pinned variables, is exactly $\Gamma$.
\end{restatable}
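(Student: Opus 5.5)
The construction mirrors the running example of \autoref{ex:circuit}, with the $\T$ gate removed and one $\CZ$ per edge. Identify $W$ with qubits $1,\ldots,|W|$. The circuit $C_\Gamma$ applies $\H_w$ to every qubit $w\in W$, then $\CZ_{u,w}$ for every edge $\{u,w\}\in F$ in any fixed order, then $\H_w$ to every qubit again. It therefore has $|W|$ qubits and $|W|+|F|+|W|=2|W|+|F|$ gates, all from $\{\H,\CZ\}$.

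The next step is to read off the phase polynomial using the segment rules of \autoref{sec:sop}. Each wire $w$ has Hadamard depth $h_w=2$, so it carries three segment variables $x_{w,0},x_{w,1},x_{w,2}$. Every $\CZ$ lies strictly between the two Hadamards on each of its wires, so $\CZ_{u,w}$ acts on the middle segments and contributes $4x_{u,1}x_{w,1}$. The two Hadamards on wire $w$ contribute $4x_{w,0}x_{w,1}+4x_{w,1}x_{w,2}$, and there are no unary terms. Hence
\[
f(\bx)=4\sum_{w\in W}\bigl(x_{w,0}x_{w,1}+x_{w,1}x_{w,2}\bigr)+4\sum_{\{u,w\}\in F}x_{u,1}x_{w,1}\pmod 8 .
\]

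Now pin the boundary. Every wire has $h_w\ge1$, so $\delta_C\equiv1$ and both $x_{w,0}=y_w$ and $x_{w,2}=z_w$ are pinned. The remaining variables are $V=\{x_{w,1}\}_{w\in W}$, which are in bijection with $W$. After substitution, the Hadamard terms become the unary coefficients $b_{x_{w,1}}=4(y_w+z_w)$, together with a zero constant, so they create no edges.

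The only step needing care is checking that no quadratic coefficients cancel or combine; this is where a sloppier construction could fail. Since $\Gamma$ is simple, each unordered pair $\{u,w\}$ gets exactly one $\CZ$, so the monomial $x_{u,1}x_{w,1}$ has coefficient exactly $4$ when $\{u,w\}\in F$ and $0$ otherwise. By the edge rule of \autoref{sec:sop}, $E=\{\{x_{u,1},x_{w,1}\}:\{u,w\}\in F\}$. The map $w\mapsto x_{w,1}$ is therefore an isomorphism from $\Gamma$ onto $G_{C_\Gamma}$, and this holds for every choice of pinning $(\by,\bz)$.
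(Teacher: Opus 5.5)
Your proof is correct and follows the paper's argument essentially verbatim. You use the same $\H$--$\CZ$--$\H$ circuit, read the Hadamard terms as unary phases after pinning, and identify the middle-segment $\CZ$ terms with the edges of $\Gamma$. Your explicit check that simplicity of $\Gamma$ gives each pair coefficient exactly $4$, so nothing cancels modulo $8$, makes precise a point the paper leaves implicit.
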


\begin{proof}[Proof sketch]
Place one $\H$ gate on every qubit, one $\CZ$ gate for every edge, and a final $\H$ layer, so pinning the outer wire segments leaves one free variable per vertex and exactly the quadratic interactions of $\Gamma$.
\appref{app:parameter-proofs} checks the segment count and the cross terms.
\end{proof}

\paragraph{A separating circuit family.}

We next define a family of graphs with bounded rank-width, unbounded linear rank-width, and arbitrarily large treewidth.
Let $B_h$ be the complete binary tree of height $h$.
For an integer $t\ge 1$, let
\[
\Gamma_{h,t}=B_h[K_t]
\]
be the graph obtained from $B_h$ by replacing every vertex $v$ by a clique $K_t^v$ of $t$ twins, and by replacing every tree edge $\{u,v\}$ by the complete bipartite graph between $K_t^u$ and $K_t^v$.
Equivalently, $\Gamma_{h,t}$ is the lexicographic blow-up of $B_h$ by a $t$-clique.
The bracket notation $G[H]$ denotes this substitution of a copy of $H$ for every vertex of $G$.

\begin{restatable}{lemma}{lemblowupwidths}\label{lem:blowup-widths}
For all $h$ and $t\ge 1$,
\begin{itemize}
\item $\rw(\Gamma_{h,t})\le 1$,
\item $\lrw(\Gamma_{h,t})=\Omega(h)$, and
\item
$\tw(\Gamma_{h,t})\ge 2t-1$ for $h\ge1$ (and $\ge t-1$ for $h=0$).
\end{itemize}
\end{restatable}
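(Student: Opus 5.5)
For the three bullets I plan to handle each width parameter separately, using only the facts recalled in \autoref{sec:preliminaries}.

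\emph{Rank-width.} I would build a rank-decomposition from the tree $B_h$ itself. Root $B_h$, and at each tree vertex $v$ hang a small subcubic gadget. This gadget gathers the $t$ leaves of $K_t^v$ into a caterpillar and merges the result with the decompositions coming from the (at most two) children of $v$. The aim is that every cut $(X,\overline X)$ of the final decomposition satisfies the following. For each blob $K_t^u$, either $X$ contains it, or $\overline X$ contains it, or exactly one blob $K_t^v$ is split. In addition, everything on one side of the cut outside $K_t^v$ is a union of whole blobs forming a union of subtrees hanging from $v$.

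The key observation is that all vertices of a blob are twins whose closed neighbourhoods coincide. Consider first a cut that splits no blob but separates a subtree at a tree edge $\{u,v\}$. Then $\bA[X,\overline X]$ has nonzero rows only at vertices of $K_t^u$, and all these rows are equal. So the rank is at most $1$.

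Consider next a cut that splits $K_t^v$. The split vertices of $K_t^v$ in $X$ see, across the cut, the remaining vertices of $K_t^v$ together with the neighbouring blobs on the other side. Those neighbouring blobs are adjacent to every vertex of $K_t^v$. The whole-blob parts of $X$ are adjacent only to blobs across the cut through their tree edge.

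This is where I expect the main obstacle. I will have to order the merging inside the gadget carefully so that at every moment the rows that touch $\overline X$ are all identical. In the worst case they should collapse to a single common row: the indicator of the set of vertices across the cut that lie in $K_t^v$ or in adjacent blobs. To achieve this, I would first split off the whole child subtrees before touching $K_t^v$, so that the child-blob rows see only $K_t^v$. I would then add twins one by one, so that the twin rows all equal the indicator of the remaining twins together with the neighbouring blobs. If the child rows and the twin rows still differ, I would attach each child subtree together with the twins in a different order, and fall back to proving the weaker bound $\rw\le2$ if needed. The statement claims $1$, so the intended order should make all crossing rows coincide. I would check this by writing the cut matrix explicitly.

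\emph{Linear rank-width.} The induced subgraph on one representative per blob is $B_h$. Linear rank-width is monotone under vertex deletion, since restricting a layout can only shrink cut matrices. By Adler--Kant\'e, $\lrw(B_h)=\operatorname{pw}(B_h)=\Omega(h)$, which is the standard pathwidth bound for complete binary trees. This gives the second bullet.

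\emph{Treewidth.} For $h\ge1$, take any tree edge $\{u,v\}$. The union $K_t^u\cup K_t^v$ induces a clique $K_{2t}$, because both blobs are cliques and they are completely joined. So $K_{2t}$ is a subgraph, hence a minor, of $\Gamma_{h,t}$. Minor-monotonicity together with $\tw(K_{2t})=2t-1$ then gives the third bullet. For $h=0$ the graph is $K_t$ itself, so $\tw=t-1$.
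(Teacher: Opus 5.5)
\textbf{Gap: the rank-width bullet is not proved.} Your arguments for the second and third bullets are the paper's. One representative per blob induces $B_h$, and Adler--Kant\'e plus monotonicity of $\lrw$ under vertex deletion give $\Omega(h)$. Two adjacent blobs induce $K_{2t}$, with $K_t$ alone covering $h=0$.

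The first bullet is left open, and the invariant you aim for cannot give rank $1$. Write $T(c)$ for the union of the blobs in the subtree of $B_h$ rooted at $c$. Your invariant allows a cut $X$ to contain a proper nonempty part $S\subsetneq K_t^v$ together with whole child subtrees, for example $X=T(c_1)\cup S$ or $X=T(c_1)\cup T(c_2)\cup S$. This is exactly what adding the twins one by one onto the merged children produces. If $v$ has a parent $p$, then across such a cut:
\begin{itemize}
\item every row of $K_t^{c_1}$ is the indicator of $K_t^v\setminus S$, since the child blobs are not adjacent to $K_t^p$;
\item every row of $S$ is the indicator of $(K_t^v\setminus S)\cup K_t^p$.
\end{itemize}
These are two distinct nonzero vectors over $\F_2$, so the cut-rank is $2$. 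No reordering inside such a cut makes them coincide, because they differ by the parent blob. Falling back to $\rw\le 2$ does not prove the lemma.

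\textbf{Missing idea: never let a cut mix a proper part of a blob with anything else.} Give each $K_t^v$ its own subtree of the decomposition whose leaves are exactly $K_t^v$ (a caterpillar will do), and attach it as a single unit. Then every cut is of one of two kinds.
\begin{itemize}
\item $X\subsetneq K_t^v$ lies inside one blob. By twinness, every row equals the indicator of $K_t^v\setminus X$ together with the neighbouring blobs, so the rank is at most $1$.
\item $X$ is a union of whole blobs. Then $\bA[X,\overline X]$ is the cut matrix of the corresponding vertex set $X'$ of $B_h$ with rows and columns duplicated, so its rank is $\rho_{B_h}(X')$.
\end{itemize}
This is the paper's proof: start from a width-$1$ rank-decomposition of $B_h$ and replace each leaf $v$ by a subcubic tree on the vertices of $K_t^v$.

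Your $B_h$-following tree also works once you merge the \emph{completed} blob subtree of $v$ with the child subtrees, in either order. The resulting whole-blob cuts $T(v)$, $T(c_1)\cup T(c_2)$ and $K_t^v\cup T(c_1)$ each have only one distinct nonzero row. Alternatively, use the paper's remark that $\Gamma_{h,t}$ is distance-hereditary.
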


\begin{proof}[Proof sketch]
A width-$1$ decomposition of $B_h$ extends over the true-twin cliques, one representative per clique induces $B_h$ and gives the linear-rank-width lower bound, and two adjacent cliques induce $K_{2t}$ for the treewidth bound, with the lone $K_t$ giving the case $h=0$.
\appref{app:parameter-proofs} gives the three arguments.
\end{proof}

\begin{remark}
There is also a one-line way to see the rank-width bound.
The graph $\Gamma_{h,t}$ is obtained from a tree by true-twin additions.
Equivalently, it is distance-hereditary, and distance-hereditary graphs are exactly the graphs of rank-width at most $1$~\cite{Oum05RankWidthVertexMinors,Oum17Survey}.
\end{remark}

\begin{corollary}[A non-Clifford separation family]\label{cor:separating-family}
For every $h$ and $t\ge 1$ there is a non-Clifford circuit $C_{h,t}$ over $\{\H,\T,\CZ\}$ such that its SOP variable graph $G_{C_{h,t}}$ satisfies
\begin{itemize}
\item $\rw(G_{C_{h,t}})\le1$,
\item
$\lrw(G_{C_{h,t}})=\Omega(h)$, and
\item
$\cc(N_{C_{h,t}})=\Omega(t)$.
\end{itemize}
Consequently, the rank-width DP has bounded parameter on this family, while linear rank-width grows with $h$ and contraction complexity grows with $t$.
\end{corollary}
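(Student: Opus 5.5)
The plan is to build $C_{h,t}$ from the Clifford circuit $C_\Gamma$ of \autoref{lem:realize-any-graph} with $\Gamma=\Gamma_{h,t}$, add one $\T$ gate, and transfer the widths of \autoref{lem:blowup-widths} to $G_{C_{h,t}}$. \autoref{lem:realize-any-graph} gives a circuit over $\{\H,\CZ\}$ with one qubit per vertex of $\Gamma_{h,t}$, a first $\H$ layer, one $\CZ$ per edge, and a final $\H$ layer. Its pinned SOP variable graph is exactly $\Gamma_{h,t}$, with one free middle-segment variable per qubit.

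To make the circuit non-Clifford, I will insert a single $\T$ gate on one qubit, between its two Hadamards. This gate lies on the free middle segment. By the local gate factors of \autoref{eqn:deltas}, a $\T$ contributes only the unary term $x_s$ to $f$. It therefore changes only one coefficient $b_v$, and leaves the segment structure, the pinning, and every quadratic coefficient unchanged. Hence $G_{C_{h,t}}=\Gamma_{h,t}$ still holds. The circuit is genuinely non-Clifford, since it contains a $\T$; at $r=8$, the corresponding vertex is $1$-magic.

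The width claims then follow in turn. The rank-width and linear rank-width bounds are exactly the first two items of \autoref{lem:blowup-widths}, applied to $G_{C_{h,t}}=\Gamma_{h,t}$. For contraction complexity, \autoref{lem:sop-minor-line} gives
\[
\cc(N_{C_{h,t}})\ge\tw(G_{C_{h,t}})\ge 2t-1
\]
for $h\ge1$, and $\ge t-1$ for $h=0$, by the third item of \autoref{lem:blowup-widths}. In either case this is $\Omega(t)$.

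The only point needing care is to check that \autoref{lem:sop-minor-line} applies to the modified circuit, which it does since the circuit is over $\{\H,\T,\CZ\}$. One must also confirm that the inserted $\T$ lies strictly between the two Hadamards, so that it does not merge with a pinned segment, and that no $\CZ$ terms cancel modulo $8$. The second holds because $\Gamma_{h,t}$ is simple and each edge receives exactly one $\CZ$.
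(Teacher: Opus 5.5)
Your argument is correct and matches the paper's proof. You realize $\Gamma_{h,t}$ by \autoref{lem:realize-any-graph}, add diagonal $\T$ gates on free middle segments so that only unary coefficients change, and read off the three widths from \autoref{lem:blowup-widths} and \autoref{lem:sop-minor-line}. You also handle the $h=0$ case of the contraction bound explicitly, which the paper's proof passes over.

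The one real difference is that the paper inserts a $\T$ on every qubit, giving $\T$-count $t(2^{h+1}-1)$, while you insert a single $\T$. This does not affect the corollary as stated. But with your construction the family has $\tau=1$, so stabilizer-decomposition simulation handles it in polynomial time; already the root switch of \autoref{sec:stabjoin} costs only $\mathrm{sr}(1)\poly(n)$. The paper's follow-up remark, that this family gives stabilizer-rank methods no constant-$\T$ shortcut, would then no longer hold.
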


\begin{proof}
Apply \autoref{lem:realize-any-graph} to $\Gamma_{h,t}$.
Then insert a $\T$ gate on every qubit between the $\CZ$ layer and the final $\H$ layer, and call the resulting circuit $C_{h,t}$.
Thus the $\T$-count is exactly $|V(\Gamma_{h,t})|=t|V(B_h)|=t(2^{h+1}-1)$, one $\T$ gate per qubit.
Each inserted $\T$ gate is diagonal on the middle segment variable $x_{v,1}$, so it contributes only the unary coefficient $b_{x_{v,1}}=1$ and creates no additional edge.
After the input and output pins are substituted, the SOP variable graph is still
\[
G_{C_{h,t}}=\Gamma_{h,t}.
\]
The circuit contains non-Clifford $\T$ gates in Hadamard-separated wire segments, so the witnessing family is not confined to Clifford circuits.
The rank-width and linear-rank-width claims follow from \autoref{lem:blowup-widths}.

For contraction complexity, \autoref{lem:sop-minor-line} gives
\[
\tw(G_{C_{h,t}})\le \tw(L(N_{C_{h,t}}))=\cc(N_{C_{h,t}}).
\]
Since $G_{C_{h,t}}=\Gamma_{h,t}$ contains $K_{2t}$ for $h\ge1$, \autoref{lem:blowup-widths} gives $\tw(G_{C_{h,t}})\ge 2t-1$.
Hence $\cc(N_{C_{h,t}})=\Omega(t)$.
\end{proof}

The case $t=1$ already separates rank-width from linear rank-width: the SOP variable graph is a complete binary tree, so the rank-width is bounded while the linear rank-width grows with the height.
Increasing $t$ makes the tensor-network contraction complexity grow as well, without changing the rank-width bound.
For instance, choosing $t=2^h$ gives instances with $n=t(2^{h+1}-1)=\Theta(t^2)$ variables, so that $t=\Theta(\sqrt n)$ and $h=\Theta(\log n)$.
On this diagonal the rank-width parameter stays $O(1)$ while linear rank-width grows as $\Omega(\log n)$ and contraction complexity as $\Omega(\sqrt n)$, so both competing parameters are unbounded functions of the instance size while ours is not.
This separates the structural parameters used by the three algorithmic routes.
Since the $\T$-count is linear in the number of qubits, stabilizer-decomposition simulators whose worst-case dependence is exponential in $\T$-count do not get a constant-$\T$ shortcut on this family.
What separates is the structural guarantee each route can prove in advance.

The construction has the familiar $\H$, diagonal, $\H$ shape of IQP-style circuits, although here it is used for exact fixed-boundary evaluation and structural-parameter separation rather than approximate-sampling hardness~\cite{BremnerJozsaShepherd10}.

\paragraph{Decomposition-sensitive comparison.}
The parameters above are one way to compare the routes.
A second way holds the decomposition fixed and counts what each algorithm charges on it.
Our rooted DP charges one cut-rank pair per internal vertex, the pair its rooting selects.
Kuyanov and Kissinger work on the unrooted tree and charge the cheapest of the three pairs at each vertex~\cite{KuyanovKissinger26}.
No rooting selects all of those pairs at once, so on a fixed decomposition their bound can be the finer one, and neither route dominates the other asymptotically.
The per-cut objects are nonetheless the same information in two bases: at each cut our table holds the Walsh coefficients of their boundary state.
That correspondence is also what gives Feynman-path decision diagrams their linear-rank-width guarantee.
On the sign-edge subclass the two settings share, the same representation is what yields our magic-focused-width guarantee, in the spirit of Codsi and Laakkonen~\cite{CodsiLaakkonen26}.
All three comparisons read the direct translations of \autoref{prop:four-guises}, before any rewriting.
After simplification a pinned SOP graph need not coincide with a graph-like ZX diagram.

\section{Implementation and experiments}\label{sec:experiments}

We validate the implementation and survey where the routes help: whether they behave as predicted on controlled families, and how they fare on a standard corpus.
Our prototype simulator, \url{https://github.com/gaperez64/dlx4sop}, implements treewidth bucket elimination, the branching rank-width DP of \autoref{thm:fourier-speedup}, and the linear-layout DP of \autoref{thm:linear-layout-fourier} directly on $G_C$.
Beyond these width-only routes it provides two more: one through sign variables and weighted model counting (WMC), and a quadratic-phase-function route that realizes the root-switch endpoint of \autoref{sec:stabjoin}, though not the antichain optimization described there.
\autoref{sec:results} compares that endpoint with the width-only routes and reports which route is selected on the corpus.

A circuit imports \emph{exactly} if its gates lie in the implementation's explicit symbolic catalog.
Such entries share a common dyadic-cyclotomic ring, the setting of the gadgets of \autoref{sec:gadget-universality}.
Every other circuit imports by \emph{rounding}, moving the remaining phases to a common modulus under an error bound the importer verifies (\autoref{sec:circ2sop}).

We compare our implementation with general-purpose statevector and decision-diagram simulators, asking:
\begin{description}
  \item[RQ1 (Corpus).] How broadly do exact and rounded imports (see \autoref{sec:bench}) cover fixed-boundary amplitude instances, and can the native routes handle the large moduli rounding introduces?
  \item[RQ2 (SOP baselines).] Which SOP route is the strongest broad baseline, and which benefits on families that separate the width parameters?
  \item[RQ3 (External comparison).] How do the SOP routes compare with statevector and decision-diagram simulation?
  \item[RQ4 (WMC scaling).] Does weighted model counting scale better end-to-end than the native SOP routes?
\end{description}

\subsection{From circuits to pinned SOP instances}\label{sec:circ2sop}
For each $q$-qubit circuit, we remove final measurements, preserve global phase, and pin the all-zero input and output as in \autoref{sec:sop}.
This gives
\[
A_C=\braket{0^q|C|0^q} =R_C^{-1}\!\!\sum_{\bx\in\{0,1\}^V}\!\!
\omega_r^{f(\bx)}, \text{ where }f(\bx) = c+\sum_{v}b_vx_v+\frac{r}2\sum_{\{u,v\}\in E}x_ux_v
\]
where $G_C=(V,E)$ contains the structure.
Its free variables are the $n=|V|$ of \autoref{sec:sop}, distinct from the qubit count $q$.
The import is exact whenever every phase has an exact finite-modulus representation.
Otherwise the importer moves each remaining phase to the nearest tick of a common modulus, at a target rounding error of $\varepsilon=10^{-8}$.
That is finer than the amplitude precision used in large-scale supremacy verification~\cite{PanChenZhang22}.

\paragraph{Import and arithmetic: exact or not?}
The \emph{import} is exact or rounded as just described.
The \emph{arithmetic} that then evaluates the SOP is exact when it counts residues over $\Z$, and numerical when it evaluates a single Fourier mode using
extended-precision complex types, whose mantissa is $64$ bits.
The treewidth, rank-width and linear-layout configurations measured below fix a single Fourier mode, so they use the numerical arithmetic on both cohorts.
The branch strategy keeps the exact residue count whenever the vector of $r$ counts fits its memory budget.

Beyond the width of the type, two devices hold that evaluation in range, and only the rounding they leave behind is measured.
Each $\omega_r^{ak}$ is evaluated from its angle rather than read from a table, so a modulus near $10^{15}$ costs no space and no root of unity inherits the error of an earlier one.
Every table the dynamic program passes upward is rescaled so that its largest entry lies in $[1,2)$, with the discarded power of two accumulated in an integer exponent.
That exponent also carries the final $R_C^{-1}=2^{-m_\H/2}$, as an integer halving and at most one factor $2^{-1/2}$.
Scaling by a power of two is exact in binary floating point, so the mantissas are the ones an unscaled evaluation would compute, and the amplitude stays in range at every instance size in the corpus.
What is left is rounding, and the prototype carries a worst-case bound for it beside the amplitude, adding a fixed multiple of the unit roundoff per complex operation.
The weighted-model-counting route is numerical in a different way: the exporter writes double-precision complex literal weights and Ganak returns a complex count, with no exponent kept beside it. (This is the source of the reconstructions that underflow to exactly zero in \autoref{tab:experiment-approx}.)

\paragraph{The error bound of a rounded import.}
Rounding the phase $\theta$ of one diagonal gate to $\tilde\theta$ replaces that gate by another at operator-norm distance $|e^{i\theta}-e^{i\tilde\theta}|$, and these distances telescope along the circuit, so
\begin{equation}\label{eq:rounding-bound}
\bigl|\braket{0^q|C|0^q}-\braket{0^q|\widetilde C|0^q}\bigr|
\ \le\ \Delta\ =\!\!\sum_{g\ \text{rounded}}\!\!\bigl|e^{i\theta_g}-e^{i\tilde\theta_g}\bigr| .
\end{equation}
The importer accumulates $\Delta$ as it goes.
It starts at the least multiple of $16$ above $\pi P/\varepsilon$, for $P$ the number of phases it expects to round, then doubles the modulus and imports again until $\Delta\le\varepsilon$, so the bound is verified rather than predicted.
This is why every rounded modulus below is divisible by $16$ and none is a power of two.
The bound $\Delta$ covers the rounding alone, and the roundoff of the evaluation is the separate bound of the previous paragraph.
We therefore report a bound on the phase rounding and a separate bound on the evaluation roundoff, instead of one error bound on the amplitude.
A run counts as correct when its error against the reference amplitude falls within the tolerance the benchmark protocol fixes for every case, $10^{-8}$ absolute plus $10^{-8}$ relative.

\subsection{Evaluation strategies}
\paragraph{Native SOP routes.}
These evaluate the SOP directly on $G_C$.
\emph{Treewidth bucket elimination} uses a tree decomposition.
The \emph{branching rank-width} DP uses a rank-decomposition, and the \emph{linear-layout} DP processes a layout from left to right.
The \emph{branch} strategy first applies the exact $[\H\H]$ and $[\omega]$ path-sum simplifications of \cite{HuangEtAl26} and splits connected components.
Width and cost estimates select treewidth, rank-width, or the \emph{QPF} route for each residual component, with bounded cutset conditioning as a fallback.
The QPF route evaluates a component entirely as quadratic phase functions (\autoref{def:qpf}), which is the stabilizer-rank side of \autoref{sec:stabjoin} taken on its own.
The prototype picks one route per component.
Route and arithmetic are independent choices: each route runs either over the residue counts or on a single Fourier mode, in the sense of \autoref{sec:circ2sop}.
A reversible trail carries sparse incremental state in the spirit of Knuth's dancing cells~\cite{Knuth2025DancingCells}.

\paragraph{Decompositions.}
Tree decompositions use standard min-fill and min-degree orders~\cite{BodlaenderKoster10}.
Rank-decompositions use layout- and cut-based heuristics over GF$(2)$.
The resulting generated decomposition widths are upper bounds, while exact rank-width is computed only for a calibration subset with $n\le12$~\cite{OumSeymour06}.
We separate decomposition time from kernel time, the time spent computing the amplitude, because search can dominate small instances.

\paragraph{Weighted model counting.}
The WMC route encodes the pinned SOP as a single weighted model-counting formula $F_C$.
For each edge $e=\{u,v\}$, it introduces an auxiliary sign variable $s_e$ and the implications $s_e\to x_u$ and $s_e\to x_v$.
It gives the true literals of $x_v$ and $s_e$ weights $\omega_r^{b_v}$ and $\omega_r^\eta-1$, respectively, with all false literals weighted by $1$.
Unless $x_u=x_v=1$, the implications force $s_e=0$.
When both endpoints are $1$, summing over $s_e$ gives $1+(\omega_r^\eta-1)=\omega_r^\eta=-1$, and hence supplies the factor $(-1)^{x_ux_v}$.
The weighted model counter Ganak evaluates this complex weighted count, after which the normalization gives the amplitude directly~\cite{DBLP:conf/ijcai/SharmaRSM19}.
Before export, degree-zero and degree-one variables are eliminated analytically.
On three of the $410$ solved instances, a dense-block variant instead represents a complete bipartite block by one parity per side and a single two-clause sign factor.
A $54$-case diagnostic sweep covers the instances the primary configuration leaves unsolved.
On it the three-clause strict encoding and the two-clause encoding above each solve $41$ cases and the block variant solves $44$, all at the process-time floor.
This differs from circuit-level and hybrid path-sum/WMC encodings~\cite{MeiBonsangueLaarman24,MeiEtAl26QuokkaSharp,HuangEtAl26}.

\paragraph{External baselines.}
For the same %
amplitude, the external baselines are the Qiskit Aer dense statevector simulator~\cite{JavadiAbhariEtAl24Qiskit} and the MQT DDSIM decision-diagram simulator~\cite{ZulehnerWille19DDSIM}.

\subsection{Benchmarks and protocol}
\label{sec:bench}
The corpus has four source families.
ALG85 combines algorithm-query circuits from QASMBench~\cite{LiEtAl23QASMBench}, MQT Bench, SupermarQ, and several handcrafted cases.
InferQ603 is a deterministic stratum-balanced InferQ sample from work on quantum data management~\cite{HaiEtAl25QuantumDataNISQ}.
The MQT cohorts contain unitary MQT Bench circuits~\cite{QuetschlichBurgholzerWille23MQTBench,WilleEtAl24MQTHandbook} from \url{https://www.cda.cit.tum.de/mqtbench/}.
SupermarQ contributes optimized application circuits~\cite{TomeshEtAl22SupermarQ}.
Of $974$ entries, import produces $422$ exact and $552$ rounded instances.
Deduplication leaves $420$ exact cases for the primary comparison (\autoref{tab:experiment-coverage}) and $550$ rounded cases.
Their $n$ distribution is strongly bimodal.
Of the exact cases, $397$ have $n\le32$, while $23$ have $n>32$, and the maximum is $163{,}359$.
Eight definite-basis mid-measurement ALG85 cases support coherent SOP import but cannot be evaluated by the external simulators for the same fixed-boundary amplitude.

A fixed selection rule picks a 200-case cohort from the rounded imports, stratified by logarithmic $n$, min-fill width, and modulus magnitude, and covering all 21 unique cases in the width-20--40 tail.
It reaches $n=3320$ and min-fill width $40$, at moduli between $4.0\mathbin{\cdot}10^{10}$ and $9.8\mathbin{\cdot}10^{14}$, with $\Delta$ never worse than $7.8\mathbin{\cdot}10^{-10}$.

A fifth family is synthetic and targets the WMC route directly.
It places Hadamard layers around random doubly-controlled-$Z$ ($\CCZ$) gates and $\T$ phases, with five seeds per size, and grows the circuit while holding this structure fixed.
A $\CCZ$ is not itself quadratic, so the importer expands it by the standard seven-phase Clifford$+\T$ identity: a $\T$ phase on each of the three qubits, a $\T^\dagger$ phase on each of the three pairwise parities, and one more $\T$ phase on the triple parity.
Those parities come from $\CX$ conjugation, which substitutes affinely into the wire expression and adds no variable, while a phase landing on a parity of two or more variables materializes that parity with two fresh variables and sign edges alone.
Every exported instance is therefore the quadratic sign-edge SOP of \autoref{eq:sop}, since the exporter refuses any cross-term coefficient other than $\eta=r/2$.
The family thus raises the density and the variable count of a quadratic polynomial rather than its degree, and every width reported for it is measured on the expanded $G_C$.
That is the regime where a compact model-counting formula and a mature counter with component caching were conjectured to scale better end-to-end than our prototype width-driven dynamic programs.
RQ4 tests that conjecture.

For the primary exact comparison, every configuration computes $\braket{0^q|C|0^q}$ in a fresh process, and times are end to end.
Each case has an $8$-GiB memory cap and a $120$-s timeout, increased to $1800$~s for larger ALG85 circuits.
We report PAR-2, which charges an unsolved case twice its timeout.\footnote{The machine is an eight-core Intel i7-11850H running one job and one software thread.
The repeated-run median coefficient of variation is $0.028$.}
Most cases are trivial after pinning, so the roughly $0.26$-s median mainly measures process and import overhead.
PAR-2 distinguishes the unsolved tail.
The rounded cohort uses the same per-process caps, but the native routes read pinned SOP instances imported once in advance, so their times exclude the import that Aer and DDSIM pay on every run.
For that cohort we report solve coverage, not times.
The measurements behind every table and figure below, together with the scripts that derive them, are archived on Zenodo~\cite{Artifact26}.

\begin{table}[tbp]
\centering \small
\begin{tabular}{lrrrrrrr}
\toprule
Source & Raw & Unique & Exact & Rounded & Exact \% & Med. $n$ & Max $n$ \\
\midrule
ALG85     & 154 & 150 &  76 &  78 & 49.4 &      1 &      98 \\
InferQ603 & 150 & 150 &  36 & 114 & 24.0 &      0 &     282 \\
MQT-easy  & 271 & 271 & 126 & 145 & 46.5 &      1 & 163,359 \\
MQT-fixed &  33 &  33 &   2 &  31 &  6.1 & 24,332 &  25,396 \\
MQT2040   & 258 & 258 &  97 & 161 & 37.6 &      1 &       1 \\
SupermarQ & 108 & 108 &  85 &  23 & 78.7 &      1 &      35 \\
\midrule
Total     & 974 & 970 & 422 & 552 & 43.3 &      1 & 163,359 \\
\bottomrule
\end{tabular}

\vspace{6pt}
\small
\begin{tabular}{lrrrrrr}
\toprule
Free-variable stratum $n$ & 0--32 & 33--64 & 65--128 & 129--256 & 257--512 & $>512$ \\
\midrule
Unique exact cases        &   397 &      3 &       2 &        0 &        3 &     15 \\
\bottomrule
\end{tabular}

\vspace{6pt}
\small
\begin{tabular}{lrrrrr}
\toprule
Modulus stratum $r$ & $8$ & $2^4$--$2^{16}$ & $2^{17}$--$2^{32}$ & $10^{10}$--$10^{12}$ & $\ge10^{13}$ \\
\midrule
Imported (exact + rounded) cases  & 320 &              76 &                 26 &                  269 &          283 \\
\bottomrule
\end{tabular}
\caption{Benchmark provenance and exact-first import coverage for the all-zero fixed boundary.
\emph{Raw} counts benchmarks and \emph{Unique} counts them after deduplication.
Every benchmark imports: each import is either \emph{Exact} (a finite-modulus instance) or \emph{Rounded} (under the bound of \autoref{eq:rounding-bound} at $\varepsilon=10^{-8}$), with \emph{Exact \%} the exact share.
Here $n$ is the number of free SOP variables left after pinning the boundary.
The lower block distributes the $420$ unique exact cases across free-variable strata, while \autoref{tab:experiment-approx} reports a stratified rounded cohort.
The last block distributes the $974$ imports across modulus strata: every exact modulus is a power of two, no rounded one is, and all of the latter are divisible by $16$.}
\label{tab:experiment-coverage}
\end{table}
\begin{table}[tbp]
\centering \small \setlength{\tabcolsep}{7pt}
\begin{tabular}{lrrrl}
\toprule
Configuration & Solved & TO & Other & Main non-solved outcome \\
\midrule
SOP branch        & 186 &  8 &   6 & cutset budget exhausted \\
SOP treewidth     & 186 &  0 &  14 & out of memory, or killed at the cap \\
SOP rank-width    & 179 &  0 &  21 & no decomposition within the memory budget \\
SOP QPF route     &   0 &  0 & 200 & modulus above the $2^{32}$ ceiling \\
\midrule
WMC (Ganak)       & 179 &  4 &  17 & reconstruction underflowed to zero \\
\midrule
Aer (statevector) & 156 &  5 &  39 & statevector larger than the cap \\
DDSIM             & 135 & 57 &   8 & harness failure, or killed at the cap \\
\bottomrule
\end{tabular}
\caption{Solve coverage on the 200-case rounded-import cohort.
\emph{Solved} counts runs that produced an amplitude, \emph{TO} counts timeouts, and \emph{Other} counts the outcomes named in the last column.
The refusals differ in kind: treewidth fails to allocate its table, whereas the rank-width heuristic finds no decomposition it can afford.
The 17 WMC entries are cases whose reconstruction underflowed to exactly zero.}
\label{tab:experiment-approx}
\end{table}

\begin{figure}[tbp]\centering\includegraphics[width=.76\linewidth]{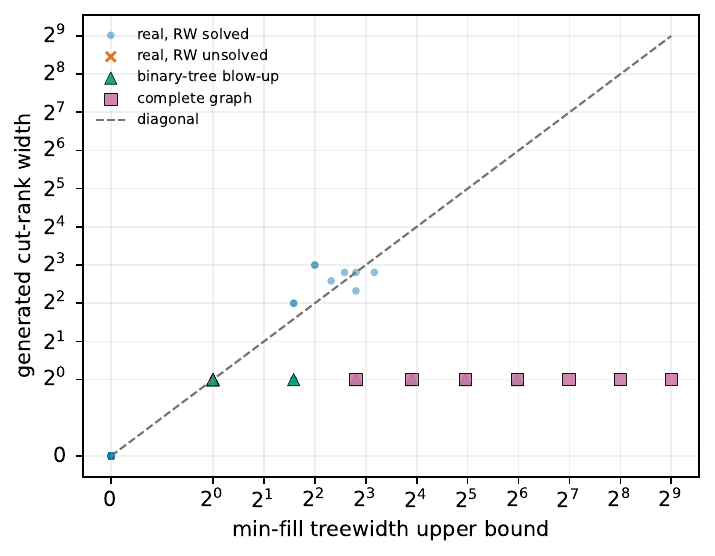}\caption{Per-instance min-fill treewidth upper bound (horizontal) against the generated rank-decomposition cut-rank width (vertical), with the $y=x$ diagonal. Real-corpus points lie on or above the diagonal, so the heuristic rank-decompositions do not go below the treewidth bound and treewidth stays preferable there. The targeted families $K_n$ (complete graphs) and $\Gamma_{3,t}$ (blow-ups) are the complementary exception, sitting far below the diagonal at rank-width $1$. Marker shape distinguishes outcome and family. Both axes are symlog with base $2$ and a linear threshold at $1$, so a label such as $2^5$ is a tick and not a width value.}\label{fig:experiment-structure}\end{figure}
\begin{figure}[tbp]\centering\includegraphics[width=.78\linewidth]{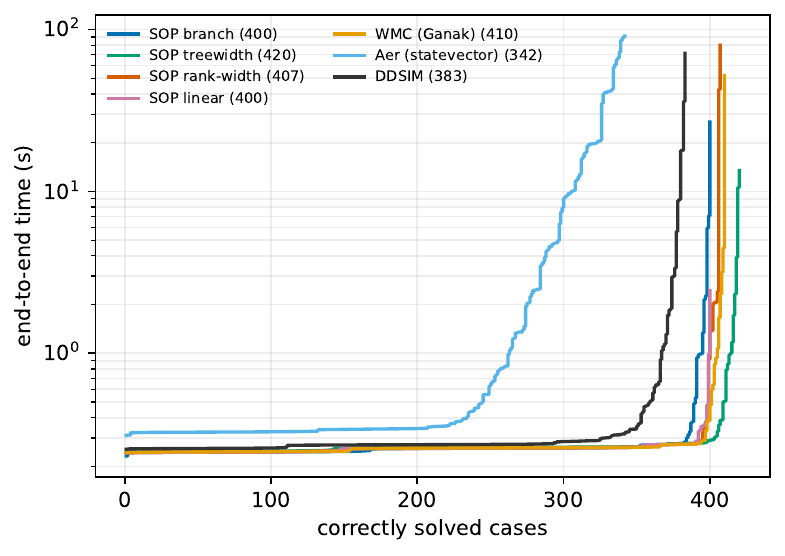}\caption{End-to-end ``cactus profile'' (a.k.a. survival plot) on the $420$-case exact corpus under the common cap: lower and farther to the right is better, and each legend count is the number of cases that configuration solved correctly within the cap.}\label{fig:experiment-cactus}\end{figure}
\begin{figure}[tbp]\centering\includegraphics[width=.76\linewidth]{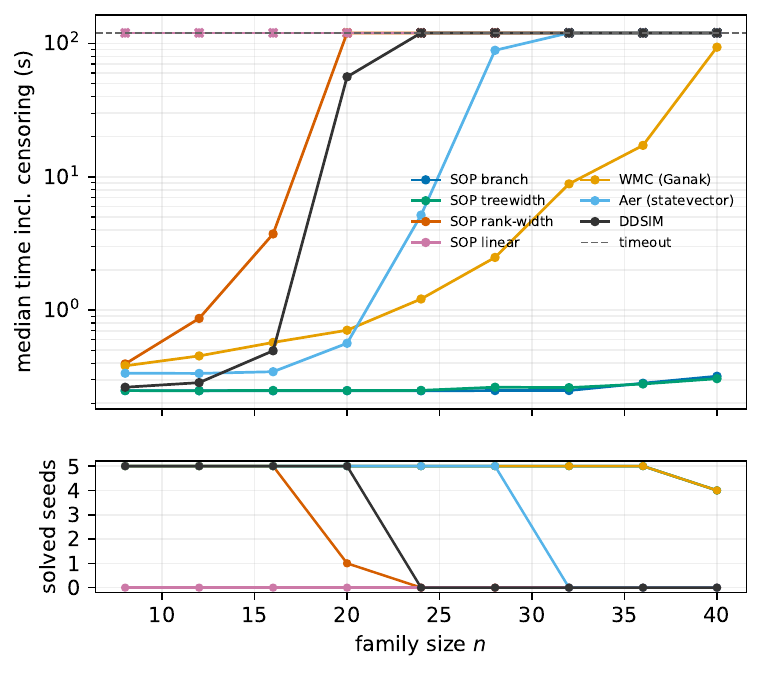}\caption{Phase-polynomial scaling study. Each timeout is replaced by the $120$-s cap before the median is taken, which is what \emph{median time incl.\ censoring} means. Crosses mark the censored observations and the lower panel reports the number of solved seeds.}\label{fig:experiment-wmc-scaling}\end{figure}

\subsection{Results}\label{sec:results}
On this corpus, \emph{treewidth is empirically dominant, while rank-width is theoretically and synthetically complementary}.
Rank-width wins on structured families with controlled decompositions but not on the real benchmarks.
Its tables can be large even at low width, while treewidth dynamic programs are simpler and easier to optimize.
\begin{description}
  \item[RQ1 (rounding supported, coverage uneven).] Exact import covers all four sources and $420$ unique cases, including instances with more than $512$ free variables.
Only one of six size strata contains at least $20$ nontrivial cases, so exact size coverage is uneven.
On the stratified rounded cohort, the branch strategy solves $186/200$ cases despite moduli of order $10^{10}$--$10^{14}$ (\autoref{tab:experiment-approx}).
Of these solves, $132$ are delegated to a treewidth program and $54$ are evaluated whole on a single Fourier mode.
All $1087$ available cross-route checks and $931$ external-reference checks fall within the tolerance of \autoref{sec:circ2sop}.

  \item[RQ2 (SOP baselines).] Treewidth solves all $420$ cases, and neither rank-width nor linear layout solves an additional corpus case.
Among the $23$ cases with $n>32$, treewidth solves $23$, rank-width $10$, and linear layout $3$.
Complete graphs $K_n$ and clique blow-ups $\Gamma_{3,t}$ expose the complementary regime: both have rank-width $1$, but treewidth $n-1$ and $2t-1$, and rank-width continues after bucket elimination exhausts memory (\autoref{tab:experiment-families}).
Such parameter separation is rare in the corpus.
There, the generated rank-decompositions never fall below the min-fill treewidth upper bound, so the branch strategy assigns its few nontrivial residual components to treewidth (\autoref{tab:experiment-structure}, \autoref{fig:experiment-structure}).

  \item[RQ3 (complementary).] Aer and DDSIM solve $342$ and $383$ cases, respectively.
The union of native SOP routes solves all $420$, including nine solved by neither external simulator.
When both routes succeed, the paired median ratio of the best external to best SOP time is $1.11$ under an oracle best-of-each comparison.
This median is dominated by a common per-process floor, and Aer and DDSIM compute richer state representations than one fixed amplitude, so we read it only as parity on the easy mass.
The native advantage is in the unsolved tail (\autoref{tab:experiment-primary}, \autoref{fig:experiment-cactus}).

  \item[RQ4 (not supported).] WMC solves $410$ of $420$ corpus cases but none beyond the native SOP routes.
On the synthetic scaling family no sustained crossover occurs (\autoref{fig:experiment-wmc-scaling}), so compact formulas do not improve end-to-end scaling.
\end{description}

\paragraph{On stabilizer rank.}
The QPF route never applies here: our implementation caps the phase order at $2^{32}$ while rounding puts every modulus in the cohort above $10^{10}$.
A quadratic phase function also carries only fourth roots of unity, so a vertex is magic unless $4b_v\equiv0\pmod r$, and a rounded phase almost never is: $92\%$ of the cohort's variables are magic, at a median of $98\%$ per instance, against the handful of $\T$ phases an exact Clifford$+\T$ import leaves behind.
Rounding destroys the very structure the route exploits, so $\mathrm{sr}(\tau)$ is exponential in the instance size whatever the arithmetic permits.
The branch strategy therefore uses a treewidth program or a whole-instance single-Fourier evaluation throughout.
With the magic price out of reach at every vertex, the recurrence of \autoref{eq:switch-recurrence} chooses the empty antichain, which is the branching rank-width DP measured here.
The optimization therefore adds no further route on this corpus, while \autoref{prop:mixed-separation} identifies a family where a nonempty switching antichain pays off.

\paragraph{Where the two prices cross.}
The two prices can still be separated on a family built to move them independently: a dense random Clifford graph on $n$ qubits, whose width grows with $n$, carrying $\tau$ added $\T$ gates (\autoref{tab:experiment-magic}).
The QPF route wins the high-width, low-magic corner outright.
At $n\ge44$ it is the only route that finishes, since bucket elimination exhausts the memory cap and the rank-width heuristic returns no decomposition inside the timeout, while the QPF route answers in under $12$ seconds through $\tau=30$.
It cedes once its $6^{\tau/6}$ terms outgrow the cheaper of the two width costs, and the route that takes over is the one with the smaller exponent: treewidth at $n=28$, where $2^{18}<4^{10}$, and rank-width at $n=36$, where $4^{12}<2^{26}$.
That is the boundary $\mathrm{sr}(\tau)=\min(2^{\tw},4^{\rw})$ the two prices predict.
At $n=28$ the margins are tens of milliseconds on instances every route solves, so the crossover there is a constant-factor effect, and only the $n\ge36$ corner separates the guarantees themselves.
Like the $K_n$ and $\Gamma_{3,t}$ separations, this is a structural demonstration and not a real-world benchmark advantage.

\begin{table}[tbp]
\centering \small \setlength{\tabcolsep}{5.5pt}
\begin{tabular}{llrrrrrr}
\toprule
 & & \multicolumn{6}{c}{magic count $\tau$} \\
\cmidrule(lr){3-8}
Instance & Route & $0$ & $6$ & $12$ & $18$ & $24$ & $30$ \\
 & QPF terms $6^{\tau/6}$ & $1$ & $6$ & $36$ & $216$ & $1296$ & $7776$ \\
\midrule
$n=28$          & treewidth  & 0.03 & 0.05 & 0.02 & \textbf{0.03} & \textbf{0.04} & --- \\
$\tw\,18$--$19$ & rank-width & 0.05 & 0.05 & 0.04 & 0.05 & 0.05 & --- \\
$\rw\,10$       & QPF        & \textbf{0.01} & \textbf{0.01} & \textbf{0.01} & 0.04 & 0.11 & --- \\
\midrule
$n=36$          & treewidth  & 5.81 & 2.96 & 5.09 & 4.78 & 7.00 & 2.36 \\
$\tw\,25$--$26$ & rank-width & 1.40 & 1.22 & 1.26 & 1.18 & 1.24 & \textbf{1.27} \\
$\rw\,12$       & QPF        & \textbf{0.02} & \textbf{0.02} & \textbf{0.04} & \textbf{0.09} & \textbf{0.35} & 1.72 \\
\midrule
$n=44$          & treewidth  & OOM & OOM & OOM & OOM & OOM & OOM \\
$\tw\,32$--$33$ & rank-width & TO & TO & TO & TO & TO & TO \\
                & QPF        & \textbf{0.02} & \textbf{0.02} & \textbf{0.06} & \textbf{0.21} & \textbf{0.96} & \textbf{4.57} \\
\midrule
$n=52$          & treewidth  & OOM & OOM & OOM & OOM & OOM & OOM \\
$\tw\,40$--$41$ & rank-width & TO & TO & TO & TO & TO & TO \\
                & QPF        & \textbf{0.01} & \textbf{0.03} & \textbf{0.09} & \textbf{0.46} & \textbf{2.23} & \textbf{11.19} \\
\bottomrule
\end{tabular}
\caption{Magic against width: end-to-end seconds on a dense random Clifford graph on $n$ qubits (a Hadamard layer, each $\CZ$ present with probability $\tfrac12$, a second Hadamard layer) carrying $\tau$ added $\T$ gates, so the two prices move independently.
One instance per cell, a single Fourier mode at $r=8$, the same $8$-GiB cap and a $30$-s timeout.
\emph{OOM} covers both a refusal to allocate the table and a kill at the cap, and \emph{TO} is the timeout.
The fastest route in each column is in bold.
Here $\tw$ is the min-fill upper bound and $\rw$ the generated cut-rank width, absent at $n\ge44$ because no decomposition was produced inside the cap.
The QPF route is the root-switch endpoint of \autoref{sec:stabjoin} on the whole instance, not the antichain optimization, and it uses the elementary $6^{\tau/6}$ stabilizer decomposition rather than the sharper cited bound.
The family places at most one $\T$ per qubit, so the cell $n=28$, $\tau=30$ does not exist.}
\label{tab:experiment-magic}
\end{table}

\begin{table}[tbp]
\centering \small \setlength{\tabcolsep}{5pt}
\begin{tabular}{lrrrrrrrr}
\toprule
Configuration & Correct & TO & OOM & \shortstack{Ref./\\unsup.} & \shortstack{PAR-2\\(s)} & \shortstack{Median\\(s)} & \shortstack{P90\\(s)} & \shortstack{Med.\ RSS\\(MiB)} \\
\midrule
SOP branch        & 400 & 20 & 0 &  0 &  11.79 & 0.258 &  0.272 & 71.3 \\
SOP treewidth     & 420 &  0 & 0 &  0 &   0.34 & 0.261 &  0.276 & 71.3 \\
SOP rank-width    & 407 & 13 & 0 &  0 &   7.99 & 0.259 &  0.272 & 71.2 \\
SOP linear        & 400 & 13 & 7 &  0 &  27.68 & 0.259 &  0.272 & 71.3 \\
\midrule
WMC (Ganak)       & 410 & 10 & 0 &  0 &   6.12 & 0.260 &  0.273 & 71.3 \\
\midrule
Aer (statevector) & 342 &  6 & 0 & 72 & 112.40 & 0.340 & 11.623 & 83.4 \\
DDSIM             & 383 & 25 & 0 & 12 &  93.83 & 0.272 &  0.339 & 91.0 \\
\bottomrule
\end{tabular}
\caption{Primary exact-corpus outcomes on the $420$ cases under the common cap ($8$~GiB, $120$~s; $1800$~s for ALG85).
\emph{Correct} is the number of cases solved correctly; \emph{TO}, \emph{OOM}, and \emph{Ref./unsup.}\ count timeouts, out-of-memory, and refusals or unsupported cases (harness failures, and inputs whose dense statevector would exceed the memory cap).
PAR-2 charges an unsolved case twice its timeout; \emph{Median} and \emph{P90} are end-to-end wall-clock seconds and \emph{Med.\ RSS} is median resident memory.
All configurations attempted every case.
Every returned amplitude agreed with the reference value, and no configuration uniquely solved a case.}
\label{tab:experiment-primary}
\end{table}
\begin{table}[tbp]
\centering \footnotesize \setlength{\tabcolsep}{5pt} \textbf{Panel A: initial and generated widths}\\[2pt]
\begin{tabular}{lrr rrrr rrrr}
\toprule
 & & & \multicolumn{4}{c}{TW upper bound} & \multicolumn{4}{c}{Generated RW width} \\
\cmidrule(lr){4-7}\cmidrule(lr){8-11}
Cohort & Cases & RW obs. & min & mean & med & max & min & mean & med & max \\
\midrule
All real exact & 420 & 407 & 0.0 &   0.7 &  0.0 &   26.0 & 0.0 & 0.1 & 0.0 & 8.0 \\
Branch solved  & 400 & 387 & 0.0 &   0.7 &  0.0 &   26.0 & 0.0 & 0.1 & 0.0 & 8.0 \\
Branch hard    &  70 &  57 & 0.0 &   3.4 &  0.0 &   26.0 & 0.0 & 0.0 & 0.0 & 1.0 \\
Targeted (A/B) &  22 &  19 & 1.0 & 103.9 & 11.0 & 1023.0 & 1.0 & 1.0 & 1.0 & 1.0 \\
\bottomrule
\end{tabular}

\vspace{6pt}
\textbf{Panel B: branch events}\\[2pt]
\begin{tabular}{lrrrrc}
\toprule
Cohort & Deleg.\ (any) & $\to$TW & $\to$RW & Fallthr. & Search nodes (med/max) \\
\midrule
All real exact & 22 & 22 &  0 & 10 & 1/11 \\
Branch solved  & 22 & 22 &  0 & 10 & 1/11 \\
Branch hard    & 13 & 13 &  0 & 10 & 1/11 \\
Targeted (A/B) & 18 &  8 & 10 & 35 & 1/67 \\
\bottomrule
\end{tabular}
\caption{Structural and branch behavior across cohorts.
\emph{Panel A}: each width is summarized as min/mean/median/max; TW is the min-fill treewidth upper bound and RW the cut-rank width of the generated rank-decomposition, and \emph{RW obs.}\ is the number of cases for which a rank-decomposition was generated (missing observations are not imputed).
\emph{Panel B}: \emph{Deleg.}\ counts components the branch strategy delegated to a treewidth ($\to$TW) or rank-width ($\to$RW) program, \emph{Fallthr.}\ residual fallthroughs, and \emph{Search nodes}\ the residual-search size (median/max).
Cohorts are all real exact cases, those the branch strategy solved, the $70$ hardest, and the targeted A/B families (binary-tree/blow-up and complete-graph) of \autoref{subsec:separation-linear-tensor}.}
\label{tab:experiment-structure}
\end{table}
\begin{table}[tbp]
\centering \small \setlength{\tabcolsep}{6pt}
\begin{tabular}{lrr rr rr}
\toprule
 & & & \multicolumn{2}{c}{width} & \multicolumn{2}{c}{end-to-end (s)} \\
\cmidrule(lr){4-5}\cmidrule(lr){6-7}
Instance & $n$ & $|E|$ & $\tw$ & $\rw$ & treewidth & rank-width \\
\midrule
\multicolumn{7}{l}{\emph{Complete graphs $K_n$\quad($\rw=1$, min-fill $\tw=n-1$)}} \\
$K_{16}$   &   16 &     120 &   15 & 1 & 0.25 & 0.25 \\
$K_{32}$   &   32 &     496 &   31 & 1 & OOM  & 0.25 \\
$K_{512}$  &  512 & 130{,}816 &  511 & 1 & OOM  & 14.5 \\
$K_{1024}$ & 1024 & 523{,}776 & 1023 & 1 & OOM  & TO   \\
\midrule
\multicolumn{7}{l}{\emph{Clique blow-ups $\Gamma_{3,t}$\quad($\rw=1$, min-fill $\tw=2t-1$)}} \\
$\Gamma_{3,8}$  & 120 &  1{,}316 &  15 & 1 & 0.27 & 0.30 \\
$\Gamma_{3,16}$ & 240 &  5{,}384 &  31 & 1 & OOM  & 0.93 \\
$\Gamma_{3,32}$ & 480 & 21{,}776 &  63 & 1 & OOM  & 9.39 \\
$\Gamma_{3,64}$ & 960 & 87{,}584 & 127 & 1 & OOM  & TO   \\
\bottomrule
\end{tabular}
\caption{Where rank-width is genuinely complementary to treewidth: two families whose min-fill treewidth grows while their rank-width stays $1$.
Widths are the min-fill treewidth upper bound $\tw$ and the generated cut-rank width $\rw$.
The last two columns give end-to-end seconds for the treewidth and rank-width routes, or OOM/TO for out-of-memory (the $8$-GiB cap) and timeout ($120$~s).
Once treewidth passes $\approx30$, its route exhausts memory while the rank-width route remains at width $1$.}
\label{tab:experiment-families}
\end{table}

\paragraph{Limitations.}
We compute only the all-zero fixed-boundary amplitude, not a full state or a sweep over sampled boundaries.
The primary performance analysis uses exact finite-modulus imports.
The rounded cohort reports solve coverage under the bound of \autoref{eq:rounding-bound}, not cross-system performance.
Widths are exact only for the $n\le12$ calibration subset, and decomposition search can dominate amplitude evaluation.
The synthetic families isolate structural mechanisms rather than representative workloads.
Comparisons with tensor-network, FeynmanDD/path-sum, and rank-width ZX contractors, and scaling with supplied decompositions until treewidth fails, remain future work.

\FloatBarrier

\section{Consequences for quantum-circuit simulation}\label{sec:consequences}

\autoref{sec:dp} gave the rank-width algorithm, and \autoref{subsec:separation-linear-tensor} the circuit families on which it provably beats the linear-layout and tensor-contraction guarantees (\autoref{cor:separating-family}).
We obtain a rank-width fixed-parameter strong-simulation algorithm for Clifford$+\T$ and, through the gadgets of \autoref{thm:gadget-universality}, for a much larger gate-set family, and we place the SOP route alongside decision-diagram and weighted-model-counting simulators.

\subsection{Rank-width FPT simulation}\label{sec:rw-fpt}

Putting \autoref{sec:sop} and \autoref{sec:dp} together gives the simulation statement.
For a circuit $C$ over $\{\H,\T,\CZ\}$, the pinned amplitude $\langle\bz|C|\by\rangle$ is $R_C^{-1}\delta_C(\by,\bz)S(f_C)$, so evaluating the SOP evaluates the amplitude.
Given a rank-decomposition of $G_C$ of width $k$, \autoref{cor:single-amplitude} therefore computes one amplitude in $O(4^k\poly(n))$\lean{Formal/Paper.lean}{amplitude_by_rank_dp_normalized}, and \autoref{thm:fourier-speedup} computes the full residue histogram at a factor $r$ more~\cite{WangChengYuanJi25,DBLP:journals/corr/abs-2510-06775}, which \autoref{cor:histogram-one-mode} removes whenever $r$ is a power of two, as it is for Clifford$+\T$.
With a linear layout of cut-rank at most $k$ instead, \autoref{thm:linear-layout-fourier} improves the base to $2^k$.
For a fixed finite gate set, $r$ is a constant, so the bit complexity of this exact amplitude simulation remains $2^{O(k)}\poly(n)$.
Deriving the SOP from the circuit is one pass over its elementary gates, and nothing after that pass is specific to circuits: the same bounds hold for an arbitrary quadratic SOP, since evaluation uses only the graph, its coefficients and the decomposition.

The decomposition is part of the input.
Known algorithms find fixed-width rank-decompositions in FPT time~\cite{OumSeymour06,HlinenyOum08,Oum17Survey}, making the end-to-end problem FPT in rank-width.
Their dependence on $k$ is large, so the prototype uses heuristic decompositions and includes their construction cost (\autoref{sec:experiments}).

Decision-diagram simulators process variables in a fixed order~\cite{WangChengYuanJi25,DBLP:journals/corr/abs-2510-06775}.
The SOP analog is the base-$2$ linear-layout DP of \autoref{thm:linear-layout-fourier}.
Branching rank-width is more general because $\rw(G)\le\lrw(G)$, with an unbounded gap on the families of \autoref{cor:separating-family}.

\subsection{Diagonal gates and the \texorpdfstring{Clifford$+\T$}{Clifford+T} gate set}\label{sec:clifford-t}

The construction uses two features that extend beyond $\{\H,\T,\CZ\}$.

First, $\H$ is the only nondiagonal gate.
Since $\H_{p,q}=2^{-1/2}(-1)^{pq}$, it contributes the quadratic sign $(-1)^{xx'}=\omega_r^{(r/2)xx'}$ between consecutive path values and one factor $2^{-1/2}$.
The latter enters $R_C$, while the sign gives an edge of $G_C$.
A $\CZ$ contributes the same sign between wires.

Second, $\T$ may be replaced by a diagonal single-qubit gate
\[
\D_{\alpha,\beta}=\begin{bmatrix}\alpha&0\\0&\beta\end{bmatrix}
\]
without changing $G_C$.
It contributes $\alpha$ at $x=0$ and $\beta$ at $x=1$, so factoring out $\alpha$ leaves the unary weights $1$ and $\beta/\alpha$.
Unitarity gives $\beta/\alpha=e^{\phi i}$.
When all such ratios are powers of a common $\omega_r$, they are precisely the coefficients $\omega_r^{b_v}$ in \autoref{eq:sop}.
This includes $\mathrm S$, $\T$, $\mathrm Z$, global phases, and $\mathrm Z$-rotations by rational multiples of $\pi$.

Thus circuits over $\{\H,\D_{\alpha,\beta},\CZ\}$ yield \autoref{eq:sop}.
The $\H$ and $\CZ$ gates supply cross-terms, diagonal gates supply unary coefficients, and fully pinned interactions enter $c$.
Since local Hadamards interchange $\CZ$ and $\CX$ (\autoref{sec:quantum-c}), and $\H$, $\mathrm S=\D_{1,i}$, and $\CX$ generate Clifford circuits~\cite{NielsenChuang10}, this includes Clifford$+\T$ with $r=8$.
The bounds therefore apply directly to these native circuits.
Two special cases deserve separate mention.
On Clifford instances no vertex carries magic, so a root switch in the stabilizer-rank optimization collapses the whole computation to one Gauss sum on any graph (\autoref{sec:stabjoin}), which rederives the polynomial-time exact-amplitude consequence of the Gottesman--Knill theorem.
Independently of stabilizer structure and of graph width, every even Fourier mode of the residue histogram is polynomial by the direct factorization of \autoref{prop:even-modes}.
Other fixed finite gate sets use the refined-modulus gadgets of the next subsection (\autoref{thm:gadget-universality}), with rank-width measured after expansion.

\subsection{Expressivity beyond Clifford+T}\label{sec:gadget-universality}

The construction of \autoref{sec:sop} is native for $\{\H,\T,\CZ\}$: Hadamards and $\CZ$ gates give quadratic terms, while $\T$ gates give unary phases.
This is the clean setting for the rank-width algorithm of \autoref{sec:dp}, which needs only this native form, so this subsection may be skipped on a first reading.
SOPs in this form nevertheless reach well beyond that gate set.
In fact, gates whose unitary matrix entries lie in a common dyadic-cyclotomic ring can all be encoded into a quadratic SOP.
The same rings govern exact synthesis: for a power-of-two degree $r\ge8$, and allowing clean ancillas, the multiqubit Clifford-cyclotomic gate set of degree $r$ synthesizes exactly the unitaries whose entries lie in $\Z[1/2,\omega_r]$, which is the ring $\Z[\omega_r,1/\sqrt2]$ below~\cite{DBLP:conf/rc/AmyGKMMR24}.
The proof uses two standard finite-Fourier facts: the Walsh expansion of a Boolean truth table in the parity-character basis, and the character-orthogonality identity enforcing a parity constraint.
Combining them gives a quadratic SOP gadget.

For an even modulus $r$, write
\[
\Dcycl_r=\Z[\omega_r,1/\sqrt2].
\]
This is the finitely generated dyadic-cyclotomic coefficient ring generated by the allowed root of unity and the Hadamard normalization.

Fix an even modulus $R$.
A \emph{sign-edge phase polynomial} in Boolean variables $\bu$ is one of the form
\[
f(\bu)=c+\sum_i a_i u_i+\frac R2\sum_{i<j}\epsilon_{ij}\,u_iu_j\pmod R, \qquad \epsilon_{ij}\in\{0,1\},
\]
the natural extension to modulus $R$ of the pinned exponent of \autoref{eq:sop}: a constant, one unary phase per variable, and a sign edge $\{i,j\}$ exactly where $\epsilon_{ij}=1$.
A \emph{sign-edge SOP gadget} for a fixed $k$-qubit gate $U$ is an \emph{exact} identity
\[
\langle \bz|U|\by\rangle = 2^{-h/2}\sum_{\bw\in\{0,1\}^m}\omega_R^{f_U(\by,\bz,\bw)}
\]
for some integers $h,m\ge0$, in which $f_U$ is a sign-edge phase polynomial in $\bu=(\by,\bz,\bw)$.
Here $\by,\bz\in\{0,1\}^k$ are the boundary bits (input $\by$, output $\bz$, matching the convention of \autoref{sec:quantum-c}), and the entries of $\bw$ are auxiliary Boolean variables.
The gadget has \emph{constant size} when $h$, $m$, and the number of terms of $f_U$ depend only on the fixed gate $U$.

The construction rests on two elementary finite Fourier steps.
A finite table of dyadic-cyclotomic amplitudes becomes a finite sum of roots of unity of one common order, and an arbitrary finite phase table becomes a linear combination of Boolean parities after refining the modulus.
The parities are then enforced by quadratic constraints on auxiliary Boolean variables, by the identity below.

\begin{lemma}[Parity averaging identity]\label{lem:parity-constraint-gadget}
Let $R$ be even and let $p$ and $(u_i)_{i\in T}$ be Boolean.
Then
\[
\frac12\sum_{\lambda\in\{0,1\}} \omega_R^{(R/2)\lambda\left(p+\sum_{i\in T}u_i\right)} =
\begin{cases}
1 & \text{if } p=\bigoplus_{i\in T}u_i,\\
0 & \text{otherwise.}
\end{cases}
\]
The exponent has only unary terms and quadratic terms with coefficient $R/2$.
\end{lemma}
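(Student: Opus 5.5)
The plan is a direct computation. First simplify the root of unity: $\omega_R^{R/2}=e^{\pi i}=-1$. The summand for each $\lambda\in\{0,1\}$ is then $(-1)^{\lambda(p+\sum_{i\in T}u_i)}$. Next I would reduce the integer $m=p+\sum_{i\in T}u_i$ to its parity. Since $(-1)^{\lambda m}$ depends only on $\lambda(m\bmod 2)$, and over Booleans $m\bmod 2=p\oplus\bigoplus_{i\in T}u_i$, this gives
\[
\frac12\sum_{\lambda\in\{0,1\}}(-1)^{\lambda m}=\frac12\bigl(1+(-1)^{m}\bigr).
\]
The right-hand side equals $1$ when $m$ is even and $0$ when $m$ is odd. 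The parity $m$ is even exactly when $p=\bigoplus_{i\in T}u_i$, which is the claimed case split. This is just the orthogonality of the two characters of $\Z_2$.

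For the structural remark, expand the exponent as
\[
(R/2)\lambda p+\sum_{i\in T}(R/2)\lambda u_i.
\]
If $p$ and the $u_i$ are variables, every term is a product of two distinct Boolean variables with coefficient $R/2$, so it is a sign edge. If $p$ is pinned to a constant, the term $(R/2)\lambda p$ becomes a unary term in $\lambda$. In either case the exponent is a sign-edge phase polynomial in the sense defined just above the statement.

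One small point needs care. If $p$ or some $u_i$ coincides with $\lambda$, or if variables repeat in $T$, a square $x^2=x$ turns into a unary term, and repeated edges add modulo $R$ and cancel in pairs. Both outcomes stay inside the sign-edge form. I would note this rather than exclude it.

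No step is a real obstacle. The only thing that needs saying explicitly is that the integer sum must be reduced modulo $2$ before it is compared with the XOR. That reduction is valid because the coefficient $R/2$ makes $\omega_R^{(R/2)m}$ depend on $m$ only through $m \bmod 2$.
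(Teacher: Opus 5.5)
Your proposal is correct and follows the paper's proof: use $\omega_R^{R/2}=-1$, average $(-1)^{\lambda m}$ over $\lambda$ to get $\tfrac12\bigl(1+(-1)^{m}\bigr)$, read off that $m$ is even exactly when $p=\bigoplus_{i\in T}u_i$, and expand the exponent into the terms $(R/2)\lambda p$ and $(R/2)\lambda u_i$. Your remarks on pinned or repeated variables are harmless extras that the paper does not need, since in its use $\lambda$ is always a fresh variable.
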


\begin{proof}
Since $R$ is even, $\omega_R^{R/2}=-1$, so the summand equals $(-1)^{\lambda(p+\sum_{i\in T}u_i)}$ and the average over $\lambda\in\{0,1\}$ is $\frac12\bigl(1+(-1)^{p+\sum_{i\in T}u_i}\bigr)$.
This is $1$ when $p+\sum_{i\in T}u_i$ is even, that is when $p=\bigoplus_{i\in T}u_i$, and $0$ otherwise.
Expanding the exponent gives the quadratic terms $(R/2)\lambda p$ and $(R/2)\lambda u_i$, each with coefficient $R/2$.
\end{proof}

\begin{restatable}[Gadget universality for dyadic-cyclotomic gate sets]{theorem}{thmgadgetuniv}\label{thm:gadget-universality}
A fixed finite gate set $\Gate$ of bounded arity admits exact constant-size sign-edge SOP gadgets over one finite even modulus if and only if all its gate entries lie in a common dyadic-cyclotomic ring $\Z[\omega_r,1/\sqrt2]$.
Every pinned circuit over such a $\Gate$ then has a quadratic SOP of size linear in its number of elementary gates.
\end{restatable}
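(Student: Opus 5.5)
The plan has three parts: necessity, sufficiency by an explicit gadget, and composition into circuits.

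\emph{Necessity.} Suppose each gate $U\in\Gate$ has a gadget $\langle\bz|U|\by\rangle=2^{-h/2}\sum_{\bw}\omega_R^{f_U(\by,\bz,\bw)}$ with one common even modulus $R$. Each entry is then $2^{-h/2}$ times a finite sum of powers of $\omega_R$. Hence it lies in $\Z[\omega_R,1/\sqrt2]$, and we take $r=R$ for the whole set.

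\emph{Sufficiency.} Fix a common ring $\Z[\omega_r,1/\sqrt2]$ and a gate $U$ on $k$ qubits. Every entry can be written as $2^{-h/2}\sum_{j}\omega_r^{e_j}$. This uses only $1/\sqrt2$ and the fact that any integer coefficient $c$ is a sum of $|c|$ unit terms, with $-1=\omega_r^{r/2}$. A single $h$ works for all $2^{2k}$ entries, since we can multiply numerator terms by $\sqrt2^{\,2}=2=1+1$.

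We then pad every entry to the same number of terms $2^{m_1}$. To do this, add cancelling pairs $\omega_r^{e}+\omega_r^{e+r/2}$. If an entry has an odd term count, first rescale $h$ by two and double every term.

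Now index the terms by auxiliary bits $\bw_1\in\{0,1\}^{m_1}$. The entry becomes $2^{-h/2}\sum_{\bw_1}\omega_r^{E(\by,\bz,\bw_1)}$, where $E$ is an arbitrary table $\{0,1\}^{2k+m_1}\to\Z_r$.

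The key step is to turn this arbitrary phase table into a sign-edge quadratic form. Lift the table to modulus $R=2^{2k+m_1}r$ and take its Walsh expansion over the variables $\bu=(\by,\bz,\bw_1)$:
\[
\omega_r^{E(\bu)}=\omega_R^{\sum_{S}\hat a_S\,\oplus_{i\in S}u_i}.
\]
The coefficients $\hat a_S$ come from the inverse Walsh transform of $E\cdot(R/r)$. These coefficients are integral after the refinement, because the transform denominators are powers of two bounded by $2^{2k+m_1}$. I expect this step to be the main obstacle. One must check carefully that the Möbius/Walsh inversion of $\Z_r$-valued functions into parity characters is exact once the modulus is multiplied by $2^{\#\text{vars}}$, and that the result is well defined modulo $R$.

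For each parity $S$, introduce a fresh Boolean $p_S$ carrying the unary phase $\hat a_S p_S$. Enforce $p_S=\oplus_{i\in S}u_i$ with \autoref{lem:parity-constraint-gadget}, using a fresh $\lambda_S$. This contributes only unary terms and quadratic terms with coefficient $R/2$, and a factor $\tfrac12=2^{-2/2}$ that we absorb into $h$.

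Every term of the resulting exponent is a constant, a unary term, or an $R/2$ edge, so $f_U$ is sign-edge. Its size depends only on $U$, so the gadget has constant size. Because $\Gate$ is finite, we take $R$ to be the least common multiple of the per-gate moduli. Rescaling the exponents lifts each gadget to modulus $R$ while preserving the $R/2$ edge coefficients, since $R/R_U$ is an integer.

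\emph{Composition.} For a circuit, insert resolutions of the identity as in \autoref{sec:sop}, identify shared boundary bits between consecutive gadgets, and pin the circuit boundary to $(\by,\bz)$. This gives a single sum over a linear number of variables. Its exponent is a sum of sign-edge polynomials, and parallel edges add modulo $R$, so a doubled edge cancels and a single one remains $R/2$. The result is therefore again of the form \autoref{eq:sop}, with size linear in the number of gates.
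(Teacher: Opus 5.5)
The step that fails is the common normalization, not the Walsh step you flag as the main obstacle. You assert that one $h$ serves all entries of a gate ``since we can multiply numerator terms by $\sqrt2^{\,2}=2=1+1$'', but that only shifts $h$ in steps of two. An element of $\Z[\omega_r,1/\sqrt2]$ is an integer combination of terms $\omega_r^b2^{-t/2}$. Putting terms or entries whose exponents $t$ differ in parity over a single factor $2^{-h/2}$ requires in general that $\sqrt2$ itself be an integer combination of $r$th roots of unity, and this holds exactly when $8\mid r$. Controlled-Hadamard is a concrete failure. Its entries $1$ and $\pm1/\sqrt2$ lie in $\Z[\omega_2,1/\sqrt2]$, yet for every $r$ with $8\nmid r$ no single $h$ exists. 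The entry $1=2^{-h/2}\sum_j\omega_r^{e_j}$ puts $2^{h/2}$ in $\Z[\omega_r]$ and forces $h$ even. The entry $1/\sqrt2=2^{-h/2}\sum_j\omega_r^{e_j}$ puts $2^{(h-1)/2}$ in $\Z[\omega_r]$ and forces $h$ odd. Both conclusions use $\sqrt2\notin\mathbb{Q}(\omega_r)$. Even when $8\mid r$, your stated mechanism never produces an odd shift of $h$.

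The missing idea is the one in the paper's \autoref{lem:common-root-list}. First replace $r$ by $M=\operatorname{lcm}(r,8)$, which costs nothing since $\Z[\omega_r,1/\sqrt2]\subseteq\Z[\omega_M,1/\sqrt2]$. Then take $h$ at least every exponent $t$ that occurs, and expand $2^{-t/2}=2^{-h/2}\bigl(\omega_M^{M/8}+\omega_M^{-M/8}\bigr)^{h-t}$ into a list of $M$th roots. With that repair your proposal coincides with the paper's proof: padding by canceling pairs, the lift to $R=2^{n}M$ with $n=2k+m_1$, the parity and averaging variables of \autoref{lem:parity-constraint-gadget}, composition with cancellation of doubled sign edges, and the one-line converse. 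The Walsh step is routine. For an integer lift $\tilde E$ and $g=2^{n}\tilde E$, each $2^{-n}\sum_{\bu}g(\bu)(-1)^{\sum_{i\in S}u_i}$ is an integer. The substitution $(-1)^{\sum_{i\in S}u_i}=1-2\bigoplus_{i\in S}u_i$ keeps the parity coefficients integral. The result modulo $R$ does not depend on the lift because $(R/M)\,M=R$. You normalize gate by gate and take an lcm of the per-gate moduli, whereas the paper normalizes all entries of $\Gate$ at once; this is a harmless variation.
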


\begin{proof}[Proof sketch]
For each gate, rewrite its entries over one common root list indexed by \emph{selector bits} $\bs$.
Then express the resulting finite phase table as a linear combination of Boolean parities at a refined modulus $R$.
Finally, \autoref{lem:parity-constraint-gadget} enforces each non-singleton parity through a fresh parity variable and an averaging variable, using only unary terms and sign edges.
Composing gates adds exponents over the common $R$ with disjoint auxiliaries, and pinning specializes the boundary.
Conversely, a finite sign-edge gadget evaluates to an element of $\Z[\omega_R,1/\sqrt2]$.
Those constants depend only on the gate-set description, but they need not be small.
\appref{app:gadget-wmc} bounds them and gives the construction.
\end{proof}

The equivalence is exact for finite-modulus sign-edge SOP gadgets.
Gates with entries outside every dyadic-cyclotomic ring are not covered by this finite-modulus statement.
They require approximation, symbolic weights, or a different coefficient model.
\autoref{thm:gadget-universality} proves the \emph{existence} of constant-size gadgets for a fixed symbolic gate set.
It does not by itself provide a numerical procedure that recognizes arbitrary floating-point matrices as members of a dyadic-cyclotomic ring, extracts root-list representations, and constructs gadgets.
The implementation therefore supports an explicit symbolic gate catalog rather than a generic importer (\autoref{sec:experiments}).
The statement is about expressivity only.
Some gates have much smaller native descriptions if one allows arbitrary quadratic coefficients.
A two-qubit diagonal gate acts as $\ket{ij}\mapsto\omega_r^{\alpha_{ij}}\ket{ij}$, and its exponent, as a function of $(i,j)\in\{0,1\}^2$, expands as
\[
\alpha_{ij}=\alpha_{00}+(\alpha_{10}-\alpha_{00})\,i+(\alpha_{01}-\alpha_{00})\,j+\Delta\,ij\pmod r, \qquad \Delta=\alpha_{11}-\alpha_{10}-\alpha_{01}+\alpha_{00},
\]
where the cross-term coefficient $\Delta$ is the discrete second derivative of the phase table.
The pinned SOP form of \autoref{eq:sop} admits a quadratic term only with coefficient $r/2$, so the gate is already native exactly when $\Delta\in\{0,r/2\}$: the cross term is then absent or is a single sign edge.
Otherwise no plain sign edge reproduces it, and \autoref{thm:gadget-universality} gives a constant-size gadget replacement after a fixed modulus refinement, provided the entries lie in a dyadic-cyclotomic ring.

For example, CCZ is covered by the theorem because its entries are eighth roots of unity.
One may encode it by introducing parity variables for the relevant XORs, but this gadgetization can change rank-width, as the next paragraph makes precise.

Replacing a non-native gate by the gadget of \autoref{thm:gadget-universality} adds only $O(1)$ auxiliary vertices per gate, expanding $G_C$ into a larger \emph{gadget-expanded} graph $G_C^{\mathrm{gad}}$.
Treewidth tolerates this, but the right route is through the tensor line graph: a bounded set of attachment endpoints need not occur together in any bag of an arbitrary tree decomposition of $G_C$, so inserting the auxiliaries there is not sound in general.

\begin{restatable}[Gadget-expanded treewidth]{corollary}{corgadgettw}\label{cor:gadget-treewidth}
For every fixed finite gate set $\Gate$ admitting the gadgets of \autoref{thm:gadget-universality} there is a constant $c_{\Gate}$ such that
\[
\tw\bigl(G_C^{\mathrm{gad}}\bigr)\le\max\{\cc(N_C),\,c_{\Gate}\}.
\]
\end{restatable}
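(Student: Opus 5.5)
The plan is to follow the route the paper signals just before the statement: extend the tensor network rather than the tree decomposition of $G_C$. I will build an auxiliary graph $L^{\mathrm{gad}}$ from $L(N_C)$ and show two things. First, $G_C^{\mathrm{gad}}$ is a minor of $L^{\mathrm{gad}}$. Second, $\tw(L^{\mathrm{gad}})\le\max\{\cc(N_C),c_{\Gate}\}$.

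For a gate $g$ of arity at most $d$ (the bounded arity of $\Gate$), the gate tensor contributes a clique in $L(N_C)$ on its at most $2d$ incident bond indices. The gadget of \autoref{thm:gadget-universality} for $g$ uses only the boundary bits $(\by,\bz)$ of $g$ and at most $m_g=O(1)$ fresh auxiliaries $\bw$. Let $c_{\Gate}$ be the maximum over the finitely many gates of $2d+m_g-1$. I form $L^{\mathrm{gad}}$ by adding, for each gate $g$, the auxiliary vertices of its gadget and joining them completely to each other and to the clique of bond indices of $g$.

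The treewidth bound comes from the tree decomposition. Take an optimal tree decomposition of $L(N_C)$, of width $\cc(N_C)$. Each per-gate clique lies in some bag $B_t$ by the Helly property for cliques. Attach a new leaf bag $B_t'=(\text{clique of }g)\cup\bw_g$ adjacent to $t$. This is valid, because the auxiliaries occur only in that leaf and every new edge is covered there. Its size is at most $2d+m_g$, so its width is at most $c_{\Gate}$. Open boundary indices can be treated as pendant extra vertices in the same way, or absorbed when the boundary is pinned.

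The minor step repeats the argument of \autoref{lem:sop-minor-line} (\appref{app:parameter-proofs}). Contract each Hadamard-free wire path of bond indices into one segment vertex. Then delete the pinned outer segments and any cancelled edges. Each gadget edge joins two vertices of a single gate's local clique after contraction. It is either between auxiliaries, between an auxiliary and a boundary bit of that gate, or between two boundary bits of that gate. Every such edge is therefore present after the contractions. Sign edges may cancel modulo $R$, but they are only deleted, so a minor remains. Minor-monotonicity of treewidth then finishes the proof.

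I expect the main obstacle to be making the identification between the boundary bits of a gadget and contracted bond-index vertices precise. A non-native gate need not be diagonal, so its input and output bits on the same wire are independent variables. Such gates cut wire segments exactly as Hadamards do. I will first verify that segment variables in the gadget-expanded SOP correspond to maximal runs of bond indices joined only through native diagonal gates. If some gadget pins an output bit equal to an input bit, I will check that merging the two only contracts further, which preserves the minor relation.
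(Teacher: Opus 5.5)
Your proposal is correct and follows essentially the same route as the paper's proof. You extend a width-$\cc(N_C)$ tree decomposition of $L(N_C)$ by one constant-size leaf bag per gadgetized gate, hung below a bag that contains the gate's clique of bond indices (found by the Helly property), and then recover $G_C^{\mathrm{gad}}$ from the extended graph by wire contractions, deletion of pinned variables, and deletion of cancelled sign edges. Your extra care on two points, that non-diagonal gadgetized gates cut wire segments and that merging an input and output bit of one gate is a contraction along an edge of its clique, makes explicit what the paper's proof states only for Hadamard-free wires.
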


\begin{proof}[Proof sketch]
Work in the line graph $L(N_C)$.
The bond indices incident on one gate tensor pairwise co-occur in that tensor, so they form a clique in $L(N_C)$, and every clique of a graph is contained in some bag of every tree decomposition.
Attach to such a bag a fresh leaf bag holding those $O(1)$ boundary indices together with the gate's $O(1)$ auxiliary gadget variables.
Every gadget edge then lies inside this constant-size leaf bag.
This yields a tree decomposition of an intermediate graph in which only the new leaf bags grew, to size at most $c_{\Gate}$.
The pinning, contractions, deletions, and cancellations that produce the pinned gadget-expanded SOP graph realize $G_C^{\mathrm{gad}}$ as a minor of that intermediate graph, which inherits the bound.
\appref{app:gadget-wmc} carries this out and bounds $c_{\Gate}$.
\end{proof}

This preserves the $2^{O(\cc(N_C))}\poly(n)$ simulation guarantee of \autoref{sec:treewidth} for every fixed gadgetized gate set, on the gadget-expanded instance.
Rank-width does not tolerate gadgetization in general, since the added vertices can change the rank-width by an unbounded amount.
For non-native gate sets one must therefore read rank-width on the gadget-expanded graph, not on the original $G_C$.
The caveat is vacuous for the native gate set $\{\H,\T,\CZ\}$, which already produces SOPs of the form in \autoref{eq:sop} without any gadgets.

\subsection{Simulation via WMC also matches contraction complexity}\label{sec:feynmandd}

The WMC route encodes the amplitude as a weighted count~\cite{QPWMC,MeiBonsangueLaarman24,DBLP:conf/fm/QuistMCL24,HuangEtAl26}.
To keep the theoretical statement separate from any particular tool encoding, define the \emph{strict algebraic encoding} $F_C^{\mathrm{strict}}$: one Boolean variable $x_v$ per SOP variable; for each graph edge $e=\{u,v\}$ one auxiliary variable $s_e$ with the three standard clauses of $s_e\leftrightarrow(x_u\wedge x_v)$; and the algebraic weights
\[
w(x_v{=}1)=\omega_r^{b_v},\qquad w(x_v{=}0)=1,\qquad w(s_e{=}1)=-1,\qquad w(s_e{=}0)=1,
\]
with $\omega_r^{c}$ applied as an external scalar.
On a satisfying assignment the weights multiply to $\omega_r^{f(\bx)}$ term by term, since each surviving cross term contributes its sign through $s_e$ and $\eta=r/2$.
Thus the algebraic weighted count of $F_C^{\mathrm{strict}}$ is exactly $S(f)$.
The incidence graph $I(F)$ has a vertex per variable and clause, and an edge per occurrence.
Define $F_C^{\mathrm{soft}}$ as the encoding used in \autoref{sec:experiments}: drop the ternary clause from each edge gadget of $F_C^{\mathrm{strict}}$ and replace the true weight of $s_e$ by $\omega_r^\eta-1$.
Summing over each $s_e$ shows that its algebraic weighted count is also $S(f)$.

\begin{restatable}[Algebraic-WMC incidence treewidth]{lemma}{lemvartoinc}\label{lem:variable-to-incidence}
If $E\ne\varnothing$, then
\[
\tw\bigl(I(F_C^{\mathrm{soft}})\bigr)=\max(\tw(G_C),1)
\quad\text{and}\quad
\tw\bigl(I(F_C^{\mathrm{strict}})\bigr)\le \max(\tw(G_C),3).
\]
If $E=\varnothing$, both $G_C$ and $I(F_C^{\mathrm{soft}})$ are edgeless and have treewidth $0$, while the strict bound still holds.
\end{restatable}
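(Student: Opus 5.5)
The proof works with the incidence graphs directly. For $F_C^{\mathrm{soft}}$, each edge $e=\{u,v\}$ contributes the auxiliary variable $s_e$ and two binary clauses $\neg s_e\vee x_u$ and $\neg s_e\vee x_v$. In $I(F_C^{\mathrm{soft}})$ this becomes a path $x_u - c_{e,u} - s_e - c_{e,v} - x_v$ of length four. Every vertex of $G_C$ survives as its variable vertex $x_v$, and unary weights add no clauses. So $I(F_C^{\mathrm{soft}})$ is exactly the subdivision of $G_C$ in which each edge is replaced by a path with three internal vertices. The soft claim is therefore the standard fact that subdividing edges preserves treewidth whenever treewidth is at least $1$, with the $\max(\cdot,1)$ covering the forest case.

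For the soft case I would prove both inequalities.
\begin{enumerate}
\item Lower bound: $G_C$ is a minor of the subdivision, obtained by contracting each subdivision path back to a single edge. Minor-monotonicity of treewidth then gives $\tw(G_C)\le\tw(I)$. If $E\ne\varnothing$, the incidence graph contains an edge, so $\tw(I)\ge1$.
\item Upper bound: take an optimal tree decomposition of $G_C$. For each edge $\{u,v\}$, pick a bag $B\ni u,v$. Attach to it a chain of new bags $\{u,v,c_{e,u}\}$, then $\{v,c_{e,u},s_e\}$, then $\{v,s_e,c_{e,v}\}$.
\item Checking the chain: the first bag is hung off $B$, which contains $u$ and $v$, so connectivity holds for $u$ and $v$. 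Each new vertex appears in consecutive bags only. Every path edge is covered: $x_u c_{e,u}$, $c_{e,u}s_e$, $s_e c_{e,v}$ and $c_{e,v}x_v$.
\item Width: the new bags have size $3$, so the width is at most $\max(\tw(G_C),2)$.
\end{enumerate}

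The bound in step 4 is off by one from the claimed $\max(\tw(G_C),1)$ when $\tw(G_C)=1$. This is the step I expect to be the main obstacle. I would close the gap either by bags of size two along the path, or by citing that subdivision preserves treewidth for graphs of treewidth at least $1$. When $G_C$ is a forest, its subdivision is again a forest, with treewidth $1$ if it has an edge. When $\tw(G_C)\ge2$, the size-$3$ bags are already within budget.

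For the strict encoding, each edge additionally has the ternary clause $s_e\vee\neg x_u\vee\neg x_v$, which is adjacent to $x_u$, $x_v$ and $s_e$. I would reuse the chain but enlarge its bags to hold $\{u,v,s_e,c\}$, one clause vertex $c$ at a time. Every bag then has size at most $4$, so the width is at most $\max(\tw(G_C),3)$, which is the stated bound.

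The case $E=\varnothing$ is immediate. No clauses or auxiliary variables are created, so both $G_C$ and $I(F_C^{\mathrm{soft}})$ are edgeless and have treewidth $0$. The strict formula is then also edgeless, so its treewidth is $0\le3$.
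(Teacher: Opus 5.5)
Your proof is correct and follows the paper's route. For the soft encoding, the paper simply cites that subdividing edges preserves treewidth; your minor argument, together with the split into the forest case and the case $\tw(G_C)\ge2$, proves exactly that and closes the off-by-one you flagged (the size-two-bag alternative you mention is not needed). For the strict encoding, you hang bags of size at most four, each holding $x_u,x_v,s_e$ and one clause vertex, off a bag covering $\{u,v\}$, which is the paper's construction with a chain in place of its star of bags.
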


\noindent Together with $\tw(G_C)\le\cc(N_C)$, this lets standard model-counting DP match Markov--Shi in time $2^{O(\cc(N_C))}\poly(n)$~\cite{BacchusDP03,SamerSzeider10} for either encoding, whose size is linear in the explicit SOP size $n+|E|$.
The runs in \autoref{sec:experiments} use $F_C^{\mathrm{soft}}$ except when the exporter selects the block variant on a few dense instances.
The soft runs are therefore covered by the sharper equality, while the block runs are not covered by the lemma.

Rank-width must instead be measured on $G_C$, since sign variables can increase it.
For example, the clique $K_n$ has rank-width $1$, whereas its once-subdivided graph $S(K_n)$ has rank-width $\Omega(n)$.
The graph $S(K_n)$ occurs induced in the strict incidence graph of the $K_n$ instance (\appref{app:gadget-wmc}).
A corresponding lower bound holds directly for the soft incidence graph, which subdivides every edge of $K_n$ three times.
A balanced cut of the original clique vertices exposes an $\Omega(n)$-rank submatrix through the subdivision vertices, so that incidence graph has rank-width $\Omega(n)$ as well.
Thus WMC exposes the SOP, but rank-width belongs on its variable graph.

\section{Conclusion}\label{sec:conclusion}

Rank-width is a usable parameter for exact quantum-circuit simulation.
On the SOP variable graph it computes an amplitude with $O(4^k\poly(n))$ arithmetic operations, or $O(2^k\poly(n))$ from a linear layout, and at a fixed power-of-two modulus it returns the full residue histogram for the price of one amplitude.
It sits below linear rank-width and within one of the Markov--Shi contraction complexity, and strictly below both on families we construct.
Where the tables also carry stabilizer structure, one switch per branch trades the width price for a magic price, which recovers Gottesman--Knill without giving up the rank-width bound.
The prototype finds treewidth bucket elimination the strongest broad baseline, with the rank-width and stabilizer routes complementary on the structured corners that separate the parameters.

We have compared representations only for one amplitude.
A broader knowledge-compilation comparison should also study succinctness, composition, equivalence checking, and simplification.
FeynmanDD represents higher-degree phase polynomials as decision diagrams~\cite{WangChengYuanJi25,DBLP:journals/corr/abs-2510-06775}.
Reduction rules for more general path sums support model-counting-based equivalence checking~\cite{HuangEtAl26}.
Relating these operations to quadratic rank-width remains open.

\section*{Author Contributions}

All authors contributed equally to the development of the results and to the preparation of the manuscript.
AI was used for the initial literature survey, the writing, and implementation tuning.

\section*{Acknowledgments}

This work was supported by the FWO/NWO KR2iQS project (G076326N), a collaboration between the University of Antwerp and Leiden University.

J. H. Lee was supported by the Dutch Research Council (NWO) under the project \emph{Boosting the Search for New Quantum Algorithms with AI} (BoostQA), file number NGF.1623.23.033, research programme Quantum Technologie 2023.

A. de Colnet is supported by the Netherlands Organization for Scientific Research (NWO/OCW) as part of Quantum Limits (project number SUMMIT.1.1016).

\bibliographystyle{quantum}
\bibliography{refs}

\begin{thebibliography}{10}

\bibitem{MarkovShi08}
Igor~L. Markov and Yaoyun Shi.
\newblock ``Simulating quantum computation by contracting tensor networks''.
\newblock \href{https://dx.doi.org/10.1137/050644756}{SIAM Journal on Computing {\bf 38}, 963--981}~(2008).

\bibitem{WangChengYuanJi25}
Ziyuan Wang, Bin Cheng, Longxiang Yuan, and Zhengfeng Ji.
\newblock ``{FeynmanDD}: Quantum circuit analysis with classical decision diagrams''.
\newblock In Computer Aided Verification.
\newblock \href{https://dx.doi.org/10.1007/978-3-031-98685-7_2}{Volume 15934 of Lecture Notes in Computer Science, pages 28--52}.
\newblock Springer~(2025).

\bibitem{Dawson+2005}
Christopher~M. Dawson, Andrew~P. Hines, Duncan Mortimer, Henry~L. Haselgrove, Michael~A. Nielsen, and Tobias~J. Osborne.
\newblock ``Quantum computing and polynomial equations over the finite field $\mathbb{Z}_2$''.
\newblock \href{https://dx.doi.org/10.26421/QIC5.2-2}{Quantum Info. Comput. {\bf 5}, 102--112}~(2005).

\bibitem{Amy19PathSum}
Matthew Amy.
\newblock ``Towards large-scale functional verification of universal quantum circuits''.
\newblock In Proceedings of the 15th International Conference on Quantum Physics and Logic.
\newblock \href{https://dx.doi.org/10.4204/EPTCS.287.1}{Volume 287 of Electronic Proceedings in Theoretical Computer Science, pages 1--21}.
\newblock ~(2019).
\newblock  \href{http://arxiv.org/abs/1805.06908}{arXiv:1805.06908}.

\bibitem{DBLP:journals/corr/abs-2510-06775}
Bin Cheng, Ziyuan Wang, Ruixuan Deng, Jianxin Chen, and Zhengfeng Ji.
\newblock ``Breaking the treewidth barrier in quantum circuit simulation with decision diagrams''.
\newblock CoRR{\bf abs/2510.06775}~(2025).
\newblock  url:~\url{https://arxiv.org/abs/2510.06775}.

\bibitem{QPWMC}
Dirck van~den Ende, Joon~Hyung Lee, Alfons Laarman, and Henning Basold.
\newblock ``Quantum physics using weighted model counting''~(2026).
\newblock  \href{http://arxiv.org/abs/2508.21288}{arXiv:2508.21288}.

\bibitem{MeiBonsangueLaarman24}
Jingyi Mei, Marcello~M. Bonsangue, and Alfons Laarman.
\newblock ``Simulating quantum circuits by model counting''.
\newblock In Computer Aided Verification.
\newblock \href{https://dx.doi.org/10.1007/978-3-031-65633-0_25}{Pages 555--578}.
\newblock Lecture Notes in Computer Science. Springer~(2024).

\bibitem{DBLP:conf/fm/QuistMCL24}
Arend{-}Jan Quist, Jingyi Mei, Tim Coopmans, and Alfons Laarman.
\newblock ``Advancing quantum computing with formal methods''.
\newblock In {FM} {(2)}.
\newblock \href{https://dx.doi.org/10.1007/978-3-031-71177-0_25}{Pages 420--446}.
\newblock Lecture Notes in Computer Science. Springer~(2024).

\bibitem{HuangEtAl26}
Wei-Jia Huang, Christophe Chareton, Yu-Fang Chen, Kai-Min Chung, Min-Hsiu Hsieh, Alfons Laarman, and Jingyi Mei.
\newblock ``Equivalence checking of quantum circuits via path-sum and weighted model counting''.
\newblock In Tools and Algorithms for the Construction and Analysis of Systems (TACAS 2026).
\newblock \href{https://dx.doi.org/10.1007/978-3-032-22749-2_21}{Volume 16506 of Lecture Notes in Computer Science, pages 419--439}.
\newblock Springer~(2026).

\bibitem{oum2008rank}
Sang-il Oum.
\newblock ``Rank-width is less than or equal to branch-width''.
\newblock \href{https://dx.doi.org/10.1002/jgt.20280}{Journal of Graph Theory {\bf 57}, 239--244}~(2008).

\bibitem{Oum17Survey}
Sang il~Oum.
\newblock ``Rank-width: Algorithmic and structural results''.
\newblock \href{https://dx.doi.org/10.1016/j.dam.2016.08.006}{Discrete Applied Mathematics {\bf 231}, 15--24}~(2017).

\bibitem{BravyiGosset16}
Sergey Bravyi and David Gosset.
\newblock ``Improved classical simulation of quantum circuits dominated by {C}lifford gates''.
\newblock \href{https://dx.doi.org/10.1103/PhysRevLett.116.250501}{Physical Review Letters {\bf 116}, 250501}~(2016).

\bibitem{BravyiEtAl19}
Sergey Bravyi, Dan Browne, Padraic Calpin, Earl Campbell, David Gosset, and Mark Howard.
\newblock ``Simulation of quantum circuits by low-rank stabilizer decompositions''.
\newblock \href{https://dx.doi.org/10.22331/q-2019-09-02-181}{Quantum {\bf 3}, 181}~(2019).

\bibitem{QassimPashayanGosset21}
Hammam Qassim, Hakop Pashayan, and David Gosset.
\newblock ``Improved upper bounds on the stabilizer rank of magic states''.
\newblock \href{https://dx.doi.org/10.22331/q-2021-12-20-606}{Quantum {\bf 5}, 606}~(2021).

\bibitem{VandenNestMDB06}
Maarten~Van den Nest, Akimasa Miyake, Wolfgang D\"ur, and Hans~J. Briegel.
\newblock ``Universal resources for measurement-based quantum computation''.
\newblock \href{https://dx.doi.org/10.1103/PhysRevLett.97.150504}{Physical Review Letters {\bf 97}, 150504}~(2006).

\bibitem{VandenNestDVB07}
Maarten~Van den Nest, Wolfgang D\"ur, Guifr\'e Vidal, and Hans~J. Briegel.
\newblock ``Classical simulation versus universality in measurement-based quantum computation''.
\newblock \href{https://dx.doi.org/10.1103/PhysRevA.75.012337}{Physical Review A {\bf 75}, 012337}~(2007).

\bibitem{HiraishiImai14}
Hidefumi Hiraishi and Hiroshi Imai.
\newblock ``{BDD} operations for quantum graph states''.
\newblock In Reversible Computation (RC 2014).
\newblock \href{https://dx.doi.org/10.1007/978-3-319-08494-7_17}{Volume 8507 of Lecture Notes in Computer Science, pages 216--229}.
\newblock Springer~(2014).

\bibitem{HiraishiImaiIwataLin18}
Hidefumi Hiraishi, Hiroshi Imai, Yoichi Iwata, and Bingkai Lin.
\newblock ``Parameterized algorithms to compute {I}sing partition function''.
\newblock \href{https://dx.doi.org/10.1587/transfun.E101.A.1398}{IEICE Transactions on Fundamentals of Electronics, Communications and Computer Sciences {\bf E101.A}, 1398--1403}~(2018).

\bibitem{KuyanovKissinger26}
Fedor Kuyanov and Aleks Kissinger.
\newblock ``Efficient classical simulation of low-rank-width quantum circuits using {ZX}-calculus''~(2026).
\newblock  \href{http://arxiv.org/abs/2603.06764}{arXiv:2603.06764}.

\bibitem{CodsiLaakkonen26}
Julien Codsi and Tuomas Laakkonen.
\newblock ``Unifying graph measures and stabilizer decompositions for the classical simulation of quantum circuits''~(2026).
\newblock  \href{http://arxiv.org/abs/2603.06377}{arXiv:2603.06377}.

\bibitem{Artifact26}
Alexis de~Colnet, Floris Geerts, Rihan Hai, Alfons Laarman, Joon~Hyung Lee, and Guillermo~A. P{\'e}rez.
\newblock ``Data and source-code artifact for {Quadratic Sums-of-Powers for Fixed-Parameter Tractable Quantum-Circuit Simulation}''.
\newblock Zenodo~(2026).
\newblock \url{https://doi.org/10.5281/zenodo.21992784}.

\bibitem{Arnborg87}
Stefan Arnborg, Derek~G. Corneil, and Andrzej Proskurowski.
\newblock ``Complexity of finding embeddings in a k-tree''.
\newblock \href{https://dx.doi.org/https://doi.org/10.1137/0608024}{SIAM J. Algebraic Discrete Methods {\bf 8}, 277–284}~(1987).

\bibitem{Bodlaender96}
Hans~L. Bodlaender.
\newblock ``A linear-time algorithm for finding tree-decompositions of small treewidth''.
\newblock \href{https://dx.doi.org/10.1137/S0097539793251219}{{SIAM} J. Comput. {\bf 25}, 1305--1317}~(1996).

\bibitem{OumSeymour06}
Sang il~Oum and Paul Seymour.
\newblock ``Approximating clique-width and branch-width''.
\newblock \href{https://dx.doi.org/10.1016/j.jctb.2005.10.006}{Journal of Combinatorial Theory, Series B {\bf 96}, 514--528}~(2006).

\bibitem{HlinenyOum08}
Petr Hlin{\v{e}}n{\'y} and Sang il~Oum.
\newblock ``Finding branch-decompositions and rank-decompositions''.
\newblock \href{https://dx.doi.org/10.1137/070685920}{SIAM Journal on Computing {\bf 38}, 1012--1032}~(2008).

\bibitem{AdlerK15}
Isolde Adler and Mamadou~Moustapha Kant{\'e}.
\newblock ``Linear rank-width and linear clique-width of trees''.
\newblock \href{https://dx.doi.org/10.1016/j.tcs.2015.04.021}{Theoretical Computer Science {\bf 589}, 87--98}~(2015).

\bibitem{NielsenChuang10}
Michael~A. Nielsen and Isaac~L. Chuang.
\newblock ``Quantum computation and quantum information''.
\newblock \href{https://dx.doi.org/10.1017/CBO9780511976667}{Cambridge University Press}. ~(2010).
\newblock 10th anniversary edition.

\bibitem{GilesSelinger13}
Brett Giles and Peter Selinger.
\newblock ``Exact synthesis of multiqubit {C}lifford+{T} circuits''.
\newblock \href{https://dx.doi.org/10.1103/PhysRevA.87.032332}{Physical Review A {\bf 87}, 032332}~(2013).

\bibitem{CooleyTukey65}
James~W. Cooley and John~W. Tukey.
\newblock ``An algorithm for the machine calculation of complex {F}ourier series''.
\newblock \href{https://dx.doi.org/10.1090/S0025-5718-1965-0178586-1}{Mathematics of Computation {\bf 19}, 297--301}~(1965).

\bibitem{BravyiSmithSmolin16}
Sergey Bravyi, Graeme Smith, and John~A. Smolin.
\newblock ``Trading classical and quantum computational resources''.
\newblock \href{https://dx.doi.org/10.1103/PhysRevX.6.021043}{Physical Review X {\bf 6}, 021043}~(2016).

\bibitem{DehaeneDeMoor03}
Jeroen Dehaene and Bart~De Moor.
\newblock ``{C}lifford group, stabilizer states, and linear and quadratic operations over {GF}(2)''.
\newblock \href{https://dx.doi.org/10.1103/PhysRevA.68.042318}{Physical Review A {\bf 68}, 042318}~(2003).

\bibitem{VandenNest10}
Maarten~Van den Nest.
\newblock ``Classical simulation of quantum computation, the {G}ottesman--{K}nill theorem, and slightly beyond''.
\newblock \href{https://dx.doi.org/10.26421/QIC10.3-4-6}{Quantum Information and Computation {\bf 10}, 258--271}~(2010).

\bibitem{CaiCLL10}
Jin-Yi Cai, Xi~Chen, Richard~J. Lipton, and Pinyan Lu.
\newblock ``On tractable exponential sums''.
\newblock In Frontiers in Algorithmics (FAW).
\newblock \href{https://dx.doi.org/10.1007/978-3-642-14553-7_16}{Volume 6213 of Lecture Notes in Computer Science, pages 148--159}.
\newblock Springer~(2010).
\newblock  \href{http://arxiv.org/abs/1005.2632}{arXiv:1005.2632}.

\bibitem{Dechter99}
Rina Dechter.
\newblock ``Bucket elimination: A unifying framework for reasoning''.
\newblock \href{https://dx.doi.org/10.1016/S0004-3702(99)00059-4}{Artificial Intelligence {\bf 113}, 41--85}~(1999).

\bibitem{DBLP:journals/jair/Ganian0SS22}
Robert Ganian, Eun~Jung Kim, Friedrich Slivovsky, and Stefan Szeider.
\newblock ``Sum-of-products with default values: Algorithms and complexity results''.
\newblock \href{https://dx.doi.org/10.1613/JAIR.1.12370}{J. Artif. Intell. Res. {\bf 73}, 535--552}~(2022).

\bibitem{Oum05RankWidthVertexMinors}
Sang il~Oum.
\newblock ``Rank-width and vertex-minors''.
\newblock \href{https://dx.doi.org/10.1016/j.jctb.2005.03.003}{Journal of Combinatorial Theory, Series B {\bf 95}, 79--100}~(2005).

\bibitem{BremnerJozsaShepherd10}
Michael~J. Bremner, Richard Jozsa, and Dan~J. Shepherd.
\newblock ``Classical simulation of commuting quantum computations implies collapse of the polynomial hierarchy''.
\newblock \href{https://dx.doi.org/10.1098/rspa.2010.0301}{Proceedings of the Royal Society A {\bf 467}, 459--472}~(2011).
\newblock  \href{http://arxiv.org/abs/1005.1407}{arXiv:1005.1407}.

\bibitem{PanChenZhang22}
Feng Pan, Keyang Chen, and Pan Zhang.
\newblock ``Solving the sampling problem of the {Sycamore} quantum circuits''.
\newblock \href{https://dx.doi.org/10.1103/PhysRevLett.129.090502}{Physical Review Letters {\bf 129}, 090502}~(2022).

\bibitem{Knuth2025DancingCells}
Donald~E. Knuth.
\newblock ``The art of computer programming, volume 4, fascicle 7: Constraint satisfaction''.
\newblock Addison-Wesley Professional. ~(2025).
\newblock  url:~\url{https://www-cs-faculty.stanford.edu/~knuth/taocp.html}.

\bibitem{BodlaenderKoster10}
Hans~L. Bodlaender and Arie M. C.~A. Koster.
\newblock ``Treewidth computations i. upper bounds''.
\newblock \href{https://dx.doi.org/10.1016/j.ic.2009.03.008}{Information and Computation {\bf 208}, 259--275}~(2010).

\bibitem{DBLP:conf/ijcai/SharmaRSM19}
Shubham Sharma, Subhajit Roy, Mate Soos, and Kuldeep~S. Meel.
\newblock ``{GANAK:} {A} scalable probabilistic exact model counter''.
\newblock In Sarit Kraus, editor, Proceedings of the Twenty-Eighth International Joint Conference on Artificial Intelligence, {IJCAI} 2019, Macao, China, August 10-16, 2019.
\newblock \href{https://dx.doi.org/10.24963/IJCAI.2019/163}{Pages 1169--1176}.
\newblock ijcai.org~(2019).

\bibitem{MeiEtAl26QuokkaSharp}
Jingyi Mei, Dekel Zak, Muhammad Osama, Tim Coopmans, and Alfons Laarman.
\newblock ``Quokka\#: Quantum computing with \#{SAT}''~(2026).
\newblock  \href{http://arxiv.org/abs/2605.16509}{arXiv:2605.16509}.

\bibitem{JavadiAbhariEtAl24Qiskit}
Ali Javadi-Abhari, Matthew Treinish, Kevin Krsulich, Christopher~J. Wood, Jake Lishman, Julien Gacon, Simon Martiel, Paul~D. Nation, Lev~S. Bishop, Andrew~W. Cross, Blake~R. Johnson, and Jay~M. Gambetta.
\newblock ``Quantum computing with {Qiskit}''~(2024).
\newblock  \href{http://arxiv.org/abs/2405.08810}{arXiv:2405.08810}.

\bibitem{ZulehnerWille19DDSIM}
Alwin Zulehner and Robert Wille.
\newblock ``Advanced simulation of quantum computations''.
\newblock \href{https://dx.doi.org/10.1109/TCAD.2018.2834427}{IEEE Transactions on Computer-Aided Design of Integrated Circuits and Systems {\bf 38}, 848--859}~(2019).

\bibitem{LiEtAl23QASMBench}
Ang Li, Samuel Stein, Sriram Krishnamoorthy, and James Ang.
\newblock ``{QASMBench}: A low-level quantum benchmark suite for {NISQ} evaluation and simulation''.
\newblock \href{https://dx.doi.org/10.1145/3550488}{ACM Transactions on Quantum Computing{\bf 4}}~(2023).

\bibitem{HaiEtAl25QuantumDataNISQ}
Rihan Hai, Shih-Han Hung, Tim Coopmans, Tim Littau, and Floris Geerts.
\newblock ``Quantum data management in the {NISQ} era''.
\newblock \href{https://dx.doi.org/10.14778/3725688.3725701}{Proceedings of the {VLDB} Endowment {\bf 18}, 1720--1729}~(2025).

\bibitem{QuetschlichBurgholzerWille23MQTBench}
Nils Quetschlich, Lukas Burgholzer, and Robert Wille.
\newblock ``{MQT Bench}: Benchmarking software and design automation tools for quantum computing''.
\newblock \href{https://dx.doi.org/10.22331/q-2023-07-20-1062}{Quantum {\bf 7}, 1062}~(2023).

\bibitem{WilleEtAl24MQTHandbook}
Robert Wille, Lucas Berent, Tobias Forster, Jagatheesan Kunasaikaran, Kevin Mato, Tom Peham, Nils Quetschlich, Damian Rovara, Aaron Sander, Ludwig Schmid, Daniel Schoenberger, Yannick Stade, and Lukas Burgholzer.
\newblock ``The {MQT} handbook: A summary of design automation tools and software for quantum computing''.
\newblock In IEEE International Conference on Quantum Software (QSW).
\newblock ~(2024).
\newblock  \href{http://arxiv.org/abs/2405.17543}{arXiv:2405.17543}.

\bibitem{TomeshEtAl22SupermarQ}
Teague Tomesh, Pranav Gokhale, Victory Omole, Gokul~Subramanian Ravi, Kaitlin~N. Smith, Joshua Viszlai, Xin-Chuan Wu, Nikos Hardavellas, Margaret~R. Martonosi, and Frederic~T. Chong.
\newblock ``{SupermarQ}: A scalable quantum benchmark suite''.
\newblock In IEEE International Symposium on High-Performance Computer Architecture (HPCA).
\newblock \href{https://dx.doi.org/10.1109/HPCA53966.2022.00050}{Pages 587--603}.
\newblock ~(2022).

\bibitem{DBLP:conf/rc/AmyGKMMR24}
Matthew Amy, Andrew~N. Glaudell, Shaun Kelso, William Maxwell, Samuel~S. Mendelson, and Neil~J. Ross.
\newblock ``Exact synthesis of multiqubit clifford-cyclotomic circuits''.
\newblock In Torben~{\AE}gidius Mogensen and Lukasz Mikulski, editors, Reversible Computation - 16th International Conference, {RC} 2024, Toru{\'{n}}, Poland, July 4-5, 2024, Proceedings.
\newblock \href{https://dx.doi.org/10.1007/978-3-031-62076-8\_15}{Volume 14680 of Lecture Notes in Computer Science, pages 238--245}.
\newblock Springer~(2024).

\bibitem{BacchusDP03}
Fahiem Bacchus, Shannon Dalmao, and Toniann Pitassi.
\newblock ``Algorithms and complexity results for \#{SAT} and {Bayesian} inference''.
\newblock In Proceedings of the 44th Annual {IEEE} Symposium on Foundations of Computer Science.
\newblock \href{https://dx.doi.org/10.1109/SFCS.2003.1238208}{Pages 340--351}.
\newblock IEEE Computer Society~(2003).

\bibitem{SamerSzeider10}
Marko Samer and Stefan Szeider.
\newblock ``Algorithms for propositional model counting''.
\newblock \href{https://dx.doi.org/10.1016/j.jda.2009.06.002}{Journal of Discrete Algorithms {\bf 8}, 50--64}~(2010).

\bibitem{BodlaenderLRV10}
Hans~L. Bodlaender, Erik~Jan van Leeuwen, Johan M.~M. van Rooij, and Martin Vatshelle.
\newblock ``Faster algorithms on branch and clique decompositions''.
\newblock In Mathematical Foundations of Computer Science 2010.
\newblock \href{https://dx.doi.org/10.1007/978-3-642-15155-2_17}{Volume 6281 of Lecture Notes in Computer Science, pages 174--185}.
\newblock Springer~(2010).

\bibitem{vanRooijBodlaenderLeeuwenRossmanithVatshelle18}
Johan M.~M. van Rooij, Hans~L. Bodlaender, Erik~Jan van Leeuwen, Peter Rossmanith, and Martin Vatshelle.
\newblock ``Fast dynamic programming on graph decompositions''.
\newblock CoRR{\bf abs/1806.01667}~(2018).
\newblock  \href{http://arxiv.org/abs/1806.01667}{arXiv:1806.01667}.

\bibitem{BjorklundHusfeldtKaskiKoivisto07}
Andreas Bj{\"o}rklund, Thore Husfeldt, Petteri Kaski, and Mikko Koivisto.
\newblock ``Fourier meets {M}{\"o}bius: Fast subset convolution''.
\newblock In Proceedings of the Thirty-Ninth Annual {ACM} Symposium on Theory of Computing.
\newblock \href{https://dx.doi.org/10.1145/1250790.1250801}{Pages 67--74}.
\newblock {ACM}~(2007).

\bibitem{CourcelleKante09}
Bruno Courcelle and Mamadou~Moustapha Kant{\'e}.
\newblock ``Graph operations characterizing rank-width''.
\newblock \href{https://dx.doi.org/10.1016/j.dam.2008.08.026}{Discrete Applied Mathematics {\bf 157}, 627--640}~(2009).

\bibitem{Jelinek10Grid}
V{\'{\i}}t Jel{\'{\i}}nek.
\newblock ``The rank-width of the square grid''.
\newblock \href{https://dx.doi.org/10.1016/j.dam.2009.02.007}{Discrete Applied Mathematics {\bf 158}, 841--850}~(2010).

\bibitem{ODonnell14}
Ryan O'Donnell.
\newblock ``Analysis of {B}oolean functions''.
\newblock \href{https://dx.doi.org/10.1017/CBO9781139814782}{Cambridge University Press}. ~(2014).

\end{thebibliography}

\clearpage
\appendix

\section{The base of the branching join}\label{app:join-barrier}

A linear layout has no branching join to improve (\autoref{thm:linear-layout-fourier}).
The branching join is another matter.
The result below concerns the abstract algebraic join obtained at a branching vertex when the restriction maps and crossing form are treated as arbitrary black-box data.
Exploiting the structure of a particular graph or rank-decomposition escapes it, and \autoref{sec:stabjoin} shows what that buys.
In this abstract setting, let $U,V,W$ be $\F_2$-spaces of dimension at most $k$, let $P:U\to W$ and $Q:V\to W$ be the restriction maps, and let $B:U\times V\to\F_2$ be the crossing bilinear form, with input tables $F:U\to\mathbb C$ and $G:V\to\mathbb C$.
The join is
\[
H(w)=\sum_{Pu+Qv=w}F(u)G(v)(-1)^{B(u,v)}.
\]
To prove the following barrier theorem, we use the Walsh--Hadamard transform: for $g:\F_2^{m}\to\C$ it is $\hat g(\tau)=\sum_{\sigma}g(\sigma)(-1)^{\langle\tau,\sigma\rangle}$, computable in $O(m2^{m})$ operations, and it turns convolution over $\F_2^{m}$ into pointwise multiplication.

\begin{theorem}[Abstract generic join barrier]\label{thm:join-mm-barrier}
Suppose every such join, over arbitrary $P$, $Q$, $B$ and arbitrary input tables $F,G$, can be computed in $O(2^k\poly(k))$ arithmetic operations in characteristic zero.
Then square matrix multiplication has exponent $\mmomega=2$.
More generally, a bound $O(2^{\alpha k}\poly(k))$ implies $\mmomega\le\max\{2,2\alpha\}$.
\end{theorem}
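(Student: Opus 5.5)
We reduce matrix multiplication to a single abstract join, embedding the matrix product in the crossing form $B$ and recovering it with a Walsh--Hadamard transform. Take two $N\times N$ matrices $\bm X,\bm Y$ over $\C$ with $N=2^m$ and index rows and columns by $\F_2^m$. Set $U=\F_2^m\oplus\F_2^m$ with $u=(i,\sigma)$, $V=\F_2^m\oplus\F_2^m$ with $v=(\tau,j)$, and $W=\F_2^m\oplus\F_2^m$, so $k=2m$. Let $P(i,\sigma)=(i,0)$ and $Q(\tau,j)=(0,j)$. Let $B((i,\sigma),(\tau,j))=\langle\sigma,\tau\rangle$, which is bilinear. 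Then the fiber over $w=(i,j)$ is exactly $\{((i,\sigma),(\tau,j))\}$ and
\[
H(i,j)=\sum_{\sigma,\tau}F(i,\sigma)G(\tau,j)(-1)^{\langle\sigma,\tau\rangle}.
\]
Choose $F(i,\cdot)$ to be a Walsh-precomposed row of $\bm X$ and $G(\cdot,j)$ a column of $\bm Y$, so that the sign kernel acts as the $N\times N$ Hadamard matrix $\mathsf H$. Concretely, $H=\bm F\,\mathsf H\,\bm G$ as matrices. Putting $\bm F=\bm X\mathsf H^{-1}=N^{-1}\bm X\mathsf H$, computed row-wise by fast WHT in $O(N^2\log N)$, and $\bm G=\bm Y$ gives $H=\bm X\bm Y$. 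Characteristic zero is used to divide by $N$; alternatively, absorb $N^{-1}$ at the end.

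Under the hypothesis, the join costs $O(2^{\alpha k}\poly(k))=O(N^{2\alpha}\poly(\log N))$. The preprocessing, $O(N^2\log N)$, and the output size, $N^2$, add $N^{2+o(1)}$. Matrix multiplication of size $N$, padded to a power of two, therefore costs $N^{\max\{2,2\alpha\}+o(1)}$ arithmetic operations. Since $\mmomega$ is defined via arithmetic (bilinear/algebraic) complexity in characteristic zero, this gives $\mmomega\le\max\{2,2\alpha\}$, and $\alpha=1$ gives $\mmomega=2$.

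The main obstacle is the model-matching step. The hypothesis must apply to an algorithm whose arithmetic operations on the entries of $F,G$ form a straight-line program, uniformly in the black-box data $P,Q,B$. The conclusion about $\mmomega$ must then be justified from an algebraic-complexity bound. Here the fact that $\mmomega$ is determined by total arithmetic complexity (Strassen) is cited. A second point to check is the dimension bound "at most $k$": the spaces $U,V,W$ each have dimension $2m$, so I set $k=2m$ uniformly, which costs exactly the factor $2$ in $2\alpha$.
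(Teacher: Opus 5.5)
Your proposal is correct and is essentially the paper's reduction. It uses the same $k=2m$ construction, with $P$ and $Q$ projecting onto the outer indices and $B$ pairing the two inner coordinates. The only difference is that you Walsh-precompose the left factor, $\bm F=N^{-1}\bm X\mathsf H$, whereas the paper transforms the right one, $G(c,d)=2^{-m}\sum_b N'_{b,c}(-1)^{b\cdot d}$; both rest on $\mathsf H^2=N I$ and the same $O(N^2\log N)$ preprocessing. Your remark about the arithmetic model (a straight-line program uniform in $P,Q,B$) is a caution the paper leaves implicit, not a gap.
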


\begin{proof}
Let $m\ge1$, let $N=2^m$, and let $A,B_0,C,D$ be $m$-dimensional vector spaces over $\F_2$.
Put
\[
U=A\oplus B_0,\qquad V=C\oplus D,\qquad W=A\oplus C,
\]
so the join width is $k=2m$.
Define
\[
P(a,b)=(a,0),\qquad Q(c,d)=(0,c), \qquad B((a,b),(c,d))=b\cdot d.
\]
Let $M$ and $N'$ be arbitrary $N\times N$ matrices, indexed as $M_{a,b}$ and $N'_{b,c}$.
Set
\[
F(a,b)=M_{a,b}, \qquad G(c,d)=2^{-m}\sum_{b\in B_0}N'_{b,c}(-1)^{b\cdot d}.
\]
Then, for $(a,c)\in A\oplus C$,
\[
\begin{aligned}
H(a,c) &= \sum_{b,d} M_{a,b}G(c,d)(-1)^{b\cdot d} \\ &= \sum_{b,b'}M_{a,b}N'_{b',c} \left(2^{-m}\sum_d(-1)^{(b+b')\cdot d}\right) \\ &= \sum_b M_{a,b}N'_{b,c}.
\end{aligned}
\]
Thus the join output is the matrix product $MN'$.
Forming $G$ is just $N$ Walsh--Hadamard transforms of length $N$, costing $O(N^2\log N)$.
A join algorithm running in $O(2^{\alpha k}\poly(k))$ would therefore multiply $N\times N$ matrices in $O(N^2\log N+N^{2\alpha}\polylog (N))$ arithmetic operations, so $\mmomega\le\max\{2,2\alpha\}$.
Taking $\alpha=1$ gives $\mmomega=2$.
\end{proof}

This theorem is therefore a barrier for a black-box base-$2$ improvement of arbitrary branching joins, and the child tables SOP instances produce may exhaust only part of the space of pairs $F,G$.
It leaves open a better base for our problem, matrix-multiplication speedups for the structured joins of specific graph decompositions, and \autoref{thm:linear-layout-fourier}, whose dynamic program has no join at all.

\begin{remark}[The hard maps are graph-realizable]
The maps and bilinear form used in the proof can be realized by an actual cut of a graph.
Take left vertices $a_i,b_i$, right vertices $c_i,d_i$, and outside vertices $y_i^A,y_i^C$ for $i=1,\ldots,m$.
Add edges $a_i y_i^A$, $c_i y_i^C$, and $b_i d_i$.
At the join with $X_L=\{a_i,b_i:i\in[m]\}$, $X_R=\{c_i,d_i:i\in[m]\}$, and $Y=\{y_i^A,y_i^C:i\in[m]\}$, the restrictions to $Y$ give $P(a,b)=(a,0)$ and $Q(c,d)=(0,c)$, while the crossing term is $b\cdot d$.
This reproduces the join geometry of the matrix-multiplication reduction.
The theorem remains a black-box statement because it allows arbitrary child tables $F$ and $G$.
\end{remark}

This is the same fine-grained setting as fast dynamic programming on branch decompositions, where generalized fast subset convolution and fast matrix multiplication can accelerate several problem-specific joins~\cite{BodlaenderLRV10,vanRooijBodlaenderLeeuwenRossmanithVatshelle18}.
Fast subset convolution itself, on an $n$-element ground set, improves the naive $3^n$ subset convolution to $2^n\poly(n)$ arithmetic operations over the sum-product ring~\cite{BjorklundHusfeldtKaskiKoivisto07}.
In the same spirit, Hiraishi, Imai, Iwata, and Lin obtain an $O^*(2^{\mmomega\operatorname{bw}(G)/2})$ branch-decomposition algorithm for the Ising partition function by casting its join as matrix multiplication~\cite{HiraishiImaiIwataLin18}, exactly the acceleration our barrier shows is unavoidable in the black-box regime.
Whether the present rank-signature join admits a general $2^{(\mmomega/2)k}\poly(k)$ implementation on the child tables that actually arise remains an interesting open implementation and theory question.

\section{Details of the linear-layout dynamic program}\label{app:linear-layout-proof}

\thmlinearlayoutfourier*

\begin{proof}
For $i=0,\ldots,n$, put $X_i=\{v_1,\ldots,v_i\}$ and $Y_i=V\setminus X_i$.
For a partial assignment $\bx\in\{0,1\}^{X_i}$, define its signature toward the unprocessed side by $\sigma_i(\bx)=\bx^{\mathsf T}\bA[X_i,Y_i]\in\F_2^{Y_i}$.
Let $\phi_i(\bx)$ contain all unary terms in $X_i$ and all quadratic sign terms with both endpoints in $X_i$.
For a fixed mode $a$, store the sparse table
\[
A_i^{(a)}[\sigma] = \sum_{\substack{\bx\in\{0,1\}^{X_i}\\ \sigma_i(\bx)=\sigma}} \omega_r^{a\phi_i(\bx)}.
\]
Since $\rank_{\F_2}\bA[X_i,Y_i]\le k$, at most $2^k$ signatures occur at every step.

Initialize $A_0^{(a)}[0]=1$.
Suppose the next vertex is $v=v_{i+1}$ and let $\delta\in\{0,1\}$ be its assigned value.
An old signature $\sigma\in\F_2^{Y_i}$ has a coordinate $\sigma(v)$, which is the parity of selected neighbors of $v$ in $X_i$.
Adding $v$ contributes the unary phase $b_v\delta$ and the internal edge phase $\eta\delta\sigma(v)$.
The new signature on $Y_{i+1}$ is $\sigma'=\sigma|_{Y_{i+1}}+\delta\bA[v,Y_{i+1}]$, so the update is
\[
A_{i+1}^{(a)}[\sigma'] \;{+}{=}\; A_i^{(a)}[\sigma]\,\omega_r^{a(b_v\delta+\eta\delta\sigma(v))}.
\]
Each occurring old signature has two extensions, so each step performs $O(2^k)$ table updates, up to polynomial-time manipulation of signature coordinates.

Initializing a leaf costs two table updates, one per assignment of its variable, and extending a prefix scans its table against the two values of the next variable, which gives the operation count.
At $i=n$, the only signature is empty and
\[
\sum_{\bx\in\{0,1\}^V}\omega_r^{a f(\bx)} = \omega_r^{ac}A_n^{(a)}[\varnothing].
\]
Running only $a=1$ gives one amplitude.
Running all $r$ modes and applying the inverse transform in \autoref{eq:inverse-dft} gives all residue counts.
If $\lrw(G)\le k$, apply the construction to the supplied witnessing layout.
\end{proof}

\section{Graph transformations behind the parameter comparison}\label{app:parameter-proofs}

\lemsopminorline*

\begin{proof}
We must show that $G_C$ arises from $L(N_C)$ by vertex deletions, edge deletions, and edge contractions.
The announced bound $\tw(G_C)\le\tw(L(N_C))=\cc(N_C)$ then follows by minor-monotonicity of treewidth.

Recall the differences between the two graphs.
The vertices of $L(N_C)$ are the bond indices of the tensor network of $C$, and two are adjacent when they occur on a common gate tensor.
The vertices of $G_C$ are the free segment variables $x_{i,j}$ of \autoref{sec:sop}.
We go from the first graph to the second in three steps.

\paragraph{Step 1 (collapse each wire segment).}
Group the bonds of $N_C$ by the wire segment that contains them.
Two bonds lie in the same segment when no Hadamard separates them, with the wire's input and output legs as its outer boundaries.
Inside a segment every gate is diagonal, namely a $\T$ gate or one leg of a $\CZ$, and so leaves the wire value unchanged.
The segment's bonds therefore carry a common value and form a connected path in $L(N_C)$, with consecutive bonds adjacent through the diagonal gate between them.
Contracting this path to a single vertex yields one vertex per segment that contains a bond.
A segment lying strictly between two Hadamards always contains a bond, and its vertex is the unpinned segment variable $x_{i,j}$.

\paragraph{Step 2 (delete the pinned segments).}
The two outer segments of each wire, before its first Hadamard and after its last, carry the pinned values $y_a$ and $z_a$.
Their open input and output indices are not vertices of $L(N_C)$ because open indices are not edges of $N_C$.
The only vertex an outer segment can contribute is formed in Step~1 from bonds outside the wire's outermost Hadamards.
We delete it and fold its contribution into the coefficients $b_v$ and the constant $c$, as in \autoref{sec:sop}.
When a wire begins and ends with a Hadamard, as every wire of \autoref{ex:circuit} does, its outer segments contain no bond and nothing is deleted.
The remaining vertices are exactly the vertex set $V$ of $G_C$.

\paragraph{Step 3 (delete canceled edges).}
Let $J$ be the simple graph on the vertices left after Steps~1 and 2.
Its edges come from the sign terms.
A $\CZ$ on wires $a,b$ and an interior Hadamard each place their two indices on a common tensor, which Step~1 contracts to the corresponding unpinned segment vertices.
No $\T$ gate joins distinct unpinned variables.
If $m_{uv}$ sign terms connect $u$ and $v$, they contribute $4m_{uv}x_ux_v$ to the pinned exponent modulo $8$.
This is an edge of $G_C$ when $m_{uv}$ is odd and cancels when $m_{uv}$ is even.
Hence $J$ is a supergraph of $G_C$, and $G_C$ is obtained from it by deleting the canceled edges.
\end{proof}

\lemrealizeanygraph*

\begin{proof}
Use one qubit per vertex $v\in W$.
Apply one $\H$ gate to every qubit, then one $\CZ$ gate between the wires of $u$ and $v$ for every edge $\{u,v\}\in F$, and finally one more $\H$ gate to every qubit.

On the wire of $v$, the two $\H$ gates create the three segment variables $x_{v,0},x_{v,1},x_{v,2}$.
The input and output basis states fix $x_{v,0}$ and $x_{v,2}$.
The two $\H$ interactions contribute the factors $(-1)^{x_{v,0}x_{v,1}}$ and $(-1)^{x_{v,1}x_{v,2}}$.
After substituting the pinned values, these are unary phases in $x_{v,1}$ and create no quadratic edges among the free variables.

Every $\CZ$ gate for an edge $\{u,v\}\in F$ lies between the two $\H$ layers.
The current variables on its wires are $x_{u,1}$ and $x_{v,1}$, so the gate contributes the quadratic factor $(-1)^{x_{u,1}x_{v,1}}$.
Thus the remaining variables are precisely $\{x_{v,1}:v\in W\}$, and two interact if and only if the corresponding vertices of $\Gamma$ are adjacent.
The SOP variable graph is therefore $\Gamma$.
\end{proof}

\lemblowupwidths*

\begin{proof}
Trees have rank-width at most $1$.
By Theorem~2 of Adler and Kant\'e, the linear rank-width of every forest equals its pathwidth.
Since complete binary trees have pathwidth $\Omega(h)$, $\lrw(B_h)=\Omega(h)$~\cite{AdlerK15}.
The passage from $B_h$ to $\Gamma_{h,t}$ is a true-twin substitution, which replaces each vertex by pairwise-adjacent vertices that share its open neighborhood.
Such substitutions are standard in the rank-width graph-operation view~\cite{CourcelleKante09}.
Here, take a width-$1$ rank-decomposition of $B_h$ and replace each leaf $v$ by a small subcubic tree whose leaves are the vertices of $K_t^v$.
Across cuts inherited from $B_h$, the cut matrix is obtained from that of $B_h$ by duplicating rows and columns, so the rank does not increase.
Across cuts that split a single twin clique, all rows on one side are equal, so the cut-rank is at most $1$.
These are the only cuts in the constructed decomposition because each substituted clique is attached as its own subtree at the corresponding leaf of the original decomposition.
Hence $\rw(\Gamma_{h,t})\le1$.

For the linear-rank-width lower bound, choose one representative from every clique $K_t^v$.
The induced subgraph on these representatives is $B_h$.
Linear rank-width is monotone under vertex deletion, by restricting a linear layout and deleting the matching rows and columns of each cut matrix.
Therefore
\[
\lrw(\Gamma_{h,t})\ge \lrw(B_h)=\Omega(h).
\]
Finally, for $h\ge1$, take any tree edge $\{u,v\}$ of $B_h$.
In $\Gamma_{h,t}$ the blown-up cliques $K_t^u$ and $K_t^v$ are complete and joined to one another by a complete bipartite graph, so $K_t^u\cup K_t^v$ induces $K_{2t}$.
Hence $\tw(\Gamma_{h,t})\ge\tw(K_{2t})=2t-1$.
For $h=0$, the single clique $K_t^v$ gives $\tw(\Gamma_{0,t})\ge t-1$.
\end{proof}

\section{Detailed comparison with related rank-width methods}\label{app:kk}

This appendix expands the comparisons sketched in the introduction: with the rank-width contraction algorithm of Kuyanov and Kissinger~\cite{KuyanovKissinger26}, the most closely related independent work; with Feynman-path decision diagrams~\cite{DBLP:journals/corr/abs-2510-06775}; and with the focused rank-width of Codsi and Laakkonen~\cite{CodsiLaakkonen26}.
Throughout, $R$ denotes the width of a supplied rank-decomposition, which plays the role of the parameter $k$ of the main text.

\paragraph{Boundary states and cut-signature tables.}

The table $A^{(a)}_u$ of \autoref{sec:fourier} is a coefficient representation of the function the subtree below $u$ induces on the rest of the graph.
Fix a vertex $u$ with outside $Y=\overline X_u$, and for $\by\in\{0,1\}^{Y}$ let
\[
R^{(a)}_u(\by)=\sum_{\bz\in\{0,1\}^{X_u}}\omega_r^{a\phi_u(\bz)}\,(-1)^{a\,\bz^{\top}\bA[X_u,Y]\by}
\]
be the residual boundary function.
The identity below relates the two.

\begin{lemma}[Boundary-state duality]\label{lem:boundary-duality}
Writing $\langle\cdot,\cdot\rangle$ for the $\F_2$ inner product on $\F_2^{Y}$,
\[
R^{(a)}_u(\by)=\sum_{\bsigma}A^{(a)}_u[\bsigma]\,(-1)^{a\langle\bsigma,\by\rangle}.
\]
\end{lemma}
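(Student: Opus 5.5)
The plan is a direct regrouping of the sum defining $R^{(a)}_u$ by boundary signature, the same grouping that proved \autoref{lem:mode-from-counts}. The one fact needed is that the sign factor in $R^{(a)}_u(\by)$ depends on $\bz$ only through $\sigma_u(\bz)$. By associativity of matrix products over $\F_2$,
\[
\bz^{\top}\bA[X_u,Y]\by=\bigl(\bz^{\top}\bA[X_u,Y]\bigr)\by=\langle\sigma_u(\bz),\by\rangle,
\]
where $\sigma_u(\bz)=\bz^\top\bA[X_u,\overline X_u]$ and $Y=\overline X_u$.

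The exponent $a\,\bz^{\top}\bA[X_u,Y]\by$ is meant as $a$ times this parity, read modulo $2$. So $(-1)^{a\,\bz^\top\bA[X_u,Y]\by}=(-1)^{a\langle\sigma_u(\bz),\by\rangle}$. This is well defined because $(-1)^{a m}$ depends only on $m\bmod 2$ for integer $a$. It does not matter whether the bilinear form is first evaluated over $\Z$ or over $\F_2$, since both readings agree modulo $2$.

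With that identity in hand, partition $\{0,1\}^{X_u}$ into the fibres $\{\bz:\sigma_u(\bz)=\bsigma\}$ as $\bsigma$ ranges over the row space of $\bA[X_u,Y]$. Within a fibre the sign is the constant $(-1)^{a\langle\bsigma,\by\rangle}$, so it factors out, and the remaining sum is exactly $\sum_{\sigma_u(\bz)=\bsigma}\omega_r^{a\phi_u(\bz)}=A^{(a)}_u[\bsigma]$ by the definition in \autoref{sec:fourier}.

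Summing over fibres gives the claimed identity. Signatures outside the row space contribute nothing, since their entries $A^{(a)}_u[\bsigma]$ are zero, so the sum over $\bsigma$ may be taken over all of $\F_2^{Y}$. At the root, $Y=\varnothing$ and both sides reduce to $A^{(a)}_{r_{\mathcal T}}[\varnothing]$, a useful sanity check.

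There is no real obstacle here. The only point needing care is the parity bookkeeping in the exponent: we must confirm that the sign is determined by the $\F_2$ value of the bilinear form, and that this value factors through the signature map. When $a$ is even both sides collapse to the unsigned sum, which is consistent with \autoref{prop:even-modes}. The identity then says that $R^{(a)}_u$ is the Walsh--Hadamard transform of $A^{(a)}_u$ for odd $a$.
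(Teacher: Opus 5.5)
Your proposal is correct and is the paper's own argument. The crossing sign $(-1)^{a\,\bz^{\top}\bA[X_u,Y]\by}$ depends on $\bz$ only through $\sigma_u(\bz)$, so grouping the sum defining $R^{(a)}_u$ by signature yields the stated identity.
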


\begin{proof}
The crossing sign depends on $\bz$ only through $\sigma_u(\bz)=\bz^{\top}\bA[X_u,Y]$, so grouping the sum by signature gives the display.
For odd $a$ the kernel $(-1)^{a\langle\cdot,\cdot\rangle}$ is the Walsh kernel.
\end{proof}

\noindent
For odd $a$ the two sides are therefore a Walsh--Hadamard transform pair on the row space of $\bA[X_u,Y]$ and its annihilator quotient, so they carry the same information in Fourier-dual bases.

\paragraph{The Kuyanov--Kissinger bounds.}
Kuyanov and Kissinger give a contraction algorithm for graph-like ZX diagrams parameterized by rank-width.
In their notation, their Proposition~4.1 gives a running time $\widetilde O(\operatorname{flops}(T))$ in terms of the decomposition tree $T$, their Corollary~4.1 relaxes this to the coarser bound $\widetilde O(4^R)$, and their Corollary~4.2 improves the bound to $\widetilde O(2^R)$ for linear decompositions.
Their statements use $\widetilde O(\cdot)$ without an explicit definition.
The proof of their Corollary~4.1 bounds $\operatorname{flops}(T)\le|V_G|\,2^{2R}$, where $V_G$ is the diagram's spider set, and absorbs that factor, so the tilde hides at least a factor linear in the diagram size.
We therefore write their bounds as $4^R\poly(|G|)$ and $2^R\poly(|G|)$, with $|G|$ the diagram size, which is the object their algorithm runs on as \autoref{tab:route-comparison} charges each route.
This is conservative for them.
Under the direct translation of \autoref{prop:four-guises} each SOP variable is a spider, and on the graph-like class of their Section~4 the reading goes back the other way, so $|G|$ plays the role of our $n$.

\paragraph{The correspondence.}
There is a close correspondence between graph-like ZX diagrams and quadratic SOPs in the sign-edge form of \autoref{eq:sop}.
In the graph-like setting treated in their Section~4, a closed diagram is, up to local Clifford data, a sum of powers over Boolean spider variables with quadratic phase interactions.
In the graph-like case each representation can be read off from the other.
The base of their branching bound and of their linear bound therefore mirrors the distinction we draw between branching rank-decompositions and linear layouts (\autoref{tab:route-comparison}).

\paragraph{Decomposition-sensitive costs.}
The rooted SOP DP on a decomposition tree $T$ costs $C_{\mathrm{SOP}}(T)\poly(n)$ per Fourier mode, where
\[
C_{\mathrm{SOP}}(T)=\sum_{v\ \mathrm{internal}}2^{\rho_{L(v)}+\rho_{R(v)}} .
\]
Their Proposition~4.1 charges, at each cubic vertex of the \emph{unrooted} tree with incident cut-ranks $\rho_1,\rho_2,\rho_3$, the term $2^{\min(\rho_1+\rho_2,\,\rho_1+\rho_3,\,\rho_2+\rho_3)}$.
Rooting a cubic tree selects, at every cubic vertex, one of its three incident pairs, and their expression takes the cheapest pair at every vertex simultaneously, which no single rooting need achieve.
Their decomposition-sensitive local exponent is therefore never larger than the pair a fixed rooting selects, and their fixed-decomposition bound can consequently be finer than $C_{\mathrm{SOP}}(T)$.

\paragraph{What differs.}
The algorithmic statements are nevertheless not the same result in different language.
Their routine contracts a ZX diagram by maintaining parity maps together with complex state vectors.
Ours evaluates an exact finite-modulus SOP by dynamic programming on the cut signatures of the SOP variable graph $G_C$, returns the full residue histogram when iterated over modes, and comes with the tensor-line-graph relation of \autoref{lem:sop-minor-line}.
By \autoref{lem:boundary-duality} the per-cut objects are directly related: the parity-map/state-vector pair stores a normalization of the residual boundary state $R^{(a)}_u$, while the SOP table stores its Walsh coefficients $A^{(a)}_u$.
The two objects therefore carry the same information in Fourier-dual bases, manipulated by different operations.
The two algorithms thus have the same coarse branching and linear rank-width dependence, but they differ in coefficient model, output, graph preprocessing, intermediate representation, and decomposition-sensitive cost.
Neither dominates the other asymptotically, and which carries the smaller polynomial factor is an empirical question.
\autoref{tab:kk-comparison} places the two algorithms side by side.

\begin{table}[htbp]
\centering \footnotesize
\renewcommand{\arraystretch}{1.12}
\caption{The present rank-width dynamic program and the Kuyanov--Kissinger contraction algorithm side by side.
Both are parameterized by the rank-width $R$ of a supplied decomposition.
Their bounds are written here as $\poly(|G|)$ rather than with the $\widetilde O(\cdot)$ of their statements, which absorbs a factor equal to the diagram's spider count.
Here $|G|$ is the diagram size, our $n$ under the correspondence described below.}
\label{tab:kk-comparison}
\begin{tabular}{@{}L{0.30\linewidth}L{0.32\linewidth}L{0.30\linewidth}@{}}
\toprule
\bf Method & \bf Object computed & \bf Leading dependence \\
\midrule
This paper, Fourier DP & one SOP Fourier mode, that is, one amplitude & $4^R\poly(n)$ branching, $2^R\poly(n)$ linear \\
This paper, residue histogram & all residues $N_j$ & $r\,4^R\poly(n)$ over Fourier modes \\
Kuyanov--Kissinger~\cite{KuyanovKissinger26} & graph-like ZX scalar contraction & $4^R\poly(|G|)$ branching, $2^R\poly(|G|)$ linear \\
\bottomrule
\end{tabular}
\end{table}

\paragraph{Boundary states and Feynman-path decision diagrams.}

Fix a variable order $x_1,\ldots,x_n$ and a prefix length $i$, and let $X$ be the first $i$ variables.
Two prefix assignments $\bz,\bz'\in\{0,1\}^{X}$ with the same cut signature $\bsigma_X(\bz)=\bsigma_X(\bz')$ induce, by \autoref{lem:boundary-duality}, the same boundary character $(-1)^{a\langle\bsigma,\cdot\rangle}$ on the remaining variables: their residual contributions differ only by the accumulated phase $\omega_r^{a\phi_X(\bz)}$.
At cut $i$ there are therefore at most $2^{\rho_i}$ residual characters, where $\rho_i$ is the cut-rank of the prefix.
The linear SOP DP of \autoref{thm:linear-layout-fourier} stores the \emph{aggregate coefficients} of these characters, namely the table $A^{(a)}_X$.
A Feynman-path decision diagram instead stores the corresponding \emph{residual functions} as diagram nodes, subject to its reduction and scalar-normalization rules.
This correspondence explains why both methods exhibit the same $2^{k}$ phenomenon along a variable order.
The correspondence is not a node-by-node identity with a reduced diagram: reduction rules may merge residual functions beyond signature equality, and normalization conventions differ.
The branching SOP DP has no single global variable order, which is where the two representations genuinely part ways (\autoref{cor:separating-family}).

\paragraph{Focused rank-width.}

Codsi and Laakkonen parameterize ZX contraction by decompositions that treat non-Clifford vertices specially~\cite{CodsiLaakkonen26}.
\autoref{cor:stabjoin} is the analogous guarantee for the sign-edge SOP subclass: when disjoint blocks of the decomposition each hold at most one $a$-magic vertex and together cover all of them, the stabilizer-rank optimization is fixed-parameter tractable in the cut-ranks of every active vertex.
Block interiors never enter the bound, since switching at their roots prunes them, so they may be refined by arbitrary leaf trees of unbounded cut-rank.
The two settings are not directly comparable: Codsi--Laakkonen operate on graph-like ZX diagrams and may simplify the diagram first, so neither the graph objects nor the practical pipelines need coincide.

\section{Closure of quadratic phase functions}\label{app:qpf-closure}

\lemqpfclosure*

\begin{proof}
For \ref{qpf-prod}, the data add: $\ell+\ell'$ coefficientwise modulo $4$, $q\oplus q'$, and $K\cap K'$.
For \ref{qpf-affine}, substitute and expand with \autoref{eq:xor-lift} and \autoref{eq:parity-lift}.
Linear parts of $q$ fold into $\ell$ (taking $\mu=2$), constants into $\theta$.
For \ref{qpf-sum1}, write $x=(x',x_w)$ and split off the last variable, $\ell(x)=\ell'(x')+c_wx_w$ and $q(x)=q'(x')+x_w\lambda(x')$, where $\lambda(x')=\bigoplus_{j<w}q_{jw}x'_j$ is the parity paired with $x_w$ by the quadratic form.
Row-reduce $\bm Mx=\bm d$ on the column of $x_w$, so that $x_w$ is either determined by the remaining variables or unconstrained by them, and the residual rows constrain $x'$ alone.
In the first case $x_w=\alpha(x')$ affinely and \ref{qpf-affine} eliminates it.
Otherwise $x_w$ is free and
\[
\sum_{x_w\in\{0,1\}}i^{c_wx_w}(-1)^{x_w\lambda(x')} =1+i^{c_w}(-1)^{\lambda(x')}
=\begin{cases}
2\,[\lambda(x')=0] & c_w=0,\\
(1+i)\,i^{3\lambda(x')} & c_w=1,\\
2\,[\lambda(x')=1] & c_w=2,\\
(1-i)\,i^{\lambda(x')} & c_w=3,
\end{cases}
\]
the middle cases using $\tfrac{1-i}{1+i}=i^{3}$ and $\tfrac{1+i}{1-i}=i$.
Each right-hand side is one further affine constraint or one factor of the form \autoref{eq:parity-lift}, so the partial sum is again a single QPF.
For \ref{qpf-push}, choose coordinates in which $\ker\Pi$ is spanned by the last $\dim\ker\Pi$ variables (an affine substitution) and apply \ref{qpf-sum1} repeatedly.
That leaves a QPF on a space of dimension $\rank\Pi$, which $\Pi$ identifies with $\operatorname{im}\Pi$.
When $\Pi$ is not surjective, $\operatorname{im}\Pi$ is itself cut out by linear equations on $\F_2^p$, and adding them to the support makes the function vanish off the image, as \ref{qpf-push} requires.
The total sum is the case $p=0$.
Every step is $\F_2$ linear algebra on $O(w^2)$ data, and \ref{qpf-push} then writes its result on $\F_2^p$, which is $O(p^2)$ further data.
\end{proof}

\section{More on the stabilizer-rank optimization}\label{app:stabjoin-proofs}

Fix a Fourier mode $a$ and a rooted rank-decomposition $\mathcal T$.
Fix an antichain $B$ as in \autoref{sec:stabjoin}, so that a vertex is active when it does not lie strictly below a member of $B$.
At a member of $B$ the computation starts from the magic price and immediately enters the ordinary table representation.
Every other active vertex uses the per-mode DP of \autoref{thm:fourier-speedup}.

\begin{algorithm}[tb!]
\caption{Stabilizer-rank optimization for one Fourier mode.}\label{alg:stabjoin}
\renewcommand{\algorithmicrequire}{\textbf{Input:}}
\renewcommand{\algorithmicensure}{\textbf{Output:}}
\begin{algorithmic}[1]
\Require Quadratic SOP as in \autoref{eq:sop}, a rooted rank-decomposition $\mathcal T$, a mode $a\in\Z_r$, and an antichain $B$.
\Ensure The mode sum $\widehat N(a)=\sum_{\bx}\omega_r^{af(\bx)}$.
\For{each active vertex $u$ of $\mathcal T$ in postorder}
  \If{$u\in B$}
    \State construct the QPF sum for the table at $u$ by \autoref{lem:magic-decomp}
    \State evaluate the sum on every cut signature to obtain $A_u^{(a)}$
  \ElsIf{$u$ is a leaf}
    \State initialize $A_u^{(a)}$ as in \autoref{thm:fourier-speedup}
  \Else
    \State compute $A_u^{(a)}$ from $A_{L_u}^{(a)}$ and $A_{R_u}^{(a)}$ by the per-mode join
  \EndIf
\EndFor
\State \Return $\omega_r^{ac}A_{r_{\mathcal T}}^{(a)}[\varnothing]$
\end{algorithmic}
\end{algorithm}

\begin{theorem}[Magic-bounded rank-width DP]\label{thm:hybrid-dp}
For every mode $a$ and antichain $B$, \autoref{alg:stabjoin} computes $\widehat N(a)$ exactly, in the total cost $T(B)$ of \autoref{eq:total-cost}.
\end{theorem}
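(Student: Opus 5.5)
The proof has two parts: an invariant over the active vertices, proved by induction in postorder, and a cost count per vertex. The invariant is that for every active vertex $u$ the table $A_u^{(a)}$ produced by \autoref{alg:stabjoin} equals the specification
\[
A_u^{(a)}[\bsigma]=\sum_{\bz\in\{0,1\}^{X_u},\ \sigma_u(\bz)=\bsigma}\omega_r^{a\phi_u(\bz)}
\]
of \autoref{sec:fourier}. The antichain property makes the set of active vertices closed under taking parents. Each active vertex is of exactly one of three kinds: a member of $B$, an active leaf not in $B$, or an active internal vertex not in $B$. In the third case both children are active, because no strict ancestor of $u$ lies in $B$ and $u\notin B$. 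So postorder only ever consults tables that have already been computed.

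The three cases follow the branches of the algorithm.
\begin{enumerate}
\item For $u\in B$, \autoref{lem:magic-decomp} gives the mode-$a$ weight on $\{0,1\}^{X_u}$ as a sum of at most $\mathrm{sr}(\tau_u)$ QPFs. Pushing each term forward along the linear map $\bz\mapsto\bz^\top\bA[X_u,\overline X_u]$ by \autoref{lem:qpf-closure}\ref{qpf-push} yields, term by term, a QPF on the signature space whose value at $\bsigma$ is the fiber sum. Summing these values reproduces the specification exactly.
\item For an active leaf not in $B$, the claim is the leaf case of \autoref{lem:mode-from-counts}.
\item For an active internal vertex not in $B$, the inductive hypothesis says both child tables equal their specifications. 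The mode-$a$ join is then the image of the join of \autoref{thm:dp-correct} under the ring homomorphism of \autoref{lem:mode-from-counts}, so the parent table also equals its specification.
\end{enumerate}
At the root, which is active, \autoref{lem:mode-from-counts} gives $\widehat N(a)=\omega_r^{ac}A_{r_{\mathcal T}}^{(a)}[\varnothing]$. If the root is itself in $B$, the first case already delivers this.

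For the cost, an active internal vertex $t\notin B$ costs at most $2^{\rho_{L_t}+\rho_{R_t}}$ child-signature pairs, because each child table is supported on a row space of dimension $\rho$. Each pair costs $\poly(n)$ operations, exactly as in \autoref{thm:fourier-speedup}. An active leaf not in $B$ costs $O(1)$, which the polynomial factor absorbs.

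A member $b\in B$ is charged $\mathrm{sr}(\tau_b)(1+2^{\rho_b})$, which splits into two parts.
\begin{itemize}
\item Generating the terms and pushing each one forward costs $\mathrm{sr}(\tau_b)\poly(n)$, by the constructivity assumption on $\mathrm{sr}$ and the $O(w^3)$ bound of \autoref{lem:qpf-closure} with $w\le n$. This is the ``$1$'' part of the charge.
\item Evaluating each pushed-forward QPF on the $2^{\rho_b}$ signatures of the cut costs $\poly(n)$ per evaluation, since the QPF is given by explicit data of size $O(n^2)$. This is the $\mathrm{sr}(\tau_b)2^{\rho_b}$ part.
\end{itemize}
Vertices strictly below $B$ are never touched. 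Summing over all active vertices gives exactly $T(B)$ in \autoref{eq:total-cost}.

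The main obstacle is the evaluation step at $b\in B$. We must check that enumerating the $2^{\rho_b}$ signatures in the row space and evaluating a QPF at each one is genuinely $\poly(n)$ per signature. Evaluation means testing $\bm M\bsigma=\bm d$ and computing $\ell(\bsigma)$ and $q(\bsigma)$. We also need the scalar arithmetic in $\theta$, which lies in $\Z[\omega_r,1/2]$, to stay polynomial. A further point is that the signature space is embedded in $\F_2^{\overline X_u}$, so its coordinates are fixed by a basis of the row space. Choosing that basis and pulling the pushed-forward QPF back to those coordinates is one affine substitution, by \autoref{lem:qpf-closure}\ref{qpf-affine}.
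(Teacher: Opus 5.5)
Your proposal is correct and follows essentially the same route as the paper: a postorder induction over the active vertices with the same three cases, followed by the same per-vertex charging that sums to $T(B)$. The switch case uses \autoref{lem:magic-decomp} plus pushforward and evaluation on the $2^{\rho_b}$ signatures, the leaf case is the ordinary leaf table, and the join case uses \autoref{lem:mode-from-counts}. One small correction: closure of the active set under parents follows from the definition of ``active'' alone, and what the antichain property actually provides is that every member of $B$ is itself active, so the switch branch is genuinely executed at each $b\in B$.
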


\begin{proof}
Proceed over the active vertices in postorder.
At a switch vertex $b\in B$, \autoref{lem:magic-decomp} constructs a sum of at most $\mathrm{sr}(\tau_b)$ QPFs for the exact mode-$a$ table, and evaluating that sum on every cut signature produces $A_b^{(a)}$.
An active leaf outside $B$ receives the ordinary leaf table of \autoref{thm:fourier-speedup}.
At every other active internal vertex, the induction hypothesis supplies the two exact child tables, so the per-mode join produces $A_t^{(a)}$ by \autoref{lem:mode-from-counts}.
The value returned at the root is therefore $\widehat N(a)$.

Constructing the QPF sum at $b$ costs $\mathrm{sr}(\tau_b)\poly(n)$, and evaluating it on all $2^{\rho_b}$ signatures costs $\mathrm{sr}(\tau_b)2^{\rho_b}\poly(n)$ (\autoref{lem:qpf-closure}).
Every ordinary join scans at most $2^{\rho_{L_t}}2^{\rho_{R_t}}$ signature pairs, as in \autoref{thm:fourier-speedup}.
Leaf initialization and the final lookup are polynomial, nothing below $B$ is computed, and summing these charges gives the bound of \autoref{eq:total-cost}.
\end{proof}

Choosing $B=\varnothing$ gives the per-mode rank-width DP unchanged, while choosing only the root gives the global stabilizer-decomposition bound because $\rho_{r_{\mathcal T}}=0$.

\paragraph{The optimizer.}
The best total cost $\min_BT(B)$ is attained by the recurrence of \autoref{eq:switch-recurrence}, in which $C(u)$ is the minimum for the subtree rooted at $u$ once the common polynomial factor is removed.
The first branch switches at $u$ and prunes both child subtrees.
The second combines the optimal antichains of the two child subtrees and performs the ordinary join at $u$.
These are exactly the two possibilities for an antichain restricted to the subtree of $u$, so induction proves that the recurrence returns the exact minimum.
One postorder sweep stores $C(u)$ and one choice bit at each vertex, and backtracking returns a witnessing antichain.\lean{Formal/Stab/Paper.lean}{switch_optimum_spec}
The sweep takes $O(n)$ arithmetic operations and applies to any supplied constructive stabilizer-rank price, including $O(2^{0.3963\tau})$ for Clifford$+\T$.

\paragraph{Magic-focused decompositions.}
\autoref{cor:gottesman-knill} needs no antichain at all, since it reads the whole weight off the instance.
The remaining consequence switches at several incomparable vertices.

\begin{corollary}[Magic-focused width]\label{cor:stabjoin}
Fix a mode $a$.
Suppose disjoint subtrees (\emph{blocks}) each contain at most one $a$-magic vertex and together cover all of them, and take $B$ to be their roots.
If $k$ bounds the cut-rank at every active vertex, then \autoref{alg:stabjoin} runs in $O(4^{k}\poly(n))$ time.
\end{corollary}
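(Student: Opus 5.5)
The plan is to apply \autoref{thm:hybrid-dp} to the antichain $B$ of block roots and bound the two sums in \autoref{eq:total-cost} separately.

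First, $B$ is indeed an antichain. The blocks are disjoint subtrees, so no block root lies strictly below another; if one did, the lower root would lie in the upper block, contradicting disjointness. Hence \autoref{alg:stabjoin} is well defined and, by \autoref{thm:hybrid-dp}, returns $\widehat N(a)$ exactly at cost $T(B)$.

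\emph{Switch charges.} Each $b\in B$ roots a block containing at most one $a$-magic vertex. Because blocks cover all magic vertices and are disjoint, the magic vertices below $b$ are exactly those of its block, so $\tau_b\le 1$. The constructive two-term split of \autoref{lem:magic-decomp} gives $\mathrm{sr}(\tau_b)\le 2^{\tau_b}\le 2$. Every $b\in B$ is active, and $k$ bounds the cut-rank at every active vertex, so $\rho_b\le k$. Each switch therefore costs at most $2(1+2^k)\poly(n)$. Since the blocks are disjoint, $|B|\le n$, and the first sum is $O(2^k\poly(n))$.

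\emph{Ordinary joins.} For an active internal $t\notin B$, its children $L_t,R_t$ are not strictly below any member of $B$, because $t$ is active and not itself in $B$. So both children are active, and their cut-ranks are at most $k$ by hypothesis. Each such join therefore costs $2^{\rho_{L_t}+\rho_{R_t}}\le 4^k$. The tree has $O(n)$ vertices, so the second sum is $O(4^k\poly(n))$. Adding the two sums gives the claimed bound.

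The only delicate step is the bookkeeping claim $\tau_b\le 1$ at every switch vertex. It rests on reading $\tau_b$ as the magic count of the subtree rooted at $b$, which is the block itself. One must also check the case where a block root is a leaf, where the magic decomposition is trivially cheap. Everything else is substitution into \autoref{eq:total-cost}, with the polynomial-time QPF evaluation on the $2^{\rho_b}$ signatures supplied by \autoref{lem:qpf-closure}.
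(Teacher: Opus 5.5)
Your proposal is correct and matches the paper's proof. You bound each switch charge by $\mathrm{sr}(1)(1+2^{\rho_b})\le 2(1+2^k)$ using $\tau_b\le 1$, bound every remaining active join by $2^{\rho_{L_t}+\rho_{R_t}}\le 4^k$, and sum over linearly many active vertices in the cost $T(B)$ of \autoref{thm:hybrid-dp}, which is exactly the paper's route; your checks that the block roots form an antichain and that the children of an active vertex outside $B$ are themselves active are details the paper leaves implicit.
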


\begin{proof}
Each block root $b$ has $\tau_b\le1$, so its switch costs at most $\mathrm{sr}(1)(1+2^{\rho_b})\poly(n)\le2(1+2^k)\poly(n)$.
Every other active internal vertex charges at most $2^{\rho_{L_t}+\rho_{R_t}}\poly(n)\le4^k\poly(n)$.
Nothing is charged inside a block.
There are linearly many active vertices, which gives the bound.
\end{proof}

\noindent
Block interiors are pruned, so their cut-ranks are unconstrained.

\paragraph{Mixed instances.}
\propmixedseparation*

\begin{proof}
Let $\Gamma_s$ consist of a clique $K_s$ and an $s\times s$ grid, vertex-disjoint, together with one edge $uv$ joining a clique vertex $u$ to a grid vertex $v$.
Let $C_s$ be the circuit over $\{\H,\CZ\}$ that \autoref{lem:realize-any-graph} produces for $\Gamma_s$, and let $D_s$ be $C_s$ with one $\T$ gate inserted on each of the $s$ wires carrying a clique vertex, as in \autoref{cor:separating-family}.
A $\T$ gate is diagonal, so it shifts a unary coefficient and adds no edge: the SOP variable graph is $G_{D_s}=\Gamma_s$ on $n_s=s+s^2$ variables, the $\T$-count is $\tau_s=s$, and a vertex is $a$-magic only if it carries one of those phases, so the clique vertices are the only ones.

Cut-ranks in $\Gamma_s$ are small on the clique side.
For $\varnothing\ne X\subsetneq V(K_s)$ every row of $\bA[X,\overline X]$ is the all-ones vector on $V(K_s)\setminus X$, extended by the single entry at $v$ for the row of $u$ if $u\in X$.
There are at most two distinct rows, so the cut-rank is at most $2$.
The cut that separates the grid from the clique is crossed by $uv$ alone and has rank $1$.

Take any rank-decomposition of the clique side, any rank-decomposition of the grid side, and join the two at the root.
Choose $B$ to contain only the root of the grid side.
Every ordinary join in the clique subtree costs at most $2^2\cdot2^2=16$ up to the polynomial factor.
The grid side is Clifford, so $\mathrm{sr}(0)=1$, and switching across its rank-$1$ boundary costs $1+2$ up to the same factor.
Nothing inside the grid is charged.
The root join costs $2^{1+1}=4$, every charge is constant, and only $O(s)$ vertices are active.
Thus \autoref{thm:hybrid-dp} gives a $\poly(n_s)$ bound, and the recurrence above returns an antichain no more expensive.

For the pure routes, rank-width does not increase on induced subgraphs~\cite{Oum05RankWidthVertexMinors,Oum17Survey}, and the $s\times s$ grid has rank-width $s-1$~\cite{Jelinek10Grid}, so $\rw(G_{D_s})\ge s-1$.
The stabilizer-decomposition route is charged on $\tau_s=s$.
\end{proof}

\section{Gadget constructions and WMC encodings}\label{app:gadget-wmc}

\paragraph{Gadget universality.}
The first two lemmas make explicit the finite Fourier steps used in the proof sketch of \autoref{thm:gadget-universality}.
We then give its full proof.

\begin{lemma}[Common root-list form]\label{lem:common-root-list}
Let $S\subseteq\Dcycl_r$ be finite and let $M=\operatorname{lcm}(r,8)$.
There are integers $h\ge0$ and $m\ge0$ such that every $\alpha\in S$ can be written as
\[
\alpha=2^{-h/2}\sum_{\bs\in\{0,1\}^m}\omega_M^{e_{\alpha,\bs}}
\]
for suitable exponents $e_{\alpha,\bs}\in\Z_M$.
\end{lemma}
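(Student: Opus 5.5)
The plan is to reduce to elements of $\Z[\omega_M]$ and then expand integer coefficients in binary. First, for each $\alpha\in S\subseteq\Z[\omega_r,1/\sqrt2]$ I clear the denominator. Every element of $\Dcycl_r$ is a finite $\Z$-combination of terms $(\sqrt2)^{-j}\omega_r^{e}$. Since $8\mid M$, we have $\sqrt2=\omega_8+\omega_8^{-1}\in\Z[\omega_M]$. Multiplying by a large enough power $(\sqrt2)^{h_0}$ therefore puts every $\alpha\in S$ into $\Z[\omega_M]$, because $r\mid M$ gives $\omega_r=\omega_M^{M/r}$. Because $S$ is finite, one common $h_0$ works for all of $S$. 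So I write $(\sqrt2)^{h_0}\alpha=\sum_{e\in\Z_M}n_{\alpha,e}\,\omega_M^e$ with integers $n_{\alpha,e}$.

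Second, I make all coefficients nonnegative. Since $M$ is even, $-1=\omega_M^{M/2}$, so $-\omega_M^e=\omega_M^{e+M/2}$. This turns every element of $\Z[\omega_M]$ into a sum of roots of unity with positive multiplicities. Now $(\sqrt2)^{h_0}\alpha$ is a multiset of $N_\alpha$ roots of $\omega_M$, where $N_\alpha$ is a nonnegative integer.

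Third, I equalize the lengths to a common power of two. I choose $m$ with $2^m\ge\max_\alpha N_\alpha$ and also require $2^m-N_\alpha$ to be even for every $\alpha$. I pad each list to length exactly $2^m$ using cancelling pairs $\omega_M^0+\omega_M^{M/2}=0$, which changes a length by $2$. If some $N_\alpha$ is odd, the parity condition can fail. In that case I multiply every $\alpha$ by one more factor of $2=(\sqrt2)^2$: I replace $h_0$ by $h_0+2$ and double each multiset, so every $N_\alpha$ becomes even. Then I take $2^m\ge\max N_\alpha$ with $m\ge1$. Indexing the $2^m$ terms by $\bs\in\{0,1\}^m$ gives the exponents $e_{\alpha,\bs}$, and setting $h=h_0$ (or $h_0+2$) gives the claimed form. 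The case $S=\{0\}$ or $N_\alpha=0$ is covered by padding alone.

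The main obstacle is making the final length exactly $2^m$ for all $\alpha$ simultaneously. The doubling trick handles the parity, and the rest is bookkeeping.
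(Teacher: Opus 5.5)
Your proposal is correct and follows essentially the same route as the paper's proof: clear the $1/\sqrt2$ denominators with one common power using $\sqrt2=\omega_M^{M/8}+\omega_M^{-M/8}\in\Z[\omega_M]$, absorb signs via $-\omega_M^q=\omega_M^{q+M/2}$, and pad every list to length $2^m$ with canceling pairs $\omega_M^0+\omega_M^{M/2}=0$, fixing parity by raising $h$ by $2$ and doubling each list. The only difference is cosmetic: you pass through $\Z[\omega_M]$ abstractly, whereas the paper expands $(\sqrt2)^{h-t}$ term by term.
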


For a nonempty set $T\subseteq[n]$ and $\bu\in\{0,1\}^n$, write
\[
p_T(\bu)=\bigoplus_{i\in T}u_i\in\{0,1\},
\]
where $\oplus$ denotes XOR of Boolean variables.
We absorb $T=\emptyset$ into the constant term.

\begin{proof}[Proof of \autoref{lem:common-root-list}]
If $S=\varnothing$, take $h=m=0$, so assume henceforth that $S$ is nonempty.
Each $\alpha\in\Dcycl_r$ is a finite integer combination of terms $\omega_r^b2^{-t/2}$.
Since $r\mid M$, we have $\omega_r^b=\omega_M^{bM/r}$.
Choose one integer $h$ at least as large as every exponent $t$ occurring in the finitely many representations of the elements of $S$, so that $h\ge t$ throughout.
Because $M$ is divisible by $8$, we have $\sqrt2=\omega_M^{M/8}+\omega_M^{-M/8}$, and writing $2^{-t/2}=2^{-h/2}2^{(h-t)/2}$ gives
\[
\omega_r^b2^{-t/2} =\omega_M^{bM/r}\,2^{-h/2}\,(\sqrt2)^{h-t}.
\]
The exponent $h-t$ is a nonnegative integer, which is exactly where $h\ge t$ is used, so $(\sqrt2)^{h-t}=(\omega_M^{M/8}+\omega_M^{-M/8})^{h-t}$ expands into a finite sum of $M$th roots with nonnegative integer coefficients.

We now clear all integer coefficients, so that $\alpha$ becomes a plain list of $M$th roots with no coefficients.
A positive coefficient repeats its root that many times.
A minus sign is absorbed by $-\omega_M^q=\omega_M^{q+M/2}$, valid because $M$ is even.
Thus every $\alpha\in S$ takes the form $2^{-h/2}\sum_{s=1}^{\ell_\alpha}\omega_M^{e_s}$ with the common $h$ but an $\alpha$-dependent list length $\ell_\alpha$.

It remains to bring all lists to one common length $2^m$.
Padding a list with the canceling pair $\omega_M^0+\omega_M^{M/2}=0$ preserves its value and raises its length by $2$, so it reaches any larger length of the same parity.
The $\ell_\alpha$ need not be even, so first increase $h$ by $2$, rewriting $2^{-h/2}\sum_s\omega_M^{e_s}=2^{-(h+2)/2}\sum_s2\omega_M^{e_s}$, which duplicates every summand and makes every length even.
Rename this larger value $h$ again.
Now choose $m\ge1$ with $2^m\ge\max_{\alpha\in S}\ell_\alpha$ and pad each list to length exactly $2^m$, indexing its entries by $\bs\in\{0,1\}^m$.
This gives $\alpha=2^{-h/2}\sum_{\bs\in\{0,1\}^m}\omega_M^{e_{\alpha,\bs}}$, as claimed.
\end{proof}

\begin{lemma}[Parity-linear phase tables]\label{lem:parity-linear-table}
Let $f:\{0,1\}^n\to\Z_M$, and let $R=M2^N$ with $N\ge n$.
Put $L=R/M$.
Then there are coefficients $c\in\Z_R$ and $c_T\in\Z_R$, for every nonempty $T\subseteq[n]$, such that
\[
L f(\bu)\equiv c+\sum_{\emptyset\ne T\subseteq[n]}c_Tp_T(\bu)\pmod R
\]
for all $\bu\in\{0,1\}^n$.
\end{lemma}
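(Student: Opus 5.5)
The plan is to use the Walsh (parity-character) expansion of the integer lift of $f$, and then clear the denominators by the factor $L=2^N$. First lift $f$ to an integer-valued function $\tilde f:\{0,1\}^n\to\{0,\ldots,M-1\}$. Then expand it in the $\pm1$ characters $\chi_T(\bu)=(-1)^{p_T(\bu)}$:
\[
\tilde f(\bu)=\sum_{T\subseteq[n]}\hat f(T)\,\chi_T(\bu),\qquad \hat f(T)=2^{-n}\sum_{\bu}\tilde f(\bu)\chi_T(\bu).
\]
This is exact over $\mathbb Q$ by orthogonality of characters, and every coefficient satisfies $2^n\hat f(T)\in\Z$.

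Next, convert the characters to Boolean parities. Since $p_T$ is a $0/1$ value, we have $\chi_T(\bu)=1-2p_T(\bu)$ as integers. Substituting gives
\[
\tilde f(\bu)=\sum_T\hat f(T)-\sum_{T\ne\emptyset}2\hat f(T)\,p_T(\bu),
\]
where the $T=\emptyset$ character contributes only to the constant. Multiply through by $L=2^N$ with $N\ge n$. Each $L\hat f(T)=2^{N-n}\cdot 2^n\hat f(T)$ is an integer, and so is $2L\hat f(T)$ and the constant $L\sum_T\hat f(T)$. This yields an exact integer identity
\[
L\tilde f(\bu)=c+\sum_{T\ne\emptyset}c_T\,p_T(\bu).
\]

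Finally, reduce modulo $R=ML$. Because $\tilde f\equiv f\pmod M$, we get $L\tilde f\equiv Lf\pmod{R}$, and multiplication by $L$ is well defined from $\Z_M$ to $\Z_R$. Reducing $c$ and $c_T$ into $\Z_R$ then gives the claim.

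The only real point to check is integrality, namely that the denominator $2^n$ of the Walsh coefficients is cleared by $L$. The hypothesis $N\ge n$ is exactly what ensures this. Independence of the choice of lift follows from the final reduction step.
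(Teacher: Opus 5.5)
Your proof is correct and follows essentially the paper's argument: lift $f$ to integers, Walsh-expand in the characters $\chi_T$, rewrite $\chi_T=1-2p_T$, use $N\ge n$ to make $L$ times every Walsh coefficient integral, and reduce modulo $R=LM$. The only cosmetic difference is the order of steps: the paper expands $g=L\tilde f$ directly, so its coefficients are integers from the start, whereas you expand $\tilde f$ over $\mathbb Q$ and multiply by $L$ afterward.
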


\begin{proof}[Proof of \autoref{lem:parity-linear-table}]
Fix for each $\bu$ an integer representative $\tilde f(\bu)\in\Z$ of $f(\bu)\in\Z_M$, and set $g=L\tilde f$.
Since $LM=R$, the class of $g(\bu)$ modulo $R$ does not depend on the choice of representatives.

For $T\subseteq[n]$, let
\[
\chi_T(\bu)=(-1)^{\sum_{i\in T}u_i}.
\]
The functions $\chi_T$, $T\subseteq[n]$, are orthonormal with respect to the inner product $\langle\varphi,\psi\rangle=2^{-n}\sum_{\bu\in\{0,1\}^n}\varphi(\bu)\psi(\bu)$.
Since there are $2^n$ of them, they form a basis for the space of real-valued functions on $\{0,1\}^n$~\cite{ODonnell14}.
Hence
\[
g(\bu)=\sum_{T\subseteq[n]}\widehat g(T)\chi_T(\bu), \qquad \widehat g(T)=2^{-n}\sum_{\bu} g(\bu)\chi_T(\bu).
\]
Because $g=2^N\tilde f$ with $N\ge n$, each Fourier coefficient $\widehat g(T)=2^{N-n}\sum_{\bu}\tilde f(\bu)\chi_T(\bu)$ is an integer.
For $T\ne\emptyset$ we have $\chi_T=1-2p_T$, so
\[
g(\bu) = \Bigl(\sum_T\widehat g(T)\Bigr) +\sum_{T\ne\emptyset}(-2\widehat g(T))\,p_T(\bu).
\]
Set $c\equiv\sum_T\widehat g(T)$ and $c_T\equiv-2\widehat g(T)$ modulo $R$.
The required identity follows because $g=L\tilde f\equiv Lf\pmod R$.
\end{proof}

\thmgadgetuniv*

\begin{proof}[Proof of \autoref{thm:gadget-universality}]
First suppose all entries lie in $\Z[\omega_r,1/\sqrt2]$.
Apply \autoref{lem:common-root-list} to the finite set of all entries of all gates, and call the resulting $m$ Boolean indices $\bs\in\{0,1\}^m$ of the common root list the \emph{selector bits}.
Let $a$ be the maximum arity of a gate in $\Gate$, put $M=\operatorname{lcm}(r,8)$, choose $N_{\Gate}=2a+m$, and set $R=M2^{N_{\Gate}}$ and $L=R/M$.

For a $k$-qubit gate $U$, applying \autoref{lem:common-root-list} entry by entry gives a phase table
\[
F_U:\{0,1\}^{2k+m}\to\Z_M, \qquad \langle \bz|U|\by\rangle=2^{-h/2}\sum_{\bs\in\{0,1\}^m}\omega_M^{F_U(\by,\bz,\bs)},
\]
with zero entries represented by canceling root lists.

Let $\bu=(\by,\bz,\bs)$ and $n=2k+m\le N_{\Gate}$, so that $u_i$ is the $i$th bit of $\bu$.
By \autoref{lem:parity-linear-table},
\[
L F_U(\bu)\equiv c+\sum_{\emptyset\ne T\subseteq[n]}c_T\,p_T(\bu)\pmod R.
\]
A singleton $T=\{i\}$ has $p_T(\bu)=u_i$, so its term $c_{\{i\}}u_i$ is already unary.
For every $T$ with $|T|\ge2$ introduce a fresh Boolean \emph{parity variable} $q_T$ and a Boolean \emph{averaging variable} $\lambda_T$, and set
\[
f_U(\bu,\bm{q},\blambda)= c+\sum_{i\in[n]}c_{\{i\}}\,u_i+\sum_{|T|\ge2}c_T\,q_T +\frac R2\sum_{|T|\ge2}\lambda_T\Bigl(q_T+\sum_{i\in T}u_i\Bigr) \pmod R,
\]
a sign-edge phase polynomial in $(\bu,\bm{q},\blambda)$.
Summing each $\lambda_T$ over $\{0,1\}$ applies \autoref{lem:parity-constraint-gadget} once per non-singleton $T$, which forces $q_T=p_T(\bu)$ and contributes a factor $2$.
With $b_U=\#\{T:|T|\ge2\}$,
\[
2^{-b_U}\sum_{\bm{q},\blambda}\omega_R^{f_U(\bu,\bm{q},\blambda)}=\omega_R^{L F_U(\bu)}=\omega_M^{F_U(\bu)}.
\]
Therefore
\[
\langle \bz|U|\by\rangle = 2^{-(h+2b_U)/2}\sum_{\bs,\bm{q},\blambda}\omega_R^{f_U(\by,\bz,\bs,\bm{q},\blambda)},
\]
so $U$ has a constant-size sign-edge SOP gadget, and its constants depend only on $\Gate$.

Composing the circuit identifies the output wire bits of each gate with the input wire bits of the next, and sums the product of the gate amplitudes over these shared internal wire bits.
Multiplying the amplitudes adds the gadget exponents over the common modulus $R$, and distinct gates use disjoint auxiliary variables.
The result is therefore a single quadratic SOP whose size is linear in the number of elementary gates, since each gate contributes a constant number of variables and terms.
Pinning the input and output bits substitutes constants for the boundary variables and turns each incident quadratic term into a unary or constant term.
Parallel sign edges on the same pair add modulo $R$, and either survive as one sign edge or cancel.

Conversely, suppose a gate entry is represented by a constant-size sign-edge SOP gadget over an even modulus $R$.
The gadget is a finite sum of powers of $\omega_R$, multiplied by a power of $2^{-1/2}$.
Therefore the entry lies in $\Z[\omega_R,1/\sqrt2]$.
Since the gate set is finite and the gadget modulus is common, all entries lie in one common dyadic-cyclotomic ring.
\end{proof}

\paragraph{Size of the fixed gadgets.}
All quantities introduced in the proof are constants depending only on $\Gate$.
Write $a$ for the maximum arity, $m$ for the number of selector bits, $M=\operatorname{lcm}(r,8)$, and $N_\Gate=2a+m$.
The construction then uses the fixed modulus $R=M2^{N_\Gate}$, and the number of auxiliary parity and averaging variables per gate is bounded by $2(2^{N_\Gate}-N_\Gate-1)$, plus the $m$ selector bits.
These are constants for a fixed gate set, but not necessarily small constants.
Thus the hidden constant in the linear-size translation, and the linear dependence on the modulus in the full-histogram algorithm (\autoref{sec:fourier}), depend only on the finite gate-set description.

\paragraph{Gadget-expanded treewidth.}
\corgadgettw*

\begin{proof}
Fix a tree decomposition of the line graph $L(N_C)$ of width $\cc(N_C)$.
For a gate tensor $g$ of the fixed set $\Gate$, let $I_g$ be its incident bond indices.
Here $|I_g|$ is bounded by the arity bound of $\Gate$.
Any two members of $I_g$ co-occur in the tensor $g$, so $I_g$ is a clique of $L(N_C)$.
Every clique of a graph is contained in some bag of every tree decomposition: the subtrees of bags containing the individual vertices pairwise intersect, so by the Helly property of subtrees of a tree they have a common bag.
Pick such a bag $B_g$ for every gate.

Build the \emph{intermediate gadget graph} $H$ as follows.
Start from $L(N_C)$ and, for every gate $g$ replaced by a gadget, add that gadget's $O(1)$ auxiliary variables, namely the selector, parity, and averaging variables of \autoref{thm:gadget-universality}.
Add as well all edges the gadget's phase polynomial induces among the auxiliaries and between auxiliaries and the boundary indices $I_g$.
Extend the tree decomposition by hanging, below $B_g$, one fresh bag $B_g'=I_g\cup\mathrm{aux}(g)$.
All new edges have both endpoints in $B_g'$, each new vertex occurs in exactly one bag, and $|B_g'|$ is a constant $c_{\Gate}$ depending only on the gate set.
Hence $\tw(H)\le\max\{\cc(N_C),\,c_{\Gate}-1\}$.

It remains to observe that the pinned gadget-expanded SOP graph $G_C^{\mathrm{gad}}$ is a minor of $H$.
The construction of the pinned SOP graph from the circuit representation performs only three kinds of operation, each of which preserves minors.
It identifies a Hadamard-free wire's two endpoint indices (an edge contraction in $H$, where the two indices are adjacent), it deletes pinned boundary variables and variables eliminated by simplification (vertex deletions), and it deletes canceled sign edges (edge deletions).
Treewidth is minor-monotone, so $\tw(G_C^{\mathrm{gad}})\le\tw(H)\le\max\{\cc(N_C),\,c_{\Gate}\}$.
\end{proof}

Note what is \emph{not} claimed.
The exact inequality $\tw(G_C)\le\cc(N_C)$ of \autoref{lem:sop-minor-line} concerns the native graph.
After gadgetization the guarantee degrades only to the gate-set constant, which leaves the $2^{O(\cc(N_C))}\poly(n)$ consequence intact.

\paragraph{The algebraic-WMC identities.}
We first verify that the algebraic weighted count of $F_C^{\mathrm{strict}}$ equals $S(f)$.
The clauses of $s_e\leftrightarrow(x_u\wedge x_v)$ make the assignment of every $s_e$ a function of $\bx$, so the satisfying assignments of $F_C^{\mathrm{strict}}$ are in bijection with $\{0,1\}^{V}$.
On the assignment extending $\bx$, the weight product is
\[
\prod_{v}w(x_v)\;\prod_{e}w(s_e) =\prod_{v:\,x_v=1}\omega_r^{b_v}\;\prod_{\{u,v\}\in E:\,x_ux_v=1}(-1) =\omega_r^{\sum_v b_vx_v}\,\omega_r^{\eta\sum_{\{u,v\}\in E}x_ux_v},
\]
using $\eta=r/2$ and $\omega_r^{\eta}=-1$.
Multiplying by the external scalar $\omega_r^{c}$ gives $\omega_r^{f(\bx)}$, and summing over the satisfying assignments gives $S(f)$.

For $F_C^{\mathrm{soft}}$, the two implications force $s_e=0$ unless $x_u=x_v=1$.
In the remaining case $s_e$ is free, and summing over it gives
\[
1+(\omega_r^\eta-1)=\omega_r^\eta.
\]
Thus each edge again contributes $\omega_r^{\eta x_ux_v}$, and the same unary weights and external scalar show that the weighted count is $S(f)$.
\qed

\lemvartoinc*

\begin{proof}
For the soft encoding, each edge $e=\{x,y\}$ gives the path
\[
x-C_{e,1}-s_e-C_{e,2}-y
\]
in the incidence graph, with no other vertices or edges in its gadget.
Hence $I(F_C^{\mathrm{soft}})$ is obtained from $G_C$ by subdividing every edge three times.
When $E\ne\varnothing$, subdivision preserves treewidth and $\tw(G_C)\ge1$, which proves the stated equality.
When $E=\varnothing$, both graphs are edgeless and have treewidth $0$.

For the strict encoding, take a tree decomposition of $G_C$.
For each edge $e=\{x,y\}$ choose a bag containing $x$ and $y$ and attach to it the fresh bag $\{x,y,s_e\}$, with three further child bags $\{x,s_e,C_{e,1}\}$, $\{y,s_e,C_{e,2}\}$, and $\{x,y,s_e,C_{e,3}\}$ for the clause vertices of the equivalence.
Every new incidence edge is covered inside this local subtree, every variable and clause vertex occupies a connected set of bags, and the largest new bag has size four.
\end{proof}

\paragraph{Rank-width of the incidence graphs.}
We first prove the claim of \autoref{sec:feynmandd} that the strict incidence graph of the $K_n$ instance has rank-width $\Omega(n)$, although $\rw(K_n)=1$.
First, the once-subdivided clique $S(K_n)$ occurs \emph{induced} in the strict incidence graph.
Keep the variable vertices $x_v$ and, for each edge $e=\{u,v\}$, the single clause vertex of $s_e\vee\neg x_u\vee\neg x_v$.
Each kept clause vertex then has degree two among the kept vertices, subdividing the edge $\{x_u,x_v\}$.
Rank-width is monotone under taking induced subgraphs, so it suffices to bound $\rw(S(K_n))$.

Pick the balanced edge of any rank-decomposition of $S(K_n)$ with respect to \emph{weights}, unit on the $n$ original vertices and zero on the subdivision vertices, so that both sides carry $\Omega(n)$ original vertices.
Balancing all vertices alike would not guarantee this.
Let $P,Q$ be the original vertices on the two sides, $|P|,|Q|\ge n/3$, and let $s_{pq}$ subdivide edge $pq$.
If at least $|P|/2$ vertices $p\in P$ have some $s_{pq}$ on the $Q$ side, choosing one for each gives distinct unit columns in the cut matrix.
Otherwise at least $|P|/2$ vertices $p$ have every $s_{pq}$ on the $P$ side, and the columns indexed by $Q$, restricted to these subdivision rows, are nonzero with pairwise disjoint supports.
Either way the cut-rank is $\Omega(n)$.

The soft incidence graph is the threefold subdivision of $K_n$.
Balance the original vertices in any of its rank-decompositions and let $P,Q$ be the original vertices on opposite sides, with $|P|,|Q|=\Omega(n)$.
Each of the $|P||Q|$ corresponding paths crosses the cut in at least one of its four positions, so one position is a crossing on $\Omega(n^2)$ paths.
At either internal position, the crossing edges contain a diagonal submatrix of that order because the internal vertices are private to their paths.
At an endpoint position, their columns or rows restrict to unit vectors on $\Omega(n)$ distinct original vertices.
In either case the cut-rank is $\Omega(n)$, so the soft incidence graph has rank-width $\Omega(n)$ as well.
\qed

The dense-block variant used on three experimental instances introduces parity chains over each side of a complete bipartite block instead of subdividing the edges of $G_C$.
Its width analysis is therefore separate from \autoref{lem:variable-to-incidence}.

\end{document}